\renewcommand{\maketag@@@}[1]{\hbox{\m@th\normalsize\normalfont#1}}%
\newtheorem{fact}{Fact}
\newtheorem{corollary}{Corollary}
\newtheorem{definition}{Definition}
\newtheorem{lemma}{Lemma}
\newtheorem{theorem}{Theorem} 
\newtheorem{remark}{Remark}
\newcommand{\bfm}[1]{\bm{#1}}
\newcommand\numberthis{\addtocounter{equation}{1}\tag{\theequation}}
\newcommand{\norm}[1]{\left\lVert#1\right\rVert}
\renewcommand{\(}{\left(}
\renewcommand{\)}{\right)}
\def\widebar{\accentset{{\cc@style\underline{\mskip10mu}}}}
\def\la{\left\langle}
\def\ra{\right\rangle}
\def\ln{\left\|}
\def\rn{\right\|}
\def\A{\mathcal{A}}
\def\C{\mathbb{C}}
\def\M{\mathcal{M}}
\def\P{\mathcal{P}}
\def\T{\mathcal{T}}
\def\H{\mathcal{H}}
\def\D{\mathcal{D}}
\def\G{\mathcal{G}}
\def\R{\mathbb{R}}
\def\E{\mathbb{E}}
\def\I{\mathcal{I}}
\newcommand{\va}{\bm{a}}
\newcommand{\vb}{\bm{b}}
\newcommand{\ve}{\bm{e}}
\newcommand{\vh}{\bm{h}}
\newcommand{\vr}{\bm{r}}
\newcommand{\vx}{\bm{x}}
\newcommand{\vv}{\bm{v}}
\newcommand{\vy}{\bm{y}}
\newcommand{\vz}{\bm{z}}
\newcommand{\vtau}{\bm{\tau}}
\newcommand{\vZ}{\bm{Z}}
\newcommand{\vF}{\bm{F}}
\newcommand{\vL}{\bm{L}}
\newcommand{\vR}{\bm{R}}
\newcommand{\vM}{\bm{M}}
\newcommand{\vI}{\bm{I}}
\newcommand{\vU}{\bm{U}}
\newcommand{\vV}{\bm{V}}
\newcommand{\vQ}{\bm{Q}}
   \def\mA{\bfm A}  
  \def\mB{\bfm B} 
  \def\mC{\bfm C}  
  \def\mE{\bfm E}  
 \def\mF{\bfm F}  
  \def\mH{\bfm H} 
  \def\mI{\bfm I} 
   \def\mJ{\bfm J} 
   \def\mK{\bfm K}  
   \def\mL{\bfm L}
  \def\mM{\bfm M}  
   \def\mN{\bfm N}  
   \def\mP{\bfm P}  
  \def\mQ{\bfm Q}  
   \def\mR{\bfm R}
 \def\mV{\bfm V}  
   \def\mX{\bfm X}  
   \def\mY{\bfm Y}  
   \def\mZ{\bfm Z}
\newcommand{\br}{\bm{r}}
\newcommand{\bD}{\bm{D}}
\newcommand{\bF}{\bm{F}}
\newcommand{\bI}{\bm{I}}
\newcommand{\bP}{\bm{P}}
\newcommand{\bQ}{\bm{Q}}
\newcommand{\bR}{\bm{R}}
\newcommand{\bDelta}{\bm{\Delta}}
\newcommand{\bSigma}{\bm{\Sigma}}
\newcommand{\cM}{\mathcal{M}}
\newcommand{\cS}{\mathcal{S}}
\newcommand{\cT}{\mathcal{T}}
\newcommand{\cX}{\mathcal{X}}
\newcommand{\bcD}{\bm{\mathcal{D}}}
\newcommand{\bcE}{\bm{\mathcal{E}}}
\newcommand{\bcH}{\bm{\mathcal{H}}}
\newcommand{\bcS}{\bm{\mathcal{S}}}
\newcommand{\bcT}{\bm{\mathcal{T}}}
\newcommand{\bcW}{\bm{\mathcal{W}}}
\newcommand{\bcX}{\bm{\mathcal{X}}}
\newcommand{\bcZ}{\bm{\mathcal{Z}}}
\renewcommand{\baselinestretch}{1} 
\newcommand{\argmin}{\mathop{\mathrm{argmin}}}
\DeclareMathOperator{\bcdot}{\boldsymbol{\cdot}}
\DeclareMathOperator{\GL}{\mathrm{GL}}
\DeclareMathOperator{\op}{\mathsf{}}
\DeclareMathOperator{\vc}{\mathrm{vec}}
\DeclareMathOperator{\mulrank}{mulrank}
\DeclareMathOperator{\rank}{rank}
\DeclareMathOperator{\diag}{diag}
\DeclareMathOperator{\Real}{{\normalfont Re}}
\newcommand{\PB}[1]{\P_{B}(#1)}
\newcommand{\dist}[2]{{\normalfont\mbox{dist}}(#1,#2)}
\newcommand{\distQ}[2]{{\normalfont\mbox{dist}_Q}(#1,#2)}
\newcommand{\distP}[2]{{\normalfont\mbox{dist}_P}(#1,#2)}
\newcommand{\distsq}[2]{{\normalfont\mbox{dist}^2}(#1,#2)}
\renewcommand*{\@fnsymbol}[1]{\ensuremath{\ifcase#1\or \dagger\or \ddagger\or \mathsection\or
    *\or \mathparagraph\or \|\or **\or \dagger\dagger
    \or \ddagger\ddagger \else\@ctrerr\fi}}
 \title{Fast and Provable Hankel Tensor Completion for Multi-measurement Spectral Compressed Sensing}
\author{Jinsheng Li, Xu Zhang, Shuang Wu, and Wei Cui
\thanks{This work was supported in part by the National Natural Science Foundation of China under Grant No. 62025103 and 62201053, the Postdoctoral Fellowship Program of CPSF under Grant No. GZC20232038 and the China Postdoctoral Science Foundation under Grant No. 2024M762521. \textit{(Corresponding author: Shuang Wu.)}}
\thanks{
J. Li, S. Wu, and W. Cui are with the School of Information and Electronics, Beijing Institute of
Technology, Beijing 100081, China (e-mail: 18612398166@163.com; ws900226@163.com; cuiwei@bit.edu.cn). 
X.~Zhang is with the School of Artificial Intelligence, Xidian University, Xi'an 710126, China (e-mail: zhang.xu@xidian.edu.cn).

}}
\begin{document}

\maketitle
\begin{abstract}
In this paper, we introduce a novel low-rank Hankel tensor completion approach to address the 
problem of 
multi-measurement spectral compressed sensing. 
By lifting the multiple 
signals to a Hankel tensor, 
we reformulate this problem into a low-rank Hankel tensor completion task, exploiting the spectral sparsity via the low multilinear rankness of the tensor. 
{Furthermore}, we design {a scaled gradient descent algorithm for Hankel tensor completion (ScalHT), which integrates the low-rank Tucker 
decomposition with the Hankel 
structure.} 
Crucially, we derive novel fast computational formulations that leverage the interaction between these two structures, achieving up to an $O(\min\{s,n\})$-fold improvement in storage and computational efficiency 
compared to the existing algorithms, {where $n$ is the length of signal, $s$ is the number of measurement vectors.} Beyond its practical efficiency, ScalHT is backed by rigorous theoretical guarantees: we establish both recovery and linear convergence guarantees, which, to the best of our knowledge, are the first of their kind for low-rank Hankel tensor completion. 
 Numerical simulations show that our method exhibits significantly lower computational and storage costs while delivering superior recovery performance compared to prior arts. 

\end{abstract}
\begin{IEEEkeywords}
Multiple measurement vectors, spectral compressed sensing, Hankel tensor completion, gradient descent
\end{IEEEkeywords}
\section{Introduction}
In this paper, we study the 
spectral compressed sensing problem with multiple measurements vectors  (MMV) \cite{Cotter2005}, 
which aims to reconstruct the multi-measurement spectral sparse signals from partial observations. 
Multi-measurement spectral sparse signals refer to multiple signals sharing the same sparse frequencies. These signals  
widely arises in applications such as power system monitoring \cite{Zhang2018,Gao2016}, wireless communication \cite{Barbotin2012}, target localization 
in radar systems \cite{Potter2010}, and direction-of-arrival (DOA) estimation in array processing \cite{Krim1996}. The multi-measurement spectral sparse signal $\{\vx_l\}_{l=0}^{s-1}$ 
is formed as:  
\begin{align*}
\vx_l(j) = \sum_{k=0}^{r-1} b_{k,l} e^{(\imath2\pi f_k-\tau_k)j}, 
\numberthis\label{eq:signal_model}
\end{align*}
where $j\in\{0,1,\cdots,n-1\}$, $l\in\{0,1,\cdots,s-1\}$ , $n$ is the length of signal, $s$ is the number of measurement vectors, and $r$ is the joint spectral sparsity,   $\imath=\sqrt{-1}$, $f_k\in[0,1)$ is the $k$-th normalized frequency, $\tau_k$ is the $k$-th damping factor,  $b_{k,l}\in \C$ is the amplitude for the $k$-th sinusoids component at the $l$-th measurement. 
We stack the multiple signals $\{\vx_l\}_{l=0}^{s-1}$ into a matrix as $\mX_\star=[\vx_{0},\cdots,\vx_{s-1}]^T\in\C^{s\times n}$ and $\mX_\star$ can be reformulated as: 
\begin{align*}
\mX_\star=\sum_{k=0}^{r-1}\vb_k\va(p_{k})^T, 
\numberthis\label{eq:Ground_truth}
\end{align*}
where $\vb_{k}=[b_{k,0},\cdots,b_{k,s-1}]^T\in\C^s$, 
$\va(p_k)=[1,p_k,\cdots,p_k^{n-1}]^T\in\C^n$ and $p_k=e^{(\imath2\pi f_k-\tau_k)}$. 

Due to hardware limitations such as sensor malfunction, sparse array design, and non-uniform sampling in the time domain, 
only a portion of the multiple signal ensemble can be observed in practice. 
A natural task, therefore, is to recover the target matrix $\mX_\star$ from its partial observations, i.e.,
\begin{align*}
\mbox{Find}\quad\mX
\quad\mbox{subject to}\quad\P_{\Omega}(\mX)=\sum_{(i,j)\in{\Omega}}\mX_\star(i,j)\ve_{\mathrm{s},i}(\ve_{\mathrm{n},j})^T,
\end{align*}
where $\ve_{{\mathrm{s}},i}$, $\ve_{{\mathrm{n}},j}$ is the canonical basis of $\R^s$ and $\R^n$  respectively,  
${\Omega}\subseteq\{0,1,\cdots,s-1\}{\times}\{0,1,\cdots,n-1\}$ 
is the index set, {$m=|\Omega|$ is the number of observations,} 
 and
 $\P_{\Omega}$ is a projection operator. 

 Gridless approaches were proposed to reconstruct multi-measurement spectrally sparse signals 
 as seen in \cite{Yang2016a,Li2016,Zhang2018}. In particular, \cite{Yang2016a,Li2016} introduced an atomic norm minimization (ANM) framework with multiple measurement vectors, which leverages convex optimization to solve the problem. However, these convex approaches are computationally expensive for large-scale problems. To address these challenges, AM-FIHT \cite{Zhang2018} is proposed by lifting the multiple signal ensemble  $\mX_\star\in\C^{s\times n}$ to a Hankel matrix $\H(\mX_\star)\in\C^{s n_1 \times n_2}$ where $n=n_1+n_2-1$. The spectral sparsity of the signal is then captured through the low-rank structure of the Hankel matrix, expressed as: $$\H(\mX_\star)=\mP_{L}\bm{\Gamma}\mP_R^T,$$ where $\mP_{L}\in\C^{sn_1\times sr}, \bm{\Gamma}\in\C^{sr\times r}$ and $\mP_R\in\C^{n_2\times r}.$ 
 This reformulation converts the problem into a low-rank Hankel matrix completion task as follows
 \begin{align*}
\min_{\mX\in\C^{s\times n}}&\la \P_{\Omega}\(\mX-\mX_\star\),(\mX-\mX_\star)\ra
\\&\mbox{s.t.}~\rank(\H(\mX))=r,
\end{align*}
 which is solved using the fast iterative hard thresholding (FIHT) algorithm \cite{Cai2019}. While this approach reduces computational costs relative to convex methods, 
 its computational complexity remains high, particularly when the number of multiple measurement vectors $s$ is large. 
 \begin{figure*}[!t]
		\centering
		\includegraphics[width=0.7\linewidth]{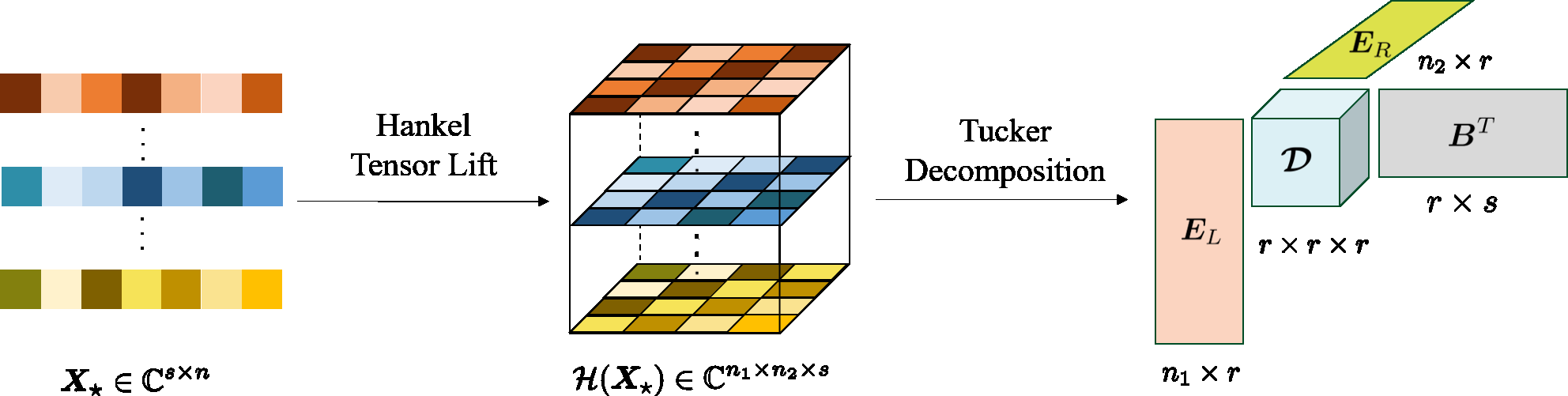}  
\caption{The multiple signal ensemble  $\mX_\star\in\C^{s\times n}$ is lifted to a 
Hankel tensor $\H(\mX_\star)\in\C^{n_1 \times n_2 \times s}$ that exhibits the low-rank Tucker decomposition.}
\label{fig:diagram_mulHT}
\end{figure*}


In this paper, we take a further step by lifting the multiple signals to a Hankel tensor and exploit the joint spectral sparsity via the low multilinear rankness of the tensor, as shown in Fig. \ref{fig:diagram_mulHT}. 
We define 
the Hankel tensor lifting operator  $\H:\C^{s\times n}\rightarrow\C^{n_1\times n_2\times s} 
 (n=n_1+n_2-1)$ \footnote{{We set $n_1=O(n)$ and $n_2= O(n)$ as  explained after Definition \ref{def:incoh}.}} 
 as 
\begin{align*}
    [\H(\mX_\star)](i,j,k)=\mX_\star(k,i+j), 
\end{align*}
where $i\in[n_1]$, $j\in[n_2]$, $k\in[s]$, $[n_1]$ is the set $\{0,1,\cdots,n_1-1\}$, and $[n_2]$, $[s]$ is similarly defined.   
The lifted tensor $\H(\mX_\star)\in\C^{n_1\times n_2\times s}$ naturally admits a low-rank Tucker decomposition \cite{Tucker1966}, which can be expressed as:\footnote{{Although the lifted tensor has low CP rank in essence, we choose the low-rank  Tucker decomposition as it is more stable, with better landscape and initialization strategy \cite{Frandsen2020,Tong2022}.}} 
\begin{align*}
\H(\mX_\star){=}\sum_{k=0}^{r-1} \va_{n_1}(p_k)\circ\va_{n_2}(p_k)\circ\vb_k{=}(\mE_L,\mE_R,\mB)\bcdot\bcD,  \numberthis \label{eq:Ground_truth_Tuckerdcp}
\end{align*}
where $\circ$ denotes the outer product, $\va_{n_1}(p_k)=[1,p_k,\cdots,p_k^{n_1-1}]^T$, 
 $\mE_L=\begin{bmatrix}
    \va_{n_1}(p_0),\cdots,\va_{n_1}(p_{r-1})
\end{bmatrix}\in\C^{n_1\times r}$, 
$\va_{n_2}(p_k)$ and
 $\mE_R\in\C^{n_2\times r}$ 
 are similarly defined,  $\mB=\begin{bmatrix}
    \vb_0,\cdots,\vb_{r-1}
\end{bmatrix}\in\C^{s\times r}$, and $\bcD\in\C^{r\times r \times r}$ is a core tensor, whose non-diagonal elements are zero and  
$\bcD(k,k,k)=1$ for $k=0,\cdots,r-1$. This reformulation converts the problem into a low multilinear rank Hankel tensor completion task as follows: 
 \begin{align*}
\min_{\mX\in\C^{s\times n}}&\la \P_{\Omega}\(\mX-\mX_\star\),(\mX-\mX_\star)\ra
\\&\mbox{s.t.}~\mulrank(\H(\mX))=\vr,\numberthis \label{eq:intro_HTC}
\end{align*}
where $\mulrank(\cdot)$ denotes the multilinear rank of a tensor and $\vr=(r,r,r)$.

\subsection{Contributions}
In this paper, we address multi-measurement spectral compressed sensing by introducing a novel low-rank Hankel tensor completion approach. Furthermore, 
we propose a scaled gradient descent \cite{Tong2021,Tong2022} algorithm based on low-rank Tucker decomposition with a particular emphasis on the Hankel tensor's structure to solve this task, named ScalHT.  

Our main contributions are listed as follows: 

\noindent1) We propose a novel algorithm, ScalHT, for solving multi-measurement spectral compressed sensing via low-rank Hankel tensor completion. ScalHT achieves up to an $O(\min\{s,n\})$-fold improvement in both storage and computation efficiency 
compared to ANM \cite{Yang2016a,Li2016} and AM-FIHT \cite{Zhang2018}, and up to an $O(n\cdot\min\{s,n\})$-fold improvement in computation efficiency compared to ScaledGD \cite{Tong2022}. Detailed comparisons with prior methods are summarized in Table \ref{tab:comp_methods} \footnote{The practical improvement in computation and storage efficiency also depends on $r$ and $\log(n)$, but we focus on their dependence on the ambient dimensions $s$ and $n$ since $r$ and $\log(n)$ $\ll \min\{s,n\}$.}.  
Numerical simulations demonstrate that ScalHT exhibits much lower computational and storage costs, with superior recovery performance compared to prior arts. 

 \noindent2) Beyond its practical efficiency, ScalHT is backed by rigorous theoretical guarantees:  both the recovery and linear convergence guarantees are established 
provided the number of observations is $O(sr^3\kappa^2\log(sn))$. Furthermore, to the best of our knowledge, these are the first rigorous theoretical guarantees ever established for low-rank Hankel tensor completion.

 \noindent3) Several technical innovations are introduced to achieve ScalHT's computation efficiency and theoretical guarantees, which are of independent interest. First, we formulate novel and efficient {computational techniques} that deeply leverage the interaction between {Hankel structure and low-rank tensor decomposition.} These techniques enable ScalHT to achieve a significantly lower per-iteration computational complexity of  $O(s+n)$. 
Second, we propose a provable sequential spectral initialization strategy and derive novel concentration results tailored to Hankel tensor sampling. These contributions are critical for establishing ScalHT's theoretical guarantees. 

 \begin{table*}[t]	
		\begin{center}
			
			\caption{
   Comparisons between algorithms towards multi-measurements spectral compressed sensing.  } 
			\begin{tabular}{c|c|c|c|c}
   \hline
				\textbf{Algorithms} & Computational complexity \footnotemark{} & Storage complexity   & Model formulation & Optimization method
                \\
	\hline			
				\hline
				ANM \cite{Li2016,Yang2016a}   &$O((sn)^3)$ &$O(sn)$  & Atomic norm with MMV  & CVX via interior point method 
               
     \\ 
				ScaledGD \cite{Tong2022} &$O(sn^2)$ &$O\(s+n\)$    &Low-rank tensor  &  Gradient descent
                  \\
				AM-FIHT \cite{Zhang2018} &$O(sn)$ &$O(sn)$  & 
                Low-rank Hankel matrix& Fast iterative hard thresholding
                \\
				ScalHT (ours) &$O\(\bm{s+n}\)$ &$O\(\bm{s+n}\)$    & Low-rank  Hankel tensor 
                &  Gradient descent
 \\   \hline
			\end{tabular}
			\label{tab:comp_methods}
		\end{center}	
	\end{table*}
\subsection{Related work}
When the number of multiple measurement vectors is one, multi-measurement spectral sparse signals reduce to a single spectrally sparse signal \cite{Cai2018,Zhang2021}. Conventional compressed sensing \cite{Candes2006,Donoho2006} could be applied to estimate its spectrum by assuming the frequencies lie on a uniform grid. However, these approaches often suffer from basis mismatch errors in practice. 
To avoid the mismatch error, some gridless methods such as atomic norm minimization (ANM) \cite{Tang2013} and Hankel matrix completion approaches \cite{Chen2014,Cai2018,Cai2019} were proposed. 

Returning to the multiple measurements case, traditional on-grid compressed sensing approaches addressed this problem  
through group sparsity \cite{Tropp2006,Tropp2006a,Mishali2008}, 
and then some gridless methods \cite{Yang2016a,Li2016,Zhang2018} were proposed. 
In particular, \cite{Yang2016a,Li2016} proposed the ANM 
to characterize the joint spectral sparsity, and applied convex optimization to solve this problem. However, 
such convex approaches incur a computational complexity of $O((sn)^3)$ and a storage complexity of $O(sn)$, which are high for large-scale problems. 
Inspired by the fast iterative hard thresholding (FIHT) algorithm \cite{Cai2019}, AM-FIHT \cite{Zhang2018} was introduced 
with fast convergence guarantees, which exploited the joint spectral sparsity via the low-rankness of the Hankel matrix. However, its computational complexity, $O(snr\log
(n)+snr^2)$, and storage complexity, $O(snr)$, remain high when the number of multiple measurements vectors $s$ is large. In contrast, our algorithm ScalHT exhibits a computational complexity as $O(nr^2\log(n)+nr^3+sr^2)$, and a storage complexity as $O((s+n)r+r^3)$, achieving up to $O(\min\{s,n\})$-fold improvement in computation and storage efficiency when treating $r$ and $\log(n)$ as constants. 
Detailed comparisons are listed in Table \ref{tab:comp_methods}. 
Recently, Wu et al. \cite{Wu2023a,Wu2024c,Wu2024b} proposed structured matrix embedding approaches that provide good accuracy, but these methods can't handle damped signals, and 
lack recovery and convergence rate guarantees. 
 

Fast and nonconvex gradient methods based on low-rank  factorization \cite{Chi2019review,Cai2018,Li2024,Li2024a} have garnered significant interest in recent years. In \cite{Cai2018}, a projected gradient descent (PGD) method was proposed for single-measurement spectral compressed sensing, while \cite{Li2024a} introduced a symmetric Hankel projected gradient descent (SHGD) that employs symmetric factorization, effectively reducing both computational and storage costs by nearly half.  
However, when the target matrix or tensor is ill-conditioned, gradient descent methods based on low-rank factorization tend to converge slowly. To address these challenges, scaled gradient descent (ScaledGD) methods \cite{Tong2021,Tong2022,Xu2023,Li2024} were developed to accelerate estimation for ill-conditioned matrices or tensors. 
\footnotetext{This refers to the computational complexity per iteration. }

While it might seem intuitive to directly apply ScaledGD \cite{Tong2022}—a method designed for low-rank tensor estimation—this approach faces several challenges in the context of low-rank Hankel tensor completion. {Specifically, the interplay between the Hankel structure and low-rank tensor decomposition in this problem poses challenges for both theoretical analysis and computation.} 
First, previous concentration inequalities, Lemmas 18-21, and off-diagonal spectral initialization in  \cite{Tong2022}, which apply to random tensor sampling, cannot be generalized to this Hankel tensor completion problem.  
     This is because the Hankel tensor sampling pattern introduces the dependence between the first and second dimensions of the tensor.   {To address this, we define the Hankel tensor basis in Definition \ref{def:HankelT} and establish concentration inequalities Lemmas \ref{lem:init_conc_m1}, \ref{lem:PTconc} under Hankel tensor sampling. Additionally, we propose a new sequential spectral initialization strategy in Algorithm \ref{alg:init} and provide theoretical guarantees for it in Lemma \ref{lem:init}.}  {Second, ScaledGD fails to consider the intrinsic Hankel structure of the tensor, leading to a high per-iteration computational complexity of $O(sn^2r)$. In contrast, leveraging our proposed  {Lemma~\ref{lem:HankelT_algebra},\ref{lem:HT_mul_1or2}}, which characterizes the interaction between Hankel structure and low-rank tensor decomposition, we transform the high-dimensional operations in tensor space into low-dimensional operations in low-complexity factor space. }  Furthermore, we introduce the fast computation techniques in Algorithm~\ref{alg:fast_comput}, achieving a significantly lower per-iteration complexity of $O(nr^2\log(n)+nr^3+sr^2)$ when $m=O(sr)$. For a detailed comparison between our algorithm and ScaledGD \cite{Tong2022}, please refer to Table \ref{tab:comp_methods}. 

Our work is of independent interest to low-rank Hankel 
 tensor completion tasks, which frequently arise in applications such as seismic reconstruction \cite{Qian2021,Trickett2013},  traffic estimation \cite{Wang2023a,Wang2021}, and image recovery \cite{Yamamoto2022,Yokota2018}, {offering a highly efficient optimization method for solving related problems.} 
Also, to the best of our knowledge, we are the first to establish rigorous recovery and linear convergence guarantees for the low-rank Hankel tensor completion problem. 

\noindent \textbf{Notations.} 
We denote vectors with bold lowercase letters, 
matrices with bold uppercase letters, tensors with bold calligraphic letters, and operators with calligraphic letters. For matrix $\vZ$, 
$\norm{\vZ}$ and $\norm{\vZ}_F$ denote its 
spectral norm, and Frobenius norm, respectively. Besides, define $\norm{\vZ}_{2,\infty}$ as the largest $\ell_2$-norm of its rows.   We define 
 the inner product of two matrices $\vZ_1$ and $\vZ_2$ as $\la\vZ_1,\vZ_2\ra=\mathrm{trace}(\vZ_1^H\vZ_2)$. For a tensor $\bcZ$, the inner product between two complex tensors is defined as 
$\langle\bcZ_{1},\bcZ_{2}\rangle = \sum_{i_1,i_2,i_3} \overline{\bcZ}_{1} (i_1,i_2,i_3) \bcZ_{2} (i_1,i_2,i_3).$ 
 We denote the identity matrix and operator as $\vI$ and $\I$, respectively. The adjoint of the operator $\A$ is denoted as $\A^*$. $\Real(\cdot)$ denotes the real part of a complex number. $\otimes$ denotes Kronecker product and $\circ$ denotes the outer product, for example: 
\begin{align*}
    [\vz_1\circ\vz_2\circ\vz_3](i_1,i_2,i_3)=\vz_1(i_1)\vz_2(i_2)\vz_3(i_3).
\end{align*}
All the numbering of the elements starts at zero. 
We denote $[n]$ as the set $\{0,1,\cdots,n-1\}$, where $n$ is a natural number.  For $a\in[n]$, $w_a$ is defined as the {cardinality} of the set  $\mathcal{W}_a=
 \{(j,k) | j+k=a, 0\leq j \leq n_1-1, 0\leq k\leq n_2-1 \} $. 
Next, we introduce additional notations for tensor algebra: 
\paragraph{Multilinear rank} 
The multilinear rank of a 
tensor 
$\bcZ$ is  defined as 
\begin{align*}
    \mulrank(\bcZ){=}(\rank(\cM_1(\bcZ)),\rank(\cM_2(\bcZ)),\rank(\cM_3(\bcZ))).
\end{align*}

\paragraph{Tensor matricization} 
Given a tensor $\bcZ\in\C^{n_1\times n_2\times s}$, the mode-$1$ matricization operation is defined as $$[\cM_1(\bcZ)]\big(i_1, i_2 + i_3 n_2\big) = \bcZ(i_1,i_2,i_3);$$ $\cM_2(\bcZ)$ and $\cM_3(\bcZ)$  are defined similarly. 
\paragraph{Mode-$i$ tensor product} The mode-$i$ product of a tensor $\bcZ\in\C^{n_1\times n_2\times s}$ and a matrix $\mM\in\C^{k\times n_1}$ is defined as, taking $i=1$ for example
\vspace{-5pt}
\begin{align*}
    [\bcZ\times_1\mM](j_1,i_2,i_3 )=\sum_{i_1}{\bcZ}(i_1,i_2,i_3)\mM(j_1,i_1),
    \vspace{-4pt}
\end{align*}
where $\bcZ\times_1\mM\in\C^{k\times n_2 \times s}$. Mode-$2$ and Mode-$3$ tensor product are similarly defined.  
\vspace{-3pt}
\paragraph{Tensor norms} 
The Frobenius norm is defined as $\|\bcZ\|_{F}=\sqrt{\langle\bcZ,\bcZ\rangle}$. 
With slight abuse of terminology, denote 
\begin{align*}
\sigma_{\max}(\bcZ) &= \max_{k=1,2,3} \sigma_{\max}(\cM_k(\bcZ)), \\ \sigma_{\min}(\bcZ) &= \min_{k=1,2,3} \sigma_{\min}(\cM_k(\bcZ)).
\end{align*} 

\paragraph{Tucker decomposition} 
 For a tensor $\bcZ$ with  $\mulrank(\bcZ)=(r_1,r_2,r_3)$, it has the following Tucker decomposition \cite{Tucker1966}:
\begin{align*}
    \bcZ=(\mL,\mR,\mV)\bcdot\bcS=\bcS\times_1\mL\times_2\mR\times_3\mV, 
\end{align*}
where $\bcS\in\C^{r_1\times r_2 \times r_3}$, $\mL\in\C^{n_1\times r_1}$, $\mR\in\C^{n_2\times r_2}$ and $\mV\in \C^{s\times r_3}$. Given a complex tensor $ \bcZ=(\mL,\mR,\mV)\bcdot\bcS$: 
\begin{align*}
\cM_1(\bcZ) &= \mL\cM_1(\bcS)(\mV\otimes\mR)^{T}, \\ \cM_2(\bcZ) &= \mR\cM_2(\bcS)(\mV\otimes\mL)^{T}, \\ \cM_3(\bcZ) &= \mV\cM_3(\bcS)(\mR\otimes\mL)^{T}. 
\end{align*}
{Note that for a complex tensor, its mode-$i$ matricization is not $\cM_1(\bcZ) = \mL\cM_1(\bcS)(\mV\otimes\mR)^{H}$ but still $\cM_1(\bcZ) =\mL\cM_1(\bcS)(\mV\otimes\mR)^{T}$, taking $i=1$ for example.} 

\section{Model formulation and algorithm} 
\subsection{Problem formulation}
\label{sec:pb_formul}
We aim to recover the desired multiple signals through Hankel tensor completion. First, we construct a Hankel tensor 
$\H(\mX_\star)\in\C^{n_1\times n_2\times s}$ as illustrated in Fig. \ref{fig:diagram_mulHT}
where $\H:\C^{s\times n} \rightarrow\C^{n_{1}\times n_{2} \times s}$ is the Hankel tensor lifting operator.
The lifted tensor $\H(\mX_\star)$ has the low-rank Tucker decomposition 
$$
\H(\mX_\star)=(\mE_L,\mE_R,\mB)\bcdot\bcD,
$$
as introduced earlier. {The exact multilinear rank of  $\H(\mX_\star)$ is proved in the following Lemma. {This lemma demonstrates that the 
previous decomposition is equivalent to the exact multilinear rank $(r,r,r)$ under some conditions. Thus $\H(\mX_\star)$ must have a low-rank Tucker decomposition in \eqref{eq:grdtruth-oth-tucker} (which is essentially HOSVD \cite{DeLathauwer2000}).}
\begin{lemma}[The multilinear  rank of $\H(\mX_\star)$] \label{lem:mulrank_HT}
When $\rank(\mB)=r$,  
$p_k=e^{(\imath2\pi f_k-\tau_k)}$ are distinct for ${k=0,1,\cdots,r-1}$ 
 and $r\ll\min\{s,n\}$, the multilinear rank of the Hankel tensor  
 $\H(\mX_\star)
 {=}(\mE_L,\mE_R,\mB)\bcdot\bcD$ 
 satisfies  
 \begin{align*}
     \mulrank{(\H(\mX_\star))}=(r,r,r). 
 \end{align*}
\end{lemma}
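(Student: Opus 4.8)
The plan is to show that each mode-$k$ matricization of $\H(\mX_\star)$ has rank exactly $r$, by exhibiting a rank-$r$ factorization and then arguing the factors have full column rank $r$. Recall from \eqref{eq:Ground_truth_Tuckerdcp} that $\H(\mX_\star) = \sum_{k=0}^{r-1} \va_{n_1}(p_k)\circ\va_{n_2}(p_k)\circ\vb_k$, so using the matricization formulas for a Tucker/CP form we have $\cM_1(\H(\mX_\star)) = \mE_L \cM_1(\bcD)(\mB\otimes\mE_R)^T$, and similarly for modes $2$ and $3$ with the roles of $\mE_L,\mE_R,\mB$ permuted. Since $\cM_1(\bcD)$ is the $r\times r^2$ matricization of the ``superdiagonal'' core $\bcD$, it has rank $r$; thus $\rank(\cM_1(\H(\mX_\star))) \le r$, and likewise for the other two modes. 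Hence it suffices to prove the reverse inequality $\rank(\cM_k(\H(\mX_\star))) \ge r$ for each $k$.

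For the lower bounds I would use the Vandermonde structure. The matrix $\mE_L = [\va_{n_1}(p_0),\dots,\va_{n_1}(p_{r-1})]$ is an $n_1\times r$ Vandermonde matrix with nodes $p_0,\dots,p_{r-1}$; since these nodes are distinct and $n_1 \ge r$ (which follows from $n_1 = O(n)$ and $r\ll\min\{s,n\}$), $\mE_L$ has full column rank $r$. The same argument gives $\rank(\mE_R) = r$. By hypothesis $\rank(\mB) = r$. Now for mode $1$: since $\cM_1(\bcD)(\mB\otimes\mE_R)^T$ has $r$ rows, and in fact one checks that $\cM_1(\bcD)(\mB\otimes\mE_R)^T$ has full row rank $r$ — because $\cM_1(\bcD)$ selects, for row $k$, the $(k,k)$-block pattern, so the $k$-th row of the product is $\vb_k^T \otimes \va_{n_2}(p_k)^T$ (up to reindexing), and these $r$ vectors are linearly independent since the $\vb_k$ (columns of $\mB$, full rank) already are, or more cleanly since $\mB\otimes\mE_R$ restricted appropriately is full rank — we get $\rank(\cM_1(\H(\mX_\star))) = \rank(\mE_L \cdot (\text{rank-}r\ r\times(\cdot))) = r$, using that $\mE_L$ is injective on the column space. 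An alternative and perhaps cleaner route: the column space of $\cM_1(\H(\mX_\star))$ equals $\mathrm{span}\{\va_{n_1}(p_0),\dots,\va_{n_1}(p_{r-1})\}$ exactly when the ``coefficient'' matrix $\cM_1(\bcD)(\mB\otimes\mE_R)^T$ has full row rank $r$, which holds because $\mB$ and $\mE_R$ both have full column rank $r$ and the core is superdiagonal. Symmetric reasoning handles modes $2$ and $3$: for mode $3$ the relevant coefficient matrix is $\cM_3(\bcD)(\mE_R\otimes\mE_L)^T$, full row rank since $\mE_L,\mE_R$ are full column rank; for mode $2$ it is $\cM_2(\bcD)(\mB\otimes\mE_L)^T$, full row rank since $\mB,\mE_L$ are full column rank.

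I expect the main obstacle to be the bookkeeping around the Kronecker product and the superdiagonal core: one must verify carefully that $\cM_k(\bcD)(\cdot\otimes\cdot)^T$ retains full row rank $r$, i.e. that restricting to the superdiagonal indices of $\bcD$ picks out $r$ linearly independent rows from the Kronecker product $\mB\otimes\mE_R$ (resp. the other pairings). This amounts to noting that the $k$-th row of $\cM_1(\bcD)$ is $\ve_k^T\otimes\ve_k^T$ (indicator of the $(k,k)$ position in the flattened $r\times r$ slab), so the product's $k$-th row is $(\vb_k\otimes\va_{n_2}(p_k))^T$ where $\vb_k$ is the $k$-th column of $\mB$; the collection $\{\vb_k\otimes\va_{n_2}(p_k)\}_{k=0}^{r-1}$ is linearly independent because a linear dependence $\sum_k c_k\, \vb_k\otimes\va_{n_2}(p_k) = \bm 0$ forces, upon contracting against any functional separating the distinct $\va_{n_2}(p_k)$, that all $c_k\vb_k = \bm 0$, hence $c_k = 0$ as $\vb_k\neq\bm 0$. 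Once this linear-independence fact is in hand, combining it with full column rank of the ``outer'' factor ($\mE_L$ for mode 1, etc.) via $\rank(\mA\mC) = \rank(\mC)$ when $\mA$ is injective gives $\rank(\cM_k(\H(\mX_\star))) = r$ for all three modes, completing the proof that $\mulrank(\H(\mX_\star)) = (r,r,r)$.
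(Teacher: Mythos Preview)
Your proposal is correct and follows essentially the same skeleton as the paper: matricize, show the outer factor ($\mE_L$, $\mE_R$, or $\mB$) has full column rank $r$ via the Vandermonde structure (or hypothesis), and then show the remaining ``coefficient'' factor has full row rank $r$. The only packaging difference is that the paper writes the matricizations in Khatri--Rao form, e.g.\ $\cM_1(\H(\mX_\star))=\mE_L(\mB\odot\mE_R)^T$, and dispatches the full-rank-of-the-coefficient step by citing the Sidiropoulos--Bro bound $\rank(\mA\odot\mB)\ge\min(\rank(\mA)+\rank(\mB)-1,r)$, whereas you unpack the superdiagonal core explicitly to identify the $k$-th row as $(\vb_k\otimes\va_{n_2}(p_k))^T$ and give a direct linear-independence argument using a separating functional on the $\va_{n_2}(p_k)$. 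Your route is slightly more self-contained (no external lemma needed); the paper's is shorter once the Khatri--Rao rank inequality is on the shelf. Both arrive at the same conclusion by the same mechanism.
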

\begin{proof}
    See Appendix~\ref{pf:mulrank_HT}.
\end{proof}
}




 
{Following the route in Hankel-lift approaches \cite{Cai2018,Cai2023} in signals reconstruction, we construct the loss function in the lifted tensor domain to recover $\mX_\star$, which is equivalent to the following rank constraint {\em weighted least square} problem} \footnote{ {We construct the loss in the lifted tensor domain as: 
\begin{align*}
&\|\H{\P}_\Omega\mX-\H{\P}_\Omega\mX_\star\|_F^2=\sum_{(k,a)\in\Omega}\sum_{i+j=a}\([\H\mX](i,j,k)-[\H\mX_\star](i,j,k)\)^2
          \\&=\sum_{(k,a)\in\Omega}w_a (\mX(k,a)-\mX_\star(k,a))^2=\langle\P_{\Omega}(\D(\mX-\mX_\star)),\D(\mX-\mX_\star)\rangle,
      \end{align*}where $w_a$ is the length of the $k$-th skew-diagonal of an $n_1\times n_2$ matrix  defined in notations.  $\D^2=\H^*\H :\C^{s \times n} \rightarrow\C^{s\times n} $ is a linear {reweighting} operator such that $[\D(\mM)](:,a)=\sqrt{w_a}\mM(:,a)$ for $a\in[n]$.}}.  
\begin{align*}
\min_{\mX\in\C^{s\times n}}&\la \P_{\Omega}\(\D(\mX-\mX_\star)\),\D(\mX-\mX_\star)\ra
\\&\mbox{s.t.}~\mulrank(\H(\mX))=\vr,\numberthis\label{eq:low_rank_H}
\end{align*}
where $\vr=(r,r,r)$, $\D$ is a linear reweighting operator.  {This modeling admits a well-defined convergence analysis in tensor space, which is a generalization of  \cite{Cai2018,Cai2023}.}  The problem \eqref{eq:low_rank_H} can be reformulated as the following problem, making the substitutions that $\mY_\star= \D(\mX_\star)$ and $\mZ= \D(\mX)$
\begin{align*}
\min_{\mZ\in\C^{s\times n}}\la \P_{\Omega}\(\mZ-\mY_\star\),\mZ-\mY_\star\ra~\mbox{s.t.}~\mulrank(\G(\mZ))=\vr,\numberthis\label{eq:low_rank_G}
\end{align*}
where $\G=\H\D^{-1}$, $\D$ is invertible, and it is evident that  $\G^*\G=\I$. 
Let $\bcZ = \G(\mZ)$, and we can utilize the low-rank Tucker decomposition to remove the multilinear rank constraint $$\bcZ=(\mL,\mR,\mV)\bcdot\bcS,$$ where $\mL \in\C^{n_1\times r}$, {$\mR \in\C^{n_2\times r}$}, $\mV \in\C^{s\times r}$ and $\bcS \in\C^{r\times r \times r}$.
Besides, the Hankel structure of the tensor $\bcZ=\G(\mZ)$ is enforced by the following constraint:
\begin{align*}
(\I-\G\G^*)(\bcZ)=\bm{0},
\end{align*}
where $\G\G^*$ is a projector that maps a tensor to a Hankel tensor. Define the factor quadruple as $\mF=(\mL,\mR,\mV,\bcS)$, and  \eqref{eq:low_rank_G} can be rewritten as
\begin{align*}
&\min_{\mF=(\mL,\mR,\mV,\bcS)}\|\P_{\Omega}\(\G^*\((\mL,\mR,\mV)\bcdot\bcS\)-\mY_\star\)\|_F^2 \\
&\quad\mbox{s.t.}\quad(\I-\G\G^*)((\mL,\mR,\mV)\bcdot\bcS) = \bm{0},\numberthis\label{eq:constrained}\end{align*}
where we insert the facts $\G(\mZ)=(\mL,\mR,\mV)\bcdot\bcS$ and $\mZ=\G^*\(\G(\mZ)\)=\G^*\((\mL,\mR,\mV)\bcdot\bcS\)$ 
into \eqref{eq:low_rank_G}. 
Last, we consider a penalized version of \eqref{eq:constrained}: 
\begin{align*}
\min_{\vF}~f(\vF)\coloneqq &\frac{1}{2p}\|\P_{\Omega}\(\G^*\((\mL,\mR,\mV)\bcdot\bcS\)-\mY_\star\)\|_F^2\\
&+\frac{1}{2}\ln(\I-\G\G^*)((\mL,\mR,\mV)\bcdot\bcS)\rn_F^2, \numberthis\label{eq:penalized}
\end{align*}
 where $p=\frac{m}{sn}$ is the observation ratio. We interpret \eqref{eq:penalized} as that
 one uses a low-rank tensor $\bcZ=(\mL,\mR,\mV)\bcdot\bcS$ with Hankel structure penalty to estimate the Hankel tensor $\bcZ_\star=\H(\mX_\star)$.
 {
\begin{remark}
    From Lemma~\ref{lem:mulrank_HT}, we know $\mulrank(\bcZ_\star)=(r,r,r)$.
Thus we apply the  Tucker decomposition $\bcZ=(\mL,\mR,\mV)\bcdot\bcS$ where 
$\bcS\in\C^{r\times r \times r}$. 
In the practical scenario, the latent dimensions may differ across modes, and it is better to apply a general formulation $\bcS\in\C^{r_1\times r_2 \times r_3}$ to accommodate asymmetric structures. 
\end{remark}
}
{
\begin{remark}[Determining the rank]
Following single/multi-measurement spectral compressed sensing via Hankel-lift approach in  \cite{Cai2018,Cai2019,Zhang2018,Cai2023}, we assume the rank is known a priori; however, determining an appropriate rank remains a crucial challenge in practical scenarios. 
We may adopt the "rank increment" strategy proposed in  \cite{Yokota2018, Yamamoto2022}. 
In \cite{Yokota2018, Yamamoto2022}, the authors demonstrated how to select the incremental mode $m'$  and adjust the corresponding rank  $r_{m'} $. Integrating this strategy with the gradient descent algorithm under low-rank decomposition could further enhance its practical applicability, which we aim to explore in future work.
\end{remark}
}
\vspace{-6mm}
\subsection{Algorithm: ScalHT} \label{sec:def-algorithm}
We introduce a scaled (projected) gradient descent algorithm \cite{Tong2022} to address the Hankel tensor completion problem, which we name ScalHT, as detailed in Algorithm \ref{alg:ScalHT}. {Scaled gradient descent \cite{Tong2022} is preferred for its condition number $\kappa$-independent convergence, even under moderate ill-conditioning, while vanilla gradient descent's performance degrades with $\kappa$. This is critical here, as the lifted Hankel tensor $\mathcal{H}(\mathbf{X}_\star)$ from multi-measurement spectral sparse signals typically exhibits moderately large $\kappa$ in our preliminary simulations. Besides, a practical example of an ill-conditioned Hankel matrix in DOA is provided in \cite{Cai2025}, when the spatial frequencies are close to each other. Such ill-conditioned cases in Hankel matrices can be safely generalized to Hankel tensors.}
 Specifically, the update rules are outlined below to minimize the loss function \eqref{eq:penalized}:
\begin{align}
\begin{split}
\mL_{+} &= \mL - \eta\nabla_{\mL}f(\mF )\big(\breve{\mL}^{H} \breve{\mL}  \big)^{-1}, \\
\mR _{+} &= \mR  - \eta\nabla_{\mR}f(\bF)\big(\breve{\mR}^{H} \breve{\mR}  \big)^{-1}, \\
\mV _{+}& = \mV  - \eta\nabla_{\mV}f(\bF)\big(\breve{\mV}^{H} \breve{\mV}  \big)^{-1}, \\
\bcS _{+}& 
= \bcS  - \eta\big((\mL ^{H}\mL )^{-1},(\mR ^{H}\mR )^{-1},(\mV ^{H}\mV )^{-1}\big)\bcdot\nabla_{\bcS}f(\bF),
\end{split}\label{eq:ScalHT}
\end{align}
where $\breve{\mL}\in \C^{s n_2 \times r}$, $\breve{\mR}\in \C^{s n_1 \times r}$ and  $\breve{\mV}\in \C^{n_1 n_2\times r}$  are defined as:
\begin{align} \label{eq:breve_lrv}
\begin{split}
\breve{\mL} &\coloneqq(\overline{\mV} \otimes\overline{\mR} )\cM_{1}(\bcS )^{H}, \\
\breve{\mR} &\coloneqq(\overline{\mV} \otimes\overline{\mL} )\cM_{2}(\bcS )^{H},\\
\breve{\mV} &\coloneqq(\overline{\mR} \otimes\overline{\mL} )\cM_{3}(\bcS )^{H}.
\end{split}
\end{align}
In \eqref{eq:ScalHT}, the derivatives of $f(\mF)$ are:
\begin{align}
\begin{split}
\nabla_{\mL}f(\mF)&=\cM_{1}\({p}^{-1}\G\P_{\Omega}(\G^{*}\bcZ-\mY_\star)+(\I-\G\G^{*})(\bcZ)\)\breve{\mL},\\
\nabla_{\mR}f(\mF)&=\cM_{2}\({p}^{-1}\G\P_{\Omega}(\G^{*}\bcZ-\mY_\star)+(\I-\G\G^{*})(\bcZ)\)\breve{\mR}, \\
\nabla_{\mV}f(\mF)&=\cM_{3}\({p}^{-1}\G\P_{\Omega}(\G^{*}\bcZ-\mY_\star)+(\I-\G\G^{*})(\bcZ)\)\breve{\mV}, \\
\nabla_{\bcS}f(\mF)&=(\mL^H,\mR^H,\mV^H)\bcdot\big({p}^{-1}\G\P_{\Omega}(\G^{*}\bcZ-\mY_\star)\\
&\quad+(\I-\G\G^{*})(\bcZ)\big), 
\end{split}\label{eq:loss_derivative}
\end{align}
where we denote $\bcZ=(\mL,\mR,\mV)\bcdot\bcS$ for simplicity. Besides, their fast computations are presented in Algorithm \ref{alg:fast_comput}. 


\begin{algorithm}[t]
\caption{Low-rank Hankel Tensor Completion via Scaled  Gradient Descent (ScalHT)}
\label{alg:ScalHT}
\begin{algorithmic} 
 \Statex  {Partition} ${\Omega}$ into disjoint sets ${\Omega}_0,\cdots,{\Omega}_K$ of equal size $\hat{m}$, and let $\hat{p}=\frac{\hat{m}}{sn}$. And set $\bcZ^0= \hat{p}^{-1}\G\P_{{\Omega}_{0}}(\mY_\star)$.
\Statex \textbf{Initialization:} \Statex Initialize $\mF^0=(\mL^0,\mR^0,\mV^0,\bcS^0)$ sequentially via Algorithm \ref{alg:init}. 





 
\For{$k=0,1,\cdots,K$}
\Statex $f^k(\mF)$ is shown in \eqref{eq:penalized} where the set $\Omega$ is replaced with $\Omega_k$. 
The derivatives of $f^k(\mF)$ are computed via Algorithm \ref{alg:fast_comput}. 
\begin{align*}
   1.~{{\mL'}^{k+1}}=&\mL^k-\eta \nabla_{\mL} f^k(\mF^k)\big((\breve{\mL}^{k})^{H} \breve{\mL}^{k}\big)^{-1},
\\{{\mR'}^{k+1}}=&\mR^k-\eta \nabla_{\mR} f^k(\mF^k)\big((\breve{\mR}^{k})^{H} \breve{\mR}^{k}\big)^{-1},
\\{\mV}^{k+1}=&\mV^k-\eta \nabla_{\mV} f^k(\mF^k)\big((\breve{\mV}^{k})^{H} \breve{\mV}^{k}\big)^{-1},
\\\bcS^{k+1}=&\bcS^{k}-\eta\big( \big((\mL^k)^H\mL^k\big)^{-1}, \((\mR^k)^H\mR^k\)^{-1},
    \\&
     \((\mV^k)^H\mV^k\)^{-1}\big)\bcdot\nabla_{\bcS}f^k(\mF^k).
\end{align*}
\\ \quad~~$2.~
(\mL^{k+1},\mR^{k+1})=\PB{{\mL'}^{k+1},{\mR'}^{k+1}}.$ 
\\

\EndFor
\Statex \textbf{Output:} 
$\mX^K = \D^{-1}\G^*\big((\mL^K,\mR^K,\mV^K)\bcdot\bcS^K\big)$. 
\end{algorithmic}
\end{algorithm}

\begin{algorithm}[t]
\caption{Sequential Spectral Initialization}
\label{alg:init}
\begin{algorithmic} 
\Statex 
$\mathrm{SVD}_r(\cdot)$ returns the top-$r$ left singular vectors of a matrix. 
\begin{align*}
    &1.~{\mL'}^0=\mathrm{SVD}_{r}(\cM_1(\bcZ^0)),~{\mR'}^0=\mathrm{SVD}_{r}(\cM_2(\bcZ^0)).
    \\& 2.~\mV^0=\mathrm{SVD}_r(\cM_3(\bcZ^0\times_1({\mL'}^0)^H)), \mbox{based on ${\mL'}^0$}. 
    \\& 3.~\bcS^0=\big(({\mL'}^0)^H,({\mR'}^0)^H,(\mV^0)^H\big)\bcdot \bcZ^0.
    \\& 4.~(\mL^0,\mR^0)=\PB{{\mL'}^0,{\mR'}^0}.
\end{align*}
\Statex \textbf{Output:} 
$\mF^0=(\mL^0,\mR^0,\mV^0,\bcS^0)$.
\end{algorithmic}
\end{algorithm}



In Algorithm \ref{alg:ScalHT}, the observation set $\Omega$ is divided into $K+1$ separate groups, each containing the same number of elements, $\hat{m}$. This method of dividing the dataset is frequently utilized in the study of matrix completion \cite{Cherapanamjeri2017, Jain2013}, as well as in the context of Hankel matrix completion \cite{Cai2019, Zhang2019,Li2024a}. By employing this sample-splitting strategy, the current observation set remains independent of previous iterations, which simplifies the theoretical analysis.

To ensure the Hankel tensor completion can be recovered, it is crucial to maintain the incoherence property of the factors $\mL$ and $\mR$ {as shown in Definition \ref{def:incoh}} throughout the iterations. Inspired by \cite{Tong2022}, we apply the scaled projection operator as follows: 
\begin{align*}
    (\mL,\mR)=\P_{B}(\mL',\mR'), \numberthis \label{eq:incoh_proj}
\end{align*}
where $\mL',\mR'$ are some iterates during the trajectories, and 
\begin{align*}
    \mL(i_1,:)&=\min\{1,\frac{B}{\sqrt{n}\|\mL'(i_1,:)\breve{\mL}'\|_2}\}\mL'(i_1,:),
    \\ \mR(i_2,:)&=\min\{1,\frac{B}{\sqrt{n}\|\mR'(i_2,:)\breve{\mR}'\|_2}\}\mR'(i_2,:),
\end{align*}
 for $i_1\in[n_1]$, $i_2\in[n_2]$, $B> 0$ is the projection radius, and {$\breve{\mL}',\breve{\mR}'$} are defined similarly as \eqref{eq:breve_lrv} from $(\mL',\mR',\mV,\bcS)$. We emphasize that the recovery guarantees for this problem do not require the incoherence of $\mV_\star$, which is defined in \eqref{eq:grdtruth-oth-tucker}, and thus there is no need to project $\mV$. In contrast, in ScaledGD \cite{Tong2022} for tensor completion, all factors $\mL$, $\mR$, and $\mV$ must be projected onto the incoherence set.


Next, we introduce the initialization of ScalHT. The previous off-diagonal spectral initialization methods used in {tensor completion} \cite{Tong2022,Cai2022b} are not applicable to the Hankel tensor completion problem. The theoretical guarantees of off-diagonal spectral initialization rely on the sampling independence between each dimension, whereas the Hankel sampling introduces statistical dependence between the dimensions of the Hankel structure.  
Instead, towards Hankel tensor completion, we propose a sequential spectral initialization strategy 
as shown in Algorithm \ref{alg:init}. 

Let $\bcZ^0= p^{-1}\G\P_{{\Omega}_{0}}(\mY_\star)$. First, we obtain ${\mL'}^0,{\mR'}^0$ using the top-$r$ left singular vectors of $\cM_i(\bcZ^0)$ where $i=1,2$. And let $(\mL^0,\mR^0)=\PB{{\mL'}^0,{\mR'}^0}$ to maintain the incoherence of $\mL^0$ and $\mR^0$. However, we don't initialize $\mV^0$ from $\cM_3(\bcZ^0)$. Through our analysis, 
$\|\cM_3(\bcZ^0-\bcZ_\star)\|$ is large because $\|\cM_3(\bcH_{k,j})\|=1$, as shown in Definition \ref{def:HankelT}. 
Instead, we initialize $\mV^0$ via the top-$r$ left singular vectors of one intermediary quantity $\cM_3(\bcZ^0\times_1({\mL'}^0)^H)$, which depends on the previous estimate ${\mL'}^0$. 
Thus, we name this method as sequential initialization strategy. The guarantees of our sequential spectral initialization are shown in Lemma~\ref{lem:init} without using $\|\cM_3(\bcH_{k,j})\|=1$. 

Finally, we enter into the stage that iterative updating on the factors with projection, and see Alg. \ref{alg:ScalHT} for details. 

\begin{table*}[!t]	
\renewcommand\arraystretch{1.1}
		\begin{center}			
   \caption{ 
   Fast computation of main terms in ScalHT. ($r$ and $\log(n)$ are seen as constants)} 
     \begin{tabular}{c|c|c|c|c}
   \hline
				\textbf{Main terms}  & \textbf{Fast computation} & \textbf{ Previous complexity}  & \textbf{Current complexity}& \textbf{Location}

    \\
	\hline	\hline		
  	  {$\mZ=\G^{*}\((\mL,\mR,\mV)\bcdot\bcS\)$ }  & $\mZ=\mV\mB^H$,  Lemma~\ref{lem:HankelT_algebra}.a & $O(sn^2)$ 
& $O(n)$ 
& All the derivatives of $f(\mF)$
\\ \hline 
           $\G(\mZ)\times_3 \mV^H$
           & $\G((\mV^H\mV)\mB^H)$, Lemma~\ref{lem:HankelT_algebra}.b & $O(sn^2)$  & $O(s+n)$ & $\nabla_{\mL}f(\mF),\nabla_{\mR}f(\mF),\nabla_{\bcS}f(\mF)$
           \\ \hline
           $\G(\mM)\times_3 \mV^H$ 
            
            & $\G(\mV^H\mM)$,  Lemma~\ref{lem:HankelT_algebra}.b
            & $O(sn^2)$ & $O(m)$ & $\nabla_{\mL}f(\mF),\nabla_{\mR}f(\mF),\nabla_{\bcS}f(\mF)$
		\\ \hline
		
   $\G(\hat{\mE})\times_1\mL^H\times_2\mR^H$
    
    & $\overline{\bcW}\times_3\hat{\mE}$,  Lemma~\ref{lem:HankelT_algebra}.c
    & $O(n^2)$  
   & $O(n)$
   & $\nabla_{\bcS}f(\mF)$
  \\ \hline
  $\G({\mZ})\times_1\mL^H\times_2\mR^H$
  & $\overline{\bcW}\times_3 \mB^H \times_3 \mV$,  Lemma~\ref{lem:HankelT_algebra}.c & $O(sn^2)$ 
   & $O(s+n)$
   & $\nabla_{\mV}f(\mF)$
    \\ \hline
  $\G({\mM})\times_1\mL^H\times_2\mR^H$
   
   & $\overline{\bcW}\times_3 \mM$,  Lemma~\ref{lem:HankelT_algebra}.c
   & $O(sn^2)$ 
   & $O(m)$
   & $\nabla_{\mV}f(\mF)$
   \\ \hline
  $\G(\hat{\mE})\times_1\mL^H$
  &  
  Fast convolution {(FFT)},  Lemma~\ref{lem:HT_mul_1or2}
  & $O(n^2)$  
  &$O(n)$
  & $\nabla_{\mL}f(\mF),\nabla_{\mR}f(\mF)$ 
  \\ \hline
  $\G(\hat{\mE})\times_2\mR^H$

  &  
  Fast convolution {(FFT)}, Lemma~\ref{lem:HT_mul_1or2}
  & $O(n^2)$  
  &$O(n)$
    & $\nabla_{\mL}f(\mF),\nabla_{\mR}f(\mF)$
   \\ \hline$\breve{\mL}^H\breve{\mL},\breve{\mR}^H\breve{\mR},\breve{\mV}^H\breve{\mV}$ 
 
 & Lemma~\ref{lem:fast_scaleterm}
 & $O(n(s+n))$ 
   & {$O(s+n)$}
   & Scaled terms in (8) 
    \\
    \hline
 			\end{tabular}
			\label{tab:fast_computation}
   \end{center}
	\end{table*}

    \begin{algorithm}[t]
\caption{Fast Computation of the Gradient of $f(\mF)$} 
\label{alg:fast_comput}
\begin{algorithmic} 
\Statex \textbf{Input:} $\mF=(\mL,\mR,\mV,\bcS).$
\Statex 
1. Compute $\bcW$, 
$\mB$, $\mM$ and $\hat{\mE}$ in \eqref{eq:conv_W}, 
\eqref{eq:B}, \eqref{eq:M} and \eqref{eq:E_hat}.
\Statex
2. $\nabla_{\mL}f(\mF)=\cM_1\(\G(\hat{\mE})\times_2\mR^H\)\cM_1(\bcS)^H+\mL(\breve{\mL}^H\breve{\mL})$.
\Statex
3. $\nabla_{\mR}f(\mF)=\cM_2\(\G(\hat{\mE})\times_1\mL^H\)\cM_2(\bcS)^H+\mR(\breve{\mR}^H\breve{\mR})$.
\Statex
4. $\nabla_{\mV}f(\mF)=\mM\mB-\mV\(\mB^H\mB\)+\mV\(\breve{\mV}^H\breve{\mV}\)$.
\Statex
5. $\nabla_{\bcS}f(\mF) =\overline{\bcW}\times_3\hat{\mE}+\(\mL^H\mL,\mR^H\mR,\mV^H\mV\)\bcdot\bcS$.
\Statex \textbf{Output:} 
$\nabla_{\mL}f(\mF),\nabla_{\mR}f(\mF),\nabla_{\mV}f(\mF),\nabla_{\bcS}f(\mF)$.
\end{algorithmic}
\end{algorithm}

   \section{Fast computation}  
In this section, we introduce the fast computation of Alg. \ref{alg:ScalHT} (ScalHT). These fast computation rules deeply leverage the interaction between  the Hankel structure and low-rank (Tucker) decomposition, resulting in a computational complexity per iteration of {$O\(nr^2\log(n)+nr^3+sr^2\)$ when the number of observations $m=O(sr)$, which corresponds to the degree of freedom of this problem as shown in Remark~\ref{rmk:freedom}.} We see $r$ and $\log(n)$ as constants as $r,\log(n)\ll \min\{s,n\}$, and focus on the ambient dimensions $s$ and $n$, thus the computational complexity further simplifies to $O(s+n)$, highlighting the superior efficiency of ScalHT. 
The fast computations of the gradient of $f(\mF)$ 
are detailed in Algorithm \ref{alg:fast_comput}. Additionally, we summarize the main terms in the gradient and their fast computations 
in Table \ref{tab:fast_computation}. 

{
\subsection{Key computational technique}
The key idea to  accelerate the computation in our algorithm can be summarized as follows:


\emph{
Through Lemma~\ref{lem:HankelT_algebra},\ref{lem:HT_mul_1or2}, which leverages the interplay between Hankel structure and low-rank tensor decomposition
, the high-dimensional operations in tensor space are transformed into the low-dimensional operations in low-complexity factor space, resulting in a significant reduction in computational complexity 
by a factor of $O(\min\{s, n\})$.
}

We present low-rank Hankel tensor algebra Lemma~\ref{lem:HankelT_algebra}, which helps transform high-dimensional operations into low-dimensional 
operations on factors.  
}

\begin{lemma}[Low-rank Hankel tensor algebra] \label{lem:HankelT_algebra}
    Let $\mL\in\C^{n_1\times r}, \mR\in\C^{n_2\times r}, \mV\in\C^{s\times r}$, and $\bcS\in\C^{r\times r\times r}$. 
 \begin{itemize}
  \item [a)] {Let $\mZ=\G^*((\mL,\mR,\mV)\bcdot\bcS)\in\C^{s\times n}$}, 
  we have
     \begin{align*}
         \mZ=\G^*((\mL,\mR,\mV)\bcdot\bcS)=\mV\mB^H, \numberthis \label{eq:fast_HTproj_tt}
     \end{align*}
     {where 
      \begin{align*}
   \mB=\cM_3(\overline{\bcW})\cM_3(\bcS)^H \in\C^{n\times r}
   \numberthis \label{eq:B}, 
   \end{align*}
     and $\bcW\in\C^{r\times r\times n}$, for $j_1,j_2\in[r]$, $a\in[n]$,
\begin{align*}
    \bcW(j_1,j_2,a)=\frac{1}{\sqrt{w_{a}}}[\mL(:,j_1)\ast\mR(:,j_2)](a). \numberthis  \label{eq:conv_W}
\end{align*}
     }
 \item [b)] For  $\mE\in\C^{s\times n}$ and $\tilde{\bcZ}\in\C^{n_1\times n_2\times r}$:
\begin{align*}
\G(\mE)\times_{3}\mV^H&=\G(\mV^H\mE),\label{eq:Hankellift_mul3} \numberthis 
    \\\G^{*}(\tilde{\bcZ}\times_{3}\mV)&=\mV\G^{*}(\tilde{\bcZ}). \label{eq:DeHankel_mul3} \numberthis 
\end{align*}
    
\item [c)] For $\tilde{\mE}\in\C^{k\times n}$, we have
\begin{align*}
\G(\tilde{\mE})\times_{1}\mL^H\times_{2}\mR^H=\overline{\bcW}\times_{3}\tilde{\mE}\in\C^{r\times r \times k},\numberthis\label{lem:HT_mul_1and2}
\end{align*}
where $\bcW$ is defined in \eqref{eq:conv_W}.
 \end{itemize}
\end{lemma}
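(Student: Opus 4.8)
The plan is to push each of the three identities down to an entrywise identity after first recording the coordinate action of $\G$ and $\G^*$. Since $\G=\H\D^{-1}$ and $[\D^{-1}\mM](k,a)=w_a^{-1/2}\mM(k,a)$, we have $[\G(\mE)](i,j,k)=w_{i+j}^{-1/2}\,\mE(k,i+j)$ for all $i\in[n_1],\,j\in[n_2]$; and because the weights $w_a^{-1/2}$ are real, the Hermitian adjoint is $[\G^*(\bcZ)](k,a)=w_a^{-1/2}\sum_{i+j=a}\bcZ(i,j,k)$, where $\sum_{i+j=a}$ ranges over the $w_a$ pairs defining $\mathcal W_a$ (this also immediately re-derives $\G^*\G=\I$). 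The single structural fact I would isolate up front is the convolution identity $\sum_{i+j=a}\vu(i)\vv(j)=[\vu\ast\vv](a)$ for $\vu\in\C^{n_1},\vv\in\C^{n_2}$, so that by definition $\bcW(j_1,j_2,a)=w_a^{-1/2}\sum_{i+j=a}\mL(i,j_1)\mR(j,j_2)$.

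For part (a), I would substitute the Tucker expansion $\bcZ(i,j,k)=\sum_{j_1,j_2,j_3}\bcS(j_1,j_2,j_3)\mL(i,j_1)\mR(j,j_2)\mV(k,j_3)$ (plain $\mV$, consistent with the stated complex matricization rule $\cM_k(\bcZ)=\cdot\,(\cdot)^{T}$) into the formula for $\G^*$, pull $\mV(k,j_3)$ outside the skew-diagonal sum, and recognize $w_a^{-1/2}\sum_{i+j=a}\mL(i,j_1)\mR(j,j_2)=\bcW(j_1,j_2,a)$. This gives $[\G^*(\bcZ)](k,a)=\sum_{j_3}\mV(k,j_3)\sum_{j_1,j_2}\bcW(j_1,j_2,a)\bcS(j_1,j_2,j_3)$; matching against $(\mV\mB^H)(k,a)=\sum_{j_3}\mV(k,j_3)\overline{\mB(a,j_3)}$ forces $\mB(a,j_3)=\sum_{j_1,j_2}\overline{\bcW(j_1,j_2,a)}\,\overline{\bcS(j_1,j_2,j_3)}$, which is exactly the $(a,j_3)$ entry of $\cM_3(\overline{\bcW})\cM_3(\bcS)^H$ under the mode-$3$ matricization convention. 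The point to handle carefully is the conjugation bookkeeping: the bar on the first slot of the inner product is what sends the conjugates onto $\mB$ (hence onto $\cM_3(\overline{\bcW})$ and $\cM_3(\bcS)^H$) while leaving $\bcW$ itself conjugate-free.

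Parts (b) and (c) are then short interchanges of summation. For (b), in $\G(\mE)\times_3\mV^H$ the factor $[\mV^H](l,k)=\overline{\mV(k,l)}$ multiplies $w_{i+j}^{-1/2}\mE(k,i+j)$; summing over $k$ first collapses this to $w_{i+j}^{-1/2}[\mV^H\mE](l,i+j)=[\G(\mV^H\mE)](i,j,l)$. Dually, in $\G^*(\tilde{\bcZ}\times_3\mV)$ the factor $\mV(k,j_3)$ does not depend on $(i,j)$, so it pulls out of $\sum_{i+j=a}$, leaving $\sum_{j_3}\mV(k,j_3)[\G^*(\tilde{\bcZ})](j_3,a)=[\mV\G^*(\tilde{\bcZ})](k,a)$. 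For (c), I would expand $[\G(\tilde{\mE})\times_1\mL^H\times_2\mR^H](j_1,j_2,l)=\sum_{i,j}w_{i+j}^{-1/2}\,\tilde{\mE}(l,i+j)\,\overline{\mL(i,j_1)}\,\overline{\mR(j,j_2)}$, re-index by $a=i+j$, and use the convolution identity (applied to the conjugated columns) to see the inner sum equals $\overline{\bcW(j_1,j_2,a)}$; the total $\sum_a\overline{\bcW(j_1,j_2,a)}\tilde{\mE}(l,a)$ is precisely $[\overline{\bcW}\times_3\tilde{\mE}](j_1,j_2,l)$.

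I do not expect a genuine obstacle: every step is a definition unwinding plus the convolution identity. The only place that demands real care — and where the argument would actually break if done sloppily — is the simultaneous tracking of (i) which quantities carry complex conjugates and (ii) the two distinct matricization index orderings used for $\cM_1$ versus $\cM_3$; I would therefore pin down all of these conventions at the very start, cross-checking against the already-stated identities $\cM_1(\bcZ)=\mL\cM_1(\bcS)(\mV\otimes\mR)^{T}$ and its analogues, before running any of the computations.
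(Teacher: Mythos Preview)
Your proposal is correct and follows essentially the same entrywise-verification route as the paper. The only cosmetic difference is that the paper proves part~(a) by first invoking the second identity of part~(b) to peel off the factor $\mV$ (writing $\mZ=\mV\G^{*}(\bcS\times_1\mL\times_2\mR)$ and then expanding $\G^{*}(\tilde{\bcZ})$ entrywise), whereas you expand the full Tucker decomposition directly and pull $\mV(k,j_3)$ out of the sum in one pass; the underlying computation is identical.
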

\begin{proof}
    See Appendix \ref{pf:HankelT_algebra}.
\end{proof}
{
\begin{remark}
 In Lemma~\ref{lem:HankelT_algebra}.a,   $\bcW\in\C^{r\times r\times n}$ can be computed via $r^2$ fast convolution {by FFT} with $O(nr^2\log(n))$ flops, and $\mB=\cM_3(\overline{\bcW})\cM_3(\bcS)^H \in\C^{n\times r}$ costs $O(nr^3)$ flops. Besides, $\mZ=\mV\mB^H$ is not computed explicitly during gradient computation. 
\end{remark}
Then we explain Lemma~\ref{lem:HankelT_algebra} in more detail:
\begin{itemize}
    \item [$\bullet$] {Lemma~\ref{lem:HankelT_algebra}.a} tells us that the low-rank tensor decomposition retains a low-complexity representation even after applying the Hankel adjoint mapping. 
     \item [$\bullet$]  {Lemma~\ref{lem:HankelT_algebra}.b} tells us that high-dimensional multiplication associated with the multi-measurement vector dimension (the third dimension) can be efficiently implemented through direct multiplication with the low-complexity factors. 
      \item [$\bullet$]  {Lemma~\ref{lem:HankelT_algebra}.c} tells us that high-dimensional multiplication associated with two Hankel-structured dimensions jointly can be efficiently implemented using convolutions of the low-rank factors. 
\end{itemize}
}
{
Besides, we introduce  Lemma~\ref{lem:HT_mul_1or2}:
\begin{itemize}
    \item [$\bullet$] Lemma~\ref{lem:HT_mul_1or2} tells us that high-dimensional multiplication associated with a single Hankel-structured dimension can be implemented using fast convolutions by FFT.
\end{itemize}

}
\begin{lemma}[Multiplication involving single dimension of Hankel mode
]\label{lem:HT_mul_1or2}
     Let $\mL\in\C^{n_1\times r}$, $\mR\in\C^{n_2\times r}$, and $\mE\in\C^{r\times n}$. The computation of the following terms 
     $$\G(\mE)\times_{1}\mL^H~\mbox{and}~ \G(\mE)\times_{2}\mR^H$$
     can be realized by $r^2$ fast convolution via FFT, which cost $O(nr^2\log(n))$ flops. 
\end{lemma}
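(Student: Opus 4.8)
The plan is to unfold each of the two mode products slice-by-slice along the third coordinate and recognize every slice as a (reweighted) Hankel matrix multiplied by one of the factors, which in turn is a batch of one-dimensional convolutions. Concretely, for $\mE\in\C^{r\times n}$ and $k\in[r]$ define $\vu_k\in\C^n$ by $\vu_k(a)=w_a^{-1/2}\mE(k,a)$, i.e.\ $\vu_k$ is (the transpose of) the $k$-th row of $\D^{-1}(\mE)$, which is well defined since $w_a\ge 1$. Since $\G=\H\D^{-1}$ and $[\H(\cdot)](i,j,k)=(\cdot)(k,i+j)$, the $k$-th frontal slice $[\G(\mE)](:,:,k)$ is the $n_1\times n_2$ matrix $\mA_k$ with $\mA_k(i,j)=\vu_k(i+j)$, a Hankel matrix. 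Substituting this into the definition of the mode-$1$ and mode-$2$ products gives, for each $k\in[r]$,
\[
[\G(\mE)\times_1\mL^H](:,:,k)=\mL^H\mA_k\in\C^{r\times n_2},\qquad [\G(\mE)\times_2\mR^H](:,:,k)=\mA_k\overline{\mR}\in\C^{n_1\times r},
\]
so the whole computation reduces to $r$ products $\mL^H\mA_k$ (resp.\ $r$ products $\mA_k\overline{\mR}$), each of which is $r$ Hankel-matrix--vector products.

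Next I would make the convolution identification explicit. For $\vl\in\C^{n_1}$, the $j$-th entry of $\vl^T\mA_k$ equals $\sum_{i=0}^{n_1-1}\vl(i)\,\vu_k(i+j)$ for $j\in[n_2]$; writing $\tilde{\vl}(i)=\vl(n_1-1-i)$ this is the $(n_1-1+j)$-th coordinate of the linear convolution $\tilde{\vl}\ast\vu_k$ (a vector of length $n_1+n-1$). Dually, for $\vt\in\C^{n_2}$ the $i$-th entry of $\mA_k\vt$ equals $\sum_{j=0}^{n_2-1}\vu_k(i+j)\vt(j)=[\vu_k\ast\tilde{\vt}](i+n_2-1)$ with $\tilde{\vt}(j)=\vt(n_2-1-j)$. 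Each such linear convolution of vectors of length $O(n)$ is evaluated in $O(n\log n)$ flops by zero-padding to a power of two and applying the FFT and inverse FFT (using $n_1,n_2=O(n)$). Running over the $r$ columns of $\mL^H$ (resp.\ the $r$ columns of $\mR$) and the $r$ slices $k$ requires $r^2$ convolutions for $\G(\mE)\times_1\mL^H$ and $r^2$ for $\G(\mE)\times_2\mR^H$; forming the $r$ vectors $\vu_k$ costs an extra $O(nr)$. Summing, the cost is $O(nr^2\log n)$, which is the claim.

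The hard part here is not conceptual but bookkeeping: one must reverse exactly one of the two sequences and read off the correct window of the convolution, and one must use the \emph{linear} (zero-padded) convolution rather than the circular one, since otherwise wrap-around would corrupt the retained entries. The reweighting factors $w_a^{-1/2}$ are harmless --- they are absorbed once into each $\vu_k$ at total cost $O(nr)$ and never enter the FFT step. I note that the same identities are implicit in the proof of Lemma~\ref{lem:HankelT_algebra}.c, which produces precisely these convolutions --- the tensor $\bcW$ of \eqref{eq:conv_W} --- as an intermediate object before the remaining Hankel mode is contracted; the present lemma simply stops one contraction earlier.
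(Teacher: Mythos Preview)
Your proposal is correct and follows essentially the same approach as the paper: identify each frontal slice of $\G(\mE)$ as an $n_1\times n_2$ Hankel matrix in the reweighted vector $\vu_k$, so that each mode-$1$ or mode-$2$ contraction with a column of $\mL$ (resp.\ $\mR$) becomes a single linear convolution, yielding $r^2$ FFT-based convolutions and the $O(nr^2\log n)$ cost. The paper's proof writes out only the mode-$1$ case and the same convolution index identification (with $\tilde{\mE}=\D^{-1}(\mE)$ and a reversed $\mL$-column), so your version is simply a slightly more detailed rendering of the same argument.
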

\begin{proof}
     We take the computation of $\G(\mE)\times_{1}\mL^H \in \C^{r\times n_2\times r}$ for example. For $j_1,i_3\in[r]$, $i_2\in[n_2]$
    \begin{align*}
        [\G(\mE)\times_{1}\mL^H](j_1,i_2,i_3)&{=}\sum_{i_1=0}^{{n_1}-1}\frac{1}{\sqrt{w_{i_1{+}i_2}}}\mE(i_3,i_1{+}i_2)\overline{\mL}(i_1,j_1)
        \\&=[\tilde{\mE}(i_3,:)\ast\overline{\mL'}(:,j_1)](n_1+i_2),
    \end{align*}
    where $\tilde{\mE}=\D^{-1}(\mE)$ is the weighted version of $\mE$ and $\mL'(:,j_1)$ is the vector that reverses the order of $\mL(:,j_1)$. The previous computation can be realized via $r^2$ fast convolution, which costs $O(nr^2\log(n))$ flops.   
\end{proof}

{
We take some main terms in the gradient computation, for example, to show how to accelerate the computation through our Lemma~\ref{lem:HankelT_algebra},\ref{lem:HT_mul_1or2}:
\begin{itemize}
    \item [$\bullet$] By Lemma~\ref{lem:HankelT_algebra}.a, $\mZ=\G^*((\mL,\mR,\mV)\bcdot\bcS)=\mV\mB^H$. The direct computation of it 
costs 
$O(sn^2r)$ flops. As $\mZ$ is not computed explicitly, we only need to compute $\mB$, which costs $O(nr^2(r+\log(n)))$ flops.   
    \item [$\bullet$]   By Lemma~\ref{lem:HankelT_algebra}.b, $\G(\mZ)\times_3 \mV^H= \G(\mV\mB^H)\times_3 \mV^H=\G((\mV^H\mV)\mB^H)$.  The direct computation of it 
costs $O(sn^2r)$ flops, which is reduced to $O((s+n)r^2)$ flops.
    \item [$\bullet$]  By Lemma~\ref{lem:HankelT_algebra}.c,  $\G(\mZ)\times_{1}\mL^H\times_{2}\mR^H=\overline{\bcW}\times_3\mZ=\overline{\bcW}\times_3\mB^H\times_3\mV$. The direct computation of it 
costs $O(sn^2r)$ flops, which is reduced to $O(nr^2\log(n)+(s+n)r^3)$ flops.
    \item [$\bullet$] By Lemma~\ref{lem:HT_mul_1or2},   $\G(\hat{\mE})\times_1\mL^H$ can be implemented via fast convolution (FFT), where $\hat{\mE}\in\C^{r\times n}$ is defined later. The direct computation of it costs $O(n^2r^2)$ flops, which is reduced to $O(nr^2\log(n))$ flops.  
\end{itemize}
These main terms and other similar terms as well as the scaled terms $\breve{\mL}^H\breve{\mL},\breve{\mR}^H\breve{\mR},\breve{\mV}^H\breve{\mV}$'s fast computations are summarized in Table \ref{tab:fast_computation}. Last, we emphasize the role of observation sparsity: 
\begin{remark}[Observations’ sparsity]
     The sparsity of the observations $m=O(sr)$ is also leveraged. 
     We decouple the per-iteration computational complexity to $O(s+n)$ (seeing $r$ as a constant) by integrating the Hankel structure, low-rank tensor
 decomposition, and observation sparsity.
\end{remark}
}

\subsection{Fast computation of the gradient}
We introduce the fast computation rules of the gradient of $f(\mF)$ in this subsection and summarize them in Algorithm \ref{alg:fast_comput}. Also, we provide the computational complexity analysis. 
{
First, we introduce some intermediary notations. 
Denote a $m$-sparse matrix $\mM\in\C^{s\times n}$ as 
\begin{align*}
    \mM= p^{-1}\P_{\Omega}(\mZ-\mY_\star), \numberthis \label{eq:M}
\end{align*}
where $\P_{\Omega}(\mZ)=\P_{\Omega}(\mV\mB^H)$ costs $O(mr)$ flops as only $m$ entries of $\mV\mB^H$ need to be explicitly computed. Besides, denote $\hat{\mE}\in\C^{r\times n}$ as 
\begin{align*}
    \hat{\mE}=\mV^H(\mM-\mZ)=\mV^H\mM-(\mV^H\mV)\mB^H, \numberthis \label{eq:E_hat}
\end{align*}
where $\mZ=\mV\mB^H$ is defined in \eqref{eq:fast_HTproj_tt}.  In \eqref{eq:E_hat}, $\mV^H\mM$ costs $O(mr)$ flops as $\mM$ is $m$-sparse, and $(\mV^H\mV)\mB^H$ costs $O((s+n)r^2)$ flops. Now we begin the derivation of the fast computation of the gradient. 
} 

\vspace{4.5mm}
\noindent\emph{1) Fast computation of { $\nabla_{\mL}f(\mF)$}}
\vspace{2mm}

Recalling the definitions of $\mM$, $\mZ$ and $\bcZ=(\mL,\mR,\mV)\bcdot\bcS$, we reformulate $\nabla_{\mL}f(\mF)$ as: 
 \begin{align*}
     &\nabla_{\mL}f(\mF)=\cM_{1}\(\G\big({p}^{-1}\P_{\Omega}(\G^{*}\bcZ-\mY_\star)\big)+(\I-\G\G^{*})(\bcZ)\)\breve{\mL}
     \\&=\cM_1(\G(\mM-\mZ))\breve{\mL}+\cM_1(\bcZ)\breve{\mL}
          \\&=\cM_1\(\G\big((\mM-\mZ)\big)\times_2\mR^H\times_3\mV^H\)\cM_1(\bcS)^H+\mL(\breve{\mL}^H\breve{\mL})
     \\&=\cM_1\(\G(\hat{\mE})\times_2\mR^H\)\cM_1(\bcS)^H+\mL(\breve{\mL}^H\breve{\mL}), \numberthis \label{eq:dfl_split}
 \end{align*}
 where  
 the third equality results from $\cM_1(\bcZ)
 =\mL\breve{\mL}^H$ and the fact \eqref{eq:tensor_properties_l} that $\cM_1(\bcX)\breve{\mL}
=\cM_1(\bcX\times_2 \mR^H\times_{3}\mV^H)\cM_1(\bcS)^H$. The last equality results from Lemma~\ref{lem:HankelT_algebra}.b, 

The computation of $\nabla_{\mL}f(\mF)$ costs $O(nr^2\log(n)+nr^3+sr^2+mr)$ flops in total. 
Computing $\hat{\mE}$ costs $O(mr+(s+n)r^2)$ flops. 
 In \eqref{eq:dfl_split},  $\G(\hat{\mE})\times_2\mR^H$ is computed via $r^2$ fast convolution with  $O(nr^2\log(n))$ flops from Lemma~\ref{lem:HT_mul_1or2}, $\mL(\breve{\mL}^H\breve{\mL})$ costs $O((s+n)r^2+r^4)$ flops from Lemma~\ref{lem:fast_scaleterm}. Let  $\hat{\bcW}=\G(\hat{\mE})\times_{2}\mR^H\in\C^{n_1\times r\times r}$, and $\cM_1(\hat{\bcW})\cM_1(\bcS)^H$ costs $O(nr^3)$ flops.  

Similar results hold for $\nabla_{\mR}f(\mF)$, and we omit this for simplicity. 
 

\vspace{4.5mm}
\noindent\emph{2) Fast computation of  $\nabla_{\mV}f(\mF)$}
\vspace{2mm}

The fast computation of  $\nabla_{\mV}f(\mF)$ exhibits some differences. 
Following the third equality in \eqref{eq:dfl_split}, 
 \begin{align*}
     &\nabla_{\mV}f(\mF)
     \\&{=}\cM_3\(\G\big((\mM-\mZ)\big)\times_1\mL^H\times_2\mR^H\)\cM_3(\bcS)^H{+}\mV(\breve{\mV}^H\breve{\mV})
     \\&=\cM_3(\overline{\bcW}\times_{3}(\mM-\mZ))\cM_3(\bcS)^H+\mV(\breve{\mV}^H\breve{\mV})\\&
     =(\mM-\mZ)\cM_3(\overline{\bcW})\cM_3(\bcS)^H+\mV(\breve{\mV}^H\breve{\mV})
 \\&
 =\mM\mB-\mV(\mB^H\mB)+\mV(\breve{\mV}^H\breve{\mV}), 
 \end{align*}
 where the second equality results from Lemma~\ref{lem:HankelT_algebra}.c, and in the last equality we recall that $\mB=\cM_3(\overline{\bcW})\cM_3(\bcS)^H$ in \eqref{eq:B} and $\mZ=\mV\mB^H$ in \eqref{eq:fast_HTproj_tt}.  

The computational complexity of $\nabla_{\mV} f(\mF)$  is 
$O(nr^2\log(n)+nr^3+sr^2+mr)$ flops in total. 
In $\nabla_{\mV}f(\mF)$, $\mM\mB$ costs $O(mr)$ flops as $\mM$ is $m$-sparse, $\mV(\mB^H\mB)$ costs $O((s+n)r^2)$ flops, $\mB$ costs $O(nr^2(r+\log(n)))$ flops, and $\mV(\breve{\mV}^H\breve{\mV})$ costs $O((s+n)r^2+r^4)$ flops from Lemma~\ref{lem:fast_scaleterm}.


 
\vspace{4.5mm}
\noindent\emph{3) Fast computation of  $\nabla_{\bcS}f(\mF)$}
\vspace{2mm}


 We rewrite $\nabla_{\bcS}f(\mF)$ as
 \begin{align*}
    &\nabla_{\bcS}f(\mF) =(\mL^H,\mR^H,\mV^H)\bcdot\(\G(\mM-\mZ)+\bcZ\)
    \\&{=}\G(\mV^H(\mM{-}\mZ)))\times_1\mL^H\times_2\mR^H{+}(\mL^H\mL,\mR^H\mR,\mV^H\mV)\bcdot\bcS
    \\&{=}\overline{\bcW}\times_3\hat{\mE}+(\mL^H\mL,\mR^H\mR,\mV^H\mV)\bcdot\bcS, 
 \end{align*}
where the second equality results from Lemma~\ref{lem:HankelT_algebra}.b and  $(\mL^H,\mR^H,\mV^H)\bcdot\bcZ=(\mL^H\mL,\mR^H\mR,\mV^H\mV)\bcdot\bcS$, the third equality results from  Lemma~\ref{lem:HankelT_algebra}.c and $\hat{\mE}=\mV^H(\mM-\mZ)$.

The computational complexity of $\nabla_{\bcS} f(\mF)$ is $O(nr^2\log(n)+nr^3+sr^2+mr)$ flops in total. 
In $\nabla_{\bcS} f(\mF)$,  $\overline{\bcW}\times_3\hat{\mE}$ costs $O(nr^3)$ flops, $(\mL^H\mL,\mR^H\mR,\mV^H\mV)\bcdot\bcS$ costs $O((s+n)r^2+r^4)$ flops and $\hat{\mE}$ in \eqref{eq:E_hat} costs $O((s+n)r^2+mr)$ flops. 

\section{Theoretical results} \label{sec:theoretical-results}
In this section, we present the theoretical results for our algorithm ScalHT. We first introduce the definitions required in our analysis and then present the recovery guarantee and the linear convergence result 
of ScalHT. 
\subsection{Definitions}

We first introduce the Hankel matrix basis and the Hankel tensor basis. Here $\ve(j)$ denotes the $j$-th element of vector $\ve$. 
\begin{definition}[Hankel matrix basis] \label{def:Hankelm}
 For $k\in[n]$, define the $k$-th orthogonal basis of Hankel matrix $\mH_k\in\R^{n_1\times n_2}$ as: 
    \begin{align*}
        \mH_k(i_1,i_2)=\frac{1}{\sqrt{w_k}}\ve_k(i_1+i_2),
    \end{align*}
     where $i_1\in[n_1]$, $i_2\in[n_2]$, $\ve_k$ is the $k$-th canonical orthogonal basis of $\R^{n}$ (  $n=n_1+n_2-1$) and {$w_k$ is the length of the $k$-th skew-diagonal of an $n_1\times n_2$ matrix  defined in notations.}
\end{definition}

\begin{definition}[Hankel tensor basis] \label{def:HankelT}
For $k\in[n], j\in[s]$, define the $(k,j)$-th orthogonal basis $\bcH_{k,j}
\in\C^{n_1\times n_2\times s }$ of Hankel tensors as 
\begin{align*}
   [\bcH_{k,j}](i_1,i_2,i_3) = \mH_{k}(i_1,i_2)\ve_{j}(i_3),
\end{align*}
where $i_1\in[n_1]$, $i_2\in[n_2]$, $i_3\in[s]$, $\mH_{k}\in\R^{n_1\times n_2}$ is the $k$-th Hankel matrix basis, and $\ve_{j}$ is the $j$-th canonical basis of $\R^s$. The spectral norms of different matricizations of $\bcH_{k,j}$ are:
\begin{align*}
   \|\cM_{1}(\bcH_{k,j})\|=\|\cM_{2}(\bcH_{k,j})\|=\frac{1}{\sqrt{w_k}},~
   \|\cM_{3}(\bcH_{k,j})\|=1. \numberthis \label{eq:HT_basis_bd}
\end{align*}
\end{definition}
We define the  condition number of $\bcZ_{\star}=\H(\mX_\star)$ as:  
\begin{definition}[Condition number]\label{def:kappa} The condition number of {$\bcZ_{\star}=\H(\mX_\star)$} is defined as
\begin{align} \label{eq:kappa}
\kappa \coloneqq \frac{\sigma_{\max}(\bcZ_{\star})}{\sigma_{\min}(\bcZ_{\star})},
\end{align}
where $\sigma_{\max}(\bcZ_{\star})$ and $\sigma_{\min}(\bcZ_{\star})$ are defined previously in notations.
\end{definition}

{If the conditions in Lemma~\ref{lem:mulrank_HT} hold, we have $\mulrank(\bcZ_\star)=\vr$ where $\bcZ_\star=\H(\mX_\star)=(\mE_L,\mE_R,\mB)\bcdot\bcD$. 
 From \cite{DeLathauwer2000,Tong2022}, when  $\mulrank(\bcZ_\star)=\vr$, 
 $\bcZ_\star$ admits the High Order Singular Vector Decomposition (HOSVD) that}  
 \begin{align}
     \bcZ_\star=(\mL_{\star},\mR_{\star},\mV_{\star})\bcdot\bcS_{\star}, \label{eq:grdtruth-oth-tucker}
 \end{align}
 where $\mL_\star\in\C^{n_1\times r}$, $\mR_\star\in\C^{n_2\times r}$, $\mV_\star\in\C^{s\times r}$ and $\bcS_\star\in\C^{r\times r \times r}$. 
 {The factors $(\mL_{\star},\mR_{\star},\mV_{\star})$ 
are column-orthonormal.} 
Besides, 
the core tensor $\bcS_\star$ satisfies 
\begin{align*}
\cM_{k}(\bcS_{\star})\cM_{k}(\bcS_{\star})^{T} = \bSigma_{\star,k}^2, \qquad k=1,2,3,
\end{align*}
where $\bSigma_{\star,k} \coloneqq \diag[\sigma_{1}(\cM_{k}(\bcZ_{\star})),\dots,\sigma_{r}(\cM_{k}(\bcZ_{\star}))]$. {Note that $\bcS_\star$ is not assured to exhibit the diagonal structure as $\bcD$ in \eqref{eq:Ground_truth_Tuckerdcp} as HOSVD is not unique.} {We apply this type of decomposition \eqref{eq:grdtruth-oth-tucker} for ease of convergence analysis.}  
 Also, we define a factor quadruple as $\mF_\star=(\mL_\star,\mR_\star,\mV_\star,\bcS_\star)$. 

We introduce the incoherence property of $\bcZ_\star$, which is pivotal in governing the well-posedness of low-rank Hankel tensor completion.
\begin{definition}[Incoherence]\label{def:incoh}
Let the Tucker decomposition of $\bcZ_\star=\H(\mX_\star)$ with multilinear rank $\vr=(r,r,r)$  be $\bcZ_\star=(\mL_{\star},\mR_{\star},\mV_{\star})\bcdot\bcS_{\star}$. {The $\mu_0$-incoherence property of $\bcZ_\star$ is defined as:} 
\begin{align*}
\ln\vL_\star\rn_{2,\infty} \leq &\sqrt{\frac{\mu_0 c_{\mathrm{s}} r}{n}},~\ln\vR_\star\rn_{2,\infty} \leq \sqrt{\frac{\mu_0 c_{\mathrm{s}} r}{n}},
\end{align*}
where $c_{\mathrm{s}}=\max\{n/n_1,n/n_2\}$ {can measure the symmetry between $n_1$ and $n_2$ as $n=n_1+n_2-1$ is fixed}.
\end{definition}{The performance of Hankel-lift approaches depends on the choice of $n_1$ and $n_2$ \cite{Cai2018,Cai2019,Chen2014,Cai2023} ($n=n_1+n_2-1$). 
   In our problem and \cite{Cai2018,Cai2019,Chen2014,Cai2023}, the sample complexity required for faithful recovery is an increasing function of $c_\mathrm{s}$. 
   Therefore, it is advisable to reduce $c_\mathrm{s}$ to $O(1)$ via $n_1 = O(n)$ and $n_2 = O(n)$ \cite{Chen2014,Cai2023}. }
\begin{remark}[Incoherence from 
frequency separation]
   Following the routes in \cite{Cai2018} and \cite[Thm. 2]{Liao2016}, it can be proven that $\bcZ_\star=\H(\mX_\star)$ is $\mu_0$-incoherent as long as the minimum wrap-around distance between the frequencies is greater than about $2/n$, and the damping factor $\tau_k=0$ for $k\in[r]$.
\end{remark}
\begin{remark}[No incoherence assumption of $\mV_\star$] \label{remark:incoh_V}
It doesn't impact our recovery guarantees whether the incoherence assumption of $\mV_\star$ exists or not.  Our guarantees depends on the dominant part 
only associated with  $\mL_\star$, $\mR_\star$, and the result of $\bcH_{k,j}$ that  $\|\cM_3(\bcH_{k,j})\|=1$ in Definition \ref{def:HankelT}.  The incoherence of $\mV_\star$ doesn't influence this dominant part. 

\end{remark}

\subsection{Theoretical guarantees}\label{subsec:thm}
In our analysis, we apply the sampling with replacement model as in \cite{Li2024a,Cai2018,Zhang2018}, which differs from the Bernoulli sampling model in \cite{Tong2022}. Novel concentration results under Hankel tensor sampling are presented in Lemma~\ref{lem:PT_HTbasis_bd}, \ref{lem:PTconc}, and \ref{lem:init_conc_m1}. The guarantees of our sequential spectral initialization in Alg.\ref{alg:init} are shown in Lemma~\ref{lem:init}. Now, we present the theoretical guarantees of ScalHT as follows. 
\begin{theorem}[Recovery guarantee] \label{thm:recovery}
     Suppose $\bcZ_{\star}$ is  incoherent in Definition \ref{def:incoh}, the step size  $0<\eta\leq0.4$ and the projection radius in \eqref{eq:incoh_proj} is   $B=C_{B}\sqrt{\mu_0 c_{\mathrm{s}}  r}\sigma_{\max}(\bcZ_\star)$ for $C_B\geq (1+\varepsilon_0)^3$ where  $\varepsilon_0>0$ is a small constant. With probability at least $1-O((sn)^{-2})$, the iterate in Algorithm \ref{alg:ScalHT} satisfies 
\begin{align*}
    \|\mX^k-\mX_\star\|_F\leq 3\varepsilon_0(1-0.5\eta)^k\sigma_{\min}(\bcZ_\star)
\end{align*}
 provided $m\gtrsim O(\varepsilon_0^{-2}\mu_0 c_{\mathrm{s}} sr^3\kappa^2\log(sn))$.  
\end{theorem}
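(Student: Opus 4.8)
The plan is the standard two-phase analysis of a factored scaled-gradient method --- good initialization followed by local linear contraction --- measured in the singular-value-weighted, alignment-optimized distance between the current quadruple $\mF=(\mL,\mR,\mV,\bcS)$ and $\mF_\star=(\mL_\star,\mR_\star,\mV_\star,\bcS_\star)$, mirroring the metric used for scaled gradient descent on Tucker factors in \cite{Tong2022}:
\begin{align*}
\mathrm{dist}^{2}(\mF,\mF_\star)=\inf_{\mQ_L,\mQ_R,\mQ_V\in\GL(r)}\ &\bigl\|(\mL\mQ_L-\mL_\star)\bSigma_{\star,1}\bigr\|_F^2+\bigl\|(\mR\mQ_R-\mR_\star)\bSigma_{\star,2}\bigr\|_F^2\\
&+\bigl\|(\mV\mQ_V-\mV_\star)\bSigma_{\star,3}\bigr\|_F^2+\bigl\|(\mQ_L^{-1},\mQ_R^{-1},\mQ_V^{-1})\bcdot\bcS-\bcS_\star\bigr\|_F^2.
\end{align*}
I would establish (i) that the sequential spectral initialization of Algorithm~\ref{alg:init} satisfies $\mathrm{dist}(\mF^0,\mF_\star)\le c\,\varepsilon_0\,\sigma_{\min}(\bcZ_\star)$ for a small absolute constant $c$ once $m\gtrsim\varepsilon_0^{-2}\mu_0 c_{\mathrm s}sr^3\kappa^2\log(sn)$, with $\mL^0,\mR^0$ obeying Definition~\ref{def:incoh}; and (ii) that inside this basin one ScalHT step contracts the distance by the factor $1-0.5\eta$ and preserves incoherence, for $0<\eta\le0.4$. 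Granting (i)--(ii), induction yields $\mathrm{dist}(\mF^k,\mF_\star)\le c\,\varepsilon_0(1-0.5\eta)^k\sigma_{\min}(\bcZ_\star)$. Since $\bcZ_\star=\G\D(\mX_\star)$ gives $\mX^k-\mX_\star=\D^{-1}\G^*(\bcZ^k-\bcZ_\star)$ with $\bcZ^k=(\mL^k,\mR^k,\mV^k)\bcdot\bcS^k$, and $\|\D^{-1}\|\le1$ (as $w_a\ge1$), $\|\G^*\|=1$ (as $\G^*\G=\I$), and $\|\bcZ^k-\bcZ_\star\|_F\lesssim\mathrm{dist}(\mF^k,\mF_\star)$ in the small-distance regime, the theorem follows after rescaling $c$ so that the leading constant becomes $3$.

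For phase (i), I would first show the matricizations of $\bcZ^0=\hat{p}^{-1}\G\P_{\Omega_0}(\mY_\star)$ concentrate: $\|\cM_i(\bcZ^0)-\cM_i(\bcZ_\star)\|$ is small for $i=1,2$ by the Hankel-tensor concentration inequality (Lemma~\ref{lem:init_conc_m1}), so a Wedin/Davis--Kahan argument makes ${\mL'}^0,{\mR'}^0$ close to $\mL_\star,\mR_\star$ up to orthogonal alignment. The crucial point is that $\cM_3(\bcZ^0)$ does \emph{not} concentrate usefully, because $\|\cM_3(\bcH_{k,j})\|=1$ (Definition~\ref{def:HankelT}); this is precisely why $\mV^0$ is instead read off from $\cM_3(\bcZ^0\times_1({\mL'}^0)^H)$, where compressing mode~$1$ by the near-isometry $({\mL'}^0)^H$ collapses the effective dimension and restores concentration --- the quantitative statement being Lemma~\ref{lem:init}. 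Finally $(\mL^0,\mR^0)=\PB{{\mL'}^0,{\mR'}^0}$ enforces the row-norm bound of Definition~\ref{def:incoh} and, being non-expansive in the above distance (an adaptation to projecting only two of the three factors of the corresponding lemma in \cite{Tong2022}), preserves closeness to $\mF_\star$; $\mV^0$ is never projected, consistent with Remark~\ref{remark:incoh_V}.

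For phase (ii), I would expand the update~\eqref{eq:ScalHT}. Writing the full gradient tensor as $\bcE=p^{-1}\G\P_{\Omega_k}(\G^*\bcZ-\mY_\star)+(\I-\G\G^*)(\bcZ)$ with $\bcZ=(\mL,\mR,\mV)\bcdot\bcS$, one has morally $\bcE\approx\bcZ-\bcZ_\star$, since the first term is a sampled version of $\G\G^*(\bcZ-\bcZ_\star)$ while the penalty term supplies its orthogonal complement. Making this precise on the relevant tangent space $T$ at $\bcZ_\star$ requires a restricted-isometry estimate $\|\mathcal{P}_T(p^{-1}\G\P_{\Omega_k}\G^*-\I)\mathcal{P}_T\|\le\delta$ with $\delta$ small, which is exactly the content of the new Hankel-tensor concentration results (Lemma~\ref{lem:PTconc}, using the spectral-norm bounds on $\bcH_{k,j}$ in Lemma~\ref{lem:PT_HTbasis_bd}); the sample bound $m\gtrsim\varepsilon_0^{-2}\mu_0 c_{\mathrm s}sr^3\kappa^2\log(sn)$ is what pushes $\delta$ below the needed threshold. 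Given this, the rest is the now-standard ScaledGD algebra: the preconditioners $(\breve{\mL}^H\breve{\mL})^{-1},(\breve{\mR}^H\breve{\mR})^{-1},(\breve{\mV}^H\breve{\mV})^{-1}$ and $\bigl((\mL^H\mL)^{-1},(\mR^H\mR)^{-1},(\mV^H\mV)^{-1}\bigr)$ cancel the conditioning carried by $\bcS_\star$, yielding for the pre-projection iterate a one-step bound of the form $\bigl(1-\eta+O(\eta\delta)+O(\eta\,\mathrm{dist}(\mF^k,\mF_\star)/\sigma_{\min}(\bcZ_\star))\bigr)\mathrm{dist}(\mF^k,\mF_\star)$; then non-expansiveness of $\PB{\cdot}$ gives $\mathrm{dist}(\mF^{k+1},\mF_\star)\le(1-0.5\eta)\mathrm{dist}(\mF^k,\mF_\star)$ once $\delta$ and $\mathrm{dist}(\mF^k,\mF_\star)/\sigma_{\min}(\bcZ_\star)$ are small (from the induction hypothesis and the sample bound) and $\eta\le0.4$. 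Incoherence is kept because $\PB{\cdot}$ forces the relevant row norms below $B/\sqrt{n}$, which with smallness of $\mathrm{dist}(\mF^{k+1},\mF_\star)$ keeps the aligned factors $\mu_0$-incoherent, closing the induction.

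The main obstacle is the restricted-isometry/concentration step under Hankel-tensor sampling: unlike the Bernoulli random-tensor model of \cite{Tong2022}, the sampled indices here couple tensor modes~$1$ and~$2$, so the $\delta$-net plus matrix-Bernstein computation must be redone with the Hankel basis $\bcH_{k,j}$, the skew-diagonal weights $w_k$ and the bounds $\|\cM_1(\bcH_{k,j})\|=\|\cM_2(\bcH_{k,j})\|=1/\sqrt{w_k}$ entering the variance proxy --- this is also the origin of the parameter $c_{\mathrm s}$ and of the extra factor $r$ (hence $r^3$) in the sample complexity. A secondary difficulty is re-establishing non-expansiveness of the scaled projection and the incoherence-propagation bound when only $\mL,\mR$ (and not $\mV$) are projected. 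The remaining ingredients --- the exact-rank/HOSVD structure of $\bcZ_\star$ from Lemma~\ref{lem:mulrank_HT}, boundedness of $\D^{-1}$ and $\G$, and the conversion to $\|\mX^k-\mX_\star\|_F$ --- are routine.
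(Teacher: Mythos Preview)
Your proposal is correct and follows essentially the same route as the paper: the same scaled distance metric, the same two-phase induction (sequential spectral initialization via Lemma~\ref{lem:init}, local contraction via Lemma~\ref{lem:linconverge} using the Hankel-tensor concentration Lemma~\ref{lem:PTconc}, non-expansiveness of $\P_B$ via Lemma~\ref{lem:proj}), and the same conversion $\|\mX^k-\mX_\star\|_F\le\|\D^{-1}\|\|\G^*\|\cdot 3\,\mathrm{dist}(\mF^k,\mF_\star)$. One small correction: in the paper the $r^3$ in the sample complexity arises from the \emph{initialization} phase (bounding $\|\bP_{V_\perp}\cM_3(\bcZ_\star)\|_F$ after the sequential compression), whereas the per-iteration contraction step only needs $\hat m\gtrsim \varepsilon_0^{-2}\mu_0 c_{\mathrm s}sr\kappa^2\log(sn)$---so the extra $r$ factors do not come from the RIP in phase~(ii) as you suggest.
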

\begin{proof}
{Recall 
$\mF=(\mL,\mR,\mV,\bcS)$ and $\mF_\star=(\mL_\star,\mR_\star,\mV_\star,\bcS_\star)$, we need to introduce the distance metric $\dist{\mF}{\mF_\star}$ in Appendix~A which measures the distance between two decompositions $\bcZ=(\mL,\mR,\mV)\bcdot\bcS$ and $\bcZ_\star=(\mL_\star,\mR_\star,\mV_\star)\bcdot\bcS_\star$.  }
  If the following inequality holds,
  \begin{align}
    \dist{\mF^k}{\mF_\star}\leq \varepsilon_0(1-0.5\eta)^k\sigma_{\min}(\bcZ_\star), \label{eq:dist_conveg_thm}
  \end{align}
 we can establish that 
   \begin{align*}
     \| \mX^k-\mX_\star\|_F&\leq  \|\D^{-1}\|\|\G^{*}\|\|(\mL^k,\mR^k,\mV^k)\bcdot\bcS^k-\bcZ_\star\|_F
     \\&\leq3\varepsilon_0(1-0.5\eta)^k\sigma_{\min}(\bcZ_\star), 
  \end{align*}
  where $\mX^k-\mX_\star=\D^{-1}\G^{*}((\mL^k,\mR^k,\mV^k)\bcdot\bcS^k-\bcZ_\star)$, $\|\D^{-1}\|\leq 1$ and $\|\G^{*}\|\leq 1$. $\|(\mL^k,\mR^k,\mV^k)\bcdot\bcS^k-\bcZ_\star\|_F\leq3\dist{\mF^k}{\mF_\star}$ is 
  in Lemma~\ref{lemma:perturb_bounds} of   supplementary material.  
  
  Next, we prove \eqref{eq:dist_conveg_thm} {via an inductive way}. For $k=0$, by Lemma~\ref{lem:init}, \eqref{eq:dist_conveg_thm} holds with high probability when  $\hat{m}\gtrsim O(\varepsilon_0^{-2}\mu_0 c_{\mathrm{s}} sr^3\kappa^2\log(sn))$, and $\mL^0,\mR^0$ satisfy the incoherence condition \eqref{eq:incoh_proj_F0}. Supposing \eqref{eq:dist_conveg_thm} and the incoherence for  $\mL^k,\mR^k$ in \eqref{eq:incoh_iter_full} hold for the $k$-th step, 
    we invoke  Lemma~\ref{lem:linconverge} to obtain  
  \begin{align*}
      \dist{{\mF'}^{k+1}}{\mF_\star}
      \leq\varepsilon_0(1-0.5\eta)^{k+1}\sigma_{\min}(\bcZ_\star),
  \end{align*}
  provide $\hat{m}\gtrsim O(\varepsilon_0^{-2}\mu_0 c_{\mathrm{s}} sr\kappa^2\log(sn))$ where ${\mF'}^{k+1}=({\mL'}^{k+1},{\mR'}^{k+1},\mV^{k+1},\bcS^{k+1})$ is shown in Algorithm~\ref{alg:ScalHT}. 
  As $(\mL^{k+1},\mR^{k+1})=\P_{B}({\mL'}^{k+1},{\mR'}^{k+1})$, we invoke Lemma~\ref{lem:proj} which shows the properties after projection to establish 
  \begin{align*}
      \dist{{\mF}^{k+1}}{\mF_\star}&\leq\dist{{\mF'}^{k+1}}{\mF_\star}
      \\ &\leq \varepsilon_0(1-0.5\eta)^{k+1}\sigma_{\min}(\bcZ_\star),
  \end{align*}
  and the incoherence condition \eqref{eq:incoh_iter_full} for $\mL^{k+1},\mR^{k+1}$.  
  Therefore, we prove \eqref{eq:dist_conveg_thm} via an induction way and conclude that $m=(k+1)\hat{m}\gtrsim O(\varepsilon_0^{-2}\mu_0 c_{\mathrm{s}} sr^3\kappa^2\log(sn))$. 
\end{proof}
{
\begin{remark}[Degrees of freedom] \label{rmk:freedom}
  The degree of freedom of the multi-measurement spectral compressed sensing problem is $O(sr)$. We aim to reconstruct $\mX_\star=\sum_{k = 0}^{r - 1}\vb_k\va(p_{k})^T\in\C^{s\times n}$, with $\va(p_k)=[1,e^{(\imath2\pi p_k)},\cdots,e^{\imath2\pi (n - 1)p_k}]^T$ and $\vb_k\in\mathbb{C}^{s}$.  The unknown variables are $\{\vb_k\}_{k=0}^{r-1}$ and $\{p_k\}_{k=0}^{r-1}$, and the number of free parameters in them is $(s+1)r$.
\end{remark}}
\begin{remark}[Sample complexity]
    Our sample complexity $O(sr^3)$ is optimal with respect to the 
    dimension $s$ {($\kappa$ and $\log(sn)$ are seen as constants)}. {In comparison to directly recovering $\mX_\star$ using the standard matrix completion approach \cite{Tong2021,Candes2009}—which has a sample complexity of $O(\max\{s,n\}r)$—our sample complexity can be smaller when $r\ll s<n$. 

   Besides, the sample complexity here refers to the number of observations of the matrix $\mX_\star$, rather than the lifted tensor  $\bcZ_\star$. As a result, direct comparison of this sample complexity with that in the low-rank tensor completion problem \cite{Chen2022,Tong2022} is not feasible. It is also inappropriate to apply the number of observations of the lifted tensor  $\H(\mX_\star)$ in this problem, as the statistical analysis depends on the Hankel tensor sampling basis $\bcH_{k,j}$ for $k\in[n],j\in[s]$ in Definition 2 which differs much from tensor sampling basis in low-rank tensor completion  \cite{Cai2022b,Tong2022}.  
   } 
\end{remark}
\begin{remark}[Iteration complexity]
    To achieve the $\varepsilon$ recovery accuracy that $\|\mX^k-\mX_{\star}\|_F\leq\varepsilon \sigma_{\min}(\bcZ_\star)$, the iteration complexity is $O(\log(1/\varepsilon))$ for our algorithm ScalHT. 
\end{remark}
{
Next, we provide the recovery guarantee in a noisy environment. Denoting $\mE\in\C^{s\times n}$ as the noise matrix, note that $\mX_\star$  in \eqref{eq:low_rank_H} should be replaced with $\mX_\star+\mE$, $\mY_\star$ in \eqref{eq:penalized} should be replaced with $\mY_\star+\D(\mE)$, and $\bcZ^0$ in Algorithm~\ref{alg:ScalHT} should be replaced with $\bcZ^0_{e}=\hat{p}^{-1}\P_{\Omega_0}(\mY_\star+\D(\mE))$. 
\begin{corollary}[Recovery guarantee with noise] \label{cor:recovery_noise}
      Suppose the conditions in Theorem 1 hold, the noise matrix $\mE\in\C^{s\times n}$ has independent sub-Gaussian entries with parameter $\sigma$ \cite{Vershynin2018},  and ${\sigma}\leq c_0\frac{\sigma_{\max}(\bcZ_\star)}{\kappa\sqrt{n^2\max\{s,n\}}}$ for some sufficiently small constant $c_0$.  
      With probability at least $1-O((sn)^{-2})$, the iterate in Algorithm 1 satisfies 
      \begin{align*}
    \|\mX^k-\mX_\star\|_F&\leq\|(\mL^k,\mR^k,\mV^k)\bcdot\bcS^k-\bcZ_\star\|_F \\
    &\leq\varepsilon_0(1{-}0.3\eta)^k\sigma_{\min}(\bcZ_\star){+}C_0\sigma\sqrt{{n^2\max\{s,n\}}}, \numberthis \label{eq:noise_recovery_bd}
\end{align*}
provided ${m}\geq O(\varepsilon_0^{-2}\mu_0 c_{\mathrm{s}} sr^3\kappa^2\log^2(sn))$, where $C_0$ is some constant. 
\end{corollary}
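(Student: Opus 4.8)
The plan is to re-run the inductive argument of Theorem~\ref{thm:recovery} with the noise $\mE$ handled as an additive perturbation that produces a constant ``noise floor'' in the contraction. Writing the perturbed residual as $\G^{*}\bcZ-\mY_\star-\D(\mE)=(\G^{*}\bcZ-\mY_\star)-\D(\mE)$, each gradient in \eqref{eq:loss_derivative} picks up one extra term of the form $p^{-1}\G\P_{\Omega}\D(\mE)$ contracted against the relevant $\breve{\mL},\breve{\mR},\breve{\mV}$ (or against $(\mL^H,\mR^H,\mV^H)$ for $\nabla_{\bcS}f$). The first step is therefore to introduce the effective noise tensor $\bcN\coloneqq p^{-1}\G\P_{\Omega}\D(\mE)$ and, under the sampling-with-replacement model, to establish concentration bounds for $\bcN$ in the spirit of Lemmas~\ref{lem:PTconc} and \ref{lem:init_conc_m1}: after truncating the sub-Gaussian entries of $\mE$ at level $O(\sigma\sqrt{\log(sn)})$ (the source of the extra logarithmic factor, i.e.\ $\log^{2}(sn)$ instead of $\log(sn)$), the quantities $\|\cM_{k}(\bcN)\|$, $\|\P_{T}\cM_{k}(\bcN)\|$, and the row-wise incoherence norms of $\P_{T}(\bcN)$ are all $O(\sigma\sqrt{n^{2}\max\{s,n\}})$ with probability $1-O((sn)^{-2})$ once $m\gtrsim O(\varepsilon_0^{-2}\mu_0 c_{\mathrm{s}}sr^{3}\kappa^{2}\log^{2}(sn))$. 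The scaling $\sqrt{n^{2}\max\{s,n\}}$ arises because $\D$ inflates the noise: $\|\G\D(\mE)\|_{F}=\|\D(\mE)\|_{F}=(\sum_{a}w_{a}\|\mE(:,a)\|_{2}^{2})^{1/2}$ concentrates around $\sigma\sqrt{s\sum_{a}w_{a}}=\sigma\sqrt{s\,n_{1}n_{2}}=O(\sigma n\sqrt{s})\le O(\sigma\sqrt{n^{2}\max\{s,n\}})$, and this crude bound conveniently dominates the (smaller, of order $\sigma\sqrt{r}$ times a spectral-norm quantity) tangent-space and incoherence norms; the debiasing factor $p^{-1}\G\P_{\Omega}$ is controlled by matrix Bernstein exactly as in the noiseless concentration lemmas.

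Second, I would re-run the initialization analysis of Lemma~\ref{lem:init} on $\bcZ^{0}_{e}=\bcZ^{0}+\hat{p}^{-1}\G\P_{\Omega_{0}}\D(\mE)$. Since $\mathrm{SVD}_{r}(\cdot)$, the sequential step through $\bcZ^{0}_{e}\times_{1}({\mL'}^{0})^{H}$, and the scaled projection $\P_{B}$ are all Lipschitz in the relevant norms on the region of interest, the Weyl and Davis--Kahan perturbation bounds used for Lemma~\ref{lem:init} carry over with an extra additive term $C\sigma\sqrt{n^{2}\max\{s,n\}}$; the hypothesis ${\sigma}\le c_{0}\sigma_{\max}(\bcZ_\star)/(\kappa\sqrt{n^{2}\max\{s,n\}})$ with $c_{0}$ sufficiently small guarantees this term is at most $\tfrac12\varepsilon_0\sigma_{\min}(\bcZ_\star)$, so that $\dist{\mF^{0}}{\mF_\star}\le\varepsilon_0\sigma_{\min}(\bcZ_\star)$ and the incoherence \eqref{eq:incoh_proj_F0} both still hold.

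Third is the inductive one-step bound. Following the proof of Lemma~\ref{lem:linconverge} while keeping the extra noise term, the scaled-gradient step on the incoherent region $\dist{\mF^{k}}{\mF_\star}\le\varepsilon_0\sigma_{\min}(\bcZ_\star)$ gives, once $\hat{m}\gtrsim O(\varepsilon_0^{-2}\mu_0 c_{\mathrm{s}}sr\kappa^{2}\log^{2}(sn))$,
\begin{align*}
\dist{{\mF'}^{k+1}}{\mF_\star}\le(1-0.5\eta)\,\dist{\mF^{k}}{\mF_\star}+C_{1}\eta\,\sigma\sqrt{n^{2}\max\{s,n\}},
\end{align*}
and the scaled projection $\P_{B}$ neither increases the distance nor destroys incoherence by Lemma~\ref{lem:proj}. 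To keep every iterate inside the incoherent region, I downgrade the advertised rate from $0.5\eta$ to $0.3\eta$: the remaining $0.2\eta$ of contraction absorbs the noise term while $\dist{\mF^{k}}{\mF_\star}\ge 5C_{1}\sigma\sqrt{n^{2}\max\{s,n\}}/\eta$, and below that level the distance stays $O(\sigma\sqrt{n^{2}\max\{s,n\}})$; the $\sigma$-hypothesis makes $5C_{1}\sigma\sqrt{n^{2}\max\{s,n\}}/\eta\ll\varepsilon_0\sigma_{\min}(\bcZ_\star)$, so the region is never exited. Summing the resulting geometric recursion yields $\dist{\mF^{k}}{\mF_\star}\lesssim\varepsilon_0(1-0.3\eta)^{k}\sigma_{\min}(\bcZ_\star)+C_{0}'\sigma\sqrt{n^{2}\max\{s,n\}}$. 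Finally, exactly as in Theorem~\ref{thm:recovery}, $\mX^{k}-\mX_\star=\D^{-1}\G^{*}((\mL^{k},\mR^{k},\mV^{k})\bcdot\bcS^{k}-\bcZ_\star)$ together with $\|\D^{-1}\|\le 1$, $\|\G^{*}\|\le 1$, and $\|(\mL^{k},\mR^{k},\mV^{k})\bcdot\bcS^{k}-\bcZ_\star\|_{F}\le 3\dist{\mF^{k}}{\mF_\star}$ from Lemma~\ref{lemma:perturb_bounds} gives \eqref{eq:noise_recovery_bd} after adjusting $\varepsilon_0$ and the constant as in the passage from Lemma~\ref{lem:linconverge} to Theorem~\ref{thm:recovery}; choosing ${m}\ge O(\varepsilon_0^{-2}\mu_0 c_{\mathrm{s}}sr^{3}\kappa^{2}\log^{2}(sn))$ meets all the sample-size requirements above.

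The main obstacle is the concentration estimate for $\bcN=p^{-1}\G\P_{\Omega}\D(\mE)$ in the first step: one must simultaneously control its spectral norm, its restriction to the tangent spaces, and its row-wise incoherence norms with the $\sqrt{n^{2}\max\{s,n\}}$ dependence, and a naive sub-Gaussian matrix bound is lossy because the reweighting $\D$ and the entry replication inside $\G$ both correlate the entries of $\D(\mE)$ and inflate their second moments. Handling this cleanly requires combining the truncation of $\mE$, the Hankel-basis spectral bounds \eqref{eq:HT_basis_bd}, and a Bernstein argument tailored to the skew-diagonal weights $w_{a}$; once this is in place, the rest is a routine perturbation of the noiseless proof.
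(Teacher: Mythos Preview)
Your inductive skeleton is exactly what the paper does: a noisy initialization lemma plus a noisy one-step contraction lemma, a geometric-plus-floor recursion in $\dist{\mF^k}{\mF_\star}$, the degradation of the rate from $0.5\eta$ to $0.3\eta$ to absorb the additive noise, and the final passage to $\|\mX^k-\mX_\star\|_F$ via $\|\D^{-1}\|,\|\G^*\|\le 1$ and Lemma~\ref{lemma:perturb_bounds}. So the architecture is right.

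Where you diverge is in the handling of the noise tensor $\bcN=p^{-1}\G\P_\Omega\D(\mE)$, and here you are making your life considerably harder than necessary. You flag as the ``main obstacle'' a simultaneous control of $\|\cM_k(\bcN)\|$, $\|\P_T\cM_k(\bcN)\|$, and the row-wise incoherence norms of $\P_T(\bcN)$ via truncation of the sub-Gaussian entries and a Bernstein argument tailored to the skew-diagonal weights. In the paper's proof none of the tangent-space or $\ell_{2,\infty}$ quantities are ever used: in the one-step lemma the extra noise pieces ($\mI_2,\mJ_2,\mK_2,\mP_2$ in the paper's notation) are bounded simply by $\|\cM_1(\bcN)\|$ times Frobenius norms of the scaled projectors $\breve{\mL}(\breve{\mL}^H\breve{\mL})^{-1}\bSigma_{\star,1}$ etc., and in the initialization the only new term is again $\le\sqrt{r}\,\|\cM_1(\bcN)\|$. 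For that single spectral norm the paper uses a two-line argument: (i) the off-the-shelf bound $\|\P_\Omega(\mE)\|\lesssim\sigma\sqrt{p\max\{s,n\}}$ whenever $p\max\{s,n\}\gtrsim\log^2(sn)$ (adapted from Chen et al.\ on noisy matrix completion; this is where the $\log^2$ factor enters, not truncation), and (ii) the trivial operator inequalities $\|\cM_i\|\le 1$, $\|\G\|\le 1$, $\|\D\|\le\sqrt{n}$, giving $\|\cM_i(\G\P_\Omega\D(\mE))\|\le\sqrt{n}\,\|\P_\Omega(\mE)\|$. Your truncation-plus-Bernstein route would also land on the same order, but it re-derives a bound already available in the matrix-completion literature and introduces machinery (tangent-space and incoherence control of the noise) that the argument never calls for; the paper's shortcut sidesteps the correlations introduced by $\D$ and $\G$ entirely by pushing them through as operator norms. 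Your Frobenius-norm heuristic $\|\G\D(\mE)\|_F\approx\sigma n\sqrt{s}$ gives the right scaling intuition but is not the mechanism that actually delivers the spectral bound.
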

\begin{proof}
    See Appendix~\ref{apd:pf_noise} of the Supplementary Material. 
\end{proof}
\begin{remark}
    The dependence on noise part $\sigma\sqrt{n^2\max\{s,n\}}$ 
   is comparable to $\sigma\sqrt{n^2s}$ in AM-FIHT \cite{Zhang2018}, which arises from the fact that the number of noise elements in the lifted Hankel tensor domain is $O(n^2 s)$. The dependence on $\max\{s,n\}$ is an artifact of our proof technique. 
\end{remark}}

\begin{figure*}[!t]
\centering
	\subfloat[ ]{\includegraphics[width = 0.325\textwidth]{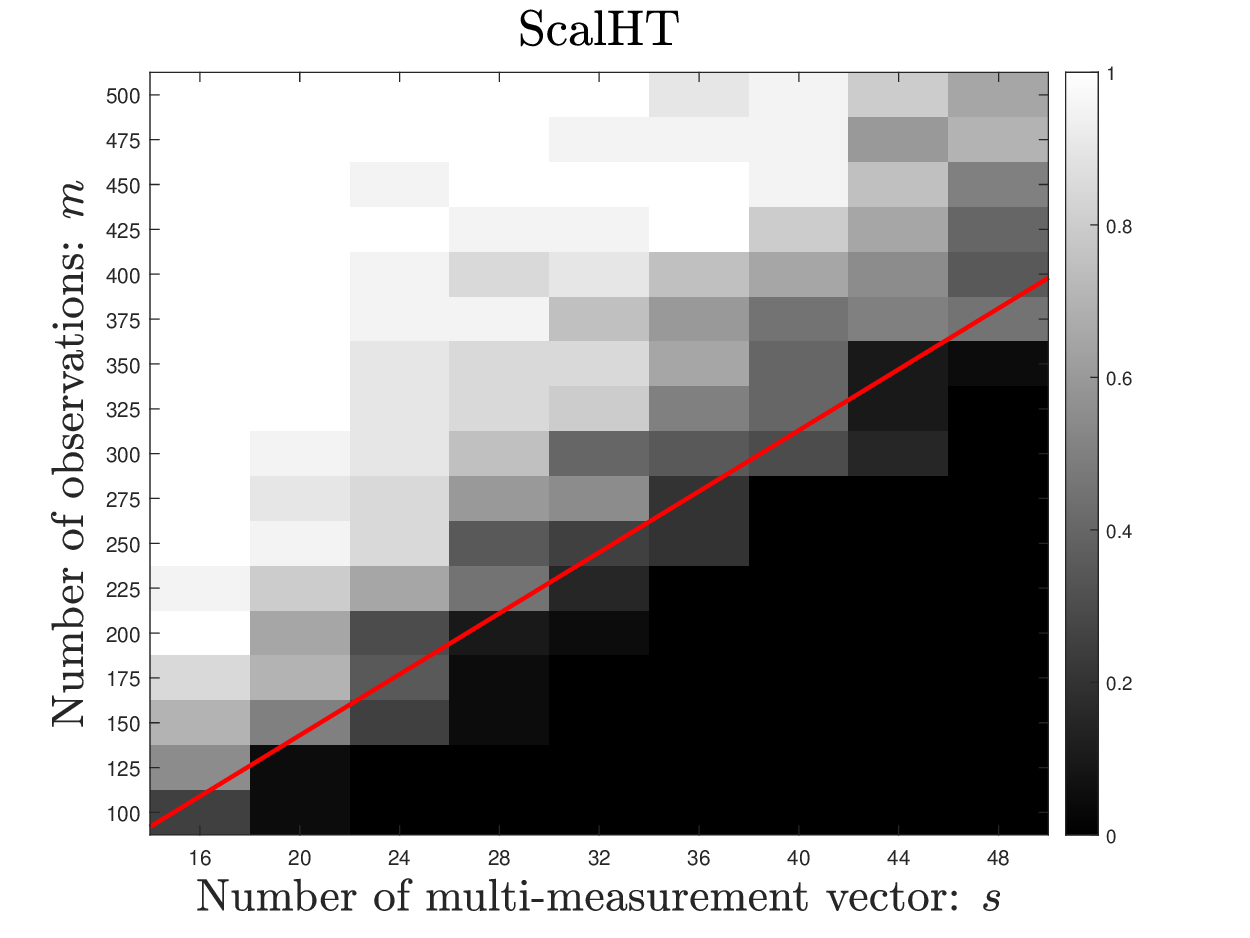}}
	\hfill
	\subfloat[]{\includegraphics[width = 0.325\textwidth]{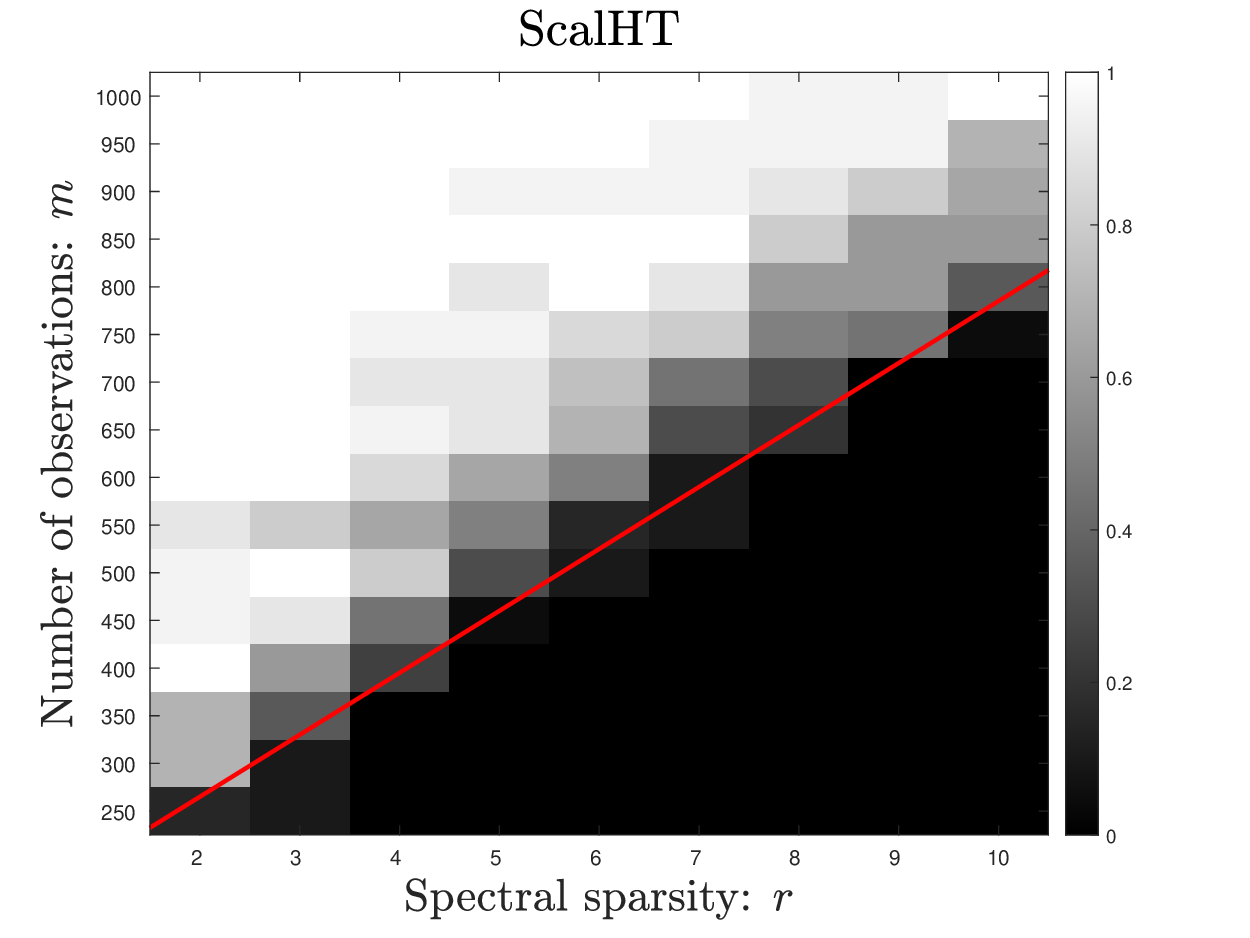}}
	\hfill
	\subfloat[]{\includegraphics[width = 0.325\textwidth]{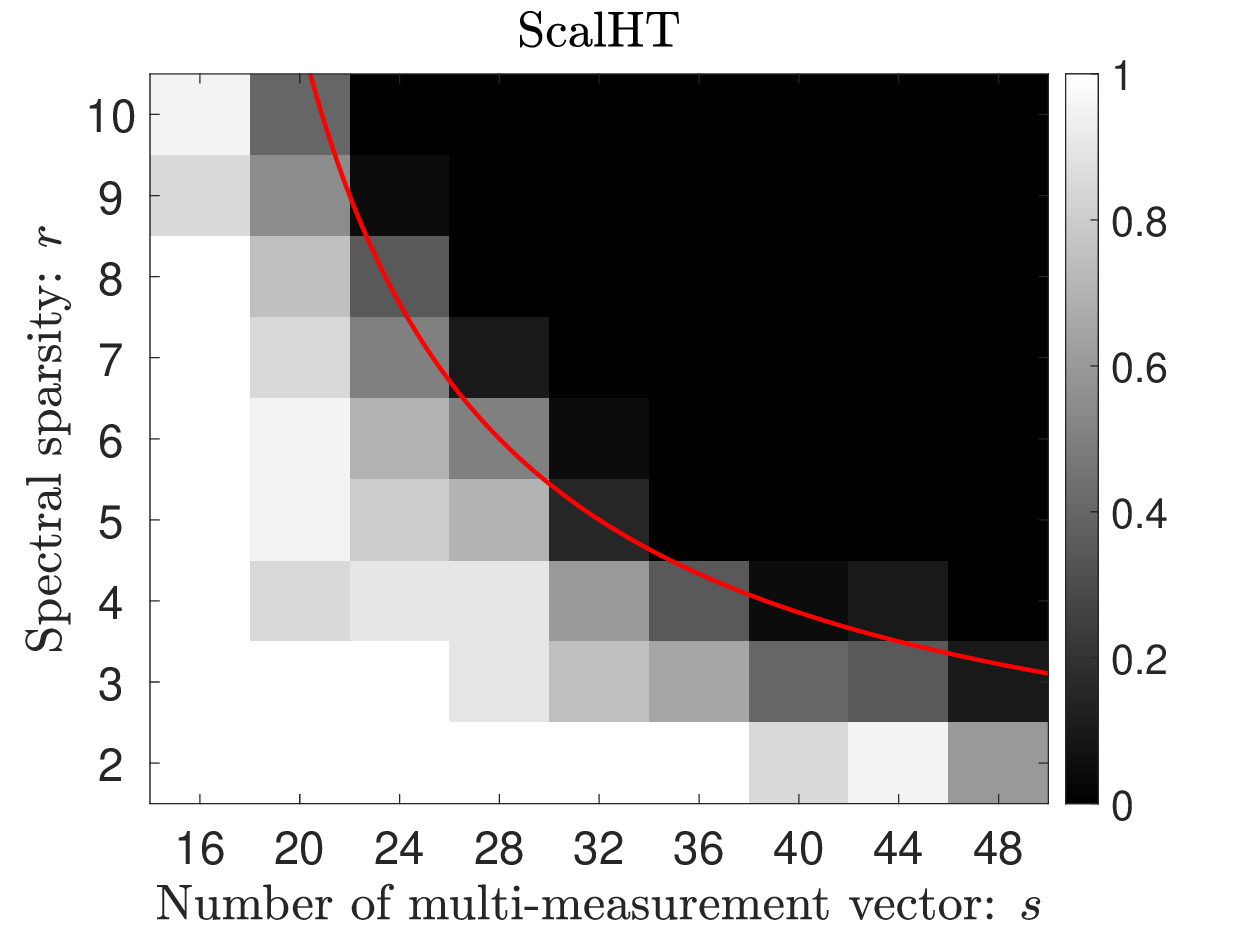}} 
\caption{{The phase transition performance of ScalHT. (a) Performance for varying $m$ and $s$ when $r=2$, and the red line plots $m=8s$. (b) Performance for varying $m$ and $r$ when $s=32$, and the red line plots $m=80r$. (c) Performance for varying $r$ and $s$ when $m=500$, and the red curve plots $sr=160$.}} \label{fig:sample_complexity}
\end{figure*}

{
\begin{remark} \label{rmk:noise_level_mag}
    When $\kappa=O(1)$, $r=O(1)$ 
   and the entries in $\bcZ_\star$ share the same order of magnitude, 
   the noise conditions reformulated  as  $\sigma\leq O(\frac{\sigma_{\max}(\bcZ_\star)}{\sqrt{n^2\max\{s,n\}}})= O(\sqrt{\frac{s}{\max\{s,n\}}}\|\mX_\star\|_{\infty})$, which implies that the noise level $\sigma$ can be as large as 
   a constant fraction of $\|\mX_\star\|_{\infty}$. There we use the facts $\sigma_{\max}(\bcZ_\star)= O( \|\bcZ_\star\|_F)=O(\sqrt{n^2s}\|\bcZ_\star\|_{\infty})$ and $\|\bcZ_\star\|_{\infty}=\|\H(\mX_\star)\|_{\infty}=\|\mX_\star\|_{\infty}$.

   Besides, as $\sigma_{\min}(\bcZ_\star)=O(\sigma_{\max}(\bcZ_\star))=O(\sqrt{n^2s}\|\bcZ_\star\|_{\infty})$, we conclude that $\sigma_{\min}(\bcZ_\star)$   in \eqref{eq:noise_recovery_bd} 
  exhibits an order-of-magnitude consistency with the noise part  
  $\sqrt{n^2\max\{s,n\}}\cdot \sigma$ when $s=n$. 
\end{remark}
}

\section{Numerical Simulations} \label{sec:numerical}

In this section, we conduct extensive simulations to showcase the performance of ScalHT \footnote{Our code is available at \url{https://github.com/Jinshengg/ScalHT}.}.  As in \cite{Li2024a,Cai2019,Zhang2019,Cherapanamjeri2017}, we employ the entire observation set rather than disjoint subsets in our simulations. The simulations are performed using MATLAB R2019b on a 64-bit Windows system equipped with a multi-core Intel i9-10850K CPU running at 3.60 GHz and 16GB of RAM. We begin by showing the recovery performance of ScalHT and compare it with ScaledGD, ANM, and AM-FIHT. Next, we examine the convergence performance in terms of recovery error for ScalHT, AM-FIHT, and ScaledGD, as shown in \ref{sim:converge}. 
Following this, we report the average 
runtime needed for our algorithm to reach a fixed accuracy in different problem settings, as detailed in \ref{sim:runtime_sn}. {Besides, we compare the performance of ScalHT with the existing Hankel tensor completion methods: STH-LRTC and the Fast Tucker method.}  Finally, we apply ScalHT to direction-of-arrival (DOA) estimation using a sparse linear array, as described in \ref{sim:doa}.

\subsection{Recovery performance} \label{sim:pst}
 In this subsection, we show {the recovery performance of  ScalHT to validate the sample complexity $m\gtrsim O(sr^3)$ established in Theorem~\ref{thm:recovery}. Also, we compare the recovery performance between ScalHT and atomic norm minimization (ANM) \cite{Li2016,Yang2016a}, AM-FIHT \cite{Zhang2018}, and ScaledGD \cite{Tong2022}.}
 
 The ground truth matrix $\mX_\star$ is constructed as $\mX_\star = \sum_{k=0}^{r-1} \vb_k \va(p_k)^T$ where $p_k=e^{(\imath2\pi f_k)}$. Here, the coefficient vectors $\{\vb_k\}_{k=0}^{r-1}$ are drawn from a standard Gaussian distribution and then normalized. 
 The frequencies $\{f_k\}_{k=0}^{r-1}$ are randomly selected from $[0,1)$ without any separation constraints. The step size for both ScalHT is set to $\eta = 0.25$. The termination condition for ScalHT is met when $\|\bcS^{k+1} - \bcS^k\|_F / \|\bcS^k\|_F \leq 10^{-7}$ or when the maximum number of iterations is reached.  A test is considered successful if $\|\mX^k - \mX_\star\|_F / \|\mX_\star\|_F \leq 10^{-3}$. We run 30 random trials for each parameter configuration and record the success rate.

{In the first experiment, we test the phase transition performance of ScalHT for three cases: varying the number of observations $m$ and the number of multi-measurement vectors $s$ with the spectral sparsity $r$ fixed, varying $m$ and $r$ with $s$ fixed, and varying $r$ and $s$ with $m$ fixed. We first set $m$ to range from $100$ to $500$ in increments of $25$, $s$ to range from $16$ to $48$ in increments of $4$, with the spectral sparsity $r=2$ fixed, and test the phase transition performance of ScalHT. Next, we set $m$ to range from $250$ to $1000$ in increments of $50$, $r$ to range from $2$ to $10$ in increments of $1$, and $s=32$ as fixed. Last, we set  $s$ to range from $16$ to $48$ in increments of $4$, $r$ to range from $2$ to $10$ in increments of $1$, and $m=500$ as fixed.

We can observe that in Figure~\ref{fig:sample_complexity}.(a), ScalHT achieves successful recovery when $m\gtrsim 8s$, indicating that the required sample complexity exhibits a linear dependence on $s$.  Figure~\ref{fig:sample_complexity}.(b) tells us that ScalHT achieves successful recovery when $m\gtrsim 80r$, indicating that the required sample complexity exhibits a linear dependence on $r$. Figure~\ref{fig:sample_complexity}.(c) tells us that ScalHT achieves successful recovery when $m\gtrsim 3.125 sr$ \footnote{{The red curve is $sr=160$, thus when $sr\leq 160=\frac{m}{3.125}$, we can achieve successful recovery where $m=500$.} },  
indicating that the required sample complexity exhibits a linear dependence on $sr$ jointly. Therefore, our current sample complexity  $m \gtrsim O(sr^3)$ exhibits an optimal dependence on $s$, and a suboptimal dependence on $r$, which we aim to improve in future work.
}
 
 
 {In the second experiment, we compare the recovery performance between ScalHT and ANM \cite{Li2016,Yang2016a}, AM-FIHT \cite{Zhang2018}, and ScaledGD \cite{Tong2022}.}  {The stepsize and the termination condition of ScaledGD are the same as ScalHT.} For AM-FIHT \cite{Zhang2018}, we use the $\beta = 0$ version, such that the heavy ball acceleration step does not introduce additional effects. The termination condition for AM-FIHT is $\|\mX^{k+1} - \mX^k\|_F / \|\mX^k\|_F \leq 10^{-7}$ or when the maximum number of iterations is reached. ANM is implemented using CVX. We set the signal length as $n = 63$, the number of multiple measurement vectors as $s = 32$, and the spectral sparsity as $r = 8$, with observation ratios $p$ ranging from $0.05$ to $0.95$ in $19$ increments.  In Fig. \ref{fig:pst}, we observe that ScalHT is more stable than ANM. ANM performs better at lower observation ratios. Additionally, ScalHT shows similar performance to ScaledGD, and both of them outperform AM-FIHT.
\begin{figure}[!t]
		\centering
	  \includegraphics[width=0.85\linewidth]{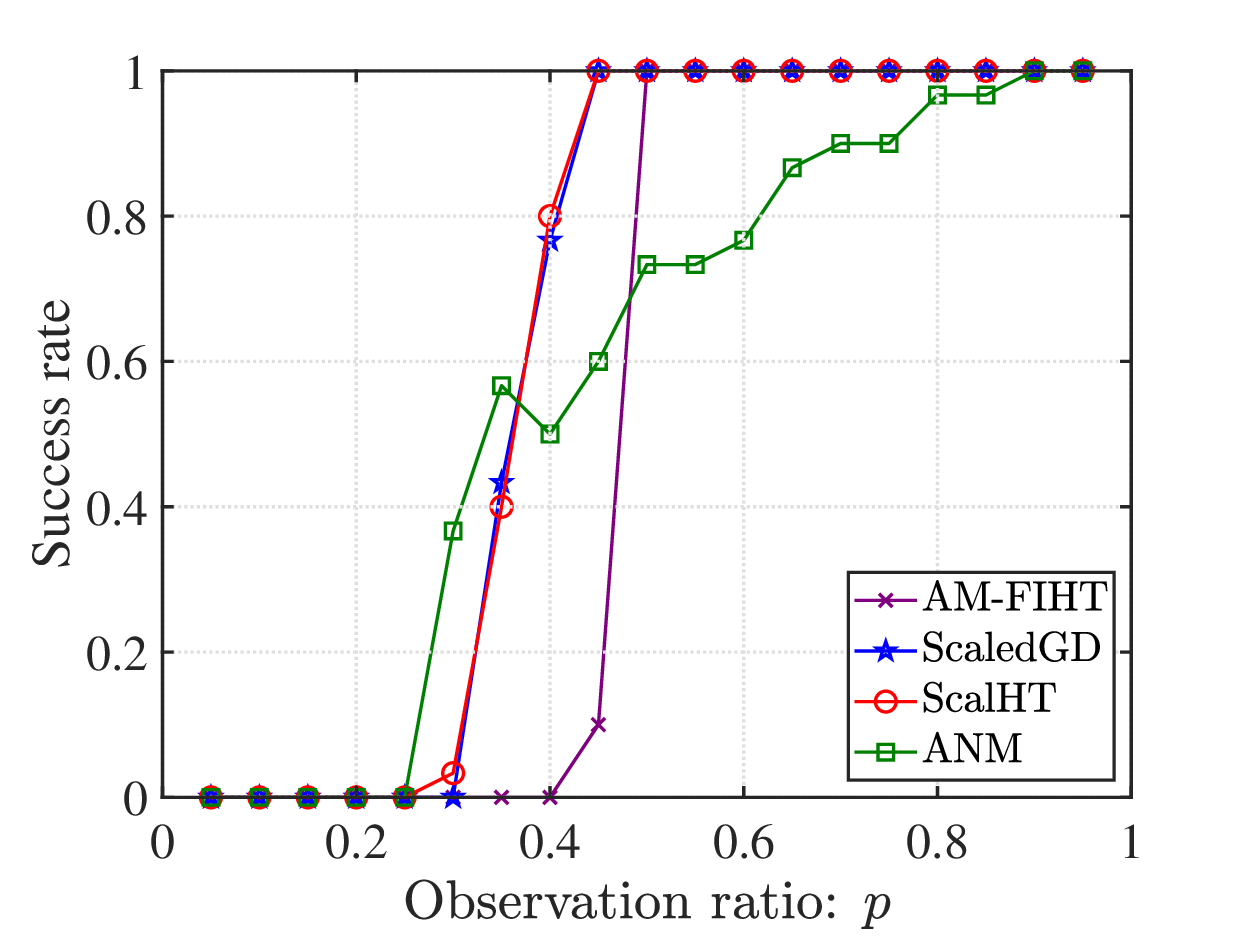}	
  \caption{The {success rate curve} of ScalHT, AM-FIHT, ScaledGD, and ANM when $n=63$, $s=32$ and $r=8$. 
  }
		\label{fig:pst}
\end{figure}

\subsection{Convergence performance} \label{sim:converge}

This subsection presents the convergence performance of ScalHT, AM-FIHT, and ScaledGD in terms of relative error, defined as $\|\mX^k - \mX_\star\|_F/\|\mX_\star\|_F$. We set the parameters as $n = 511$, $s = 512$, and $r = 6$, and use observation ratios $p$ of $0.17$ and $0.22$. The step size $\eta$ is set to $0.4$ for both ScalHT and ScaledGD. Other parameters are consistent with those in \ref{sim:pst}. Each case is run 20 times, and the number of iterations and runtime are recorded. We 
present the average iteration and runtime for achieving four relative errors ${10^{-2}, 10^{-4}, 10^{-6}, 10^{-8}}$ in Table \ref{tab:time_versus_relerr}.


From Table \ref{tab:time_versus_relerr}, 
we conclude that ScalHT, AM-FIHT, and ScaledGD all exhibit linear convergence. When the observation ratio is low at $p = 0.17$, ScalHT and ScaledGD converge faster than AM-FIHT in terms of iterations. However, when the observation ratio is higher at $p = 0.22$, AM-FIHT converges more quickly than both ScalHT and ScaledGD. Additionally, the convergence behavior of ScalHT is similar to that of ScaledGD. Next, we examine the average time required for these algorithms to achieve different relative errors. 
ScalHT converges the fastest in terms of runtime among the three algorithms. The computational efficiency of ScalHT is approximately 10 times higher than that of AM-FIHT and 100 times higher than that of ScaledGD, highlighting the superior efficiency of ScalHT.

\begin{table}[h!]	
		\begin{center}
			
			\caption{The average iterations and time (in seconds) versus different relative errors.} 
            \setlength{\tabcolsep}{1.6mm}{
			\begin{tabular}{c|c|c|c|c|c|c|c|c}
   \hline
				\textbf{Rel. err} & \multicolumn{2}{c|}{$10^{-2}$} &  \multicolumn{2}{c|}{$10^{-4}$}  &  \multicolumn{2}{c|}{$10^{-6}$} &  \multicolumn{2}{c}{$10^{-8}$}  \\
	\hline			
				 &Iter. & Time &Iter. & Time &Iter. & Time  &Iter. & Time 
    \\ 
    \hline
     & \multicolumn{8}{c}{$p=0.17$}
    \\ 	\hline 	
				
				ScalHT   &$12.1$ & $\bm{0.14}$  & ${33.8}$  & $\bm{0.4}$ & ${59.5}$  & $\bm{0.7}$ & ${88.7}$  & $\bm{1.0}$  \\
				AM-FIHT    &${11.3}$  & $1.7$  & $49.3$  & $7.2$  & $92.9$  & $13.5$ &$141.4$  & $20.4$  \\
				ScaledGD    &$12.5$  & $19.0$  & $34.2$  & $52.5$  & $59.8$  & $91.6$ &$88.8$  & $136.4$  \\

    \hline 
     & \multicolumn{8}{c}{$p=0.22$}
    \\ \hline

				ScalHT   &$10.3$ & $\bm{0.13}$  & $29.1$  & $\bm{0.37}$ & $50.9$  & $\bm{0.6}$ & $75.5$  & $\bm{0.95}$  \\
				AM-FIHT    &$6.5$  & $1.0$  & $21.2$  & $3.1$  & $39.3$  & $5.6$ &$60.3$  & $8.7$  \\
				ScaledGD    &$10.7$  & $16.3$  & $29.3$  & $44.7$  & $51.0$  & $78.0$ &$75.6$  & $115.7$  \\
    \hline
			\end{tabular}}
			\label{tab:time_versus_relerr}
		\end{center}		
	\end{table}

    \begin{table*}[h!]	
		\begin{center}
			
			\caption{The average iterations and time (in seconds) to achieve the relative error $10^{-3}$. }
			\begin{tabular}{c|c|c|c|c|c|c|c|c|c|c|c|c}
   \hline
				{${n}$} & \multicolumn{6}{c|}{$511$} &  \multicolumn{6}{c}{$1023$}  
                \\
	\hline			
    {${s}$} & \multicolumn{2}{c|}{$768$} &  \multicolumn{2}{c|}{$1024$}  &  \multicolumn{2}{c|}{$1280$} &  \multicolumn{2}{c|}{$768$} &  \multicolumn{2}{c|}{$1024$}  &  \multicolumn{2}{c}{$1280$}  \\
	\hline			
				 &Iter. & Time &Iter. & Time &Iter. & Time  &Iter. & Time &Iter. & Time  &Iter. & Time 

    \\ 	\hline 	
				
				ScalHT   &${18.6}$ & $\bm{0.38}$  & $19.1$  & $\bm{0.43}$ & $19.4$  & $\bm{0.50}$ & $19.1$  & $\bm{0.39}$ & $19.3$  & $\bm{0.46}$ & $19.2$  & $\bm{0.53}$  \\
				AM-FIHT    &$20.2$  & $4.4$  & $14.2$  & $4.2$  & $14.4$  & $5.5$ &$26.2$  & $27$  & $19.4$  & $12$ &$18.1$  & $14$ \\
				ScaledGD    &$19.2$  & $40$  & $20.5$  & $62$  & $20.1$  & $72$ &$20.2$  & $4.3$e$2$  & $20.8$  & $8.6$e$2$ &$21.1$  & $1.4$e$3$ 
    \\ \hline 	

			\end{tabular}
			\label{tab:timeiter_fixedacc}
		\end{center}		
	\end{table*}
    
\subsection{Runtime comparisons under different problem scales} \label{sim:runtime_sn}

In this subsection, we evaluate the runtime of our algorithm under different problem scales.  The spectral sparsity is set as $r=6$.  
The step size $\eta$ is set to $0.4$ for both ScalHT and ScaledGD.  For each problem scale, we run 10 random tests and record the average number of iterations and runtime required to achieve a fixed error of $\|\mX^k - \mX_\star\| / \|\mX_\star\|_F = 10^{-3}$. In our first simulation, we evaluate the runtime performance of our algorithm across varying problem dimensions $n$, keeping $s=256$ constant. Specifically, we consider $n=2^j-1$ for $j\in\{8,9,\cdots,12\}$, ranging from a minimum of $n=255$ to a maximum of $n=4095$. To maintain well-conditioned tasks, the number of observations is determined as $m=\lfloor 2.1sr \log(n) \rfloor$. 
 Figure \ref{fig:runtime_vs_n} reveals that our algorithm ScalHT consistently demonstrates significantly reduced computational expenses in comparison to AM-FIHT and ScaledGD across different problem dimensions $n$. Moreover, this enhancement in computational efficiency becomes more pronounced as the value of $n$ increases.

Second, we test our algorithm's runtime performance under various high dimensional $(s, n)$ scenarios. 
We set $n = 511$ and $n = 1023$. For each value of $n$, we set $s = 768, 1024$, and $1280$. 
The number of observations is set as $m = \lfloor 2.7sr \log(n) \rfloor$ in these cases. 
\begin{figure}[!t]
		\centering
	  \includegraphics[width=0.85\linewidth]{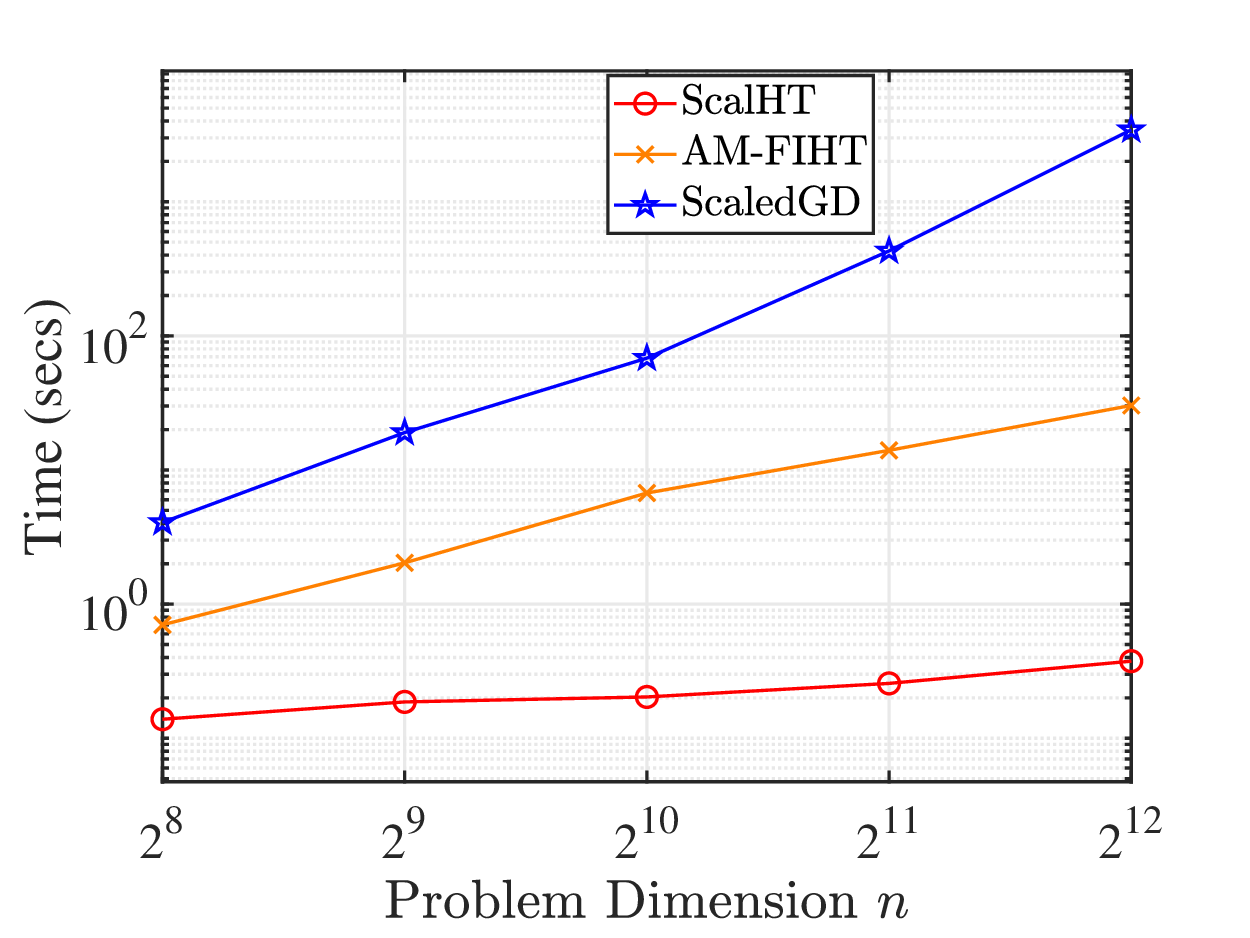}	
  \caption{The average run time comparisons versus different problem dimensions $n$ to achieve relative error $10^{-3}$.  
  }
		\label{fig:runtime_vs_n}
\end{figure}
From the results in Table \ref{tab:timeiter_fixedacc}, it is apparent that ScalHT exhibits markedly superior computational efficiency relative to both AM-FIHT and ScaledGD within high-dimensional contexts. Notably, for scenarios where $s = 1024$ and $n = 1023$, ScalHT showcases a computational efficiency surpassing that of AM-FIHT at least by a factor of 20. Furthermore, in instances where $s = 1280$ and $n = 1023$, ScalHT's computational efficiency exceeds that of ScaledGD by a minimum of 2500 times.

{
\subsection{Comparisons with Existing Hankel Tensor Methods}
In this subsection, we compare the performance of ScalHT with the existing Hankel tensor completion methods in the synthetic data and a real dataset, traffic 40 in \cite{Shi2020}. The related baseline methods are spatiotemporal Hankel low rank tensor completion (STH-LRTC) in  \cite{Wang2023a} and the fast Tucker method in delay-embedding (Hankelization) space in \cite{Yamamoto2022}. Note that we only Hankelize the second dimension (the dimension of the spectral sparse signal) of the input data matrix $\mX_\star$. Consequently, the window size is set as $\vtau=(1,\lfloor n/2\rfloor)$ in the fast Tucker method \cite{Yamamoto2022} and STH-LRTC \cite{Wang2023a}.

In the first experiment, we compare the recovery performance of ScalHT, STH-LRTC, and the fast Tucker method in the synthetic data. We choose the real-valued signal $\mX_\star=\sum_{k=0}^{r-1}\vb_k\va(p_{k})^T$ where $\vb_k\in\R^{n}$, $p_k\in\R$ and $p_k\in[0,1/20]$. 
This choice arises from fast Tucker method is designed for real-valued data
. We set the number of multi-measurement vectors $s=768$ as fixed, and the length of the signal $n=511, 767, 1023$ respectively. The spectral sparsity is $r=6$, which is assumed to be known in ScalHT.  
The number of observations is set as $m=\lfloor 6.5  sr\log(n)\rfloor$. The number of Monte Carlo trials is $20$.  We record the average runtime and relative error $\|\mX^k-\mX_\star\|_F/\|\mX_\star\|$. Other settings of ScalHT are the same as in subsection~\ref{sim:pst}.

As shown in Table~\ref{tab:simdata}, we conclude that ScalHT exhibits a higher computational efficiency compared to the existing low-rank Hankel tensor methods, especially when $n$ is large.  ScalHT achieves lower recovery error compared to the Fast Tucker method and STH-LRTC when $n=511$ and $n=767$.
}
\begin{table}[h!]	
		\begin{center}
			{
			\caption{{The average relative error and time (in seconds) for reconstruction.}} 
            \setlength{\tabcolsep}{1.6mm}{
			\begin{tabular}{c|c|c|c|c|c|c}
   \hline
				\textbf{$\bm{n}$} & \multicolumn{2}{c|}{$511$} & \multicolumn{2}{c|}{$767$} &  \multicolumn{2}{c}{$1023$} 
                \\
	\hline			
				 &Err. & Time &Err. & Time  &Err. & Time 
    \\ 	\hline 	
				
				ScalHT   &\textbf{5.3e-3}
 & \textbf{1.7} & \textbf{2.5e-3
}  & \textbf{2.2} & {3.5e-3
}  & \textbf{2.3} 
                \\
				\multirow{1}{*}
                {Fast Tucker} &\multirow{1}{*}{{2.4e-1}}  & \multirow{1}{*}{20}  & \multirow{1}{*}{2.2e-1}  & \multirow{1}{*}{38}   & \multirow{1}{*}{3.2e-1}  & \multirow{1}{*}{89}  
               \\
				\multirow{1}{*}
                {STH-LRTC} &\multirow{1}{*}{{2.7e-2}}  & \multirow{1}{*}{4.1e2}  & \multirow{1}{*}{4.2e-3}  & \multirow{1}{*}{5.5e2}   & \multirow{1}{*}{\textbf{3.1e-3}}  & \multirow{1}{*}{7.0e2}  
                \\
    \hline 
			\end{tabular}}
			\label{tab:simdata}
            }
		\end{center}		
	\end{table}
    
{
In the second experiment, we compare the performance of these algorithms in the traffic 40 dataset \cite{Shi2020}, which is a $228\times 40$ matrix. We study the recovery performance under two observation ratios $p=0.4$ and $p=0.6$. For the traffic 40 dataset, we know $s=228$ and $n=40$. We prescribe $r=5$ 
in our algorithm, ScalHT. The termination condition for ScalHT is $\|\bcS^{k+1} - \bcS^k\|_F / \|\bcS^k\|_F \leq 10^{-3}$ or maximum number of iterations is reached. 
We compare ScalHT with the fast Tucker method \cite{Yamamoto2022} and STH-LRTC \cite{Wang2023a}. The window size is set as $\vtau=(1,\lfloor n/2\rfloor)$ as before. 
We record the average relative error $\|\mX^k-\mX_\star\|_F/\|\mX_\star\|$ and runtime. 

From Table~\ref{tab:realdata}, we conclude that ScalHT exhibits a lower recovery error and higher computational efficiency compared to the fast Tucker method and STH-LRTC under the sample ratio $p=0.4$ and $p=0.6$. This verifies the competitiveness of  ScalHT in real data.}
\begin{table}[h!]	
		\begin{center}
			{
			\caption{{The average relative error and time (in seconds) to reconstruct the traffic 40 dataset.}} 
            \setlength{\tabcolsep}{2mm}{
			\begin{tabular}{c|c|c|c|c}
   \hline
				\textbf{Observation ratio} & \multicolumn{2}{c|}{$p=0.4$}  &  \multicolumn{2}{c}{$p=0.6$} 
                \\
	\hline			
				 &Err. & Time &Err. & Time  
    \\ 	\hline 	
				
				ScalHT   
 & \textbf{1.1e-2
}  & \textbf{7.1e-1} & \textbf{1.3e-3}  & \textbf{3.3e-1} 
                \\
                {Fast Tucker } 
                & \multirow{1}{*}{2.0e-2}  & \multirow{1}{*}{8.2e-1}  & \multirow{1}{*}{1.6e-2} & \multirow{1}{*}{5.7e-1} 
                  \\
                {STH-LRTC } 
                & \multirow{1}{*}{1.9e-1}  & \multirow{1}{*}{2.7}  & \multirow{1}{*}{7.9e-2} & \multirow{1}{*}{1.9} 
               \\
    \hline 
			\end{tabular}}
			\label{tab:realdata}}
		\end{center}		
	\end{table}
    \vspace{-8mm}
\subsection{Application to direction-of-arrival (DOA) estimation} \label{sim:doa}
\begin{figure}[!t]
		\centering
	  \includegraphics[width=0.85\linewidth]{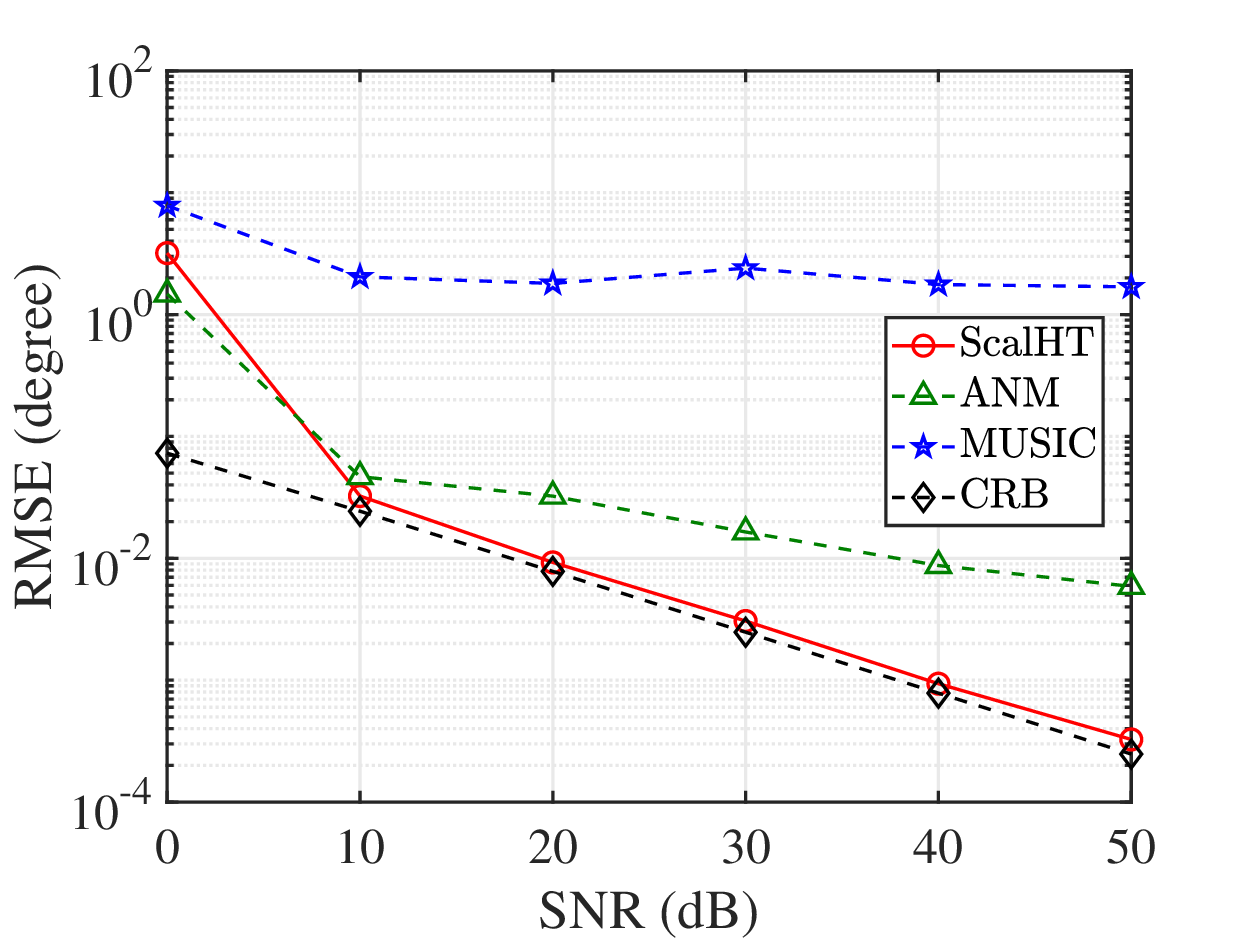}	
  \caption{RMSE in degree versus SNR for 
  ScalHT from $32$ snapshots.  
  }
		\label{fig:RMSE}
\end{figure}
Last, we consider the direction-of-arrival (DOA) problem for a sparse linear array (SLA) under electromagnetic attacks in the time domain. Without loss of generality, we suppose the minimum intersensor spacing is taken as half the wavelength. The multiple signals received from the virtual full
array  with $n$ elements and $s$ snapshots are given as 
\begin{align*}
    \mX=\sum_{k=0}^{r-1} \vb_k \va(\theta_k)^T+\mE=\mX_\star+\mE, 
\end{align*}
where $r$ is the number of far-field narrowband sources, $\vb_k\in\C^{s\times 1}$ denotes the $s$ snapshots of the $k$-the source signal in time domain, $\va(\theta_k)=[1,p_k,\cdots,p_k^{n-1}]^T$ where $p_k=e^{\imath\pi\sin(\theta_k)}$ for $\theta_k\in[-90^\circ,90^\circ)$,  $\mE\in\C^{s\times n}$ denotes the noise matrix, and $\mX_\star=\sum_{k=0}^{r-1} \vb_k \va(\theta_k)^T$. Denote the sparse sensor index set as $\Omega_{\mathrm{SLA}}\in\{0,1,\cdots,n-1\}$. The multiple signals received from a sparse linear array with random electromagnetic attacks in the time domain are given as 
\begin{align*}
    \mX_{\Omega}=\P_{\Omega}(\mX_\star)+\P_{\Omega}(\mE),
\end{align*}
where $\Omega\subseteq\{0,1,\cdots,s-1\}\times\Omega_{\mathrm{SLA}}$. Specifically, we consider the scenario where several snapshots from one or more sensors are missing due to electromagnetic attacks. In our simulation, we set 10 percent of locations of  $\{0,1,\cdots,s-1\}\times\Omega_{\mathrm{SLA}}$ that are missing as a result of random electromagnetic attacks. The corresponding observation ratio is given by $p=0.9\cdot|\Omega_{\mathrm{SLA}}|/n$ relative to the full data matrix $\mX$. The number of virtual full array's elements is $n=64$. The sparse linear array consists of $17$ sensors, whose index set is $\Omega_{\mathrm{SLA}}=\{2,4,6,10,20,21,23,30,33,38,39,40,49,50,51,56,62\}$.  We set four sources whose DOAs are $\{1^\circ, 2^\circ, 4^\circ, 6^\circ\}$. 
The number of snapshots is $s=32$.  The signal-to-noise ratio (SNR) refers to the ratio of the power of the source signal to the power of noise  ranging from $0$ dB to $50$ dB. The root mean square error (RMSE) is defined as $\sqrt{\frac{1}{P}\sum_{i=1}^{P}\|\hat{\bm{\theta}}_i-\bm{\theta}\|_2^2}$ where $\bm{\theta}=[\theta_0,\cdots,\theta_{r-1}]^T$, $\hat{\bm{\theta}}_i$ is the estimation result in the $i$-th trial, and $P$ is the number of trials. 

We first apply MUSIC \cite{Schmidt1986} directly to the covariance matrix $\frac{1}{s}{\mX_{\Omega}^T}({\mX_\Omega^T})^H$ to estimate the DOAs. 
Next, we use ScalHT with $\eta=0.25$ and ANM \cite{Li2016,Yang2016a} to reconstruct $\mX_{\Omega}$, yielding  the estimated full data matrix $\hat{\mX}$. MUSIC is then applied to $\hat{\mX}$ to obtain the DOAs. For each SNR, we run 100 random trials and record the RMSE. Besides, the CRB \cite{Wang2017} for the sparse array 
is plotted. From Fig. \ref{fig:RMSE}, we observe that our algorithm ScalHT achieves a lower RMSE compared to both ANM and applying MUSIC directly when  the SNR exceeds 10 dB. Furthermore, the RMSE of ScalHT approaches the CRB for the sparse array, demonstrating its superior estimation performance.


\section{Conclusions} 
In this study, we propose a novel low-rank Hankel tensor completion approach to solve multi-measurement spectral compressed sensing. 
Building on low-rank Tucker decomposition and the Hankel tensor structure, we introduce a fast, non-convex scaled gradient descent method named ScalHT for solving the Hankel tensor completion problem. 
We present innovative fast computation formulations for ScalHT, achieving $O(\min\{s,n\})$-fold improvement in storage and computation efficiency over the previous algorithms. 
Furthermore, we provide recovery and linear convergence guarantees for ScalHT.  
Numerical experiments demonstrate that our method outperforms existing algorithms, exhibiting substantially lower computational and storage costs while achieving superior recovery performance.
\appendices
\section{Additional preliminaries}

\textbf{Useful facts from tensor algebra:}
\begin{align} 
\cM_1(\bcX)\breve{\mL}
=\cM_1(\bcX\times_2 \mR^H&\times_{3}\mV^H)\cM_1(\bcS)^H,\label{eq:tensor_properties_l}
\\\cM_2(\bcX)\breve{\mR}
=\cM_2(\bcX\times_1 \mL^H&\times_{3}\mV^H)\cM_2(\bcS)^H,\label{eq:tensor_properties_r}
\\\cM_3(\bcX){\breve{\mV}}
=\cM_3(\bcX\times_1 \mL^H&\times_{2}\mR^H)\cM_3(\bcS)^H.\label{eq:tensor_properties_v}
\end{align}
For $\bQ_k \in \C^{r\times r}$, $k=1,2,3$:
\begin{align}
(\mL,\mR,\mV)\bcdot \big((\bQ_{1},\bQ_{2},\bQ_{3})\bcdot\bcS \big) &=(\mL\bQ_{1},\mR\bQ_{2},\mV\bQ_{3})\bcdot\bcS, \label{eq:tensor_properties_1} \\
\left\|(\bQ_{1},\bQ_{2},\bQ_{3})\bcdot\bcS\right\|_{F} &\le \|\bQ_{1}\|\|\bQ_{2}\|\|\bQ_{3}\|\|\bcS\|_{F} \label{eq:tensor_properties_3}. 
\end{align}

\textbf{Distance metric:}\label{apd:prel_dist}
As the Tucker decomposition of $\bcZ_\star$ is not uniquely specified, it is necessary to define the scaled distance metric \cite{Tong2022}  between factor quadruples $\bF=(\mL,\mR,\mV,\bcS)$ and $\bF_{\star}=(\mL_{\star},\mR_{\star},\mV_{\star},\bcS_{\star})$ as:
\begin{align*}
&\distsq{\bF}{\bF_{\star}} \coloneqq \\&\inf_{\bQ_{k}\in\GL(r)}\; \left\|(\mL\bQ_{1}-\mL_{\star})\bSigma_{\star,1}\right\|_{F}^{2}+\left\|(\mR\bQ_{2}-\mR_{\star})\bSigma_{\star,2}\right\|_{F}^{2}\\&+\left\|(\mV\bQ_{3}-\mV_{\star})\bSigma_{\star,3}\right\|_{F}^{2} \nonumber+\left\|(\bQ_{1}^{-1},\bQ_{2}^{-1},\bQ_{3}^{-1})\bcdot\bcS-\bcS_{\star}\right\|_{F}^2. 
\end{align*}
Here, $\GL(r)$ denotes the set of invertible matrices in $\C^{r\times r}$.  For $k=1,2,3$, the existence of $\bQ_{k}$ is shown in \cite[Lemma~12]{Tong2022}. When optimal $\bQ_{k}$ to  $\distsq{\bF}{\bF_{\star}}$ is an identity matrix, we call the factor quadruple $\mF$ is aligned with $\mF_\star$. 

\begin{lemma}[Useful incoherence results]\label{lem:incoh_useful}
    Suppose $\bcZ_\star$ satisfies the incoherence property in Definition \ref{def:incoh}, then 
     \begin{align*}
    \|\bcZ_\star\|_{\infty}
    &\leq {\frac{\mu_0 c_{\mathrm{s}} r}{n}}\sigma_{\max}(\bcZ_\star). \numberthis \label{eq:prel_incohT}
 \\  \max\{\|\M_i(\bcZ_{\star})\|_{2,\infty},\|&\M_i(\bcZ_{\star})^H\|_{2,\infty}\}{\leq} \sqrt{\frac{\mu_0 c_{\mathrm{s}} r}{n}}\sigma_{\max}(\bcZ_\star), \numberthis \label{eq:prel_incoh12}
   \end{align*}  
where $i=1,2$.
\end{lemma}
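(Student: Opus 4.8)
The plan is to read everything off the HOSVD \eqref{eq:grdtruth-oth-tucker}, $\bcZ_\star=(\mL_\star,\mR_\star,\mV_\star)\bcdot\bcS_\star$ with column-orthonormal factors, combined with the Tucker matricization identities $\cM_1(\bcZ_\star)=\mL_\star\cM_1(\bcS_\star)(\mV_\star\otimes\mR_\star)^T$, $\cM_2(\bcZ_\star)=\mR_\star\cM_2(\bcS_\star)(\mV_\star\otimes\mL_\star)^T$, and the incoherence of $\mL_\star,\mR_\star$ from Definition~\ref{def:incoh}. The one point that makes an incoherence hypothesis on $\mV_\star$ superfluous (in line with Remark~\ref{remark:incoh_V}) is the elementary observation that for any matrix with orthonormal columns each row has $\ell_2$-norm at most one; in particular $\|\mV_\star(i_3,:)\|_2^2=(\mV_\star\mV_\star^H)_{i_3,i_3}\le 1$, whereas for $\mL_\star,\mR_\star$ I will instead use the sharper estimate $\|\mL_\star(i_1,:)\|_2,\|\mR_\star(i_2,:)\|_2\le\sqrt{\mu_0 c_{\mathrm{s}}r/n}$. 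Two auxiliary facts will be recorded first: since $\mL_\star,\mR_\star,\mV_\star$ are isometries and the Kronecker products $\mV_\star\otimes\mR_\star$, $\mV_\star\otimes\mL_\star$ have orthonormal columns (because $(\mV_\star\otimes\mR_\star)^H(\mV_\star\otimes\mR_\star)=(\mV_\star^H\mV_\star)\otimes(\mR_\star^H\mR_\star)=\mI$), left multiplication by an isometry and right multiplication by a matrix with orthonormal rows preserve singular values, so $\|\cM_k(\bcS_\star)\|=\sigma_{\max}(\cM_k(\bcZ_\star))\le\sigma_{\max}(\bcZ_\star)$; and $\|(\mV_\star\otimes\mR_\star)^T\|=\|\mV_\star\otimes\mR_\star\|=1$, and likewise $\|(\mV_\star\otimes\mL_\star)^T\|=1$, while $\|\mL_\star\|=\|\mR_\star\|=1$.

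For \eqref{eq:prel_incoh12}, the $i_1$-th row of $\cM_1(\bcZ_\star)$ is $\mL_\star(i_1,:)\cM_1(\bcS_\star)(\mV_\star\otimes\mR_\star)^T$, so by submultiplicativity its $\ell_2$-norm is at most $\|\mL_\star(i_1,:)\|_2\,\|\cM_1(\bcS_\star)\|\,\|(\mV_\star\otimes\mR_\star)^T\|\le\sqrt{\mu_0 c_{\mathrm{s}}r/n}\,\sigma_{\max}(\bcZ_\star)$; this bounds $\|\cM_1(\bcZ_\star)\|_{2,\infty}$, and $\|\cM_2(\bcZ_\star)\|_{2,\infty}$ follows symmetrically from the $\cM_2$ identity using $\|\mR_\star(i_2,:)\|_2$. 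For the Hermitian transpose, a row of $\cM_1(\bcZ_\star)^H$ has the same $\ell_2$-norm as the corresponding column of $\cM_1(\bcZ_\star)$, namely $\mL_\star\cM_1(\bcS_\star)[\mV_\star(i_3,:)\otimes\mR_\star(i_2,:)]^T$ for some $(i_2,i_3)$, whose norm is at most $\|\mL_\star\|\,\|\cM_1(\bcS_\star)\|\,\|\mV_\star(i_3,:)\|_2\,\|\mR_\star(i_2,:)\|_2\le 1\cdot\sigma_{\max}(\bcZ_\star)\cdot 1\cdot\sqrt{\mu_0 c_{\mathrm{s}}r/n}$, using the Kronecker norm identity $\|\va\otimes\vb\|_2=\|\va\|_2\|\vb\|_2$; the estimate for $\|\cM_2(\bcZ_\star)^H\|_{2,\infty}$ is identical with $\mL_\star$ and $\mR_\star$ interchanged.

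For \eqref{eq:prel_incohT}, expressing a generic entry through the mode-$1$ matricization gives $\bcZ_\star(i_1,i_2,i_3)=\mL_\star(i_1,:)\,\cM_1(\bcS_\star)\,[\mV_\star(i_3,:)\otimes\mR_\star(i_2,:)]^T$, and Cauchy--Schwarz together with submultiplicativity of the spectral norm and the Kronecker identity yield $|\bcZ_\star(i_1,i_2,i_3)|\le\|\mL_\star(i_1,:)\|_2\,\|\cM_1(\bcS_\star)\|\,\|\mV_\star(i_3,:)\|_2\,\|\mR_\star(i_2,:)\|_2\le\sqrt{\mu_0 c_{\mathrm{s}}r/n}\cdot\sigma_{\max}(\bcZ_\star)\cdot 1\cdot\sqrt{\mu_0 c_{\mathrm{s}}r/n}=\frac{\mu_0 c_{\mathrm{s}}r}{n}\sigma_{\max}(\bcZ_\star)$; taking the maximum over $i_1,i_2,i_3$ finishes the proof. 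All steps are Cauchy--Schwarz and spectral-norm submultiplicativity, so there is no serious calculational obstacle; the only thing requiring care is to keep $\mV_\star$ entirely out of any incoherence argument and bound its rows by $1$, which is precisely what produces the clean $\sqrt{\mu_0 c_{\mathrm{s}}r/n}$ and $\mu_0 c_{\mathrm{s}}r/n$ rates consistent with Remark~\ref{remark:incoh_V}.
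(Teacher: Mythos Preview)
Your proof is correct and follows essentially the same approach as the paper: both read the bounds off the HOSVD matricization $\cM_1(\bcZ_\star)=\mL_\star\cM_1(\bcS_\star)(\mV_\star\otimes\mR_\star)^T$, use the incoherence of $\mL_\star,\mR_\star$ from Definition~\ref{def:incoh}, and handle $\mV_\star$ only through $\|\mV_\star(i_3,:)\|_2\le 1$ (equivalently $\|\mV_\star\|_{2,\infty}\le\|\mV_\star\|\le 1$). The only cosmetic difference is that the paper bounds $\|\cM_1(\bcS_\star)(\mV_\star\otimes\mR_\star)^T\|\le\sigma_{\max}(\bcZ_\star)$ in one step, while you first isolate $\|\cM_1(\bcS_\star)\|\le\sigma_{\max}(\bcZ_\star)$ and then use $\|(\mV_\star\otimes\mR_\star)^T\|=1$; both are equivalent.
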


\begin{proof}
Take $i=1$ for example:
    \begin{align*}
&\|\M_1(\bcZ_{\star})\|_{2,\infty}=\|\mL_\star\M_1(\bcS_{\star})(\mV_{\star}\otimes\mR_{\star})^T\|_{2,\infty}
    \\ &\leq\|\mL_\star\|_{2,\infty}\|\M_3(\bcS_{\star})(\mV_{\star}\otimes\mR_{\star})^T\|\leq \|\mL_\star\|_{2,\infty}\sigma_{\max}(\bcZ_\star),
    \end{align*}
    where we invoke $\|\mV_\star\|\leq 1$ and $\|\mR_\star\|\leq 1$. From Definition \ref{def:incoh}, $\|\mL_\star\|_{2,\infty}\leq \sqrt{\mu_0 c_{\mathrm{s}} r/n}$. Similarly, 
\begin{small}
    \begin{align*}
    \|\cM_1(\bcZ_{\star})^H\|_{2,\infty}&{\leq} \|(\mV_{\star}\otimes\mR_{\star})\cM_3(\bcS_\star)^T\|_{2,\infty}\|\mL_\star\|
    \\&
    {\leq}\|\mR_{\star}\|_{2,\infty}\sigma_{\max}(\bcZ_\star),
\end{align*}
\end{small}where we invoke $\|\mV_\star\|_{2,\infty}\leq\|\mV_\star\|\leq 1$ and $\|\mL_\star\|\leq 1$. 

From $\|\bcZ_\star\|_{\infty}=\|\cM_1(\bcZ_{\star})\|_{\infty}$ and then 
\begin{align*}
    \|\cM_1(\bcZ_{\star})\|_{\infty}&\leq \|\mL_\star\|_{2,\infty} \|(\mV_{\star}\otimes\mR_{\star})\M_3(\bcS_{\star})^T\|_{2,\infty} 
    \\&\leq\|\mL_\star\|_{2,\infty}\|\mR_\star\|_{2,\infty}\|\mV_{\star}\|_{2,\infty}\sigma_{\max}(\bcZ_\star)
    \\&\leq\|\mL_\star\|_{2,\infty}\|\mR_\star\|_{2,\infty}\sigma_{\max}(\bcZ_\star),
\end{align*}
where $\|\mL_\star\|_{2,\infty}, \|\mR_\star\|_{2,\infty}$ are bounded in Definition \ref{def:incoh}. 
\end{proof}
The projection operator $\P_T(\cdot)$, interacted with the Hankel tensor sampling basis $\bcH_{k,j}$ in Definition \ref{def:HankelT}, has the following incoherence property: 
\begin{lemma} \label{lem:PT_HTbasis_bd}
   Denote a tensor $\bcZ\in\C^{n_1\times n_2 \times s}$, and  matrices $\mA\in\C^{n_1\times r}$, $\mB\in\C^{n_2\times r}$. 
   Define a self-adjoint projection operator as
   $\P_{T}(\bcZ)=(\mP_{A},\mP_{B},\mI_s)\bcdot \bcZ,$ where $\mP_{A}=\mA(\mA^H \mA)^{-1}\mA^H$ and $\mP_{B}=\mB(\mB^H \mB)^{-1}\mB^H$ 
   are two projection matrices. Then one has 
    \begin{align*}
        \max_{k,j}\|\P_{T}\bcH_{k,j}\|_F\leq \min\{\|\mA\|_{2,\infty}/\sigma_r(\mA),\|\mB\|_{2,\infty}/\sigma_r(\mB)\}. 
    \end{align*}
\end{lemma}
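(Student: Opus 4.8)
The plan is to compute $\P_T\bcH_{k,j}$ explicitly using the definition of the Hankel tensor basis and the structure of the projection $\P_T = (\mP_A, \mP_B, \mI_s)\bcdot(\cdot)$, then bound its Frobenius norm. First I would recall from Definition~\ref{def:HankelT} that $\bcH_{k,j} = \mH_k \circ \ve_j$ in the sense that $[\bcH_{k,j}](i_1,i_2,i_3) = \mH_k(i_1,i_2)\ve_j(i_3)$, so that along the third mode $\bcH_{k,j}$ is supported entirely on the single slice $i_3 = j$, with that slice equal to the Hankel matrix basis $\mH_k$. Since $\P_T$ acts as the identity on the third mode, we get $\P_T\bcH_{k,j} = (\mP_A \mH_k \mP_B)\circ\ve_j$, and therefore $\|\P_T\bcH_{k,j}\|_F = \|\mP_A \mH_k \mP_B\|_F$. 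This reduces the tensor problem to a matrix problem about the Hankel matrix basis.

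Next I would bound $\|\mP_A \mH_k \mP_B\|_F$. The key structural fact is that $\mH_k$ has Frobenius norm $1$ (it is an orthonormal Hankel matrix basis, with $w_k$ nonzero entries each equal to $1/\sqrt{w_k}$) and, more usefully, its nonzero entries lie on a single skew-diagonal: $\mH_k(i_1,i_2) \neq 0$ only when $i_1 + i_2 = k$. Writing $\mW_k = \{(i_1,i_2): i_1+i_2 = k\}$ for this index set, we can express $\mH_k = \tfrac{1}{\sqrt{w_k}}\sum_{(i_1,i_2)\in\mW_k} \ve_{i_1}\ve_{i_2}^T$ where $\ve_{i_1}\in\R^{n_1}$, $\ve_{i_2}\in\R^{n_2}$ are canonical basis vectors. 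Then $\mP_A \mH_k \mP_B = \tfrac{1}{\sqrt{w_k}}\sum_{(i_1,i_2)\in\mW_k}(\mP_A\ve_{i_1})(\mP_B\ve_{i_2})^T$. I would bound this either via $\|\mP_A \mH_k \mP_B\|_F \le \|\mP_A \mH_k\|_F \le \|\mP_A\|_{2,\infty}\cdot(\text{something})$, being careful about which factor carries the $2,\infty$ norm; the cleanest route is $\|\mP_A \mH_k \mP_B\|_F \le \|\mP_A \mH_k\|_F$ (since $\|\mP_B\|\le 1$) and then $\|\mP_A\mH_k\|_F^2 = \sum_{(i_1,i_2)\in\mW_k}\tfrac{1}{w_k}\|\mP_A\ve_{i_1}\|_2^2$. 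Now $\|\mP_A\ve_{i_1}\|_2 = \|\mA(\mA^H\mA)^{-1}\mA^H\ve_{i_1}\|_2 \le \|(\mA^H\mA)^{-1}\mA^H\|\cdot\|\mA^H\ve_{i_1}\|_2 \le \sigma_r(\mA)^{-1}\|\mA(i_1,:)\|_2 \le \|\mA\|_{2,\infty}/\sigma_r(\mA)$, using that $\|(\mA^H\mA)^{-1}\mA^H\| = 1/\sigma_r(\mA)$ and $\mA^H\ve_{i_1}$ is the $i_1$-th row of $\mA$ (conjugated). Summing over the $w_k$ terms and taking the square root gives $\|\mP_A\mH_k\|_F \le \|\mA\|_{2,\infty}/\sigma_r(\mA)$. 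By the symmetric argument, starting from $\|\mP_A\mH_k\mP_B\|_F \le \|\mH_k\mP_B\|_F$, we also get the bound $\|\mB\|_{2,\infty}/\sigma_r(\mB)$, and taking the minimum yields the claim.

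The main obstacle — really a bookkeeping subtlety rather than a deep difficulty — is keeping track of which projection to peel off first and which skew-diagonal index counting is needed: the two candidate bounds $\|\mA\|_{2,\infty}/\sigma_r(\mA)$ and $\|\mB\|_{2,\infty}/\sigma_r(\mB)$ come from the two orders of dropping $\mP_B$ versus $\mP_A$, and one must verify that in each case the $w_k$ copies of $1/w_k$ exactly cancel so no stray factor of $\sqrt{w_k}$ or $\sqrt{n}$ appears. (Note the bound is genuinely $O(1)$-type, not $O(\sqrt{n})$, precisely because of this cancellation — the Hankel normalization $1/\sqrt{w_k}$ is doing the work.) A secondary point is to double-check that $\P_T$ is self-adjoint and acts as claimed, but this is immediate from $\mP_A, \mP_B$ being orthogonal projections and the mode-$3$ factor being $\mI_s$. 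I would also remark that the bound is uniform in $k$ and $j$ since the right-hand side does not depend on them at all — the $j$ index dropped out entirely after the third-mode reduction, and the $k$-dependence cancelled in the Frobenius-norm computation.
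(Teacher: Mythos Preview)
Your proposal is correct and follows essentially the same route as the paper: both reduce $\|\P_T\bcH_{k,j}\|_F$ to $\|\mP_A\mH_k\|_F$ (respectively $\|\mH_k\mP_B\|_F$) by peeling off the contractive projection on the other mode, then bound each column norm $\|\mP_A\ve_{i_1}\|_2\le\|\mA\|_{2,\infty}/\sigma_r(\mA)$ and use the $1/\sqrt{w_k}$ normalization to cancel the $w_k$ summands. Your phrasing via the single-slice identification $\P_T\bcH_{k,j}=(\mP_A\mH_k\mP_B)\circ\ve_j$ is a slightly more direct way to see what the paper writes as $\|\mP_A\cM_1(\bcH_{k,j})\|_F=\|\mP_A\mH_k\|_F$, but the computation is identical.
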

\begin{proof}
We first list the following bounds:
    \begin{align*}
        &\|\mP_A(:,i_1)\|_2=\|[\mA(\mA^H\mA)^{-1}\mA^H](:,i_1)\|_2
        \\&{=}\|\mA(\mA^H\mA)^{-1}\cdot\mA^H(:,i_1)\|_2\\&
        {\leq}\|\mA(\mA^H\mA)^{-1}\|\|\mA\|_{2,\infty}{=}\|\mA\|_{2,\infty}/\sigma_r(\mA),
    \end{align*}
    and similarly, we can obtain $\|\mP_B(:,i_2)\|_2\leq \|\mB\|_{2,\infty}/\sigma_r(\mB)$.

    From the definition of $\P_T$, we have
  \begin{align*}
        &\|\P_{T}\bcH_{k,j}\|_F=\|(\mP_{A},\mP_{B},\mI_s)\bcdot \bcH_{k,j}\|_F
        \\&=\|\mP_{A}\cM_1(\bcH_{k,j})(\mP_{B}\otimes\mI_s)^T\|_F
        \leq \|\mP_{A}\cM_1(\bcH_{k,j})\|_F
        \\&=\|\mP_{A}\mH_k\|_F
        \leq\max_{i_1}\|\mP_{A}(:,i_1)\|_2
        \leq\|\mA\|_{2,\infty}/\sigma_r(\mA),
    \end{align*}
    where $\mH_k$ is the $k$-th Hankel matrix basis in Definition \ref{def:Hankelm}. Similarly, we can establish that 
$$\|\P_{T}\bcH_{k,j}\|_F\leq \|\mB\|_{2,\infty}/\sigma_r(\mB).$$
\end{proof}
\vspace{-15pt}
\subsection{Proof~of~Lemma~\ref{lem:mulrank_HT}} \label{pf:mulrank_HT}
{
\begin{fact} \label{lem:rank_inq}
    Let $\mA\in\C^{m\times r}, \mB\in\C^{n\times r}$ where $m>r$, $n>r$. 
        \begin{itemize}  
            \item [(1)] When $\rank(\mA)=r$, $\rank(\mB)=r$, we have $\rank(\mA\mB^H)=r$. 
            \item [(2)]  
            When $\rank(\mA)\geq 1$ and $\rank(\mB)\geq 1$, then  $\rank(\mA\odot\mB)\geq\min(\rank(\mA)+\rank(\mB)-1,r)$. 
        \end{itemize}
    \end{fact}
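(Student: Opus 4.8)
\textbf{Proof proposal for Fact~\ref{lem:rank_inq}.}

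Both assertions concern ranks of products of matrices, and I would prove them directly. For part~(1), the plan is to apply Sylvester's rank inequality to $\mA\mB^{H}$, whose inner dimension is $r$: this yields $\rank(\mA\mB^{H})\ge\rank(\mA)+\rank(\mB^{H})-r=r+r-r=r$, while submultiplicativity gives $\rank(\mA\mB^{H})\le\min\{\rank(\mA),\rank(\mB^{H})\}=r$, so $\rank(\mA\mB^{H})=r$. Equivalently, $\mA$ has full column rank and is hence injective as a map $\C^{r}\to\C^{m}$, whereas $\mB^{H}\in\C^{r\times n}$ has rank $r$ and is hence surjective onto $\C^{r}$, so $\mathrm{im}(\mA\mB^{H})=\mA(\C^{r})$ has dimension exactly $r$.

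For part~(2), I would reformulate the rank of the Khatri--Rao product. Writing $\ba_i=\mA(:,i)$, $\bb_i=\mB(:,i)$, the identity $(\mA\odot\mB)\bx=\vc\!\big(\mB\diag(\bx)\mA^{T}\big)$ shows $\ker(\mA\odot\mB)=\{\bx:\sum_i x_i\bb_i\ba_i^{T}=0\}$, hence
\begin{align*}
\rank(\mA\odot\mB)=\dim_{\C}\mathrm{span}\{\bb_i\ba_i^{T}:i\in[r]\}.
\end{align*}
Since $\rank(\mA\odot\mB)\le r$ always, it suffices to exhibit $\ell:=\min\{\rank(\mA)+\rank(\mB)-1,\,r\}$ indices $i$ for which the rank-one matrices $\bb_i\ba_i^{T}$ are linearly independent. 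The plan is: (i) reduce to full-row-rank factors by writing $\mA=\mA_1\mA_2$, $\mB=\mB_1\mB_2$ with $\mA_1\in\C^{m\times a}$, $\mB_1\in\C^{n\times b}$ of full column rank and $\mA_2\in\C^{a\times r}$, $\mB_2\in\C^{b\times r}$ of full row rank ($a=\rank(\mA)$, $b=\rank(\mB)$); since $\mB_1$ is injective and $\mA_1^{T}$ surjective, $\mB\diag(\bx)\mA^{T}=0\iff\mB_2\diag(\bx)\mA_2^{T}=0$, so $\rank(\mA\odot\mB)=\rank(\mA_2\odot\mB_2)$; (ii) choose $a$ indices whose $\mA_2$-columns form a basis of $\C^{a}$ and $b$ indices whose $\mB_2$-columns form a basis of $\C^{b}$, arranged to overlap in exactly one index, and show the resulting $a+b-1$ rank-one matrices are independent.

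The main obstacle is step~(ii): once $\ell>\max\{a,b\}$, neither the $\mA_2$- nor the $\mB_2$-parts of the chosen columns are on their own linearly independent, and independence of $\{\bb_i\ba_i^{T}\}$ must be extracted by a Kruskal-type support analysis — if $\sum_i c_i\bb_i\ba_i^{T}=0$ has support $S=\{i:c_i\ne 0\}$ with $|S|\le b$, then left-multiplying by a left inverse of the full-column-rank block $[\bb_i]_{i\in S}$ forces $c_i\ba_i^{T}=0$, hence $\bc=0$; symmetrically if $|S|\le a$; and the one-index overlap of the chosen index sets rules out $|S|>\max\{a,b\}$. I would remark, however, that every application of this Fact in the paper avoids this subtlety, because one factor is either $\mB$ with $\rank(\mB)=r$ or a Vandermonde matrix $\mE_L,\mE_R$ built from the distinct nonzero nodes $p_k$, and hence has full Kruskal rank $r$: for instance, when $\rank(\mB)=r$, a relation $\sum_i x_i\bb_i\ba_i^{T}=0$ says that $\mB$ times the matrix whose $i$-th row is $x_i\ba_i^{T}$ vanishes, which forces $x_i\ba_i^{T}=0$ for all $i$ (as $\mB$ has full column rank) and hence $\bx=0$ (as $\ba_i\ne 0$), giving $\rank(\mA\odot\mB)=r=\min\{\rank(\mA)+r-1,r\}$ at once. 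Finally, combining Fact~\ref{lem:rank_inq} with part~(1) and the matricizations $\cM_1(\H(\mX_\star))=\mE_L(\mB\odot\mE_R)^{T}$, $\cM_2(\H(\mX_\star))=\mE_R(\mB\odot\mE_L)^{T}$, $\cM_3(\H(\mX_\star))=\mB(\mE_R\odot\mE_L)^{T}$ yields $\mulrank(\H(\mX_\star))=(r,r,r)$ in Lemma~\ref{lem:mulrank_HT}.
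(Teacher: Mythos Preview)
Your argument for part~(1) via Sylvester's inequality is correct; the paper instead writes $r=\rank(\mA)=\rank\!\big(\mA\mB^H\mB(\mB^H\mB)^{-1}\big)\le\rank(\mA\mB^H)\le r$, using that $\mB$ has full column rank so $\mB^H\mB$ is invertible. The two are equivalent one-liners.

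For part~(2) the paper gives no argument and simply cites \cite[Lemma~1]{Sidiropoulos2000}. Your direct sketch cannot be completed, because the assertion with ordinary rank (rather than Kruskal rank) is \emph{false}: take $r=2$, $\mA=[\ve_1,\zero]\in\C^{m\times 2}$ and $\mB=[\zero,\ve_1]\in\C^{n\times 2}$ with $m,n>2$; then $\rank(\mA)=\rank(\mB)=1$ but every column of $\mA\odot\mB$ vanishes, so $\rank(\mA\odot\mB)=0<1=\min(1+1-1,2)$. The cited Sidiropoulos--Bro lemma in fact reads $k_{\mA\odot\mB}\ge\min(k_{\mA}+k_{\mB}-1,r)$ in terms of Kruskal $k$-rank, and the paper's Fact appears to be a mis-transcription of it. This is precisely where your step ``arranged to overlap in exactly one index'' breaks: in the counterexample the only nonzero column of $\mA_2$ is the first and of $\mB_2$ the second, so no overlapping choice of basis indices exists. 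Your closing remark, however, is the useful one: every invocation of this Fact in the paper has one factor ($\mB$, $\mE_L$, or $\mE_R$) of full column rank $r$ with no zero columns, and your short argument for that case --- $\mB\diag(\bx)\mA^T=0$ with $\mB$ injective forces each row $x_i\ba_i^T=0$, hence $\bx=0$ since each $\ba_i\ne 0$ --- is complete and is all that Lemma~\ref{lem:mulrank_HT} requires.
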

    \begin{proof}
    \begin{align*}
      r = \rank(\mA)&=\rank\(\mA\mB^H\mB(\mB^H\mB)^{-1}\) \\&\leq \rank(\mA\mB^H)
     \leq \rank(\mA)=r, 
    \end{align*}
    and thus $\rank(\mA\mB^H)=r$. 
        (2) can be found in 
        \cite[Lemma~1]{Sidiropoulos2000}.  
    \end{proof}
   Now we provide the formal proof of Lemma~\ref{lem:mulrank_HT}. 
As $\H(\mX_\star){=}(\mE_L,\mE_R,\mB)\bcdot\bcD=\sum_{k=0}^{r-1} \va_{n_1}(p_k)\circ\va_{n_2}(p_k)\circ\vb_k$,  it is obvious 
$
    \cM_1(\H(\mX_\star))=\mE_L(\mB\odot\mE_R)^T,
    \cM_2(\H(\mX_\star))=\mE_R(\mB\odot\mE_L)^T, 
    \cM_3(\H(\mX_\star))=\mB(\mE_R\odot\mE_L)^T, 
$
where $\odot$ denotes Khatri-Rao product, $\mE_L\in\C^{n_1\times r}$, $\mE_R\in\C^{n_2\times r}$ and $\mB\in\C^{s\times r}$. 

We prove the rank of $  \cM_1(\H(\mX_\star))=\mE_L(\mB\odot\mE_R)^T$ first. 
 Construct a Vandermonde matrix as $\mE_r=\begin{bmatrix}
    \va_{r}(p_0),\cdots,\va_{r}(p_{r-1})
\end{bmatrix}\in\C^{r\times r}$ where $\va_{r}(p_k)=[1,p_k,\cdots,p_k^{r-1}]^T$ by choosing the first $r$ rows of $\mE_L$. It is well known that the rank of the Vandermonte matrix is $\rank(\mE_r)=r$ when $\{p_k\}_{k=0}^{r-1}$ are distinct. As $\mE_r$ is a submatrix of $\mE_L$, then  $r=\rank(\mE_r)\leq\rank(\mE_L)\leq\min\{n_1,r\}=r$ from $n_1=O(n)$, $n_2=O(n)$. Thus we conclude $\rank(\mE_L)=r$. 

Invoking (2) of Fact~\ref{lem:rank_inq}, we know $\rank(\mB\odot\mE_R)\geq r$ from the facts $\rank(\mB)=r$ and $\rank(\mE_R)=r$. As $\mB\odot\mE_R\in\C^{sn_2\times r}$, we conclude that $\rank(\mB\odot\mE_R)=r$. 
Then invoking (1) of Fact~\ref{lem:rank_inq}, we obtain that $\rank(\cM_1(\H(\mX_\star)))=r$. Similarly, we can prove $\rank(\cM_2(\H(\mX_\star)))=r$, $\rank(\cM_3(\H(\mX_\star)))=r$. 

}

    \vspace{-10pt}
\section{Fast computation} \label{apd:fast_comput}
\subsection{Proof of Lemma~\ref{lem:HankelT_algebra}.}
\label{pf:HankelT_algebra}
  1)Lemma~\ref{lem:HankelT_algebra}.a. Let $\tilde{\bcZ}=\bcS\times_1\mL\times_2\mR\in\C^{n_1\times n_2 \times r}$, then
\begin{align*}
    \mZ=\G^{*}(\tilde{\bcZ}\times_3\mV)=\mV\G^{*}(\tilde{\bcZ}),
\end{align*}
where we invoke \eqref{eq:DeHankel_mul3} proved later. 
For 
$k\in[s]$ and $a\in[n]$
\begin{align*}
   & [\G^{*}(\tilde{\bcZ})](k,a)=\frac{1}{\sqrt{w_{a}}}\sum_{i_1+i_2=a}[\bcS\times_1\mL\times_2\mR](i_1,i_2,k)
   \\&=\frac{1}{\sqrt{w_{a}}}\sum_{i_1+i_2=a}\sum_{j_2=0}^{r-1}\sum_{j_1=0}^{r-1}\bcS(j_1,j_2,k)\mL(i_1,j_1)\mR(i_2,j_2)
   \\&=\sum_{j_1,j_2}\bcS(j_1,j_2,k)\frac{1}{\sqrt{w_{a}}}\sum_{i_1+i_2=a}\mL(i_1,j_1)\mR(i_2,j_2)
   \\&=\sum_{j_1,j_2}\bcS(j_1,j_2,k)\bcW(j_1,j_2,a)=\cM_3(\bcS)\cM_3(\overline{\bcW})^H{(k,a)}.
\end{align*}
2) Proof of Lemma~\ref{lem:HankelT_algebra}.b. We first prove \eqref{eq:Hankellift_mul3}. For $i_1\in[n_1]$, $i_2\in[n_2]$, and $j_3\in[r]$:
    \begin{align*}
        &[\G(\mE)\times_{3}\mV^H](i_1,i_2,j_3)=\sum_{i_3=0}^{s-1}[\G(\mE)](i_1,i_2,i_3)\overline{\mV}(i_3,j_3)
        \\&=\frac{1}{\sqrt{w_{i_1+i_2}}}\sum_{i_3=0}^{s-1}\mE(i_3,i_1+i_2)\overline{\mV}(i_3,j_3)
        \\&=\frac{1}{\sqrt{w_{i_1+i_2}}}[\mV^H\mE](j_3,i_1+i_2)
        =[\G(\mV^H\mE)](i_1,i_2,j_3).
    \end{align*}
    
Then we prove \eqref{eq:DeHankel_mul3}. Let $\mZ=\G^{*}(\tilde{\bcZ}\times_{3}\mV)$, and we rewrite that for $k\in[s]$ and $a\in[n]$,
     \begin{align*}
        &\mZ(k,a)=\frac{1}{\sqrt{w_a}}\sum_{i_1+i_2=a}\sum_{i_3=0}^{r-1}\tilde{\bcZ}(i_1,i_2,i_3)\mV(k,i_3)
        \\&=\sum_{i_3=0}^{r-1}\(\frac{1}{\sqrt{w_a}}\sum_{i_1+i_2=a}\tilde{\bcZ}(i_1,i_2,i_3)\)\mV(k,i_3)=\mV\G^{*}(\tilde{\bcZ}){(k,a)}.
    \end{align*}


3) Proof of Lemma~\ref{lem:HankelT_algebra}.c. We rewrite that
 \begin{align*}
        &[\G(\tilde{\mE})\times_{1}\mL^H\times_{2}\mR^H](j_1,j_2,i_3)\\
        &=\sum_{i_1,i_2}[\G(\tilde{\mE})](i_1,i_2,i_3)\overline{\mL}(i_1,j_1)\overline{\mR}(i_2,j_2)
        \\&=\sum_{i_1,i_2}\frac{1}{\sqrt{w_{i_1+i_2}}}\tilde{\mE}(i_3,i_1+i_2)\overline{\mL}(i_1,j_1)\overline{\mR}(i_2,j_2)
         \\&= \sum_{a=0}^{n-1}\sum_{i_1+i_2=a}\frac{1}{\sqrt{w_{a}}}\tilde{\mE}(i_3,a)\overline{\mL}(i_1,j_1)\overline{\mR}(i_2,j_2)
        \\&= \sum_{a=0}^{n-1}\frac{1}{\sqrt{w_{a}}}\tilde{\mE}(i_3,a)\sum_{i_1+i_2=a}\overline{\mL}(i_1,j_1)\overline{\mR}(i_2,j_2)
        \\&= \sum_{a=0}^{n-1}\tilde{\mE}(i_3,a)\overline{\bcW}(j_1,j_2,a)=\overline{\bcW}\times_{3}\tilde{\mE}{(j_1,j_2,i_3)},
    \end{align*}
    where $i_1\in[n_1]$, $i_2\in[n_2]$ in the second line.

\begin{lemma}[Fast computation of the scaled terms] \label{lem:fast_scaleterm}
    We give the fast computation of $ \breve{\mL}^H\breve{\mL}, \breve{\mR}^H\breve{\mR},   \breve{\mV}^H\breve{\mV}$ as follows, where $\breve{\mL},\breve{\mR},\breve{\mV}$ are defined in \eqref{eq:breve_lrv}. 
    \begin{align*}
        \breve{\mL}^H\breve{\mL}&=\cM_{1}\((\mI_r,\mR^H\mR,\mV^H\mV)\bcdot\bcS\)\cM_1(\bcS)^H,
       \\ \breve{\mR}^H\breve{\mR}&=\cM_{2}\((\mR^H\mR,\mI_r,\mV^H\mV)\bcdot\bcS\)\cM_2(\bcS)^H,
       \\ \breve{\mV}^H\breve{\mV}&=\cM_{3}\((\mL^H\mL,\mR^H\mR,\mI_r)\bcdot\bcS\)\cM_3(\bcS)^H,
    \end{align*}
    and this costs $O((s+n)r^2+r^4)$ flops in total. 
\end{lemma}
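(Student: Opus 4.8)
\textbf{Proof plan for Lemma~\ref{lem:fast_scaleterm}.}

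The plan is to prove the three identities by direct computation using the definitions in \eqref{eq:breve_lrv} together with the mode-$i$ matricization formulas for a Tucker decomposition stated in the preliminaries (the ``Tucker decomposition'' paragraph, which gives $\cM_1(\bcZ)=\mL\cM_1(\bcS)(\mV\otimes\mR)^T$ and its analogues), and then to count flops for the resulting expression. I will treat the $\breve{\mL}$ case in detail; the $\breve{\mR}$ and $\breve{\mV}$ cases are completely symmetric under cyclic relabelling of the modes and will be stated without repeating the algebra.

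First I would start from $\breve{\mL}=(\overline{\mV}\otimes\overline{\mR})\cM_1(\bcS)^H$ and compute
\begin{align*}
\breve{\mL}^H\breve{\mL}
&=\cM_1(\bcS)(\overline{\mV}\otimes\overline{\mR})^H(\overline{\mV}\otimes\overline{\mR})\cM_1(\bcS)^H
=\cM_1(\bcS)\big((\mV^T\overline{\mV})\otimes(\mR^T\overline{\mR})\big)\cM_1(\bcS)^H,
\end{align*}
where I used the mixed-product property $(\mA\otimes\mB)^H(\mC\otimes\mD)=(\mA^H\mC)\otimes(\mB^H\mD)$ and $(\overline{\mV})^H=\mV^T$. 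Since $\mV^T\overline{\mV}=\overline{\mV^H\mV}$ and likewise for $\mR$, and since $\cM_1(\bcS)$ for a complex tensor carries transposes rather than conjugate transposes in the Tucker matricization (the remark after the Tucker paragraph), the key step is to recognize $\cM_1(\bcS)(\mA\otimes\mB)=\cM_1(\bcS\times_2\mB^T\times_3\mA^T)$; applying this with $\mA=\overline{\mV^H\mV}$, $\mB=\overline{\mR^H\mR}$ and noting $(\mR^H\mR)$ and $(\mV^H\mV)$ are Hermitian so $\overline{\mR^H\mR}=(\mR^H\mR)^T$, we get $\cM_1(\bcS)\big(\overline{\mV^H\mV}\otimes\overline{\mR^H\mR}\big)=\cM_1\big((\mI_r,\mR^H\mR,\mV^H\mV)\bcdot\bcS\big)$, which yields the claimed formula $\breve{\mL}^H\breve{\mL}=\cM_1\big((\mI_r,\mR^H\mR,\mV^H\mV)\bcdot\bcS\big)\cM_1(\bcS)^H$. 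The analogous manipulations with the mode-$2$ and mode-$3$ matricization identities give the formulas for $\breve{\mR}^H\breve{\mR}$ and $\breve{\mV}^H\breve{\mV}$.

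Finally I would do the flop count: forming $\mL^H\mL$, $\mR^H\mR$, $\mV^H\mV$ costs $O((n_1+n_2+s)r^2)=O((s+n)r^2)$ since $n_1,n_2=O(n)$; each mode product $(\mQ_1,\mQ_2,\mQ_3)\bcdot\bcS$ with $\mQ_k\in\C^{r\times r}$ costs $O(r^4)$; and multiplying the resulting $r^2\times r$-type matricization by $\cM_i(\bcS)^H\in\C^{r^2\times r}$ costs $O(r^4)$. Summing over the three terms gives $O((s+n)r^2+r^4)$ flops, as claimed. The main obstacle — really the only subtlety — is bookkeeping the transpose-versus-conjugate-transpose distinction in the complex Tucker matricization formula and correctly pushing the Kronecker factors through $\cM_i(\bcS)$ into mode products; once the identity $(\overline{\mA}\otimes\overline{\mB})$ absorbed into $\cM_1(\bcS)$ is handled using Hermiticity of the Gram matrices $\mL^H\mL$ etc., the rest is routine.
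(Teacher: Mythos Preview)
Your proposal is correct and follows essentially the same approach as the paper: expand $\breve{\mL}^H\breve{\mL}$ from the definition, use the mixed-product property of Kronecker products to collapse $(\overline{\mV}\otimes\overline{\mR})^H(\overline{\mV}\otimes\overline{\mR})$, and then recognize the result as the mode-$1$ matricization of $(\mI_r,\mR^H\mR,\mV^H\mV)\bcdot\bcS$ via the Tucker matricization formula. You are a bit more explicit than the paper about the transpose-versus-conjugate bookkeeping (the paper simply writes $(\mV^H\mV\otimes\mR^H\mR)^T$ and moves on), but the argument and the flop count are the same.
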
 
\begin{proof}
    Taking $\breve{\mL}^H\breve{\mL}$ for example:
    \begin{align*}
        \breve{\mL}^H\breve{\mL}&=\cM_{1}(\bcS ) (\overline{\mV} \otimes\overline{\mR})^H(\overline{\mV} \otimes\overline{\mR} )\cM_{1}(\bcS )^{H}
        \\&=\cM_{1}(\bcS )\(\mV^H\mV\otimes\mR^H\mR\)^T\cM_{1}(\bcS )^{H}
        \\&=\cM_{1}\((\mI_r,\mR^H\mR,\mV^H\mV)\bcdot\bcS\)\cM_1(\bcS)^H
    \end{align*}
    Note that we should compute $\breve{\mL}^H\breve{\mL}$ from the last equality, which is realized by mode-$i$ tensor product sequentially. 
\end{proof}

\vspace{-15pt}
\section{Lemmas for Initialization}
\begin{lemma}[Initialization] \label{lem:init}
  Suppose $\bcZ_{\star}\in\C^{n_1\times n_2\times s}$ is  incoherent in Definition \ref{def:incoh}, the projection radius in \eqref{eq:incoh_proj} is $B=C_{B}\sqrt{\mu_0 c_{\mathrm{s}}  r}\sigma_{\max}(\bcZ_\star)$ for $C_B\geq (1+\varepsilon_0)^3$, and $\varepsilon_0>0$ is a small constant.  For $\mF_0=(\mL^0,\mR^0,\mV^0,\bcS^0)$ in Algorithm \ref{alg:init}, 
  with probability at least $1-O\((sn)^{-2}\)$
\begin{align*}
    \dist{\mF^0}{\mF_{\star}} \le \varepsilon_0\sigma_{\min}(\bcZ_\star),
\end{align*}
holds provided $\hat{m}\geq O(\varepsilon_0^{-2}\mu_0 c_{\mathrm{s}} sr^3\kappa^2\log(sn))$. Besides, 
  \begin{align*}
      \max\{\|\mL^0(\breve{\mL}^0)^H\|_{2,\infty},\|\mR^0(\breve{\mR}^0)^H\|_{2,\infty}\}\leq B/\sqrt{n}.  \numberthis \label{eq:incoh_proj_F0}
  \end{align*}
\end{lemma}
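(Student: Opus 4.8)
# Proof Proposal for Lemma~\ref{lem:init}

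The plan is to analyze the sequential spectral initialization in Algorithm~\ref{alg:init} in three stages, matching the three quantities produced by the algorithm: the pair $({\mL'}^0,{\mR'}^0)$ obtained from $\cM_1(\bcZ^0),\cM_2(\bcZ^0)$; the factor $\mV^0$ obtained from the \emph{sequential} intermediary $\cM_3(\bcZ^0\times_1({\mL'}^0)^H)$; and finally the core $\bcS^0$ together with the projection $\P_B$. The backbone is the concentration estimate $\|\bcZ^0 - \bcZ_\star\| \lesssim \varepsilon \sigma_{\min}(\bcZ_\star)$ in an appropriate operator-norm sense, which I would extract from the Hankel-tensor concentration results Lemma~\ref{lem:init_conc_m1} (and its matricization analogues), since $\bcZ^0 = \hat p^{-1}\G\P_{\Omega_0}(\mY_\star)$ is an unbiased, reweighted sample of $\bcZ_\star = \H(\mX_\star)$. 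The key subtlety is that $\|\cM_3(\bcH_{k,j})\| = 1$ in Definition~\ref{def:HankelT}, so one cannot control $\|\cM_3(\bcZ^0-\bcZ_\star)\|$ by the same sample complexity; this is exactly why the algorithm does \emph{not} initialize $\mV^0$ from $\cM_3(\bcZ^0)$ directly, and why the sequential quantity $\cM_3(\bcZ^0\times_1({\mL'}^0)^H)$ must be used instead.

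\textbf{Step 1 (control of ${\mL'}^0,{\mR'}^0$).} Using Lemma~\ref{lem:init_conc_m1} applied to the mode-1 and mode-2 matricizations, with probability $1-O((sn)^{-2})$ one has $\|\cM_i(\bcZ^0 - \bcZ_\star)\| \le \varepsilon_1 \sigma_{\min}(\bcZ_\star)$ for $i=1,2$, provided $\hat m \gtrsim \varepsilon_1^{-2}\mu_0 c_{\mathrm s} s r^2 \kappa^2 \log(sn)$ — here the incoherence bound \eqref{eq:prel_incoh12} from Lemma~\ref{lem:incoh_useful} feeds the variance proxy. Then by Wedin's $\sin\Theta$ theorem, the top-$r$ left singular subspaces ${\mL'}^0,{\mR'}^0$ are $O(\varepsilon_1\kappa)$-close (in the appropriate subspace distance) to $\mL_\star,\mR_\star$; more precisely $\|({\mL'}^0)({\mL'}^0)^H - \mL_\star\mL_\star^H\| \lesssim \varepsilon_1\kappa$ and likewise for $\mR$. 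I would also record the perturbed incoherence of ${\mL'}^0,{\mR'}^0$, i.e. $\|{\mL'}^0\|_{2,\infty} \lesssim \sqrt{\mu_0 c_{\mathrm s} r/n}$, which follows by combining the subspace closeness with the incoherence of $\mL_\star$ via a standard leave-one-out or triangle-inequality argument (or by directly invoking an entrywise concentration from Lemma~\ref{lem:init_conc_m1}).

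\textbf{Step 2 (control of $\mV^0$ via the sequential step).} Write $\bcZ^0\times_1({\mL'}^0)^H = \bcZ_\star\times_1({\mL'}^0)^H + (\bcZ^0-\bcZ_\star)\times_1({\mL'}^0)^H$. For the first term, since $({\mL'}^0)^H$ is nearly aligned with $\mL_\star^H$, its mode-3 matricization is close to $\cM_3(\bcZ_\star\times_1\mL_\star^H)$, which shares the column space of $\mV_\star$ and has smallest nonzero singular value $\gtrsim \sigma_{\min}(\bcZ_\star)$. For the second term, the point is that multiplying by the \emph{fixed semi-unitary} matrix $({\mL'}^0)^H \in \C^{r\times n_1}$ collapses the problematic first mode; concretely, for any fixed $n_1\times r$ orthonormal $\mathbf{P}$, $\|\cM_3((\bcZ^0-\bcZ_\star)\times_1\mathbf{P}^H)\| \le \|\mathbf{P}^H\cM_1(\bcZ^0-\bcZ_\star)\|$-type bounds do not help, so instead I would apply a dedicated concentration inequality (again from the Lemma~\ref{lem:init_conc_m1} family, or a direct matrix-Bernstein computation) to $\cM_3(\bcH_{k,j}\times_1\mathbf{P}^H)$, whose spectral norm is $\|\mathbf{P}^H\mH_k\|\cdot 1 \le \|\mathbf{P}\|_{2,\infty}$ — and $\|\mathbf{P}\|_{2,\infty}$ is small because $\mathbf{P}={\mL'}^0$ is incoherent by Step 1. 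This is the mechanism that replaces the unusable bound $\|\cM_3(\bcH_{k,j})\|=1$ by the usable bound $\|\cM_3(\bcH_{k,j}\times_1({\mL'}^0)^H)\| \lesssim \sqrt{\mu_0 c_{\mathrm s}r/n}$, and it is precisely where the extra $r$ in the sample complexity $\hat m \gtrsim \varepsilon_0^{-2}\mu_0 c_{\mathrm s} s r^3\kappa^2\log(sn)$ (versus $sr^2$) comes from. A union bound over the $r$-dimensional net of possible subspaces ${\mL'}^0$, or conditioning on the event of Step 1 and using the fact that ${\mL'}^0$ is a deterministic function of $\Omega_0$ — handled by the sample-splitting / careful net argument — closes this step; then Wedin again gives $\|\mathbf V^0(\mathbf V^0)^H - \mathbf V_\star\mathbf V_\star^H\| \lesssim \varepsilon_0$.

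\textbf{Step 3 (core tensor, distance metric, and projection).} With the three subspace factors controlled, set $\bcS^0 = (({\mL'}^0)^H,({\mR'}^0)^H,(\mV^0)^H)\bcdot\bcZ^0$ and estimate $\|(\mL^0,\mR^0,\mV^0)\bcdot\bcS^0 - \bcZ_\star\|_F$ by splitting into (i) $\|(\mL^0,\mR^0,\mV^0)\bcdot(({\mL'}^0)^H,({\mR'}^0)^H,(\mV^0)^H)\bcdot(\bcZ^0-\bcZ_\star)\|_F \le \sqrt{r^3}\,\varepsilon_0\sigma_{\min}(\bcZ_\star)$ using \eqref{eq:tensor_properties_3}, and (ii) the projection error $\|\P_{T^0}(\bcZ_\star) - \bcZ_\star\|_F$ where $\P_{T^0}$ projects onto the subspaces $({\mL'}^0,{\mR'}^0,\mV^0)$; the latter is $O(\varepsilon_0\kappa\sqrt r)\sigma_{\min}(\bcZ_\star)$ by the subspace-distance bounds from Steps 1--2 and the quasi-optimality of HOSVD-type projections. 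Choosing the implicit constants and rescaling $\varepsilon_0$ then yields $\dist{\mF^0}{\mF_\star} \le \varepsilon_0\sigma_{\min}(\bcZ_\star)$ after bounding the distance metric by this Frobenius quantity (using the alignment lemma \cite[Lemma~12]{Tong2022} and Lemma~\ref{lemma:perturb_bounds}-type equivalences). Finally, the incoherence bound \eqref{eq:incoh_proj_F0} after projection: by definition of $\P_B$ in \eqref{eq:incoh_proj}, $\|\mL^0(\breve{\mL}^0)^H(i_1,:)\|_2 = \min\{\|{\mL'}^0(i_1,:)\breve{\mL}'^0\|_2, B/\sqrt n\} \le B/\sqrt n$ rowwise (and likewise for $\mR^0$), which is immediate from the definition of the scaled projection — so \eqref{eq:incoh_proj_F0} requires essentially no work once $\P_B$ has been applied.

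\textbf{Main obstacle.} The crux is Step 2: handling the statistical dependence between ${\mL'}^0$ (a function of $\Omega_0$) and the residual $(\bcZ^0-\bcZ_\star)\times_1({\mL'}^0)^H$, while simultaneously exploiting that ${\mL'}^0$ is incoherent to beat the $\|\cM_3(\bcH_{k,j})\|=1$ obstruction. I expect this to be resolved either by a covering-number argument over incoherent $r$-frames (paying a $\log$ factor absorbed into $\log(sn)$) combined with a uniform concentration bound valid over that net, or by a two-stage conditioning where Step 1's high-probability event fixes the incoherence of ${\mL'}^0$ deterministically and Lemma~\ref{lem:init_conc_m1} is stated uniformly enough to apply; this is also the step that forces the $r^3$ rather than $r^2$ scaling in the sample complexity.
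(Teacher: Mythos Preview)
Your Step~2 is where the proposal diverges from the paper and runs into trouble. You correctly identify the obstruction $\|\cM_3(\bcH_{k,j})\|=1$ and propose to overcome it by a \emph{new} concentration inequality for $\cM_3\big((\bcZ^0-\bcZ_\star)\times_1({\mL'}^0)^H\big)$, exploiting the (yet-to-be-proved) incoherence of ${\mL'}^0$. This creates two compounding gaps. First, establishing $\|{\mL'}^0\|_{2,\infty}\lesssim\sqrt{\mu_0 c_{\mathrm s}r/n}$ is an \emph{entrywise} singular-vector perturbation bound; Wedin only gives spectral subspace distance, and the ``standard leave-one-out or triangle-inequality argument'' you allude to is in fact a substantial additional analysis that the paper never performs. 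Second, even granting that, ${\mL'}^0$ is a function of the \emph{same} sample set $\Omega_0$ used in $\bcZ^0$, so you cannot apply matrix Bernstein conditional on ${\mL'}^0$; your proposed fix via a covering argument over incoherent $r$-frames is plausible in principle but would need careful execution and is not what the paper does.

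The paper's argument sidesteps both issues by a purely \emph{deterministic} reduction back to $\|\cM_1(\bcZ^0-\bcZ_\star)\|$, with no new concentration and no incoherence of ${\mL'}^0$. Via Lemma~\ref{lem:init_V0}, $\mV^0$ equals the top-$r$ left singular vectors of $\cM_3(\hat\bcZ^0)$ where $\hat\bcZ^0\coloneqq\bcZ^0\times_1{\mL'}^0({\mL'}^0)^H$. The trick is then
\[
\|\bP_{V_\perp}\cM_3(\bcZ_\star)\|_F \;\le\; 2\sqrt{r}\,\|\cM_3(\hat\bcZ^0-\bcZ_\star)\| \;\le\; 2\sqrt{r}\,\|\cM_3(\hat\bcZ^0-\bcZ_\star)\|_F \;=\; 2\sqrt{r}\,\|\cM_1(\hat\bcZ^0-\bcZ_\star)\|_F,
\]
and since $\cM_1(\hat\bcZ^0)={\mL'}^0({\mL'}^0)^H\cM_1(\bcZ^0)$ is exactly the best rank-$r$ approximation of $\cM_1(\bcZ^0)$, both $\cM_1(\hat\bcZ^0)$ and $\cM_1(\bcZ_\star)$ have rank $r$, giving $\|\cM_1(\hat\bcZ^0-\bcZ_\star)\|_F\le\sqrt{2r}\,\|\cM_1(\hat\bcZ^0-\bcZ_\star)\|\le 2\sqrt{2r}\,\|\cM_1(\bcZ^0-\bcZ_\star)\|$. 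So everything routes back to Lemma~\ref{lem:init_conc_m1}, and the extra $\sqrt{r}$ from the Frobenius-to-spectral conversion is exactly where the $r^3$ arises. Your ``main obstacle'' simply does not appear in the paper's proof. (A smaller point: in your Step~3, bounding $\|(\bP_L,\bP_R,\bP_V)\bcdot(\bcZ^0-\bcZ_\star)\|_F$ via \eqref{eq:tensor_properties_3} would only give $\|\bcZ^0-\bcZ_\star\|_F$, which is not small; the paper instead uses $\|\bP_L\cM_1(\bcZ^0-\bcZ_\star)\|_F\le\sqrt{r}\,\|\cM_1(\bcZ^0-\bcZ_\star)\|$.)
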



\begin{proof}
{See Appendix~\ref{pf_leminit} of the Supplementary Material.}
\end{proof}
\begin{lemma} \label{lem:init_V0}
Let $\mathrm{SVD}_r(\cdot)$ return the top-$r$ left singular vectors of a matrix. For a tensor $\bcZ\in\C^{n_1\times n_2 \times s}$ and a orthogonal matrix $\mL\in\C^{n_1\times r}$ where $\mL^H\mL=\mI_r$, we have 
\begin{align*}
  \mathrm{SVD}_{r}(\cM_3(\bcZ\times_1 \mL^H))=  \mathrm{SVD}_{r}(\cM_3(\bcZ\times_1 \mL\mL^H)). 
\end{align*}
\end{lemma}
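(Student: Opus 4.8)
\textbf{Proof proposal for Lemma~\ref{lem:init_V0}.}
The plan is to show that the two matrices whose $\mathrm{SVD}_r$ we compare have the same left Gram matrix; since the top-$r$ left singular vectors of any matrix $\mA$ are precisely the top-$r$ eigenvectors of the Hermitian PSD matrix $\mA\mA^H$, this yields the claim. First I would use the composition rule for mode-$1$ products, $(\bcZ\times_1\mL^H)\times_1\mL=\bcZ\times_1(\mL\mL^H)$, so that, writing $\bcW:=\bcZ\times_1\mL^H\in\C^{r\times n_2\times s}$, it suffices to relate $\cM_3(\bcW\times_1\mL)$ to $\cM_3(\bcW)$.

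Next I would apply the mode-$3$ matricization identity. Viewing the general tensor $\bcW$ through its trivial Tucker form $(\mI,\mI,\mI)\bcdot\bcW$, the identity $\cM_3(\bcZ)=\mV\cM_3(\bcS)(\mR\otimes\mL)^{T}$ from the preliminaries gives
\begin{align*}
\cM_3(\bcW\times_1\mL)=\cM_3(\bcW)\,(\mI_{n_2}\otimes\mL)^{T}=\cM_3(\bcW)\,(\mI_{n_2}\otimes\mL^{T}).
\end{align*}
Here it is essential to use the transpose, not the conjugate transpose, exactly as flagged in the notation section for complex tensors. Forming the left Gram matrix,
\begin{align*}
\cM_3(\bcW\times_1\mL)\,\cM_3(\bcW\times_1\mL)^{H}
=\cM_3(\bcW)\,(\mI_{n_2}\otimes\mL^{T}\overline{\mL})\,\cM_3(\bcW)^{H},
\end{align*}
and since $\mL^{T}\overline{\mL}=\overline{\mL^{H}\mL}=\overline{\mI_r}=\mI_r$, this equals $\cM_3(\bcW)\cM_3(\bcW)^{H}$.

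Combining the two steps, $\cM_3(\bcZ\times_1\mL\mL^{H})\cM_3(\bcZ\times_1\mL\mL^{H})^{H}=\cM_3(\bcZ\times_1\mL^{H})\cM_3(\bcZ\times_1\mL^{H})^{H}$. Because $\mathrm{SVD}_r$ of a matrix depends only on this Gram matrix (and the usual tacit choice when eigenvalues coincide can be made identically for both, e.g.\ by driving the routine off $\mA\mA^H$), the two $\mathrm{SVD}_r$ outputs agree, proving the lemma.

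The only subtlety — more a matter of care than of difficulty — is the complex bookkeeping: one must consistently use plain transposes in the matricization identities and conjugate transposes when forming Gram matrices, and then recognize the cancellation $\mL^{T}\overline{\mL}=\overline{\mL^{H}\mL}=\mI_r$ that makes the mode-$1$ reprojection invisible to $\cM_3$. Everything else is routine.
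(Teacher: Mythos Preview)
Your proof is correct and follows essentially the same idea as the paper's: both exploit that $\mL$ having orthonormal columns makes the mode-$1$ action by $\mL$ an isometry, so the mode-$3$ left singular structure is unchanged. The paper packages this via the variational characterization $\hat{\mV}=\arg\max_{\mV^H\mV=\mI_r}\|\mV\mV^H\cM_3(\cdot)\|_F^2$ and shows the two objectives coincide by the chain $\|\mL^H\cM_1(\cdot)\|_F=\|\mL\mL^H\cM_1(\cdot)\|_F$ in tensor form, whereas you equivalently show the Gram matrices $\cM_3(\cdot)\cM_3(\cdot)^H$ agree; since $\|\mV\mV^H\mA\|_F^2=\tr(\mV^H\mA\mA^H\mV)$, the two routes are the same fact in different clothing.
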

\begin{proof}
    {See Appendix~\ref{pf_init_V0} of the Supplementary Material.}
\end{proof}
The following concentration inequality for the Hankel tensor sampling is a key hammer for the sequential spectral initialization in our work.
\begin{lemma}\label{lem:init_conc_m1}
For $\bcZ_\star\in\C^{n_1\times n_2\times s}$ which is incoherent in Definition \ref{def:incoh}, $\tilde{\Omega}$ 
is any index set with $\tilde{m}$ samples 
and let $\tilde{p}=\frac{\tilde{m}}{sn}$, then 
  \begin{align*}
      \|\cM_i\((\tilde{p}^{-1}\G\P_{\tilde{\Omega}}\G^*-\I)(\bcZ_\star)\)\|{\lesssim} \sqrt{\frac{\mu_0 c_{\mathrm{s}} sr{\log(sn)}}{\tilde{m}}}\sigma_{\max}(\bcZ_{\star})
  \end{align*}
  holds with probability at least $1-(sn)^{-2}$ when $\tilde{m}\geq \mu_0 c_{\mathrm{s}} sr \log(sn)$, where $i=1,2$.
\end{lemma}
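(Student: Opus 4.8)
## Proof plan for Lemma~\ref{lem:init_conc_m1}

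\textbf{Overall strategy.} The plan is to express the operator $\tilde{p}^{-1}\G\P_{\tilde{\Omega}}\G^* - \I$ acting on $\bcZ_\star$ as a sum of independent, mean-zero random matrices (after matricization), and then apply a matrix Bernstein inequality. Since $\tilde\Omega$ is obtained by sampling with replacement, write $\tilde\Omega = \{(k_\ell,a_\ell)\}_{\ell=1}^{\tilde m}$ with each $(k_\ell,a_\ell)$ drawn uniformly from $[s]\times[n]$. The key observation is that $\G^*\bcZ_\star$ lives in $\C^{s\times n}$, and for a single sampled index $(k,a)$, the rank-one "measurement'' $\ve_{\mathrm{s},k}\ve_{\mathrm{n},a}^T$ of $\G^*\bcZ_\star$ lifts, under $\G$, to exactly $\langle \bcH_{a,k}, \bcZ_\star\rangle \, \bcH_{a,k}$, using that $\G\G^*$ is the projector onto Hankel tensors and $\bcH_{a,k}$ is the Hankel tensor basis from Definition~\ref{def:HankelT}. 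Hence
\begin{align*}
\tilde p^{-1}\G\P_{\tilde\Omega}\G^*(\bcZ_\star) = \frac{sn}{\tilde m}\sum_{\ell=1}^{\tilde m} \langle \bcH_{a_\ell,k_\ell},\bcZ_\star\rangle\,\bcH_{a_\ell,k_\ell},
\end{align*}
and $\E$ of a single term is $\frac{1}{n s}\sum_{a,k}\langle\bcH_{a,k},\bcZ_\star\rangle\bcH_{a,k}\cdot sn = \bcZ_\star$ (the Hankel tensor basis is orthonormal and spans the Hankel subspace containing $\bcZ_\star$). So after applying $\cM_i$, we have a sum of $\tilde m$ i.i.d.\ mean-zero matrices $\mX_\ell := \frac{sn}{\tilde m}\langle\bcH_{a_\ell,k_\ell},\bcZ_\star\rangle\cM_i(\bcH_{a_\ell,k_\ell}) - \frac{1}{\tilde m}\cM_i(\bcZ_\star)$, and matrix Bernstein applies.

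\textbf{Key steps in order.} First I would establish the identity above carefully, checking that $\G^*\bcZ_\star \in \C^{s\times n}$ picks out, entrywise at $(k,a)$, the value $\sqrt{w_a}^{-1}\sum_{i_1+i_2=a}\bcZ_\star(i_1,i_2,k)$ times the right normalization, so that $\langle \bcH_{a,k},\bcZ_\star\rangle$ equals $\sqrt{w_a}\cdot(\G^*\bcZ_\star)(k,a)$; combined with the $\sqrt{w_a}^{-1}$ in $\bcH_{a,k}$ this reproduces the Hankel lift correctly. Second, bound the per-term quantities for Bernstein: the operator-norm bound $\|\mX_\ell\| \lesssim \frac{sn}{\tilde m}\max_{a,k}|\langle\bcH_{a,k},\bcZ_\star\rangle|\cdot\|\cM_i(\bcH_{a,k})\|$. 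Using $\|\cM_i(\bcH_{a,k})\| = 1/\sqrt{w_a}$ for $i=1,2$ from \eqref{eq:HT_basis_bd}, and $|\langle\bcH_{a,k},\bcZ_\star\rangle| = \sqrt{w_a}|(\G^*\bcZ_\star)(k,a)| \le \sqrt{w_a}\,\|\bcZ_\star\|_\infty \cdot (\text{number of terms})/\sqrt{w_a}$—more carefully, $|\langle \bcH_{a,k},\bcZ_\star\rangle| \le \sqrt{w_a}\|\bcZ_\star\|_\infty$, which via \eqref{eq:prel_incohT} is $\le \sqrt{n}\cdot\frac{\mu_0 c_{\mathrm s} r}{n}\sigma_{\max}(\bcZ_\star)$. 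So the product $|\langle\bcH_{a,k},\bcZ_\star\rangle|\cdot\|\cM_i(\bcH_{a,k})\| \le \frac{\mu_0 c_{\mathrm s} r}{\sqrt n}\sigma_{\max}(\bcZ_\star)$, giving $L := \max_\ell\|\mX_\ell\|\lesssim \frac{sn}{\tilde m}\cdot\frac{\mu_0 c_{\mathrm s} r}{\sqrt n}\sigma_{\max}(\bcZ_\star) = \frac{s\sqrt n \mu_0 c_{\mathrm s} r}{\tilde m}\sigma_{\max}$. Third, bound the variance proxy $\sigma^2 := \max\{\|\sum_\ell \E \mX_\ell\mX_\ell^H\|, \|\sum_\ell\E\mX_\ell^H\mX_\ell\|\}$; a typical term is $\frac{sn}{\tilde m^2}\sum_{a,k}\frac{1}{sn}|\langle\bcH_{a,k},\bcZ_\star\rangle|^2 \cM_i(\bcH_{a,k})\cM_i(\bcH_{a,k})^H$, and here I would use $|\langle\bcH_{a,k},\bcZ_\star\rangle|\cdot\|\cM_i(\bcH_{a,k})\| \le \frac{\mu_0 c_{\mathrm s} r}{\sqrt n}\sigma_{\max}$ together with $\sum_{a,k}|\langle\bcH_{a,k},\bcZ_\star\rangle|^2 = \|\bcZ_\star\|_F^2 \le r\sigma_{\max}^2$ (since $\mathrm{mulrank}=(r,r,r)$) to get $\sigma^2 \lesssim \frac{1}{\tilde m}\cdot\frac{\mu_0 c_{\mathrm s} r}{\sqrt n}\sigma_{\max}\cdot \frac{1}{\sqrt n}\cdot r\sigma_{\max}$—one should be careful to factor the operator norm of the outer-product sum through the incoherence bound, yielding $\sigma^2\lesssim \frac{\mu_0 c_{\mathrm s} s r^2}{\tilde m n}\sigma_{\max}^2$ after reinserting the $sn/\tilde m$ scaling bookkeeping. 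Fourth, apply matrix Bernstein: with probability $\ge 1-(sn)^{-2}$,
\begin{align*}
\|\cM_i((\tilde p^{-1}\G\P_{\tilde\Omega}\G^*-\I)\bcZ_\star)\| \lesssim \sqrt{\sigma^2\log(sn)} + L\log(sn),
\end{align*}
and check that under the stated condition $\tilde m \ge \mu_0 c_{\mathrm s} sr\log(sn)$ the $\sqrt{\sigma^2\log(sn)}$ term dominates and equals (up to constants) $\sqrt{\frac{\mu_0 c_{\mathrm s} sr\log(sn)}{\tilde m}}\sigma_{\max}(\bcZ_\star)$.

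\textbf{Main obstacle.} The delicate part is the variance computation and the bookkeeping of the $sn/\tilde m$ factors together with the $w_a$ weights—specifically, getting the operator norm $\|\sum_{a,k}|\langle\bcH_{a,k},\bcZ_\star\rangle|^2\cM_i(\bcH_{a,k})\cM_i(\bcH_{a,k})^H\|$ to come out as $\frac{\mu_0 c_{\mathrm s} r}{n}\cdot r\sigma_{\max}^2$ rather than a looser bound, since a naive triangle inequality over $w_a$ terms loses a factor. The clean way is to note $\cM_i(\bcH_{a,k})\cM_i(\bcH_{a,k})^H$ is supported on a small block and $\|\cM_i(\bcH_{a,k})\|^2 = 1/w_a$, then split: the outer sum over $a$ with weights $w_a$ telescopes against the $1/w_a$ while the incoherence bound $|\langle\bcH_{a,k},\bcZ_\star\rangle|^2/w_a \le (\mu_0 c_{\mathrm s} r/n)^2\sigma_{\max}^2$ controls each term, and $\sum_a w_a = n$. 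I would also need the elementary fact $\|\G^*\|\le 1$ (stated in the excerpt) and that $\G\G^*\bcZ_\star = \bcZ_\star$ since $\bcZ_\star$ is Hankel. The rest is a routine matrix Bernstein invocation, which the paper evidently uses repeatedly (cf.\ Lemmas~\ref{lem:PTconc}, \ref{lem:init_conc_m1}).
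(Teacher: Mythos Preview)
Your overall strategy---express $(\tilde p^{-1}\G\P_{\tilde\Omega}\G^*-\I)(\bcZ_\star)$ through the Hankel tensor basis as a sum of i.i.d.\ mean-zero random matrices and apply matrix Bernstein---is exactly the paper's approach. However, two concrete steps in your execution do not close.

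First, a bookkeeping slip in the per-term bound: you correctly note $|\langle\bcH_{a,k},\bcZ_\star\rangle|\le \sqrt{w_a}\,\|\bcZ_\star\|_\infty$ and $\|\cM_i(\bcH_{a,k})\|=1/\sqrt{w_a}$, but then bound their product by $\frac{\mu_0 c_{\mathrm s} r}{\sqrt n}\sigma_{\max}$ instead of letting the $\sqrt{w_a}$'s cancel exactly to get $\|\bcZ_\star\|_\infty\le \frac{\mu_0 c_{\mathrm s} r}{n}\sigma_{\max}$. The spurious $\sqrt n$ in your $L$ would force $\tilde m\gtrsim \mu_0 c_{\mathrm s} sn r\log(sn)$ for the first-order Bernstein term to be dominated, which defeats the stated sample condition.

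Second, and more substantively, your variance argument has a genuine gap. The claim $\sum_a w_a=n$ is false (the sum is $n_1 n_2\asymp n^2$), and any triangle-inequality route that pulls out $\max_{a,k}|\langle\bcH_{a,k},\bcZ_\star\rangle|\cdot\|\cM_i(\bcH_{a,k})\|$ and then sums the remaining pieces loses the crucial $1/n$ factor. The paper sidesteps this with a structural identity: because $\cM_i(\bcH_{k,j})\cM_i(\bcH_{k,j})^H$ is \emph{diagonal} for $i=1,2$, one verifies entrywise that
\[
\sum_{k,j}\big|\langle\cM_i(\bcH_{k,j}),\cM_i(\bcZ_\star)\rangle\big|^2\,\cM_i(\bcH_{k,j})\cM_i(\bcH_{k,j})^H \;=\; \diag\!\big(\cM_i(\bcZ_\star)\cM_i(\bcZ_\star)^H\big),
\]
so the variance proxy is $\frac{sn}{\tilde m}\max\big\{\|\cM_i(\bcZ_\star)\|_{2,\infty}^2,\|\cM_i(\bcZ_\star)^H\|_{2,\infty}^2\big\}\le \frac{\mu_0 c_{\mathrm s} sr}{\tilde m}\sigma_{\max}^2$ via \eqref{eq:prel_incoh12}. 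With this identity and the corrected $L\le \frac{2\mu_0 c_{\mathrm s} sr}{\tilde m}\sigma_{\max}$, Bernstein immediately yields the claimed bound under $\tilde m\ge \mu_0 c_{\mathrm s} sr\log(sn)$.
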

\begin{proof}
  {See Appendix~\ref{pf_lem_initconc_m1} of the Supplementary Material.}
\end{proof}

\begin{lemma}[Properties of projection] \label{lem:proj}
Suppose $\bcZ_\star$ is { $\mu_0$-incoherent} in Definition \ref{def:incoh}, $\dist{\mF'}{\mF_\star}\leq \varepsilon_0\sigma_{\min}(\bcZ_\star)$ where $\varepsilon_0<1$ and $\mF'=(\mL',\mR',{\mV},\bcS)$. The projection radius in \eqref{eq:incoh_proj} is set as $B=C_{B}\sqrt{\mu_0 c_{\mathrm{s}} r}\sigma_{\max}(\bcZ_\star)$ for $C_{B}\geq(1+\varepsilon_0)^3$, and $({\mL},{\mR})=\PB{\mL',\mR'}$. Then for $\mF=({\mL},{\mR},{\mV},\bcS)$, we have 
\begin{align*}
    \dist{\mF}{\mF_\star}\leq\dist{\mF'}{\mF_\star},
\end{align*}
and the incoherence condition:
  \begin{align*}
      \max\{\|\mL\breve{\mL}^H\|_{2,\infty},\|\mR\breve{\mR}^H\|_{2,\infty}\}\leq B/\sqrt{n}, \numberthis \label{eq:incoh_proj_full}
  \end{align*}
\end{lemma}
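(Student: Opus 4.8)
The plan is to establish the two conclusions of Lemma~\ref{lem:proj} separately. The incoherence bound \eqref{eq:incoh_proj_full} is essentially a bookkeeping consequence of the definition of the scaled projection $\P_B$ in \eqref{eq:incoh_proj}, while the non‑expansiveness $\dist{\mF}{\mF_\star}\le\dist{\mF'}{\mF_\star}$ is the substantive part and I would argue it row by row. For the incoherence bound, write $\mL=D_L\mL'$ and $\mR=D_R\mR'$, where $D_L=\diag(c^L_0,\dots,c^L_{n_1-1})$ and $D_R=\diag(c^R_0,\dots,c^R_{n_2-1})$ are real diagonal matrices with entries in $(0,1]$ coming from the "$\min$" in \eqref{eq:incoh_proj}; by that same "$\min$", every output row satisfies $\|\mL(i_1,:)(\breve{\mL}')^H\|_2\le B/\sqrt n$ and $\|\mR(i_2,:)(\breve{\mR}')^H\|_2\le B/\sqrt n$ (with equality when the scaling is active). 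Using $\mL\breve{\mL}^H=\cM_1((\mL,\mR,\mV)\bcdot\bcS)=\mL\cM_1(\bcS)(\mV\otimes\mR)^T$ together with $(\mV\otimes\mR)^T=(\mV\otimes\mR')^T(\mI_s\otimes D_R)$ gives $\mL\breve{\mL}^H=\mL(\breve{\mL}')^H(\mI_s\otimes D_R)$, and since $\|\mI_s\otimes D_R\|=\|D_R\|\le1$ each row of $\mL\breve{\mL}^H$ has $\ell_2$‑norm no larger than the corresponding row of $\mL(\breve{\mL}')^H$; hence $\|\mL\breve{\mL}^H\|_{2,\infty}\le B/\sqrt n$, and symmetrically $\|\mR\breve{\mR}^H\|_{2,\infty}\le B/\sqrt n$ via $\mR\breve{\mR}^H=\mR(\breve{\mR}')^H(\mI_s\otimes D_L)$.

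For the non‑expansiveness, I would let $(\bQ_1,\bQ_2,\bQ_3)\in\GL(r)^3$ attain the infimum defining $\distsq{\mF'}{\mF_\star}$ (existence by \cite[Lemma~12]{Tong2022}) and plug the same triple into the objective for $\mF$, which upper‑bounds $\distsq{\mF}{\mF_\star}$. Since $\mV$ and $\bcS$ are unchanged by $\P_B$, the $\mV$‑ and $\bcS$‑terms of the distance are identical for $\mF$ and $\mF'$, so it will suffice to prove the row‑wise estimates $\|(\mL(i_1,:)\bQ_1-\mL_\star(i_1,:))\bSigma_{\star,1}\|_2\le\|(\mL'(i_1,:)\bQ_1-\mL_\star(i_1,:))\bSigma_{\star,1}\|_2$ (and its $\mR$‑analogue), then sum over rows and take the infimum for $\mF$. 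Writing $u=\mL'(i_1,:)\bQ_1\bSigma_{\star,1}$, $v=\mL_\star(i_1,:)\bSigma_{\star,1}$ and $c=c^L_{i_1}\in(0,1]$, the claim reads $\|cu-v\|_2\le\|u-v\|_2$; from the identity $\|cu-v\|_2^2-\|u-v\|_2^2=(1-c)\big(2\Real\langle u,v\rangle-(1+c)\|u\|_2^2\big)$ and $\Real\langle u,v\rangle\le\|u\|_2\|v\|_2$, a sufficient condition is $\|v\|_2\le\tfrac{1+c}{2}\|u\|_2$, which holds trivially when $c=1$.

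It then remains to check $\|v\|_2\le\tfrac{1+c}{2}\|u\|_2$ when the scaling is active ($c<1$). By incoherence of $\bcZ_\star$ (Definition~\ref{def:incoh}) and $\|\bSigma_{\star,1}\|=\sigma_{\max}(\cM_1(\bcZ_\star))\le\sigma_{\max}(\bcZ_\star)$ one has $\|v\|_2\le\|\mL_\star\|_{2,\infty}\sigma_{\max}(\bcZ_\star)\le\sqrt{\mu_0 c_{\mathrm{s}} r/n}\,\sigma_{\max}(\bcZ_\star)$. On the other hand, re‑expressing $\breve{\mL}'$ through the aligned quadruple $(\mL'\bQ_1,\mR'\bQ_2,\mV\bQ_3,(\bQ_1^{-1},\bQ_2^{-1},\bQ_3^{-1})\bcdot\bcS)$ and invoking the HOSVD identity $\breve{\mL}_\star^H\breve{\mL}_\star=\bSigma_{\star,1}^2$, the distance‑metric perturbation bounds of the supplement (cf.\ Lemma~\ref{lemma:perturb_bounds}) should show that $t:=\|\mL'(i_1,:)(\breve{\mL}')^H\|_2$ and $\|u\|_2=\|\mL'(i_1,:)\bQ_1\bSigma_{\star,1}\|_2$ differ by at most a $(1+\varepsilon_0)^3$ factor — one $(1+\varepsilon_0)$ for each of the $\mV$‑, $\mR'$‑ and $\bcS$‑deviations entering $\breve{\mL}'$ — so $\|u\|_2\ge(1+\varepsilon_0)^{-3}t$. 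Since "scaling active" means $t>B/\sqrt n$ and $c=(B/\sqrt n)/t$, this yields $\tfrac{1+c}{2}\|u\|_2\ge(1+\varepsilon_0)^{-3}\tfrac{t+B/\sqrt n}{2}>(1+\varepsilon_0)^{-3}\,B/\sqrt n=(1+\varepsilon_0)^{-3}C_B\sqrt{\mu_0 c_{\mathrm{s}} r/n}\,\sigma_{\max}(\bcZ_\star)\ge\sqrt{\mu_0 c_{\mathrm{s}} r/n}\,\sigma_{\max}(\bcZ_\star)\ge\|v\|_2$, where the last‑but‑one inequality uses $C_B\ge(1+\varepsilon_0)^3$. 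The $\mR$‑rows are identical, so summing the row‑wise bounds and adding the unchanged $\mV$/$\bcS$ terms gives $\distsq{\mF}{\mF_\star}\le\distsq{\mF'}{\mF_\star}$.

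I expect the perturbation bookkeeping in the third paragraph to be the main obstacle: one must show that $\breve{\mL}'$ and $\breve{\mR}'$ — built from the iterate factors $\mL',\mR'$, which are not assumed incoherent, together with the non‑star factors $\mV,\bcS$ — stay within a $(1\pm O(\varepsilon_0))$ relative factor of $\breve{\mL}_\star,\breve{\mR}_\star$ in the $\bSigma_{\star,\cdot}$‑weighted sense, so that an active scaling genuinely certifies a weighted row‑norm of order $B/\sqrt n$. This is precisely where the hypothesis $\dist{\mF'}{\mF_\star}\le\varepsilon_0\sigma_{\min}(\bcZ_\star)$ and the three‑fold slack $C_B\ge(1+\varepsilon_0)^3$ are spent, and it leans on the distance‑metric perturbation lemmas of the supplementary material; tracking the dependence on $\kappa$ carefully in this step is the delicate point.
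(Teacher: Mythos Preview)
Your proposal is correct and follows the same approach as the source the paper defers to: the paper does not give its own proof but simply states ``This is adapted from \cite[Lemma~7]{Tong2022},'' and your argument (row-wise scaling contraction via $\|cu-v\|_2\le\|u-v\|_2$ under $\|v\|_2\le\tfrac{1+c}{2}\|u\|_2$, with the $(1+\varepsilon_0)^3$ slack absorbing the $\breve{\mL}'$-vs-$\bSigma_{\star,1}$ perturbation from Lemma~\ref{lemma:perturb_bounds}) is precisely the Tong et~al.\ proof specialized to this setting where only $\mL,\mR$ are projected. The only minor caveat is terminological: in the incoherence step you should note explicitly that $\breve{\mL}'$ and $\breve{\mR}'$ are built from the \emph{pre}-projection pair $(\mL',\mR')$ (as the paper specifies after \eqref{eq:incoh_proj}), which is exactly what your Kronecker identity $\mL\breve{\mL}^H=\mL(\breve{\mL}')^H(\mI_s\otimes D_R)$ uses.
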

\begin{proof}
    This is adapted from \cite[ Lemma~7]{Tong2022}. 
\end{proof}
\section{Lemmas for Local linear convergence}
\begin{lemma}[Linear convergence] \label{lem:linconverge}
  Suppose $\bcZ_{\star}\in\C^{n_1\times n_2\times s}$ is incoherent in Definition \ref{def:incoh}. The current iterate $\mF=(\mL,\mR,\mV,\bcS)$ satisfies  $\dist{\mF}{\mF_{\star}}\leq \varepsilon\sigma_{\min}(\bcZ_\star)$ for some small constant $\varepsilon$, and the following incoherence condition:
  \begin{align*}
      \max\{\|\mL\breve{\mL}^H\|_{2,\infty},\|\mR\breve{\mR}^H\|_{2,\infty}\}\leq B/\sqrt{n}, \numberthis \label{eq:incoh_iter_full}
  \end{align*}
  where $B=C_B\sqrt{\mu_0 c_{\mathrm{s}} r}\sigma_{\max}(\bcZ_\star)$ for $C_B\geq (1+\varepsilon)^3$. If $\mF$
is independent of the {current sampling set $\tilde{\Omega}$ with $\tilde{m}$ samples},
 with probability at least $1-O\((sn)^{-2}\)$ we have
\begin{align*}
    \dist{\mF_{+}}{\mF_{\star}} \leq (1-0.5\eta) \dist{\mF}{\mF_{\star}},
\end{align*}
provided {$\tilde{m}$}
$\geq O(\varepsilon^{-2}\mu_0 c_{\mathrm{s}} sr\kappa^2\log(sn))$ and $\eta\leq 0.4$, where the next iterate $\mF_{+}=(\mL_{+},\mR_{+},\mV_{+},\bcS_{+})$ is updated in \eqref{eq:ScalHT}.
\end{lemma}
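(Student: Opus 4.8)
The argument follows the scaled-gradient-descent template of \cite{Tong2021,Tong2022}, but the Hankel structure and the Hankel sampling basis force several modifications. By the decomposition ambiguity of the Tucker factorization we may assume, without loss of generality, that the current quadruple $\mF=(\mL,\mR,\mV,\bcS)$ is already aligned with $\mF_\star=(\mL_\star,\mR_\star,\mV_\star,\bcS_\star)$, so that the optimal alignment matrices in $\dist{\mF}{\mF_\star}$ are $\bQ_k=\mI_r$; write $\vDelta_\mL=\mL-\mL_\star$, $\vDelta_\mR=\mR-\mR_\star$, $\vDelta_\mV=\mV-\mV_\star$, $\vDelta_{\bcS}=\bcS-\bcS_\star$, so that $\dist{\mF}{\mF_\star}^2=\|\vDelta_\mL\bSigma_{\star,1}\|_F^2+\|\vDelta_\mR\bSigma_{\star,2}\|_F^2+\|\vDelta_\mV\bSigma_{\star,3}\|_F^2+\|\vDelta_{\bcS}\|_F^2$. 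To bound $\dist{\mF_+}{\mF_\star}$ I would plug a near-identity choice of alignment matrices $\bQ_{k,+}$ into the infimum defining the distance (a feasible, generally suboptimal, choice), which reduces the claim to four Frobenius-norm bounds on $(\mL_+\bQ_{1,+}-\mL_\star)\bSigma_{\star,1}$, $(\mR_+\bQ_{2,+}-\mR_\star)\bSigma_{\star,2}$, $(\mV_+\bQ_{3,+}-\mV_\star)\bSigma_{\star,3}$, and the core perturbation; the presence of the $(\breve{\mL}^H\breve{\mL})^{-1}$ preconditioners is what makes these bounds close under the ScaledGD geometry.

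\textbf{Reduction to a population iteration.} The first real step is to simplify the gradient \eqref{eq:loss_derivative}. Since $\mY_\star=\G^*\bcZ_\star$, we have $\P_{\tilde\Omega}(\G^*\bcZ-\mY_\star)=\P_{\tilde\Omega}\G^*(\bcZ-\bcZ_\star)$; combining the data term with the Hankel-penalty term and using $\G\G^*\bcZ_\star=\bcZ_\star$ (because $\bcZ_\star=\H(\mX_\star)$ lies in the range of $\G$) and $(\I-\G\G^*)\bcZ_\star=\bzero$, the bracketed tensor appearing in $\nabla_\mL f,\nabla_\mR f,\nabla_\mV f,\nabla_{\bcS}f$ collapses to
\begin{align*}
\bcG \;=\; (\bcZ-\bcZ_\star)\;+\;\G\big(\tilde{p}^{-1}\P_{\tilde\Omega}-\I\big)\G^*(\bcZ-\bcZ_\star)\;=\;(\bcZ-\bcZ_\star)+\bcE,
\end{align*}
so one step of ScalHT is exactly ScaledGD applied to the clean residual $\bcZ-\bcZ_\star$ perturbed by the sampling error $\bcE$. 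For the noiseless population iteration ($\bcE=\bzero$) the one-step contraction $\dist{\mF_+^{\mathrm{pop}}}{\mF_\star}\le(1-0.6\eta)\dist{\mF}{\mF_\star}$ for $\eta\le0.4$ is a deterministic statement about the ScaledGD landscape that can be imported from \cite{Tong2021,Tong2022}, once one verifies that the same Lipschitz and local-curvature estimates hold here — they use only $\sigma_{\min}(\bcZ_\star),\sigma_{\max}(\bcZ_\star),\kappa$ and $\dist{\mF}{\mF_\star}\le\varepsilon\sigma_{\min}(\bcZ_\star)$. It then remains to show that $\bcE$ contributes at most $0.1\eta\,\dist{\mF}{\mF_\star}$ to the distance, so that the two effects combine to $(1-0.5\eta)$.

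\textbf{Controlling the sampling error.} Here the Hankel sampling basis of Definition~\ref{def:HankelT} is central. Because $\mF$ is independent of $\tilde\Omega$ (sample splitting) and obeys the incoherence bound \eqref{eq:incoh_iter_full}, the mode-$1$ and mode-$2$ matricizations of $\bcZ-\bcZ_\star$ have $\ell_2$ row norms of order $O(\sqrt{\mu_0 c_{\mathrm s}r/n}\,\sigma_{\max}(\bcZ_\star))$, and after alignment $\bcZ-\bcZ_\star$ is, up to a second-order term in $\dist{\mF}{\mF_\star}$, in the tangent space $T$ of the multilinear-rank-$\vr$ manifold at $\bcZ_\star$. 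The projections of $\bcE$ that actually enter the update — $\cM_1(\bcE)\breve{\mL}$, $\cM_2(\bcE)\breve{\mR}$, $\P_T(\bcE)$, and the quantity driving the $\mV$-update — are then bounded via Lemma~\ref{lem:PT_HTbasis_bd}, Lemma~\ref{lem:PTconc} and Lemma~\ref{lem:init_conc_m1}, which yield $\|\P_T(\tilde p^{-1}\G\P_{\tilde\Omega}\G^*-\I)\P_T\|\lesssim\sqrt{\mu_0 c_{\mathrm s}sr\log(sn)/\tilde m}$ and the analogous one-sided operator bounds. The delicate point is the mode-$3$ term: since $\|\cM_3(\bcH_{k,j})\|=1$, a direct bound on $\cM_3(\bcE)\breve{\mV}$ is too weak; instead, using Lemma~\ref{lem:HankelT_algebra} one rewrites the $\mV$-gradient through the low-complexity factors $\mB=\cM_3(\overline{\bcW})\cM_3(\bcS)^H$ (a convolution of $\mL$ and $\mR$), so the bound invokes only $\|\cM_1(\bcH_{k,j})\|=\|\cM_2(\bcH_{k,j})\|=1/\sqrt{w_k}$ and the incoherence of $\mL_\star,\mR_\star$ — no incoherence of $\mV_\star$ is required, which is precisely why $\mV$ is left unprojected (cf.\ Remark~\ref{remark:incoh_V}). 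Inserting these estimates, using $\dist{\mF}{\mF_\star}\le\varepsilon\sigma_{\min}(\bcZ_\star)$ and $\sigma_{\min}(\bcZ_\star)=\sigma_{\max}(\bcZ_\star)/\kappa$, and choosing $\tilde m\ge O(\varepsilon^{-2}\mu_0 c_{\mathrm s}sr\kappa^2\log(sn))$ makes each deviation term $\le0.1\eta\,\dist{\mF}{\mF_\star}$; a union bound over the $O(1)$ concentration events keeps the failure probability $O((sn)^{-2})$, and combining with the population contraction finishes the proof.

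\textbf{Main obstacle.} The crux is exactly this last, probabilistic, step: proving a restricted-isometry-type bound for $\tilde p^{-1}\G\P_{\tilde\Omega}\G^*$ on the directions relevant to the update. Two features make it harder than ordinary low-rank tensor completion. First, Hankel sampling couples the first two tensor modes, so the independence-across-modes arguments of \cite{Tong2022} (their Lemmas~18--21 and off-diagonal initialization) do not transfer, and one must work directly with the basis $\bcH_{k,j}$ and the skew-diagonal multiplicities $w_k$. Second, $\cM_3(\bcH_{k,j})$ has unit spectral norm, so the $\mV$-component of the gradient must be re-expressed through the benign $1/\sqrt{w_k}$-scaled Hankel modes via the Hankel/Tucker algebra of Lemma~\ref{lem:HankelT_algebra} before concentration can be applied. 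Carrying this ``dominant-part'' reduction out tightly enough that the sample complexity retains its optimal linear dependence on $s$ is the main technical difficulty.
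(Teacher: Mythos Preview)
Your high-level structure matches the paper: align $\mF$ with $\mF_\star$, split the bracketed gradient tensor as $(\bcZ-\bcZ_\star)+\bcE_1$ with $\bcE_1=\G(\tilde p^{-1}\P_{\tilde\Omega}-\I)\G^*(\bcZ-\bcZ_\star)$, import the population contraction $(1-0.7\eta)$ from \cite{Tong2022}, and show the sampling perturbation contributes only a small multiple of $\eta\,\dist{\mF}{\mF_\star}$, combined via $(a-b)^2\le(1+t)a^2+(1+1/t)b^2$ with $t=\eta/10$.

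Where your plan diverges is the mechanism for bounding the sampling terms. The paper does \emph{not} invoke Lemma~\ref{lem:HankelT_algebra} or the factor $\mB$ in the convergence analysis; that lemma is purely computational. Instead, for each of the four updates it passes to the variational form, e.g.\ $\|\cM_1(\bcE_1)\breve{\mL}(\breve{\mL}^{H}\breve{\mL})^{-1}\bSigma_{\star,1}\|_F=\max_{\|\mN\|_F=1}|\langle\bcE_1,(\mL_N,\mR,\mV)\bcdot\bcS\rangle|$, then decomposes $\bcZ-\bcZ_\star=\bcE_a+\bcE_b+\bcE_c+\bcE_d$ where each piece carries a known rank-$r$ factor in \emph{both} modes $1$ and $2$. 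This lets one insert self-adjoint projectors of the form $\P_{T_i}(\bcZ)=(\mP_A,\mP_B,\mI_s)\bcdot\bcZ$ on both sides of $\G(\tilde p^{-1}\P_{\tilde\Omega}-\I)\G^*$, with \emph{identity on mode~$3$}. Lemma~\ref{lem:PT_HTbasis_bd} then bounds $\max_{k,j}\|\P_{T_i}\bcH_{k,j}\|_F$ using only mode-$1$ or mode-$2$ incoherence (this is exactly why the lemma is stated with $\mI_s$ in the third slot), and Lemma~\ref{lem:PTconc} supplies the operator concentration. The point is that once you are in variational form, the $\mV$-update term $\mK_1$ is bounded \emph{exactly the same way} as $\mI_1,\mJ_1,\mP_1$---no special rewrite is needed, because the right-hand tensor in the inner product already carries $\mL,\mR$ factors that absorb the Hankel modes; $\|\cM_3(\bcH_{k,j})\|=1$ simply never enters. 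Your proposed detour through $\mB$ is unnecessary and would leave you with an awkward concentration statement to prove. Also, Lemma~\ref{lem:init_conc_m1} plays no role here; only Lemmas~\ref{lem:PT_HTbasis_bd} and~\ref{lem:PTconc} are used in the local step.
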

\begin{proof}
 {See Appendix~\ref{pf_lem_lin_cvg} of the Supplementary Material.} 
\end{proof}

The following concentration inequality for the Hankel tensor sampling is a key hammer in our problem, greatly different from the hammers Lemma~18, 20 in \cite{Tong2022}.  
\begin{lemma}\label{lem:PTconc}
    For $i=1,2$, denote $\P_{T_i}:\C^{n_1\times n_2\times s}\rightarrow\C^{n_1\times n_2\times s}$ as two self-adjoint projection operators that are independent of {the current sampling set $\tilde{\Omega}$ with $\tilde{m}$ samples, $\tilde{p}=\frac{\tilde{m}}{sn}$}, and $q_i = \max_{k,j}\|\P_{T_i}\bcH_{k,j}\|_F$ where $\bcH_{k,j}$ is the Hankel tensor basis defined in Definition \ref{def:HankelT}. Then with probability at least $1-(sn)^{-2}$, one has:
    \begin{align*}
        \|\P_{T_2}\G(\tilde{p}^{-1}\P_{\tilde{\Omega}}-\I)\G^{*}\P_{T_1}\|\lesssim
        \sqrt{\frac{\max\{q_1^2,q_2^2\}sn\log(sn)}{\tilde{m}}},
    \end{align*}
    provided $\tilde{m}\geq O(\max\{q_1^2,q_2^2\}sn\log(sn))$.
  
\end{lemma}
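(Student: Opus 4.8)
The plan is to prove Lemma~\ref{lem:PTconc} via a matrix Bernstein argument over the sampling-with-replacement model, treating $\P_{T_2}\G(\tilde p^{-1}\P_{\tilde\Omega}-\I)\G^*\P_{T_1}$ as a sum of $\tilde m$ i.i.d.\ zero-mean random operators and bounding their operator norm and matrix variance. First I would write $\tilde p^{-1}\P_{\tilde\Omega} = \frac{sn}{\tilde m}\sum_{i=1}^{\tilde m} \langle \ve_{a_i}\ve_{b_i}^T,\,\cdot\,\rangle\,\ve_{a_i}\ve_{b_i}^T$ where each $(a_i,b_i)$ is drawn uniformly from $[s]\times[n]$, so that the target operator becomes $\sum_{i=1}^{\tilde m}\bcX_i$ with $\bcX_i = \P_{T_2}\G\big(\frac{sn}{\tilde m}\langle \ve_{a_i}\ve_{b_i}^T,\,\G^*\P_{T_1}(\cdot)\rangle\ve_{a_i}\ve_{b_i}^T - \frac{1}{\tilde m}\G^*\P_{T_1}(\cdot)\big)$, which has mean zero since $\E[\tilde p^{-1}\P_{\tilde\Omega}]=\I$. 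The key observation is that $\G^*(\ve_{k}\ve_{j}^T\text{-type basis})$ relates back to the Hankel tensor basis $\bcH_{k,j}$: more precisely $\G\big(\langle\ve_a\ve_b^T,\G^*\bcZ\rangle\ve_a\ve_b^T\big)$ acts like projection onto a single $\bcH$-basis element, so $\|\P_{T_2}\G(\ve\ve^T\text{-term})\G^*\P_{T_1}\|$ is controlled by $q_1 q_2 = \max_{k,j}\|\P_{T_1}\bcH_{k,j}\|_F\cdot\max_{k,j}\|\P_{T_2}\bcH_{k,j}\|_F \le \max\{q_1^2,q_2^2\}$ via Cauchy–Schwarz on the rank-one structure of $\bcH_{k,j}$ in the first two modes.

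Next I would bound the per-term operator norm: using $\|\G\|=\|\G^*\|=1$ (from $\G^*\G=\I$) together with the $\bcH$-basis identification, each $\|\bcX_i\| \lesssim \frac{sn}{\tilde m}\max\{q_1^2,q_2^2\}$, since the dominant contribution is the $\frac{sn}{\tilde m}$-weighted single-basis term and the $\frac{1}{\tilde m}\G^*\P_{T_1}$ correction is negligible. For the variance, I would compute $\big\|\E\sum_i \bcX_i\bcX_i^*\big\|$ and $\big\|\E\sum_i \bcX_i^*\bcX_i\big\|$. Expanding the square, the cross terms between the two pieces of $\bcX_i$ largely cancel, leaving $\E[\bcX_i\bcX_i^*]\preceq \frac{sn}{\tilde m^2}\,\E\big[\langle\ve_{a_i}\ve_{b_i}^T,\G^*\P_{T_1}(\cdot)\rangle^2\,\P_{T_2}\G(\ve_{a_i}\ve_{b_i}^T)\otimes(\cdot)\big]$; summing over the $sn$ equally-likely indices and using $\sum_{a,b}\|\P_{T_2}\G(\ve_a\ve_b^T)\|$-type bounds controlled again by $q_2$, and $\|\P_{T_1}\|\le 1$, yields $\big\|\E\sum_i\bcX_i\bcX_i^*\big\| \lesssim \frac{sn}{\tilde m}\max\{q_1^2,q_2^2\}$ (and symmetrically for $\bcX_i^*\bcX_i$). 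Then matrix Bernstein (e.g.\ \cite[Theorem~1.6]{Tropp2012}, applied on the $(sn)\times(sn)$ ambient space of $\cM_i$) gives, with probability $\ge 1-(sn)^{-2}$,
\begin{align*}
\Big\|\sum_{i=1}^{\tilde m}\bcX_i\Big\| &\lesssim \sqrt{\frac{\max\{q_1^2,q_2^2\}sn\log(sn)}{\tilde m}} + \frac{\max\{q_1^2,q_2^2\}sn\log(sn)}{\tilde m},
\end{align*}
and the second (linear) term is dominated by the first precisely when $\tilde m \geq O(\max\{q_1^2,q_2^2\}sn\log(sn))$, which is the stated sample condition.

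The main obstacle I anticipate is the bookkeeping around how $\G$, $\G^*$, and the coordinate bases $\ve_a\ve_b^T$ interact with the Hankel tensor basis $\bcH_{k,j}$ — in particular making precise the claim that $\G\big(\langle\ve_a\ve_b^T,\G^*\bcZ\rangle\ve_a\ve_b^T\big)$ collapses to (a scalar multiple of) a single $\bcH_{b,a}$ contribution, so that $q_1,q_2$ genuinely govern both the norm and the variance. This requires carefully tracking the reweighting factors $w_a$ hidden inside $\G=\H\D^{-1}$ versus the $1/\sqrt{w_k}$ normalization in $\bcH_{k,j}$; getting these weights to cancel correctly is what makes $\|\G\|\le 1$ usable and is the delicate point. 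A secondary technical care is that $\P_{T_1},\P_{T_2}$ are self-adjoint projections independent of $\tilde\Omega$ (guaranteed by the sample-splitting in Algorithm~\ref{alg:ScalHT}), which is what licenses treating $\E$ only over $(a_i,b_i)$; I would state this independence explicitly at the start. Everything else (mean-zero, i.i.d., Bernstein) is routine once the basis-reduction lemma is in hand, and the rank-one-in-two-modes structure of $\bcH_{k,j}$ together with $\|\cM_3(\bcH_{k,j})\|=1$ from Definition~\ref{def:HankelT} is exactly the ingredient that turns Frobenius-norm bounds $q_i$ into the operator-norm estimates needed.
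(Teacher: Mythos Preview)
Your proposal is correct and matches the paper's proof: write the operator as $\sum_{i=1}^{\tilde m}\cX_i$ with zero-mean $\cX_i$, use the basis identity $\G\P_{\Omega_i}\G^*(\bcZ)=\langle\bcH_{a_i,b_i},\bcZ\rangle\bcH_{a_i,b_i}$ (which resolves your anticipated weight-bookkeeping obstacle cleanly, since the $1/\sqrt{w_k}$ in $\bcH_{k,j}$ is exactly what $\G=\H\D^{-1}$ produces), bound $\|\cX_i\|\le \frac{2sn}{\tilde m}q_1q_2$ and the variance by $\frac{2sn}{\tilde m}\max\{q_1^2,q_2^2\}$, and apply matrix Bernstein. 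The only place the paper is more explicit than your sketch is the variance step: it uses the operator inequality $\big\|\sum_i A_i^H B_i^H B_i A_i\big\|\le \max_i\|B_i\|^2\,\big\|\sum_i A_i^H A_i\big\|$ with $A_i=\P_{\Omega_i}\G^*\P_{T_1}$ and $B_i=\P_{T_2}\G\P_{\Omega_i}$, so one factor contributes $q_2^2$ via the max and the remaining expectation collapses to $\frac{1}{sn}\|\P_{T_1}\G\G^*\P_{T_1}\|\le\frac{1}{sn}$---this is the precise mechanism behind your heuristic ``$\|\P_{T_1}\|\le1$'' step (the fact $\|\cM_3(\bcH_{k,j})\|=1$ is not needed here).
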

\begin{proof}
{See Appendix~\ref{pf_lem_PTconc} of the Supplementary Material.}
\end{proof}

\clearpage
\begin{center}
  \Large\textbf{Supplementary Material}
\end{center}
\vspace{-15pt}
\renewcommand{\baselinestretch}{0.84}
\section{ }
\subsection{Proofs of Lemma~\ref{lem:init}} \label{pf_leminit}
The main task is to prove the following inequality 
\begin{align*}
    \dist{\mF'^0}{\mF_{\star}} &\leq(\sqrt{2}+1)^{3/2}\|(\mL'^{0},\mR'^{0},{\mV}^{0})\bcdot\bcS^{0}-\bcZ_{\star}\|_F
    \\&\leq \varepsilon_0 \sigma_{\min}(\bcZ_\star),
\end{align*}
where $\mF'^0=(\mL'^0,\mR'^0,{\mV}^0,\bcS^0)$ is shown in Algorithm \ref{alg:init}, and the first inequality results from Lemma~\ref{lem:dist_fullerr_upbd}. When the previous inequality hold, invoking Lemma~\ref{lem:proj} we can establish
\begin{align*}
    \dist{\mF^0}{\mF_{\star}} &\leq \dist{\mF'^0}{\mF_{\star}} \leq\varepsilon_0 \sigma_{\min}(\bcZ_\star),
\end{align*}
and the incoherence condition \eqref{eq:incoh_proj_F0}. 


   Next, the main task is to bound $\|(\mL'^{0},\mR'^{0},{\mV}^{0})\bcdot\bcS^{0}-\bcZ_{\star}\|_F$. Let $\bP_{L}\coloneqq\mL'^{0}(\mL'^{0})^{H}$,  $\bP_{R}\coloneqq\mR'^{0}(\mR'^{0})^{H}$ and $\bP_{V}\coloneqq{\mV}^{0}({\mV}^{0})^{H}$ be the projection matrix onto the column space of $\mL'^{0}$, $\mR'^{0}$ and ${\mV}^{0}$. Denote $\bP_{L_\perp}$, $\bP_{R_\perp}$ and $\bP_{V_\perp}$ as their orthogonal complement.  We have the decomposition
\begin{align*}
\bcZ_{\star} &= (\bP_{L},\bP_{R},\bP_{V})\bcdot\bcZ_{\star} + (\bP_{L_\perp},\bP_{R},\bP_{V})\bcdot\bcZ_{\star} \\
&\quad+ (\bI_{n_1}, \bP_{R_\perp},\bP_{V})\bcdot\bcZ_{\star} + (\bI_{n_1}, \bI_{n_2},\bP_{V_\perp})\bcdot\bcZ_{\star}.
\end{align*}
Equivalently denote  $\bcZ^0=\hat{p}^{-1}\G\P_{\Omega_0}\G^*\bcZ_\star,$
we have 
\begin{align}
& \left\|(\mL'^{0},\mR'^{0},{\mV}^{0})\bcdot\bcS^{0}-\bcZ_{\star}\right\|_{F} = \left\|\bcZ_{\star}-(\bP_{L},\bP_{R},\bP_{V})\bcdot\bcZ^0\right\|_{F} \nonumber\\
&= \big\|(\bP_{L},\bP_{R},\bP_{V})\bcdot(\bcZ_{\star}-\bcZ^0)-(\bP_{L},\bP_{R},\bP_{V})\bcdot\bcZ_{\star}+\bcZ_\star\big\|_F
\nonumber 
\\& \leq  \left\|\bP_{L_\perp}\cM_{1}(\bcZ_{\star})\right\|_F {+} \left\|\bP_{R_\perp}\cM_{2}(\bcZ_{\star})\right\|_F {+} \left\|\bP_{V_\perp}\cM_{3}(\bcZ_{\star})\right\|_F \nonumber\\
&\quad+\left\|(\bP_{L},\bP_{R},\bP_{V})\bcdot(\bcZ^0-\bcZ_{\star})\right\|_F,
\label{eq:TC_init_expand}
\end{align}
where we invoke $\bcS^0=\big(({\mL'}^0)^H,({\mR'}^0)^H,(\mV^0)^H\big)\bcdot \bcZ^0$ and the previous decomposition of $\bcZ_\star$ in the inequality. 
We bound the previous four terms separately as follows. 

\paragraph{Bounding $\left\|\bP_{i_\perp}\cM_{i}(\bcZ_{\star})\right\|_F$, where $i=1,2$}~
\begin{align*}
    &\left\|\bP_{L_\perp}\cM_{1}(\bcZ_{\star})\right\|_F\leq \sqrt{r}\|\bP_{L_\perp}\cM_{1}(\bcZ_{\star})\|
    \\&\leq\sqrt{r}\(\|\bP_{L_\perp}\cM_{1}(\bcZ^0-\bcZ_\star)\|+\|\bP_{L_\perp}\cM_{1}(\bcZ^0)\|\)
    \\&\leq\sqrt{r}\(\|\cM_{1}(\bcZ^0-\bcZ_\star)\|+\sigma_{r+1}\(\cM_1(\bcZ^0)\)\)
    \\&\leq2\sqrt{r}\|\cM_{1}(\bcZ^0-\bcZ_\star)\|,\numberthis\label{eq:init_part_main}
\end{align*}
where the first inequality results from $\cM_1(\bcZ_\star)$ is a rank-$r$ matrix, in the third line we use the fact that $\|\bP_{L_\perp}\cM_{1}(\bcZ^0)\|=\sigma_{r+1}(\cM_1(\bcZ^0))$ and the last line results from $\sigma_{r+1}(\cM_{1}(\bcZ^0))-\sigma_{r+1}(\cM_1(\bcZ_\star))\leq\|\cM_{1}(\bcZ^0-\bcZ_\star)\|$ and $\sigma_{r+1}(\cM_1(\bcZ_\star))=0$.
Invoking Lemma~\ref{lem:init_conc_m1} to obtain that 
\begin{align*}
    \sqrt{r}\|\cM_{i}(\bcZ^0-\bcZ_\star)\|\lesssim \varepsilon_0 \sigma_{\min}(\bcZ_\star)
\end{align*}
for $i=1,2$, provided $\hat{m}\geq O(\varepsilon_0^{-2}\mu_0 c_{\mathrm{s}} sr^2\kappa^2 \log(sn)) $. 
\paragraph{Bounding $\left\|\bP_{V_\perp}\cM_{3}(\bcZ_{\star})\right\|_F$} 
~

If we  initialize $\mV^0$ from $\cM_3(\bcZ^0)$, 
we need to bound $\|\cM_{3}(\bcZ^0-\bcZ_\star)\|$. However, for $i=1,2$ $\|\cM_i(\bcZ^0-\bcZ_\star)\|$ can be sharply bounded as  $\|\cM_i(\bcH_{k,j})\|=1/\sqrt{w_k}$, while $\|\cM_3(\bcZ^0-\bcZ_\star)\|$ is larger as $\|\cM_3(\bcH_{k,j})\|=1$ in Definition \ref{def:HankelT}. 
Therefore, we don't initialize $\mV^0$ from $\cM_3(\bcZ^0)$. 

We propose to initialize $\mV^0$ via the top-$r$ left singular vectors of the intermediary quantity $\cM_3(\bcZ^0\times_1({\mL'}^0)^H)$ and name this method as sequential spectral initialization. From Lemma~\ref{lem:init_V0}, this is equivalent to initialize $\mV^0$  via the top-$r$ left singular vectors of $\hat{\bcZ}^0=\bcZ^0\times_{1} \mL'^0(\mL'^0)^H$. 
Then we follow the route in \eqref{eq:init_part_main} to establish that 
\begin{align*}
    &\left\|\bP_{V_\perp}\cM_{3}(\bcZ_{\star})\right\|_F\leq 2 \sqrt{r} \|\cM_{3}(\hat{\bcZ}^0-\bcZ_\star)\|
    \\&\leq 2\sqrt{r}\|\cM_{3}(\hat{\bcZ}^0-\bcZ_\star)\|_F=2\sqrt{r}\|\M_{1}(\hat{\bcZ}^0-\bcZ_\star)\|_F.
\end{align*}
For $\|\M_{1}(\hat{\bcZ}^0-\bcZ_\star)\|_F$, we can prove that it is sharply small: 
\begin{align*}
 &\|\M_{1}(\hat{\bcZ}^0-\bcZ_\star)\|_F
\overset{(a)}{\leq} \sqrt{2r}\|\M_{1}(\hat{\bcZ}^0-\bcZ_\star)\|
    \\&\leq \sqrt{2r}(\|\cM_1(\hat{\bcZ}^0-{\bcZ}^0)\|{+}\|\cM_1({\bcZ}^0-\bcZ_\star)\|)
    \\&\overset{(b)}{\leq} 2\sqrt{2r}\|\cM_1({\bcZ}^0-\bcZ_\star)\|\overset{(c)}{\lesssim} \varepsilon_0 \sigma_{\min}(\bcZ_\star)/\sqrt{r},
\end{align*}
 where $(a)$ results from $\cM_1(\hat{\bcZ}^0)$ and $\cM_1(\bcZ_\star)$ are rank-$r$, and $(b)$ results from $\cM_1(\hat{\bcZ}^0)=\mL'^0(\mL'^0)^H\cM_1(\bcZ^0)$, which is rank-$r$ approximation of $\cM_1(\bcZ^0)$, thus $\|\cM_1(\hat{\bcZ}^0-{\bcZ}^0)\|\leq\|\cM_1({\bcZ}^0-\bcZ_\star)\|$. $(c)$ results from  Lemma~\ref{lem:init_conc_m1}, provided $\hat{m}\geq O(\varepsilon_0^{-2}\mu_0 c_{\mathrm{s}} sr^3\kappa^2\log(sn))$. Consequently, we obtain that
 $$\left\|\bP_{V_\perp}\cM_{3}(\bcZ_{\star})\right\|_F\lesssim \varepsilon_0 \sigma_{\min}(\bcZ_\star).
 $$

\paragraph{Bounding $\left\|(\bP_{L},\bP_{R},\bP_{V})\bcdot(\bcZ^0{-}\bcZ_{\star})\right\|_F$}~
By \eqref{eq:tensor_properties_3}, 
\begin{align*}
    &\left\|(\bP_{L},\bP_{R},\bP_{V})\bcdot(\bcZ^0{-}\bcZ_{\star})\right\|_F
    \leq \|(\bP_{L},\mI_{n_2},\mI_{s})\bcdot(\bcZ^0{-}\bcZ_{\star})\|_F
    \\&= \|\mP_L\cM_1(\bcZ^0-\bcZ_\star)\|_F\leq \sqrt{r}\|\cM_1(\bcZ^0-\bcZ_\star)\|.
\end{align*}
Invoking Lemma~\ref{lem:init_conc_m1}, we have 
\begin{align*}
    \sqrt{r}\|\cM_1(\bcZ^0-\bcZ_\star)\|\lesssim \varepsilon_0 \sigma_{\min}(\bcZ_\star),
\end{align*}
provided $\hat{m}\geq O(\varepsilon_0^{-2}\mu_0 c_{\mathrm{s}} sr^2\kappa^2 \log(sn))$.
\subsection{Proofs of Lemma~\ref{lem:init_V0}} \label{pf_init_V0}
  Let $\hat{\mV}$ consist of the top-$r$ left singular vectors of $\cM_3(\bcZ\times_1 \mL^H)$,  which can be equivalently reformulated as: 
    \begin{align*}
        \hat{\mV}\coloneqq&\argmin_{\mV^H\mV=\mI_r,\mV\in\C^{s\times r}}\|\mV\mV^H\cM_3(\bcZ\times_1 \mL^H)\|_F^2
        \\\coloneqq&\argmin_{\mV^H\mV=\mI_r,\mV\in\C^{s\times r}}\|\mV\mV^H\cM_3(\bcZ\times_1 \mL\mL^H)\|_F^2,
    \end{align*}
    where we invoke
    \begin{align*}
       & \|\mV\mV^H\cM_3(\bcZ\times_1 \mL^H)\|_F^2=\|\bcZ\times_1 \mL^H\times_3 \mV\mV^H\|_F^2\\
       &=\|\mL^H\cM_1(\bcZ\times_3 \mV\mV^H)\|_F^2=\|\mL\mL^H\cM_1(\bcZ\times_3 \mV\mV^H)\|_F^2\\
       &=\|\bcZ\times_1 \mL\mL^H\times_3 \mV\mV^H\|_F^2=\|\mV\mV^H\cM_3(\bcZ\times_1 \mL\mL^H)\|_F^2.
    \end{align*}
\subsection{Proofs of Lemma~\ref{lem:init_conc_m1}} \label{pf_lem_initconc_m1}
 We prove the lemma for the case $i=1$, and the same holds for the case $i=2$. 
 Denote $\mZ_{\star}=\cM_1(\bcZ_\star)$ and $\mH_{k,j}=\cM_1(\bcH_{k,j})$. 
   We reformulate the target term as 
    \begin{align*}
      &\cM_1\((\tilde{p}^{-1}\G\P_{\tilde{\Omega}}\G^*-\I)(\bcZ_\star)\)
      \\&=\sum_{i=1}^{\tilde{m}}\tilde{p}^{-1}\langle\mH_{a_i,b_i},\mZ_{\star}\rangle\mH_{a_i,b_i}-\frac{1}{\tilde{m}}\mZ_{\star}=\sum_{i=1}^{\tilde{m}}\mZ_{a_i,b_i},
    \end{align*}
where each pair $(a_i,b_i)$ is drawn uniformly from $[n]\times[s]$. We list some  results that will be repeatedly used:
\begin{align*}
&\|\langle\mH_{k,j},\mZ_{\star}\rangle\mH_{k,j}\|\leq\|\bcZ_\star\|_{\infty}, \numberthis \label{eq:init_conc1_bd}   \\ &\|\mZ_{\star}\|\leq\|\mZ_{\star}\|_F\leq\sqrt{sn}\min\{\|\mZ_{\star}\|_{2,\infty},\|\mZ_{\star}^H\|_{2,\infty}\}.\numberthis \label{eq:init_conc1_bd2}
\end{align*}
For \eqref{eq:init_conc1_bd},
\begin{align*}
    \|\langle\mH_{k,j},\mZ_{\star}\rangle\mH_{k,j}\|&=\frac{1}{\sqrt{w_k}}\big|\sum_{i_1+i_2=k} \bcZ_\star(i_1,i_2,j)\big|\|\mH_{k,j}\|
    \\&\leq \sqrt{w_k}\|\bcZ_\star\|_{\infty}\|\mH_{k,j}\|=\|\bcZ_\star\|_{\infty},
\end{align*}
where we invoke $\big|\sum_{i_1+i_2=k} \bcZ_\star(i_1,i_2,j)\big|\leq\sum_{i_1+i_2=k}|\bcZ_\star(i_1,i_2,j)|\leq w_k \|\bcZ_\star\|_{\infty}$ and   $\|\mH_{k,j}\|=\|\cM_1(\bcH_{k,j})\|=1/\sqrt{w_k}$ in Definition \ref{def:HankelT}. 

From \eqref{eq:init_conc1_bd}, $\|\mZ_\star\|\leq \|\mZ_\star\|_F\leq\sqrt{sn^2}\|\mZ_\star\|_{\infty}$, \eqref{eq:prel_incohT} that $ \|\bcZ_\star\|_{\infty}
    \leq {\frac{\mu_0 c_{\mathrm{s}} r}{n}}\sigma_{\max}(\bcZ_\star)$, one has 
\begin{align*}
   R= \|\mZ_{a_i,b_i}\|&\leq \frac{1}{\tilde{m}}\|\mZ_{\star}\|+\frac{sn}{\tilde{m}}\|\bcZ_\star\|_{\infty}\leq \frac{2\mu_0 c_{\mathrm{s}} sr}{\tilde{m}}\sigma_{\max}(\bcZ_\star), 
\end{align*}
Besides, 
\begin{align*}
    &\E(\mZ_{a_i,b_i}\mZ_{a_i,b_i}^H)=\frac{sn}{\tilde{m}^2}\sum_{k,j}|\langle\mH_{k,j},\mZ_{\star}\rangle|^2\mH_{k,j}\mH_{k,j}^H\\&-\frac{1}{\tilde{m}^2}\mZ_{\star}\mZ_{\star}^H=\frac{sn}{\tilde{m}^2}\mC-\frac{1}{\tilde{m}^2}\mZ_{\star}\mZ_{\star}^H,
\end{align*}
where we set $\mC=\sum_{k,j}|\langle\mH_{k,j},\mZ_{\star}\rangle|^2\mH_{k,j}\mH_{k,j}^H$. From $\mZ_{\star}=\sum_{k,j}\langle\mH_{k,j},\mZ_{\star}\rangle\mH_{k,j}$, one can verify that $\mC=\diag(\mZ_{\star}\mZ_{\star}^H)$. Therefore  $\|\mC\|\leq\|\mZ_{\star}\|_{2,\infty}^2$. Combining \eqref{eq:init_conc1_bd2}, we have
\begin{align*}
    \|\E(\sum_{i=1}^{\tilde{m}}\mZ_{a_i,b_i}\mZ_{a_i,b_i}^H)\|&{\leq}\max
    \{\frac{sn}{\tilde{m}}\|\mC\|,\frac{1}{\tilde{m}}\|\mZ_{\star}\mZ_{\star}^H\|
    \}
    {\leq}\frac{sn}{\tilde{m}}\|\mZ_{\star}\|_{2,\infty}^2,
\end{align*}
where we invoke \eqref{eq:init_conc1_bd2}. Similarly, 
$
    \|\E(\sum_{i=1}^{\tilde{m}}\mZ_{a_i,b_i}^H\mZ_{a_i,b_i})\|\leq\frac{sn}{\tilde{m}}\|\mZ_{\star}^H\|_{2,\infty}^2.
$
Then from \eqref{eq:prel_incoh12}, one can obtain
\begin{align*}
    \sigma^2 = \frac{sn}{\tilde{m}}\max\{\|\mZ_{\star}\|_{2,\infty}^2,\|\mZ_{\star}^H\|_{2,\infty}^2\}\leq \frac{\mu_0 c_{\mathrm{s}} sr}{\tilde{m}}\sigma_{\max}^2(\bcZ_\star).
\end{align*}
By applying Bernstein inequality \cite[Theorem 1.6]{Tropp2012}, with probability $1-(sn)^{-2}$, when $\tilde{m}\geq\mu_0 c_{\mathrm{s}} sr\log(sn)$  one has
\begin{align*}
    \|\sum_{i=1}^{\tilde{m}}\mZ_{a_i,b_i}\|
    &\lesssim\big(\sqrt{\frac{\mu_0 c_{\mathrm{s}} sr{\log(sn)}}{\tilde{m}}}{+}\frac{{\mu_0 c_{\mathrm{s}} sr}\log(sn)}{\tilde{m}}\big)\sigma_{\max}(\bcZ_{\star})
    \\&\lesssim \sqrt{\frac{\mu_0 c_{\mathrm{s}} sr{\log(sn)}}{\tilde{m}}}\sigma_{\max}(\bcZ_{\star}).
\end{align*}
\vspace{-20pt}
\section{}
\subsection{Proofs of Lemma~\ref{lem:linconverge}} \label{pf_lem_lin_cvg} 
From the definition of the distance metric, we have
\begin{align*}
&\distsq{\bF_{+}}{\bF_{\star}} \leq \left\|( {\mL} _{+}\bQ_{1}-\mL_{\star})\bSigma_{\star,1}\right\|_{F}^{2}\\
&+\left\|( {\mR} _{+}\bQ_{2}-\mR_{\star})\bSigma_{\star,2}\right\|_{F}^{2}+\left\|( {\mV} _{+}\bQ_{3}-\mV_{\star})\bSigma_{\star,3}\right\|_{F}^{2} \nonumber\\&+\left\|(\bQ_{1}^{-1},\bQ_{2}^{-1},\bQ_{3}^{-1})\bcdot\bcS _{+}-\bcS_{\star}\right\|_{F}^2, \numberthis\label{eq:dist_updt_splitbd}
\end{align*}
where $\mQ_1$, $\mQ_2$ and $\mQ_3$ are the alignment matrices of $\distsq{\bF}{\bF_{\star}}$, and the existence of them can be checked from in \cite[Lemma~12]{Tong2022}. 
For simplicity, we suppose that the factor quadruple  $\mF=(\mL, \mR , \mV,\bcS)$ is aligned with $\mF_\star$, which can be constructed by $\mL \leftarrow \mL\mQ_1$, $\mR \leftarrow \mR\mQ_2$, $\mV \leftarrow \mV\mQ_3$ and 
$\bcS \leftarrow (\bQ_{1}^{-1},\bQ_{2}^{-1},\bQ_{3}^{-1})\bcdot \bcS$. Besides, we introduce the following notations:
\begin{align}
    &\bDelta_{L}  \coloneqq\mL-\mL_{\star},~\bDelta_{R}\coloneqq\mR-\mR_{\star},\nonumber\\
~&\bDelta_{V}\coloneqq\mV-\mV_{\star},~ \bDelta_{\cS}\coloneqq\bcS-\bcS_{\star},\nonumber 
 \\
 &\bcE_a\coloneqq(\mL,\mR,\mV)\bcdot\bDelta_S,~ \bcE_b\coloneqq(\bDelta_L,\mR,\mV)\bcdot\bcS_\star,\nonumber \\
 &\bcE_c\coloneqq(\mL_\star,\bDelta_R,\mV)\bcdot\bcS_\star, ~\bcE_d\coloneqq(\mL_\star,\mR_\star,\bDelta_V)\bcdot\bcS_\star
 ,\nonumber
 \\
 &~\bcE\coloneqq({\mL},{\mR},{\mV})\bcdot\bcS-\bcZ_{\star} = \bcE_a+\bcE_b+\bcE_c+\bcE_d,\nonumber
 \\ &\bcE_1 \coloneqq \G(\tilde{p}^{-1}\P_{\tilde{\Omega}}-\I)\G^{*}(\bcE),\nonumber
 \\&\mP_L\coloneqq\mL(\mL^H\mL)^{-1}\mL^H, \mP_R\coloneqq\mR(\mR^H\mR)^{-1}\mR^H, \label{eq:short_notations}
\end{align}
where $\tilde{p}=\frac{\tilde{m}}{sn}$. 
Next, we bound the first term in \eqref{eq:dist_updt_splitbd} as follows: 
\begin{align*}
     &\|({\mL} _{+}\bQ_{1}{-}\mL_{\star})\bSigma_{\star,1}\|_F^2
     \\&=\|\big(\bDelta_L{-}\eta(\cM_1(\bcE+\bcE_1))\breve{\mL}(\breve{\mL}^{H}\breve{\mL})^{-1}\big)\bSigma_{\star,1}\|_F^2
     \\&=\| \mI_0-\mI_1\|_F^2\leq (1+t)\|\mI_0\|_F^2+(1+1/t)\|\mI_1\|_F^2,
\end{align*}
where 
\begin{align*}
    \mI_0&=\bDelta_L-\eta\cM_1(\bcE)\breve{\mL}(\breve{\mL}^{H}\breve{\mL})^{-1}\big)\bSigma_{\star,1},
    \\ \mI_1&=\eta\cM_1(\bcE_1)\breve{\mL}(\breve{\mL}^{H}\breve{\mL})^{-1}\bSigma_{\star,1}, 
\end{align*}
 and the last inequality  results from the scalar inequality that $(a-b)^2\leq(1+t)a^2+(1+1/t)b^2$ ({$t>0$}). 
   Similarly,  
\begin{align*}
    & \|({\mR} _{+}\bQ_{2}-\mR_{\star})\bSigma_{\star,2}\|_F^2
     \leq (1+t)\|\mJ_0\|_F^2+(1+1/t)\|\mJ_1\|_F^2
     \\&\|({\mV} _{+}\bQ_{3}-\mV_{\star})\bSigma_{\star,3}\|_F^2
     \leq (1+t)\|\mK_0\|_F^2+(1+1/t)\|\mK_1\|_F^2,
\end{align*}
where $\mJ_0, \mJ_1, \mK_0, \mK_1$ are defined similarly as $\mI_0,\mI_1$. 

For the last term in \eqref{eq:dist_updt_splitbd} mainly associated 
with $\bcS$, we have
\begin{align*}
&\left\|(\bQ_{1}^{-1},\bQ_{2}^{-1},\bQ_{3}^{-1})\bcdot\bcS _{+}-\bcS_{\star}\right\|_{F}^2
\\&=\|\bDelta_S-\eta(\mL^{\dagger},\mR^{\dagger},\mV^{\dagger})\bcdot(\bcE+\bcE_1)\|_F^2
\\&=\|\mP_0-\mP_1\|_F^2\leq (1+t)\|\mP_0\|_F^2+(1+1/t)\|\mP_1\|_F^2,
\end{align*}
where we denote $\mL^{\dagger}=(\mL^H\mL)^{-1}\mL^H$, and $\mR^{\dagger}$, $\mV^{\dagger}$ are similarly defined. Also, we denote 
\begin{align*}
    \mP_0=\bDelta_S-\eta(\mL^{\dagger},\mR^{\dagger},\mV^{\dagger})\bcdot\bcE,~\mP_1=\eta(\mL^{\dagger},\mR^{\dagger},\mV^{\dagger})\bcdot\bcE_1. 
\end{align*}


In previous derivations, $\mI_0, \mJ_0, \mK_0, \mP_0$ are the quantities associated with the tensor Tucker factorization problem, 
which have been bounded in \cite{Tong2022}, seeing (45) and (46) in Appendix B of \cite{Tong2022}. We list the following results from \cite{Tong2022}: 
\begin{align*}
    \|\mI_0\|_F^2+\|\mJ_0\|_F^2+\|\mK_0\|_F^2+\|\mP_0\|_F^2{\leq}(1-0.7\eta)^2 \distsq{\mF}{\mF_{\star}},
\end{align*}
provided $\eta\leq2/5$ and $\varepsilon\leq0.2/C$ where $C$ is some universal constant. 

We concentrate more on $\mI_1$, $\mJ_1$, $\mK_1$, and $\mP_1$, which denote the perturbation from partial Hankel tensor sampling. 
We first give their results directly:
\begin{align*}
    \max\{\|\mI_1\|_F^2,\|\mJ_1\|_F^2,\|\mK_1\|_F^2,\|\mP_1\|_F^2\}\leq 6\eta^2\varepsilon^2 \distsq{\mF}{\mF_{\star}},
\end{align*}
provide $m\geq O(\varepsilon^{-2}\mu_0 c_{\mathrm{s}} sr\kappa^2\log(sn))$ and $\varepsilon\leq 0.2$, with high probability. The analysis for these terms is shown later, which is one part of our contribution, greatly different from \cite{Tong2022}. 

Combining the above pieces, we give an upper bound of $\distsq{\bF_{+}}{\bF_{\star}}$ and we set $t=\eta/10$:
\begin{align*}
    &\distsq{\bF_{+}}{\bF_{\star}} {\leq} (1{+}\frac{\eta}{10})\(\|\mI_0\|_F^2+\|\mJ_0\|_F^2+\|\mK_0\|_F^2+\|\mP_0\|_F^2\)\\&+(1+\frac{10}{\eta})(\|\mI_1\|_F^2+\|\mJ_1\|_F^2+\|\mK_1\|_F^2+\|\mP_1\|_F^2)
    \\&\leq (1-0.5\eta)^2\distsq{\mF}{\mF_{\star}},
\end{align*}
provided $\varepsilon\leq 1/40$ and $\eta\leq 0.4$. 
Next, we provide a detailed analysis of $\|\mI_1\|_F,\|\mJ_1\|_F,\|\mK_1\|_F$ and $\|\mP_1\|_F$.
\paragraph{Bounding $\|\mI_1\|_F,\|\mJ_1\|_F,\|\mK_1\|_F$}
The analysis for $\|\mI_1\|_F,\|\mJ_1\|_F,\|\mK_1\|_F$ is similar, and we take bounding $\|\mI_1\|_F$ for example. We reformulate that
 \begin{align*}
     \|\mI_1/\eta\|_F&=\|\cM_1(\bcE_1)\breve{\mL}(\breve{\mL}^{H}\breve{\mL})^{-1}\bSigma_{\star,1}\|_F\\
     &=\max_{\|\mN\|_F=1}|
    \langle\cM_1(\bcE_1)\breve{\mL}(\breve{\mL}^{H}\breve{\mL})^{-1}\bSigma_{\star,1},\mN\rangle|
 \end{align*} where $\mN\in\C^{n_1\times r}$ and $\|\mN\|_F=1$. Besides, 
     invoking $\breve{\mL}=(\overline{\mV} \otimes\overline{\mR} )\cM_{1}(\bcS )^{H}$, 
     we rewrite the variation form as
\begin{align*}
    &|\langle\cM_1(\bcE_1)\breve{\mL}(\breve{\mL}^{H}\breve{\mL})^{-1}\bSigma_{\star,1},\mN\rangle|
    \\&{=}
   | \langle\cM_1(\bcE_1)(\overline{\mV} \otimes\overline{\mR} )\cM_{1}(\bcS )^{H}(\breve{\mL}^{H}\breve{\mL})^{-1}\bSigma_{\star,1},\mN\rangle|
    \\&{=}|\langle \M_{1}(\bcE_1),\M_{1}((\mN \bSigma_{\star,1}(\breve{\mL}^{H}\breve{\mL})^{-1},\mR,\mV)\bcdot\bcS)\rangle |
    \\&{=}|\langle \G(p^{-1}\P_{\Omega}-\I)\G^{*}(\bcE_a+\bcE_b+\bcE_c+\bcE_d),(\mL_N,\mR,\mV)\bcdot\bcS\rangle |, \numberthis \label{eq:upbd_perturb_L}
\end{align*}
where we denote $\mL_N=\mN \bSigma_{\star,1}(\breve{\mL}^{H}\breve{\mL})^{-1}$, and the last equality results from the inner product between the tensor's matricization is equivalent that between tensors, $\bcE_1=\G(p^{-1}\P_{\Omega}{-}\I)\G^{*}(\bcE)$, and $\bcE = \bcE_a+\bcE_b+\bcE_c+\bcE_d$.   
For \eqref{eq:upbd_perturb_L}, it is obvious that we bound it separately and we bound the first part associated with $\bcE_a$ for example. We define two projection operators such that $\P_{T_1}(\bcE_a)=\bcE_a$ where $\bcE_a=(\mL,\mR,\mV)\bcdot\bDelta_S$ and $\P_{T_2}((\mL_N,\mR,\mV)\bcdot\bcS)=(\mL_N,\mR,\mV)\bcdot\bcS$, which are 
$$\P_{T_1}(\bcZ)=(\mP_L,\mP_R,\mI_s)\bcdot\bcZ,$$ $$\P_{T_2}(\bcZ)=(\mL_N(\mL_N^H\mL_N)^{-1}\mL_N^H,\mP_R,\mI_s)\bcdot\bcZ.$$
Thus the first part in \eqref{eq:upbd_perturb_L} is reformulated as
\begin{align*}
    &|\langle \P_{T_{2}}\G(\tilde{p}^{-1}\P_{\tilde{\Omega}}-\I)\G^{*}\P_{T_1}(\bcE_a),(\mL_N,\mR,\mV)\bcdot\bcS\rangle |
    \\&\leq \|\P_{T_{2}}\G(\tilde{p}^{-1}\P_{\tilde{\Omega}}-\I)\G^{*}\P_{T_1}\|\|\bcE_a\|_F\|(\mL_N,\mR,\mV)\bcdot\bcS\|_F
    \\&\leq 0.1\varepsilon \|\bcE_a\|_F \|\mN\|_F \|\bSigma_{\star,1}(\breve{\mL}^{H}\breve{\mL})^{-1}\cM_1(\bcS)(\mV\otimes\mR)^T\|
    \\&=0.1\varepsilon\|\bcE_a\|_F\|\bSigma_{\star,1}(\breve{\mL}^{H}\breve{\mL})^{-1}\breve{\mL}^H\|_F\leq 0.6\varepsilon \dist{\bF}{\bF_{\star}},
\end{align*}
provided $\varepsilon\leq 0.2$ and $\tilde{m}\geq C\varepsilon^{-2}\max\{q_1^2,q_2^2\}sn\log(sn)$ where $q_i = \max_{k,j}\|\P_{T_i}\bcH_{k,j}\|_F$ for $i=1,2$, and $C$ is some universal constant. 

In the last inequality, the bounds of $\|\bcE_a\|_F\leq 3\dist{\bF}{\bF_{\star}}$, $\|\bSigma_{\star,1}(\breve{\mL}^{H}\breve{\mL})^{-1}\breve{\mL}^H\|_F\leq(1-\varepsilon)^{-3}
$  can be checked in Lemma~\ref{lemma:perturb_bounds}. 
In the third line, we invoke $\|\P_{T_{2}}\G(\tilde{p}^{-1}\P_{\tilde{\Omega}}-\I)\G^{*}\P_{T_1}\|\leq 0.1\varepsilon$ from Lemma~\ref{lem:PTconc}. 
Now we check the bounds for $q_1$ and $q_2$.

For $q_1 = \max_{k,j}\|\P_{T_1}\bcH_{k,j}\|_F$, from Lemma~\ref{lem:PT_HTbasis_bd}, we have
\begin{align*}
    q_1=\max_{k,j}\|\P_{T_1}\bcH_{k,j}\|_F\leq \|\mL\|_{2,\infty}/\sigma_r(\mL)\lesssim \sqrt{\frac{\mu_0 c_{\mathrm{s}} r \kappa^2}{n}},
\end{align*}
where the bounds of $\|\mL\|_{2,\infty}$, $\sigma_r(\mL)$ can be checked in  Lemma~\ref{lemma:perturb_bounds}.
Similarly, we bound $q_2$ as
\begin{align*}
    q_2=\max_{k,j}\|\P_{T_2}\bcH_{k,j}\|_F\leq  \|\mR\|_{2,\infty}/\sigma_r(\mR)\lesssim \sqrt{\frac{\mu_0 c_{\mathrm{s}} r \kappa^2}{n}}.
\end{align*}
 Consequently, the sample condition is summarized as $\tilde{m}\geq O(\varepsilon^{-2}\mu_0 c_{\mathrm{s}} s r \kappa^2\log(sn))$. 

 Following the previous route, we can bound the other three parts in \eqref{eq:upbd_perturb_L} and finally \eqref{eq:upbd_perturb_L} is upper bounded as:
 \begin{align*}
     |\langle\cM_1(\bcE_1)\breve{\mL}(\breve{\mL}^{H}\breve{\mL})^{-1}\bSigma_{\star,1},\mN\rangle|\leq 2.4 \varepsilon \dist{\bF}{\bF_{\star}},
 \end{align*}
 which means that we can prove $$ \|\mI_1\|_F^2\leq 6\eta^2\varepsilon^2 \distsq{\mF}{\mF_{\star}}.$$ 
Similarly, we can establish the bounds for $\|\mJ_1\|_F^2$, $\|\mK_1\|_F^2$.  
 \paragraph{Bounding $\|\mP_1\|_F$}
  For any tensor $\bcZ$, 
  \begin{align*}
&(\mL^{\dagger},\mR^{\dagger},\mV^{\dagger})\bcdot((\mP_L,\mP_R,\mI_s)\bcdot\bcZ)\\&=(\mL^{\dagger}\mP_L,\mR^{\dagger}\mP_R,\mV^{\dagger})\bcdot\bcZ=(\mL^{\dagger},\mR^{\dagger},\mV^{\dagger})\bcdot\bcZ
\end{align*}
 from tensor algebra \eqref{eq:tensor_properties_1} and the fact $\mL^{\dagger}\mP_L=(\mL^H\mL)^{-1}\mL^H\mL(\mL^H\mL)^{-1}\mL^H=(\mL^H\mL)^{-1}\mL^H=\mL^{\dagger}$, and $\mR^{\dagger}\mP_R=\mR^{\dagger}$. Therefore, we reformulate $\|\mP_1/\eta\|_F$ as
 \begin{align*}
     &\|\mP_1/\eta\|_F =\|(\mL^{\dagger},\mR^{\dagger},\mV^{\dagger})\bcdot((\mP_L,\mP_R,\mI_s)\bcdot\bcE_1)\|_F
     \\&{=}\|(\mL^{\dagger},\mR^{\dagger},\mV^{\dagger})\bcdot(\P_{T_2}\bcE_1)\|_F\\
     &{=}\|(\mL^{\dagger},\mR^{\dagger},\mV^{\dagger})\bcdot\P_{T_2}(\G(\tilde{p}^{-1}\P_{\tilde{\Omega}}{-}\I)\G^{*}(\bcE_a{+}\bcE_b{+}\bcE_c{+}\bcE_d))\|_F, \numberthis \label{eq:upbd_perturb_S}
 \end{align*} where we invoke $\bcE_1=\G(\tilde{p}^{-1}\P_{\tilde{\Omega}}{-}\I)\G^{*}(\bcE)$,  $\bcE = \bcE_a+\bcE_b+\bcE_c+\bcE_d$, and define a projection operator
 $$\P_{T_2}(\bcZ)=(\mP_L,\mP_R,\mI_s)\bcdot\bcZ.$$
 We bound the second part associated with $\bcE_b$ in \eqref{eq:upbd_perturb_S} for example, where $\bcE_b=(\bDelta_L,\mR,\mV)\bcdot\bcS_\star$. Define a projection operator $\P_{T_1}$ such that $\P_{T_1}(\bcE_b)=\bcE_b$, which is 
 \begin{align*}
     \P_{T_1}(\bcZ)=(\bDelta_L(\bDelta_L^H\bDelta_L)\bDelta_L^H,\mP_R,\mI_s)\bcdot\bcZ,
 \end{align*}
for any tensor $\bcZ$.
Thus the second part in \eqref{eq:upbd_perturb_S} is rewritten as
\begin{align*}
    &\|(\mL^{\dagger},\mR^{\dagger},\mV^{\dagger})\bcdot\P_{T_2}(\G(\tilde{p}^{-1}\P_{\tilde{\Omega}}{-}\I)\G^{*}\P_{T_1}(\bcE_b))\|_F
    \\&\leq \|\mL^{\dagger}\|\|\mR^{\dagger}\|\|\mV^{\dagger}\| \|\P_{T_2}(\G(\tilde{p}^{-1}\P_{\tilde{\Omega}}{-}\I)\G^{*}\P_{T_1}\|\|\bcE_b\|_F
    \\&\leq 0.6\varepsilon \dist{\bF}{\bF_{\star}},
\end{align*}
provided $\varepsilon\leq 0.2$ and $\tilde{m}\geq C\varepsilon^{-2}\max\{q_1^2,q_2^2\}sn\log(sn)$ where $q_i = \max_{k,j}\|\P_{T_i}\bcH_{k,j}\|_F$ for $i=1,2$, and $C$ is some universal constant. The second line results from tensor algebra \eqref{eq:tensor_properties_3}. The bounds of $\|\mL^{\dagger}\|$, $\|\mR^{\dagger}\|$, $\|\mV^{\dagger}\|$ and $\|\bcE_b\|_F$ can be checked in Lemma~\ref{lemma:perturb_bounds}. In the last line, we invoke $\|\P_{T_{2}}\G(p^{-1}\P_{\tilde{\Omega}}-\I)\G^{*}\P_{T_1}\|\leq 0.1\varepsilon$ from Lemma~\ref{lem:PTconc}.

Similar to previous route for bounding $q_1,q_2$, by combining Lemma~\ref{lem:PT_HTbasis_bd} and Lemma~\ref{lemma:perturb_bounds}, we establish that $\max\{q_1^2,q_2^2\}\lesssim\mu_0 c_{\mathrm{s}} r\kappa^2/n$. Consequently, the sample condition is summarized as $\tilde{m}\geq O(\varepsilon^{-2}\mu_0 c_{\mathrm{s}} s r \kappa^2\log(sn))$. 

Following the previous route, we can bound the other three parts in \eqref{eq:upbd_perturb_S} and finally \eqref{eq:upbd_perturb_S} is upper bounded as:
 \begin{align*}
     \|\mP_1/\eta\|_F\leq 2.4 \varepsilon \dist{\bF}{\bF_{\star}}.
 \end{align*}


\subsection{Proofs of Lemma~\ref{lem:PTconc}}\label{pf_lem_PTconc}
The sampling set is $\Omega=\{\va_i|i=1,\cdots,\tilde{m}\}$ where the indices $\va_i$ is drawn independently and uniformly from $\{0,\cdots,s-1\}\times\{0,\cdots,n-1\}$. Let $\Omega_i= \{\va_i\}$, and then it is obvious $\P_{\tilde{\Omega}}(\mX)=\sum_{i=1}^{\tilde{m}}\P_{\Omega_i}(\mX)$. We rewrite that
\begin{small}
        \begin{align*}
       &\(\P_{T_2}\G(\tilde{p}^{-1}\P_{\tilde{\Omega}}-\I)\G^{*}\P_{T_1}\)(\bcZ)\\&
       =\sum_{i=1}^{\tilde{m}}\big( \tilde{p}^{-1}\P_{T_2}\G \P_{\Omega_i}\G^{*}\P_{T_1}-\frac{1}{\tilde{m}}\P_{T_2}\G\G^{*}\P_{T_1}\big)(\bcZ)
       \\&=\sum_{i=1}^{\tilde{m}} \(\tilde{\cX}_{i}-\overline{\cX}\)(\bcZ)=\sum_{i=1}^{\tilde{m}}\cX_{i}(\bcZ),
    \end{align*}
    \end{small}where we denote $\tilde{\cX}_{i}=\tilde{p}^{-1}\P_{T_2}\G \P_{\Omega_i}\G^{*}\P_{T_1}$, $\overline{\cX}=\frac{1}{\tilde{m}}\P_{T_2}\G\G^{*}\P_{T_1}$ and $\cX_i=\tilde{\cX}-\overline{\cX}$. 
    Also, we rewrite that 

\begin{small}
    \begin{align*}
        \tilde{\cX}_{i}(\bcZ)&{=}\frac{1}{\tilde{p}}\P_{T_2}\langle \bcH_{a_i,b_i},\P_{T_1}\bcZ\rangle\bcH_{a_i,b_i}{=}
        \frac{1}{\tilde{p}}\langle \P_{T_1}\bcH_{a_i,b_i},\bcZ\rangle\P_{T_2}\bcH_{a_i,b_i},
    \end{align*}
    \end{small}
    where each pair $(a_i,b_i)$ is drawn uniformly from $\{0,\cdots,n-1\}\times\{0,\cdots,s-1\}$. And we have
    \begin{align*}
        \overline{\cX}(\bcZ)=\frac{1}{\tilde{m}}\sum_{k,j}\langle \P_{T_1}\bcH_{k,j},\bcZ\rangle\P_{T_2}\bcH_{k,j},
    \end{align*}
 \vspace{-15pt}
 \begin{small}
          \begin{align*}
         \|\overline{\cX}\|&{=}\frac{1}{\tilde{m}}\sup_{\|\bcZ\|_F=1}\|\sum_{k,j}\langle \P_{T_1}\bcH_{k,j},\bcZ\rangle\P_{T_2}\bcH_{k,j}\|_F
         \\ &{\leq} \frac{sn}{\tilde{m}}\max_{k,j}\|\P_{T_1}\bcH_{k,j}\|_F\max_{k,j}\|\P_{T_2}\bcH_{k,j}\|_F=\frac{sn}{\tilde{m}}q_1 q_2. \numberthis \label{eq:PTconc_bd1_mean}
     \end{align*}    
      \end{small}We can bound $\|\tilde{\cX}_{i}\|$ from a similar route. Therefore,
     \begin{align*}
         \|\cX_{i}\|
         &{\leq} \frac{2sn}{\tilde{m}}\max_{k,j}\|\P_{T_1}\bcH_{k,j}\|_F\max_{k,j}\|\P_{T_2}\bcH_{k,j}\|_F{=} \frac{2sn}{\tilde{m}}q_1 q_2.
     \end{align*}
Besides, 
$
    \E(\cX^{*}_i\cX_i)=
    \E(\tilde{\cX}_{i}^{*}\tilde{\cX}_{i})-\overline{\cX}^{*}\overline{\cX}, 
$
and we point out a relation that 
\begin{align*}
    \tilde{\cX}_{i}^{*}\tilde{\cX}_{i}&{=}{\tilde{p}}^{-2}(\P_{T_2}\G\P_{\Omega_i}\G^{*}\P_{T_1})^{*} \P_{T_2}\G\P_{\Omega_i}\G^{*}\P_{T_1}\\
    &{=}\frac{s^2 n^2}{{\tilde{m}}^2}(\P_{\Omega_i}\G^{*}\P_{T_1})^{*}(\P_{T_2}\G\P_{\Omega_i})^{*}(\P_{T_2}\G\P_{\Omega_i})(\P_{\Omega_i}\G^{*}\P_{T_1}), 
\end{align*}
where $\Omega_i\in\{1,\cdots,s\}\times\{1,\cdots,n\}$. Thus 
\begin{small}
\begin{align*}
&\|\E(\tilde{\cX}_{i}^{*}\tilde{\cX}_{i})\|
{\overset{(a)}{\leq}}\frac{s^2 n^2}{\tilde{m}^2}\max_{\Omega_i}\|\P_{\Omega_i}\G^{*}\P_{T_2}\|^2\|\E\((\P_{\Omega_i}\G^{*}\P_{T_1})^{*}(\P_{\Omega_i}\G^{*}\P_{T_1})\)\|
\\&=\frac{s^2 n^2}{\tilde{m}^2}\max_{\Omega_i}\|\P_{\Omega_i}\G^{*}\P_{T_2}\|^2\|\E\(\P_{T_1}\G\P_{\Omega_i}\G^{*}\P_{T_1}\)\|
\\&=\frac{sn}{\tilde{m}^2}\max_{\Omega_i}\|\P_{\Omega_i}\G^{*}\P_{T_2}\|^2\|\P_{T_1}\G\G^{*}\P_{T_1}\|\overset{(b)}{\leq}\frac{sn}{\tilde{m}^2} q_2^2,
\end{align*}
\end{small}where $(a)$ is from 
the matrix version of linear operators:
\begin{align*}
    \|\sum_{i}\mA_i^H\mB_i^H\mB_i\mA_i\|&\leq\max_{i}\|\mB_i^H\mB_i\|\|\sum_{i}\mA_i^H\mA_i\|
    \\&\leq \max_{i}\|\mB_i\|^2\|\sum_{i}\mA_i^H\mA_i\|,
\end{align*}
and in $(b)$ we apply the fact that $\|\P_{T_1}\G\G^{*}\P_{T_1}\|\leq (\|\P_{T_1}\|\|\G\|)^2\leq 1$ and 
\begin{align*}
{\|\P_{\Omega_i}\G^{*}\P_{T_2}\|}&=\max_{\|\bcZ\|_F{=}1}\|{\langle}\bcH_{a_i,b_i},\P_{T_2}\bcZ\rangle \ve_{b_i}\ve_{a_i}^T\|_F\\
&\leq\max_{k,j}\|\P_{T_2}\bcH_{k,j}\|_F=q_2.
\end{align*}
Then we can bound the following quantity:
 \begin{align*}
     \|\E(\sum_{i=1}^{\tilde{m}}\cX^{*}_{i}\cX_{i})\|
     &\leq {\tilde{m}}\|\E(\tilde{\cX}_{i}^{*}\tilde{\cX}_{i})\| +{\tilde{m}}\|\overline{\cX}^{*}\overline{\cX}\|
    \\ &\leq \frac{s n}{{\tilde{m}}}q_2^2+{\tilde{m}}\|\overline{\cX}\|\|\overline{\cX}\|\leq \frac{2sn}{{\tilde{m}}}\max\{q_1^2,q_2^2\},
 \end{align*}
 where in the last inequality we invoke that fact that $\|\overline{\cX}\|=\frac{1}{{\tilde{m}}}\|\P_{T_2}\G\G^{*}\P_{T_1}\|\leq\frac{1}{{\tilde{m}}}$ and $\|\overline{\cX}\|\leq \frac{sn}{{\tilde{m}}}q_1 q_2\leq\frac{sn}{{\tilde{m}}}\max\{q_1^2,q_2^2\}$ from \eqref{eq:PTconc_bd1_mean}. 
 Similarly, we can obtain that 
$
\|\E(\sum_{i=1}^{\tilde{m}}\cX_{i}\cX_{i}^{*})\|\leq\frac{2sn}{{\tilde{m}}}\max\{q_1^2,q_2^2\}.
$
 
 By applying Bernstein inequality \cite[Theorem 1.6]{Tropp2012}, with probability $1-(sn)^{-2}$, one has
 \begin{small} 
 \begin{align*}
    \| \sum_{i=1}^{\tilde{m}} \cX_{i}(\bcZ)\|
    &{\lesssim}\big(\sqrt{\frac{\max\{q_1^2,q_2^2\}sn\log(sn)}{{\tilde{m}}}}{+}\frac{\max\{q_1^2,q_2^2\}sn\log(sn)}{{\tilde{m}}}\big)
    \\&\lesssim \sqrt{\frac{\max\{q_1^2,q_2^2\}sn\log(sn)}{{\tilde{m}}}},
 \end{align*}
  \end{small} provided $\tilde{m}\geq \max\{q_1^2,q_2^2\}sn\log(sn)$.
 \vspace{-5pt}
  \section{} \label{apd:pf_noise}
  
{
  \textbf{Proof of Corollary~\ref{cor:recovery_noise}.}
   Similarly in the noiseless case, we need to bound $ \dist{\mF^k}{\mF_\star}$  and
    apply an inductive way to prove that  the upper bound of $\dist{\mF^k}{\mF_\star}$ is 
\begin{small}
\begin{align*}
    \dist{\mF^k}{\mF_\star}\leq \varepsilon_0\rho^k\sigma_{\min}(\bcZ_\star)/3+C_3\frac{1-\rho^{k+1}}{1-\rho}\sigma\sqrt{{n^2\max\{s,n\}}}, \numberthis \label{eq:dist_noise_bd_induc}
\end{align*}
    \end{small}where $\rho=1-0.3\eta$. 
   For $k=0$, from Lemma~\ref{lem:init_noise}, the previous inequality holds provided $\hat{m}\gtrsim O(\varepsilon_0^{-2}\mu_0 c_{\mathrm{s}} sr^3\kappa^2\log^2(sn))$ and $C_3\geq C_1$. Besides, the incoherence condition \eqref{eq:incoh_proj_F0} holds. 
   
   Next, supposing  \eqref{eq:dist_noise_bd_induc} and \eqref{eq:incoh_iter_full} hold for the $k$-th step, ${\sigma}\leq c_0\frac{\sigma_{\max}(\bcZ_\star)}{\kappa\sqrt{n^2\max\{s,n\}}}$ where $c_0< O(\varepsilon_0)$ is a sufficiently small constant. Then invoke Lemma~\ref{lem:iter_converge_noise} to obtain 
   \begin{align*}
     &  \dist{{\mF'}^{k+1}}{\mF_{\star}} \\
     &\leq \varepsilon_0\rho^{k+1}\sigma_{\min}(\bcZ_\star)/3{+}C_3\sigma\sqrt{n^2\max\{s,n\}}(1+\rho\frac{1-\rho^{k+1}}{1-\rho})
       \\&=  \varepsilon_0\rho^{k+1}\sigma_{\min}(\bcZ_\star)/3{+}C_3\frac{1-\rho^{k+1}}{1-\rho}\sigma\sqrt{{n^2\max\{s,n\}}},
   \end{align*}
   provided $\hat{m}\gtrsim O(\varepsilon_0^{-2}\mu_0 c_{\mathrm{s}} sr\kappa^2\log^2(sn))$ and we invoke that $C_3\geq C_2$. We can conclude that  $ \dist{{\mF'}^{k+1}}{\mF_{\star}}\leq \varepsilon_0 \sigma_{\min}(\bcZ_\star)$ from the condition of $\sigma$. Therefore, as in the noiseless case, we can obtain that 
   \begin{align*}
       \dist{{\mF}^{k+1}}{\mF_{\star}} \leq  \dist{{\mF'}^{k+1}}{\mF_{\star}} 
   \end{align*}
   after the projection step, and the incoherence condition \eqref{eq:incoh_iter_full} for the $k+1$-th step holds. We finish the induction route and conclude that $m=(k+1)\hat{m}\gtrsim O(\varepsilon_0^{-2}\mu_0 c_{\mathrm{s}} sr^3\kappa^2\log^2(sn))$. Last, combining the fact $0.88\leq\rho<1$, we define a  constant $C_0$ such that $C_0/3\geq C_3 \frac{1-\rho^{k+1}}{1-\rho}$. 
   }
   {
   \begin{lemma}[Initialization with noise] \label{lem:init_noise}
  Suppose the conditions in Lemma~\ref{lem:init} hold,   
  the noise matrix $\mE\in\C^{s\times n}$ has independent sub-Gaussian entries with parameter $\sigma$ \cite{Vershynin2018},  and 
  ${\sigma}\leq c_1\frac{\sigma_{\max}(\bcZ_\star)}{\kappa\sqrt{n^2\max\{s,n\}}}$ for some sufficiently small constant $c_1$.  For $\mF_0=(\mL^0,\mR^0,\mV^0,\bcS^0)$ in Algorithm \ref{alg:init}, 
  with probability at least $1-O\((sn)^{-2}\)$
\begin{align*}
    \dist{\mF^0}{\mF_{\star}} \le \varepsilon_0\sigma_{\min}(\bcZ_\star)/3+C_1\sigma\sqrt{{n^2\max\{s,n\}}},
\end{align*}
holds provided $\hat{m}\geq O(\varepsilon_0^{-2}\mu_0 c_{\mathrm{s}} sr^3\kappa^2\log^2(sn))$, $C_1$ is some constant. Besides, 
  \begin{align*}
      \max\{\|\mL^0(\breve{\mL}^0)^H\|_{2,\infty},\|\mR^0(\breve{\mR}^0)^H\|_{2,\infty}\}\leq B/\sqrt{n}, \numberthis \label{eq:incoh_proj_F0_noise}
  \end{align*}
\end{lemma}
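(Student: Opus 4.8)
The plan is to re-run the proof of the noiseless Lemma~\ref{lem:init} while carrying a single additive perturbation tensor through every estimate. Writing $\mY_\star+\D(\mE)=\G^*(\bcZ_\star)+\D(\mE)$ and using $\D(\mE)=\D^{-1}\H^*\H(\mE)=\G^*(\H(\mE))$, the noisy spectral estimate splits cleanly as $\bcZ^0_{e}=\hat p^{-1}\G\P_{\Omega_0}\G^*(\bcZ_\star)+\bcN$, where $\bcN:=\hat p^{-1}\G\P_{\Omega_0}\G^*(\H(\mE))=\hat p^{-1}\sum_{i=1}^{\hat m}\sqrt{w_{a_i}}\,\mE(j_i,a_i)\,\bcH_{a_i,j_i}$ and $(j_i,a_i)$ are the i.i.d.\ uniform samples defining $\Omega_0$. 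Exactly as in the noiseless argument one decomposes $\bcZ_\star$ along the (now noise-perturbed) column spaces of ${\mL'}^0,{\mR'}^0,\mV^0$, handles the sequential $\mV^0$ step with Lemma~\ref{lem:init_V0} and the rank-$r$-approximation trick, and reaches $\dist{\mF'^0}{\mF_\star}\le(\sqrt2+1)^{3/2}\|({\mL'}^0,{\mR'}^0,\mV^0)\bcdot\bcS^0-\bcZ_\star\|_F\lesssim\sqrt r\max_{i=1,2}\|\cM_i(\bcZ^0_{e}-\bcZ_\star)\|$, with the mode-$3$ term reduced (after the mode-$1$ compression) to $\sqrt r\,\|\cM_1(\hat\bcZ^0_{e}-\bcZ_\star)\|\lesssim\sqrt r\,\|\cM_1(\bcZ^0_{e}-\bcZ_\star)\|$. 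The triangle inequality $\|\cM_i(\bcZ^0_{e}-\bcZ_\star)\|\le\|\cM_i(\bcZ^0-\bcZ_\star)\|+\|\cM_i(\bcN)\|$, with $\bcZ^0:=\hat p^{-1}\G\P_{\Omega_0}\G^*(\bcZ_\star)$ the would-be noiseless estimate, then isolates the new term: the first summand is bounded verbatim by Lemma~\ref{lem:init_conc_m1}, contributing $\lesssim\varepsilon_0\sigma_{\min}(\bcZ_\star)/\sqrt r$ under $\hat m\gtrsim O(\varepsilon_0^{-2}\mu_0 c_{\mathrm s}sr^3\kappa^2\log(sn))$.

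The only genuinely new estimate is a spectral-norm bound on $\cM_i(\bcN)$, $i=1,2$. Conditioning on $\Omega_0$, $\cM_i(\bcN)=\hat p^{-1}\sum_i\sqrt{w_{a_i}}\,\mE(j_i,a_i)\,\cM_i(\bcH_{a_i,j_i})$ is a sum of independent mean-zero rank-one matrices whose operator-norm scale is $\hat p^{-1}|\mE(j_i,a_i)|$ (since $\|\cM_i(\bcH_{k,j})\|=1/\sqrt{w_k}$ by Definition~\ref{def:HankelT}), hence sub-Gaussian with parameter $\lesssim\hat p^{-1}\sigma$. The matrix variance proxy comes out of the elementary Hankel identities $\sum_{a}w_a\mH_a\mH_a^H=n_2\mI_{n_1}$ and $\sum_{a}w_a\mH_a^H\mH_a=n_1\mI_{n_2}$ applied to $\sum_i\E[\cM_i(\bcH_{a_i,j_i})\cM_i(\bcH_{a_i,j_i})^H]$ and its transpose, giving a variance of order $\hat p^{-1}sn\,\sigma^2=(s^2n^2/\hat m)\sigma^2$ against ambient dimension $O(sn)$; the empirical sum $\sum_i w_{a_i}\mH_{a_i}\mH_{a_i}^H$ over the random $\Omega_0$ is itself controlled by matrix Chernoff, each summand being PSD and $\preceq\mI$. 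Truncating the sub-Gaussian entries at level $\sigma\sqrt{\log(sn)}$ and applying the matrix Bernstein inequality \cite[Thm.~1.6]{Tropp2012} then yields, with probability $1-O((sn)^{-2})$, $\|\cM_i(\bcN)\|\lesssim\sigma\,(sn/\sqrt{\hat m})\sqrt{\log(sn)}$; the deviation term from the truncation is exactly what upgrades $\log(sn)$ to $\log^2(sn)$ in the sample size, and the deliberately loose packaging of this bound introduces the $\max\{s,n\}$ factor. Substituting $\hat m\gtrsim O(\varepsilon_0^{-2}\mu_0 c_{\mathrm s}sr^3\kappa^2\log^2(sn))$ gives $\sqrt r\,\|\cM_i(\bcN)\|\lesssim\sigma\sqrt{n^2\max\{s,n\}}$ with room to spare, and therefore $\dist{\mF'^0}{\mF_\star}\le\varepsilon_0\sigma_{\min}(\bcZ_\star)/3+C_1\sigma\sqrt{n^2\max\{s,n\}}$.

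Finally, under the noise hypothesis $\sigma\le c_1\sigma_{\max}(\bcZ_\star)/(\kappa\sqrt{n^2\max\{s,n\}})$ with $c_1$ small enough, the noise term is itself $\le\varepsilon_0\sigma_{\min}(\bcZ_\star)/3$ (using $\sigma_{\max}(\bcZ_\star)/\kappa=\sigma_{\min}(\bcZ_\star)$), so $\dist{\mF'^0}{\mF_\star}\le\varepsilon_0\sigma_{\min}(\bcZ_\star)$; Lemma~\ref{lem:proj} then transfers the displayed bound to the projected quadruple $\mF^0$ and simultaneously yields the incoherence condition \eqref{eq:incoh_proj_F0_noise}. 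I expect the main obstacle to be the sub-Gaussian matrix-Bernstein bound on $\|\cM_i(\bcN)\|$: obtaining the right dependence on $\hat m$, $\sigma$, and the ambient dimensions while handling the sampling-with-replacement dependence among repeated noise entries and the sub-exponential tail created by the truncation. Everything else is a mechanical re-run of the proof of Lemma~\ref{lem:init} with one extra triangle inequality inserted.
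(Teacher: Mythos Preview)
Your overall strategy matches the paper's: re-run the noiseless initialization argument with $\bcZ^0_e=\bcZ^0+\bcN$, isolate the noise via the triangle inequality on $\|\cM_i(\bcZ^0_e-\bcZ_\star)\|$, and close with Lemma~\ref{lem:proj}. (Your explicit triangle-inequality bookkeeping is in fact slightly cleaner than the paper's shorthand ``bounded in the noiseless setting'' for the first four terms, since the projectors $\bP_L,\bP_R,\bP_V$ are themselves built from noisy data and so pick up their own noise contribution.) The genuine difference is in how $\|\cM_i(\bcN)\|$ is bounded. The paper takes a two-line route: Lemma~\ref{lem:weight_noise} factors through the operator-norm chain $\|\cM_i\|\,\|\G\|\,\|\D\|\le\sqrt n$ to get $\|\cM_i(\bcE_e)\|\le\hat p^{-1}\sqrt n\,\|\P_{\Omega_0}(\mE)\|$, and Lemma~\ref{lem:robs_noise} (adapted from \cite{Chen2020}) gives $\|\P_{\Omega_0}(\mE)\|\lesssim\sigma\sqrt{\hat p\max\{s,n\}}$; combining these, $\sqrt r\,\|\cM_i(\bcE_e)\|\lesssim\sigma\sqrt{n^2\max\{s,n\}}$ once $\hat m\gtrsim sr\log^2(sn)$. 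This makes the origin of the $\max\{s,n\}$ factor and the extra $\log(sn)$ in the sample size completely transparent, and it sidesteps all three technical detours you anticipate (sub-Gaussian truncation, matrix Chernoff for the empirical variance proxy $\sum_i w_{a_i}\mH_{a_i}\mH_{a_i}^H$, and the replacement-sampling dependence among repeated noise entries). Your direct matrix-Bernstein argument on the lifted noise would also work and is potentially slightly sharper (yielding $s$ rather than $\max\{s,n\}$ before you deliberately loosen it), but the paper's modular approach buys substantial simplicity by keeping the noise analysis in the original $s\times n$ domain rather than in the lifted tensor.
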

\begin{proof}
    See Appedix~\ref{pf:init_noise}. 
\end{proof}

  \begin{lemma}[Local convergence with noise] \label{lem:iter_converge_noise}
  Suppose the conditions in Lemma~\ref{lem:linconverge} hold, and the noise matrix $\mE\in\C^{s\times n}$ has independent sub-Gaussian entries with parameter $\sigma$ \cite{Vershynin2018}.  
  If $\mF$
is independent of the {current sampling set $\tilde{\Omega}$ with $\tilde{m}$ samples},
 with probability at least $1-O\((sn)^{-2}\)$ we have
\begin{align*}
    \dist{\mF_{+}}{\mF_{\star}} \leq (1-0.3\eta) \dist{\mF}{\mF_{\star}}+C_{2}\sigma\sqrt{{n^2\max\{s,n\}}},
\end{align*}
provided {$\tilde{m}$}
$\geq O(\varepsilon^{-2}\mu_0 c_{\mathrm{s}} sr\kappa^2\log^2(sn))$ and $\eta\leq 0.4$, where the next iterate $\mF_{+}=(\mL_{+},\mR_{+},\mV_{+},\bcS_{+})$ is updated in \eqref{eq:ScalHT}, $C_2$ is some constant.  
  \end{lemma}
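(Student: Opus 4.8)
The plan is to run the argument of Lemma~\ref{lem:linconverge} and keep track of the single extra term that the noise introduces in each scaled gradient update. In the noisy model $\mY_\star$ is replaced by $\mY_\star+\D(\mE)$; since $\G^*\bcZ_\star=\mY_\star$, writing $\bcE=(\mL,\mR,\mV)\bcdot\bcS-\bcZ_\star$ and $\bcE_1=\G(\tilde p^{-1}\P_{\tilde{\Omega}}-\I)\G^{*}\bcE$ exactly as before, one has e.g. $\nabla_{\mL}f(\mF)=\cM_1\big(\bcE+\bcE_1-\tilde p^{-1}\G\P_{\tilde{\Omega}}\D(\mE)\big)\breve{\mL}$, and analogous formulas for $\nabla_{\mR},\nabla_{\mV},\nabla_{\bcS}$ (the Hankel penalty is noise-free). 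Consequently each block of the distance metric splits as $\|\,\bullet_0-\bullet_1+\bullet_2\,\|_F$, where $\bullet_0,\bullet_1$ are precisely the ``main descent'' and ``Hankel-sampling perturbation'' quantities already controlled in Lemma~\ref{lem:linconverge}, and the new block is, for $\mL$, $\mI_2=\eta\,\cM_1(\tilde p^{-1}\G\P_{\tilde{\Omega}}\D(\mE))\breve{\mL}(\breve{\mL}^H\breve{\mL})^{-1}\bSigma_{\star,1}$, with counterparts $\mJ_2,\mK_2,\mP_2$. Using $\|\bullet_0-\bullet_1+\bullet_2\|_F^2\le(1+t)\|\bullet_0-\bullet_1\|_F^2+(1+t^{-1})\|\bullet_2\|_F^2$, summing the four blocks, and invoking the bound $\sum_\bullet\|\bullet_0-\bullet_1\|_F^2\le(1-0.5\eta)^2\distsq{\mF}{\mF_\star}$ from Lemma~\ref{lem:linconverge}, it suffices to bound $\|\mI_2\|_F,\|\mJ_2\|_F,\|\mK_2\|_F,\|\mP_2\|_F$; choosing $t$ a small multiple of $\eta$ then converts the rate to $1-0.3\eta$ and collects the noise into the additive term $C_2\sigma\sqrt{n^2\max\{s,n\}}$.

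To estimate the noise blocks, I would first note that the scaled factors they contain are $O(1)$: the perturbation bounds already used in the noiseless analysis give $\|\breve{\mL}(\breve{\mL}^H\breve{\mL})^{-1}\bSigma_{\star,1}\|\le(1-\varepsilon)^{-3}$, $\|\mL(\mL^H\mL)^{-1}\|\le(1-\varepsilon)^{-1}$, and their $\mR,\mV$ analogues, so $\|\mI_2\|_F\lesssim\eta\,\|\cM_1(\tilde p^{-1}\G\P_{\tilde{\Omega}}\D(\mE))\|_F$ and $\|\mP_2\|_F\lesssim\eta\,\|(\mL^\dagger,\mR^\dagger,\mV^\dagger)\bcdot(\tilde p^{-1}\G\P_{\tilde{\Omega}}\D(\mE))\|_F$, etc. Next I split, using $\G\D=\H$, $\;\tilde p^{-1}\G\P_{\tilde{\Omega}}\D(\mE)=\H(\mE)+\G(\tilde p^{-1}\P_{\tilde{\Omega}}-\I)\D(\mE)=:\H(\mE)+\bcN_1$. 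Because $\mL^\dagger\mP_L=\mL^\dagger$ and $\breve{\mL}=(\overline{\mV}\otimes\overline{\mR})\cM_1(\bcS)^H$ (so $\cM_1(\bcX)\breve{\mL}=\cM_1((\mI_{n_1},\mP_R,\mP_V)\bcdot\bcX)\breve{\mL}$), each noise block sees $\H(\mE)$ and $\bcN_1$ only through a projector $(\mP_L,\mP_R,\mP_V)\bcdot$ or $(\mI_{n_1},\mP_R,\mP_V)\bcdot$, which is non-expansive in Frobenius norm; hence the deterministic part contributes $\lesssim\|\H(\mE)\|_F$. Finally $\|\H(\mE)\|_F^2=\sum_{k,a}w_a|\mE(k,a)|^2\le n\|\mE\|_F^2$, and standard sub-Gaussian concentration gives $\|\mE\|_F^2\lesssim sn\sigma^2$ and $\|\mE\|\lesssim\sigma\sqrt{\max\{s,n\}\log(sn)}$ with probability $1-O((sn)^{-2})$, so this part is $\lesssim\sigma\sqrt{n^2 s}\le\sigma\sqrt{n^2\max\{s,n\}}$ (the second bound on $\|\mE\|$ is what is needed in the mode-$3$ block, via $\|\cM_3(\H(\mE))\|\le\sqrt n\,\|\mE\|$).

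The random part $\bcN_1$ is where the real work lies, and is the main obstacle. Conditioning on $\mE$ — which is independent of the fresh sampling set $\tilde{\Omega}$, hence so are the projectors built from $\mF$ — and restricting to the high-probability event $\|\D(\mE)\|_\infty\le\sqrt n\,\|\mE\|_\infty\lesssim\sigma\sqrt{n\log(sn)}$, I would run a matrix-Bernstein/second-moment estimate in the spirit of Lemma~\ref{lem:PTconc} and the proof of Lemma~\ref{lem:init_conc_m1}, with $\D(\mE)$ in place of the low-rank input and the column spaces of the factors in place of $\P_{T_1}$; the relevant size parameters are $\max_{a,k}|[\D(\mE)](k,a)|\le\|\D(\mE)\|_\infty$ and the per-row/column energies $\max\{\|\D(\mE)\|_{2,\infty}^2,\|\D(\mE)^H\|_{2,\infty}^2\}\lesssim n\max\{s,n\}\,\sigma^2$ (equivalently $\|\D(\mE)\|_F^2\lesssim sn^2\sigma^2$), which yield $\|(\mP_L,\mP_R,\mP_V)\bcdot\bcN_1\|$-type quantities of order $\sqrt{sn\log(sn)/\tilde m}\cdot\sqrt{n\max\{s,n\}}\,\sigma$. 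Under the hypothesis $\tilde m\ge O(\varepsilon^{-2}\mu_0 c_{\mathrm{s}} s r\kappa^2\log^2(sn))$ this is $\lesssim\sigma\sqrt{n^2\max\{s,n\}}$ and in fact dominated by the $\H(\mE)$ term — the extra logarithmic factor, compared with Lemma~\ref{lem:linconverge}, is precisely the price of the truncation of $\|\D(\mE)\|_\infty$. Summing the four blocks then gives $\sum_\bullet\|\bullet_2\|_F\lesssim\eta\,\sigma\sqrt{n^2\max\{s,n\}}$ (the $\sqrt r$ and universal constants absorbed into $C_2$), and combining with the recursion of Lemma~\ref{lem:linconverge} and a union bound over the $\mE$-event and the $\tilde{\Omega}$-event completes the proof. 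The delicate points are carrying the conditioning on $\mE$ cleanly through the Bernstein estimate (since $\mF$ depends on $\mE$ but not on $\tilde{\Omega}$), handling that $\D(\mE)$ is only sub-Gaussian rather than bounded, and extracting the $\max\{s,n\}$ scale from $\|\mE\|$ and $\|\D(\mE)\|_{2,\infty}$.
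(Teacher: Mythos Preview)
Your proposal is correct and its outer structure matches the paper's (split each block into the descent term, the Hankel-sampling perturbation, and a new noise term; use Young's inequality to decouple them), but you take an unnecessarily laborious route for the noise block. The paper does \emph{not} split $\bcE_e=\tilde p^{-1}\G\P_{\tilde\Omega}\D(\mE)$ into $\H(\mE)+\bcN_1$. Instead it bounds $\mI_2,\mJ_2,\mK_2,\mP_2$ directly via a single \emph{spectral} norm estimate on $\cM_i(\bcE_e)$: since $\D$ commutes with $\P_{\tilde\Omega}$ and $\|\cM_i\|,\|\G\|\le1$, $\|\D\|\le\sqrt n$, one has the purely deterministic inequality $\|\cM_i(\bcE_e)\|\le\tilde p^{-1}\sqrt n\,\|\P_{\tilde\Omega}(\mE)\|$ (this is Lemma~\ref{lem:weight_noise}); then the one concentration input is $\|\P_{\tilde\Omega}(\mE)\|\lesssim\sigma\sqrt{\tilde p\max\{s,n\}}$ (Lemma~\ref{lem:robs_noise}, a standard noisy-matrix-completion lemma), giving $\|\cM_i(\bcE_e)\|\lesssim\sigma\sqrt{sn^2\max\{s,n\}/\tilde m}$. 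Combined with $\|\breve{\mL}(\breve{\mL}^H\breve{\mL})^{-1}\bSigma_{\star,1}\|_F\le\sqrt r(1-\varepsilon)^{-3}$ (note the $\sqrt r$, which is what the condition $\tilde m\gtrsim sr$ absorbs), this yields $\|\mI_2\|_F\lesssim\eta\,\sigma\sqrt{n^2\max\{s,n\}}$ in two lines. The ``delicate points'' you flag---conditioning on $\mE$ through the projectors, truncating $\|\D(\mE)\|_\infty$ for a Bernstein argument on $\bcN_1$---simply never arise, and the extra $\log$ in the sample complexity comes only from the hypothesis $\tilde p\max\{s,n\}\gtrsim\log^2(sn)$ of the spectral bound on $\P_{\tilde\Omega}(\mE)$. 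Your route works, but it buys nothing over the paper's and is substantially harder to execute cleanly.
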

\begin{proof}
      See Appedix~\ref{pf:iter_noise}.
\end{proof}
}
 \vspace{-15pt}
\subsection{Proof of Lemma~\ref{lem:init_noise}} \label{pf:init_noise}
{
As in the initialization under the noiseless setting,
\begin{align*}
     \dist{\mF'^0}{\mF_{\star}} &\leq(\sqrt{2}+1)^{3/2}\|(\mL'^{0},\mR'^{0},{\mV}^{0})\bcdot\bcS^{0}-\bcZ_{\star}\|_F.
\end{align*}
Denote $\bcZ^0=\hat{p}^{-1}\G\P_{\Omega_0}(\mY_\star)$ and $\bcE_e=\hat{p}^{-1}\G\P_{\Omega_0}(\D(\mE))$
\vspace{-0.2cm}
\begin{align*}
    \bcZ^0_{e}=\hat{p}^{-1}\G\P_{\Omega_0}(\mY_\star+\D(\mE))=\bcZ^0+\bcE_e.
\end{align*}
Similarly to the proof of Lemma~\ref{lem:init}, 
we can establish  
\begin{align*}
& \left\|(\mL'^{0},\mR'^{0},{\mV}^{0})\bcdot\bcS^{0}-\bcZ_{\star}\right\|_{F} = \left\|\bcZ_{\star}-(\bP_{L},\bP_{R},\bP_{V})\bcdot\bcZ^0_{e}\right\|_{F} \nonumber\\
&= \big\|(\bP_{L},\bP_{R},\bP_{V})\bcdot(\bcZ_{\star}-\bcZ^0_{e})-(\bP_{L},\bP_{R},\bP_{V})\bcdot\bcZ_{\star}+\bcZ_\star\big\|_F
\nonumber 
\\& \leq  \left\|\bP_{L_\perp}\cM_{1}(\bcZ_{\star})\right\|_F {+} \left\|\bP_{R_\perp}\cM_{2}(\bcZ_{\star})\right\|_F {+} \left\|\bP_{V_\perp}\cM_{3}(\bcZ_{\star})\right\|_F 
\\&\quad{+}\left\|(\bP_{L},\bP_{R},\bP_{V})\bcdot(\bcZ^0-\bcZ_{\star})\right\|_F
{+}\left\|(\bP_{L},\bP_{R},\bP_{V})\bcdot(\bcE_e)\right\|_F
, \numberthis \label{eq:TC_init_expand_noise}
\end{align*}
where we invoke $\bcS^0=\big(({\mL'}^0)^H,({\mR'}^0)^H,(\mV^0)^H\big)\bcdot \bcZ^0$ and the previous decomposition of $\bcZ_\star$. 
The first four terms have been bounded in our noiseless setting, and we give their results directly, which are $\left\|\bP_{L_\perp}\cM_{1}(\bcZ_{\star})\right\|_F,\left\|\bP_{R_\perp}\cM_{2}(\bcZ_{\star})\right\|_F,\left\|\bP_{V_\perp}\cM_{3}(\bcZ_{\star})\right\|_F,$ ~~$\left\|(\bP_{L},\bP_{R},\bP_{V})\bcdot(\bcZ^0{-}\bcZ_{\star})\right\|_F\lesssim \varepsilon_0 \sigma_{\min}(\bcZ_\star)$.  
, provided $\hat{m}\geq O(\varepsilon_0^{-2}\mu_0 c_{\mathrm{s}} sr^3\kappa^2\log(sn))$. For the last term
\begin{align*}
    &\left\|(\bP_{L},\bP_{R},\bP_{V})\bcdot(\bcE_e)\right\|_F\leq\sqrt{r}\|\hat{p}^{-1}\cM_1(\G\P_{\Omega_0}(\D(\mE)))\| 
    \\&\lesssim  \sigma\sqrt{\frac{sn^2r\max\{s,n\}}{\hat{m}}}\lesssim \sigma\sqrt{{n^2\max\{s,n\}}},
\end{align*}
provided $\hat{m}\geq O(sr\log^2(sn))$, where 
the last inequality results from Lemma~\ref{lem:robs_noise} and Lemma~\ref{lem:weight_noise}. We conclude that 
\begin{align*}
    \dist{\mF'^0}{\mF_{\star}} \leq  \varepsilon_0\sigma_{\min}(\bcZ_\star)/3+C_1\sigma\sqrt{{n^2\max\{s,n\}}}.
\end{align*}
When ${\sigma}\leq 
c_1\frac{\sigma_{\max}(\bcZ_\star)}{\kappa\sqrt{n^2\max\{s,n\}}}$ for some sufficiently small constant $c_1$ and $\varepsilon_0<1$, invoking {Lemma~\ref{lem:proj}}, we obtain
\begin{align*}
    \dist{\mF^0}{\mF_{\star}} &\leq \dist{\mF'^0}{\mF_{\star}} . 
\end{align*}
and 
$ \max\{\|\mL^0(\breve{\mL}^0)^H\|_{2,\infty},\|\mR^0(\breve{\mR}^0)^H\|_{2,\infty}\}\leq B/\sqrt{n}$. 
}
 \vspace{-15pt}
 \subsection{Proof of Lemma~\ref{lem:iter_converge_noise}}\label{pf:iter_noise}
{
 We introduce the notations such as $\bDelta_{L},\bcE_a,\bcE,\bcE_1,\mP_L,\mP_R$, as in the noiseless setting. Additionally, denote $\bcE_e=\tilde{p}^{-1}\G\P_{\tilde{\Omega}}\D(\mE)$. 
Recall the expansion of distance metric in \eqref{eq:dist_updt_splitbd} and suppose the factor quadruple  $\mF=(\mL, \mR , \mV,\bcS)$ is aligned with $\mF_\star$ as before.  
We bound the first term in \eqref{eq:dist_updt_splitbd} as follows: 
\begin{align*}
     &\|({\mL} _{+}\bQ_{1}{-}\mL_{\star})\bSigma_{\star,1}\|_F^2
     \\&=\|\big(\bDelta_L{-}\eta(\cM_1(\bcE+\bcE_1+\bcE_e))\breve{\mL}(\breve{\mL}^{H}\breve{\mL})^{-1}\big)\bSigma_{\star,1}\|_F^2
     \\&=\| \mI_0-(\mI_1+\mI_2)\|_F^2\leq (1+t)\|\mI_0\|_F^2+(1+1/t)\|\mI_1+\mI_2\|_F^2\\
     &\leq (1+t)\|\mI_0\|_F^2+2(1+1/t)(\|\mI_1\|_F^2+\|\mI_2\|_F^2),
\end{align*}
where $\mI_0=\bDelta_L-\eta\cM_1(\bcE)\breve{\mL}(\breve{\mL}^{H}\breve{\mL})^{-1}\big)\bSigma_{\star,1}$, $\mI_1=\eta\cM_1(\bcE_1)\breve{\mL}(\breve{\mL}^{H}\breve{\mL})^{-1}\bSigma_{\star,1},$ have been defined before, and
\begin{align*}
  \mI_2& = \eta\cM_1(\bcE_e)\breve{\mL}(\breve{\mL}^{H}\breve{\mL})^{-1}\bSigma_{\star,1}, 
\end{align*}
   Similarly, we can define $\mJ_0, \mJ_1, \mJ_2$ for $ \|({\mR} _{+}\bQ_{2}{-}\mR_{\star})\bSigma_{\star,2}\|_F^2$ and $\mK_0, \mK_1,\mK_2$ for $\|({\mV} _{+}\bQ_{3}{-}\mV_{\star})\bSigma_{\star,3}\|_F^2$.  

For the last term in \eqref{eq:TC_init_expand_noise} mainly associated 
with $\bcS$:
\begin{align*}
&\left\|(\bQ_{1}^{-1},\bQ_{2}^{-1},\bQ_{3}^{-1})\bcdot\bcS _{+}-\bcS_{\star}\right\|_{F}^2
\\&=\|\bDelta_S-\eta(\mL^{\dagger},\mR^{\dagger},\mV^{\dagger})\bcdot(\bcE+\bcE_1+\bcE_e)\|_F^2
=\|\mP_0{-}\mP_1{-}\mP_2\|_F^2
\\&\leq (1+t)\|\mP_0\|_F^2+2(1+1/t)(\|\mP_1\|_F^2+\|\mP_2\|_F^2),
\end{align*}
where we denote $\mL^{\dagger}=(\mL^H\mL)^{-1}\mL^H$, and $\mR^{\dagger}$, $\mV^{\dagger}$ are similarly defined. Also, we denote $\mP_0=\bDelta_S-\eta(\mL^{\dagger},\mR^{\dagger},\mV^{\dagger})\bcdot\bcE,~\mP_1=\eta(\mL^{\dagger},\mR^{\dagger},\mV^{\dagger})\bcdot\bcE_1,$ and
\begin{align*}
    \mP_2=\eta(\mL^{\dagger},\mR^{\dagger},\mV^{\dagger})\bcdot\bcE_e. 
\end{align*}


In previous derivations, $\mI_0, \mJ_0, \mK_0, \mP_0$ are the quantities associated with the tensor Tucker factorization problem, and we have listed their results in Appendix~\ref{pf_lem_lin_cvg}. 

$\mI_1$, $\mJ_1$, $\mK_1$, and $\mP_1$ are the quantities associated with the noiseless Hankel tensor completion problem, and have been bounded before in Appendix~\ref{pf_lem_lin_cvg}, provide $m\geq O(\varepsilon^{-2}\mu_0 c_{\mathrm{s}} sr\kappa^2\log(sn))$ and $\varepsilon\leq 0.2$.  
For the noise part
\begin{align*}
&\|\mI_2\|_F^2 = \|\eta\cM_1(\bcE_e)\breve{\mL}(\breve{\mL}^{H}\breve{\mL})^{-1}\bSigma_{\star,1}\|_F^2
\\&\leq \eta^2 \|\cM_1(\bcE_e)\|^2 \|\breve{\mL}(\breve{\mL}^{H}\breve{\mL})^{-1}\bSigma_{\star,1}\|_F^2\leq (1{-}\varepsilon)^{-6}\eta^2 \sigma^2{n^2\max\{s,n\}},
\end{align*}
where we invoke  $\|\breve{\mL}(\breve{\mL}^{H}\breve{\mL})^{-1}\bSigma_{\star,1}\|_F\leq \sqrt{r}\|\breve{\mL}(\breve{\mL}^{H}\breve{\mL})^{-1}\bSigma_{\star,1}\|\leq \sqrt{r}(1-\varepsilon)^{-3}$ from Lemma~\ref{lemma:perturb_bounds} and combine Lemma~\ref{lem:robs_noise},\ref{lem:weight_noise} about 
noise such that
\begin{align*}
    \|\cM_1(\bcE_e)\|&=\|\tilde{p}^{-1}\cM_1(\G\P_{\tilde{\Omega}}(\D(\mE)))\|\\
    &\lesssim  \sigma\sqrt{\frac{sn^2r\max\{s,n\}}{\tilde{m}}}\lesssim \sigma\sqrt{{n^2\max\{s,n\}}},
\end{align*}
provided $\tilde{m}\geq sr\log^2(sn)$. The similar bound holds for $\|\mJ_2\|_F^2$, $\|\mK_2\|_F^2$. For $\mP_2$
\begin{align*}
    \|\mP_2\|_F^2&\leq \eta^2 \|\mR^\dagger\|^2\|\mV^\dagger\|^2\|(\mL^\dagger,\mI_{n_2},\mI_{s})\bcdot\bcE_e\|_F^2
    \\&\leq  \eta^2 r \|\mL^\dagger\|^2\|\mR^\dagger\|^2\|\mV^\dagger\|^2\|\|\cM_1(\bcE_e)\|^2
    \\&\leq (1-\varepsilon)^{-6}\eta^2 \sigma^2{n^2\max\{s,n\}}.
\end{align*}

Combining the above pieces, we give an upper bound of $\distsq{\bF_{+}}{\bF_{\star}}$ and set $t=\eta/10$:
\begin{align*}
    &\distsq{\bF_{+}}{\bF_{\star}} {\leq} (1{+}\frac{\eta}{10})\(\|\mI_0\|_F^2+\|\mJ_0\|_F^2+\|\mK_0\|_F^2+\|\mP_0\|_F^2\)\\&\quad+2(1+\frac{10}{\eta})\big((\|\mI_1\|_F^2+\|\mJ_1\|_F^2+\|\mK_1\|_F^2+\|\mP_1\|_F^2)
   \\& \quad+(\|\mI_2\|_F^2+\|\mJ_2\|_F^2+\|\mK_2\|_F^2+\|\mP_2\|_F^2)\big)
    \\&\leq (1-0.3\eta)^2\distsq{\mF}{\mF_{\star}}{+}2\eta(\eta{+}10)(1{-}\varepsilon)^{-6}\sigma^2{{n^2\max\{s,n\}}}
    \\&\leq (1-0.3\eta)^2\distsq{\mF}{\mF_{\star}}+C_{2}^2\sigma^2{{n^2\max\{s,n\}}},
\end{align*}
provided $\varepsilon\leq 1/40$, $\eta\leq 0.4$ and $C_2$ is some constant. Invoking the fact that $\sqrt{a^2+b^2}\leq a+b$ for  $a,b>0$, we establish that 
\begin{align*}
    \dist{\bF_{+}}{\bF_{\star}}\leq (1-0.3\eta)\dist{\mF}{\mF_{\star}}+C_{2}\sigma\sqrt{{n^2\max\{s,n\}}}.
\end{align*}
 }
 \vspace{-20pt}
\subsection{Auxiliary lemmas}
{
\begin{lemma} \label{lem:robs_noise}
Suppose the noise matrix $\mE\in\C^{s\times n}$ has independent sub-Gaussian entries with parameter $\sigma$ \cite{Vershynin2018}. When  $p\cdot\max\{s,n\}\gtrsim \log^2(sn)$, with probability at least $1-O((sn)^{-2})$ we have
\begin{align*}
        \| \P_{\Omega}(\mE)\|\lesssim \sigma \sqrt{p\max\{s,n\}}, 
\end{align*}
where $\Omega$ is the index set with $m$ random samples independent of the noise, and $p=m/sn$. 
\end{lemma}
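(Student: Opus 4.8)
The plan is to condition on the noise matrix $\mE$ and then estimate the spectral norm of the randomly sub-sampled copy $\P_{\Omega}(\mE)$. Writing $\Omega=\{(a_i,b_i)\}_{i=1}^{m}$ with the pairs $(a_i,b_i)$ drawn i.i.d.\ uniformly from $[s]\times[n]$ (the sampling-with-replacement model used throughout, as in the expansion of $\tilde{p}^{-1}\G\P_{\tilde{\Omega}}\G^{*}$ in the proof of Lemma~\ref{lem:init_conc_m1}), one has $\P_{\Omega}(\mE)=\sum_{i=1}^{m}\mW_i$ where $\mW_i\coloneqq\mE(a_i,b_i)\,\ve_{a_i}\ve_{b_i}^{T}$. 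Conditionally on $\mE$ these are i.i.d.\ with $\E[\mW_i\mid\mE]=\tfrac{1}{sn}\mE$, so $\big\|\E[\P_{\Omega}(\mE)\mid\mE]\big\|=p\|\mE\|$, and the task splits into controlling the deterministic quantity $\|\mE\|$ and the centered sum $\sum_{i=1}^m(\mW_i-\E\mW_i)$.

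First I would record the standard sub-Gaussian estimates that hold simultaneously with probability at least $1-O((sn)^{-2})$ over $\mE$: a union bound over the $sn$ entries gives $\|\mE\|_{\infty}\lesssim\sigma\sqrt{\log(sn)}$; treating each row (resp.\ column) of $\mE$ as a sub-Gaussian vector of length $n$ (resp.\ $s$) gives $\|\mE\|_{2,\infty}\vee\|\mE^{H}\|_{2,\infty}\lesssim\sigma\sqrt{\max\{s,n\}+\log(sn)}$; and the classical bound for matrices with i.i.d.\ sub-Gaussian entries gives $\|\mE\|\lesssim\sigma\sqrt{\max\{s,n\}}$. Since the hypothesis $p\max\{s,n\}\gtrsim\log^{2}(sn)$ together with $p\le1$ forces $\max\{s,n\}\gtrsim\log(sn)$, these logarithmic terms are harmless; in particular $p\|\mE\|\le\sigma\,p\sqrt{\max\{s,n\}}\le\sigma\sqrt{p\max\{s,n\}}$, so the mean part already has the desired size, and from now on I condition on a realization of $\mE$ obeying these bounds. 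For the centered sum, a matrix Bernstein inequality \cite{Tropp2012} with per-term bound $\max_i\|\mW_i-\E\mW_i\|\le 2\|\mE\|_{\infty}$ and matrix variance $\nu\lesssim p\big(\|\mE\|_{2,\infty}^{2}\vee\|\mE^{H}\|_{2,\infty}^{2}\big)\lesssim p\,\sigma^{2}\max\{s,n\}$ (computed from the identity that $\sum_i\E[\mW_i\mW_i^{H}\mid\mE]$ is the diagonal matrix with entries $p\|\mE(a,:)\|_2^2$, and its column analogue) yields $\big\|\sum_i(\mW_i-\E\mW_i)\big\|\lesssim\sigma\sqrt{p\max\{s,n\}\log(sn)}+\sigma\log^{3/2}(sn)$ with probability at least $1-(sn)^{-2}$, i.e.\ the target rate up to a single $\sqrt{\log(sn)}$ factor. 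To delete this spurious logarithm --- which is genuinely needed, since the flat variance profile of $\P_{\Omega}(\mE)$ prevents a black-box Bernstein bound from reaching $\sigma\sqrt{p\max\{s,n\}}$ --- I would invoke the standard refinement for sparsified sub-Gaussian matrices: truncate the atypically large entries of $\mE$ (of which there are few), apply a Bernstein bound over an $\varepsilon$-net of $\mathbb{S}^{s-1}\times\mathbb{S}^{n-1}$ to the bilinear forms of the truncated, bounded part, and control the contribution of the rare large-entry part by a deterministic counting/discrepancy estimate in the spirit of Feige--Ofek and the noise lemmas of \cite{Chen2020}; this is precisely the step where the hypothesis $p\max\{s,n\}\gtrsim\log^{2}(sn)$ is consumed. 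Combining the two pieces through $\|\P_{\Omega}(\mE)\|\le p\|\mE\|+\big\|\sum_i(\mW_i-\E\mW_i)\big\|$ and a union bound over the event on $\mE$ and the sampling event gives $\|\P_{\Omega}(\mE)\|\lesssim\sigma\sqrt{p\max\{s,n\}}$ with probability at least $1-O((sn)^{-2})$.

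I expect the main obstacle to be exactly the removal of the extra $\sqrt{\log(sn)}$: matrix Bernstein cannot see the cancellations that make a $p$-sparsified sub-Gaussian matrix behave like the square root of its sampling density, so the delicate work is the truncation step and the separate treatment of the spiky coordinates of the test vectors in the net, which is where the $\log^{2}(sn)$ (rather than $\log(sn)$) lower bound on $p\max\{s,n\}$ is actually used. The remaining pieces --- the sub-Gaussian tail bounds on $\mE$, the variance computation, and the bookkeeping of union bounds --- are routine.
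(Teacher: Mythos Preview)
Your plan is sound and in fact reconstructs the content behind the paper's own ``proof,'' which is just a pointer: the paper says the lemma is adapted from \cite[Lemma~3]{Chen2020}, with the only modification being to track the two dimensions $s$ and $n$ separately rather than assuming they are of the same order. So there is no detailed argument in the paper to compare against; your proposal is essentially an outline of the argument underlying that cited lemma.

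On the substance of your outline: the decomposition into mean plus centered sum, the sub-Gaussian row/column/operator-norm preliminaries on $\mE$, and the variance computation for matrix Bernstein are all correct, and you have correctly diagnosed that matrix Bernstein by itself leaves a surplus $\sqrt{\log(sn)}$ and that the truncation-plus-net refinement (Feige--Ofek style, as used in \cite{Chen2020}) is what removes it and is precisely where the hypothesis $p\max\{s,n\}\gtrsim\log^{2}(sn)$ enters. One small point worth flagging: \cite[Lemma~3]{Chen2020} is stated under Bernoulli sampling, whereas here $\Omega$ is sampled with replacement; the transfer is routine (the variance proxies and entry bounds coincide up to constants, and the mean term you isolated already handles the non-centeredness), but you should say a word about it when you write the final version.
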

\begin{proof}
    This lemma is 
    adapted from \cite[Lemma~3]{Chen2020}. We distinguish the dimensions $s$ and $n$ of the rectangular matrix $\mE$, while in \cite[Lemma~3]{Chen2020}, the two dimensions of the rectangular matrix exhibit the same order of  $n$. 
\end{proof}
\begin{remark}
    If $m\geq s\log^2(sn)$, it is guaranteed that $p\cdot\max\{s,n\}\gtrsim \log^2(sn)$ where $p=m/sn$. 
\end{remark}
\begin{lemma} \label{lem:weight_noise}
        \begin{align*}
        \|\cM_i\(\G(\P_\Omega(\D(\mE)))\)\|\leq \sqrt{n}\|\P_\Omega(\mE)\|,~i=1,2,3.
    \end{align*}
\end{lemma}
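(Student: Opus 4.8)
The plan is to first strip off the two diagonal operators and reduce to a dimension‑free statement about the Hankel lift $\H$. Since $\P_\Omega$ is an entrywise mask and $\D$ scales each column $a$ by $\sqrt{w_a}$, the two operators commute, so $\P_\Omega(\D(\mE))=\D(\P_\Omega(\mE))$; writing $\mN:=\P_\Omega(\mE)\in\C^{s\times n}$ and recalling $\G=\H\D^{-1}$, we get $\G(\P_\Omega(\D(\mE)))=\H(\D^{-1}\P_\Omega\D(\mE))=\H(\P_\Omega(\mE))=\H(\mN)$. Hence it suffices to prove the dimension‑free inequality $\|\cM_i(\H(\mN))\|\le\sqrt{n}\,\|\mN\|$ for $i=1,2,3$ and every $\mN\in\C^{s\times n}$.

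For $i=3$ this follows from an explicit factorization. By the definition of $\H$ and the mode‑$3$ matricization $[\cM_3(\bcZ)](i_3,i_1+i_2 n_1)=\bcZ(i_1,i_2,i_3)$, one has $[\cM_3(\H(\mN))](i_3,i_1+i_2 n_1)=\mN(i_3,i_1+i_2)$, so $\cM_3(\H(\mN))=\mN\mP$, where $\mP\in\C^{n\times n_1 n_2}$ is the $0$–$1$ matrix with a $1$ in position $(a,(i_1,i_2))$ exactly when $i_1+i_2=a$. A one‑line computation gives $\mP\mP^{T}=\diag(w_0,\dots,w_{n-1})$, hence $\|\mP\|=\sqrt{\max_a w_a}=\sqrt{\min\{n_1,n_2\}}\le\sqrt{n}$, and therefore $\|\cM_3(\H(\mN))\|\le\|\mN\|\,\|\mP\|\le\sqrt{n}\,\|\mN\|$.

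The main work is the case $i=1$ (the case $i=2$ being identical after exchanging $n_1$ and $n_2$). Here I would use the variational form $\|\cM_1(\H(\mN))\|^2=\max_{\|\bu\|_2=1}\|\cM_1(\H(\mN))^{H}\bu\|_2^2$ with $\bu\in\C^{n_1}$, and expand: $[\cM_1(\H(\mN))^{H}\bu](i_2,i_3)=\sum_{i_1}\overline{\mN(i_3,i_1+i_2)}\,\bu(i_1)$, so $\|\cM_1(\H(\mN))^{H}\bu\|_2^2=\sum_{i_2\in[n_2]}\sum_{i_3\in[s]}\bigl|\sum_{i_1\in[n_1]}\overline{\mN(i_3,i_1+i_2)}\,\bu(i_1)\bigr|^2$. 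The key observation is that, for each fixed $i_2$, the inner vector (indexed by $i_3$) equals $\overline{\mN_{[i_2]}}\,\bu$, where $\mN_{[i_2]}\in\C^{s\times n_1}$ is the submatrix of $\mN$ consisting of the columns $i_2,i_2+1,\dots,i_2+n_1-1$ (these indices are legitimate because $i_2+n_1-1\le n_2-1+n_1-1=n-1$). Being a column‑submatrix of $\mN$, it obeys $\|\mN_{[i_2]}\|\le\|\mN\|$, so $\sum_{i_3}\bigl|\sum_{i_1}\overline{\mN(i_3,i_1+i_2)}\bu(i_1)\bigr|^2=\|\overline{\mN_{[i_2]}}\bu\|_2^2\le\|\mN\|^2\|\bu\|_2^2$. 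Summing over the $n_2$ values of $i_2$ yields $\|\cM_1(\H(\mN))^{H}\bu\|_2^2\le n_2\|\mN\|^2\le n\|\mN\|^2$, and taking the supremum over unit $\bu$ gives $\|\cM_1(\H(\mN))\|\le\sqrt{n}\,\|\mN\|$; the same computation with $i_1\leftrightarrow i_2$ gives $\|\cM_2(\H(\mN))\|\le\sqrt{n_1}\,\|\mN\|\le\sqrt{n}\,\|\mN\|$.

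The only subtlety, and the point worth double‑checking, is that one must avoid the naive Frobenius bound on the skew‑diagonal lift (which would only give $\sqrt{n}\,\|\mN\|_F$): the improvement from $\|\mN\|_F$ to the operator norm $\|\mN\|$ comes precisely from recognizing that slicing along a fixed mode‑$2$ coordinate of $\H(\mN)$ produces a genuine column‑submatrix of $\mN$ rather than a Hankel‑structured object, so that the operator norm is passed through, at the price of only a $\sqrt{n_2}$ (resp.\ $\sqrt{n_1}$) factor counting the number of slices; for the mode‑$3$ slice the analogous statement is the exact identity $\cM_3(\H(\mN))=\mN\mP$ with $\|\mP\|=\sqrt{\max_a w_a}$.
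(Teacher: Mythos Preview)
Your proof is correct. The paper takes a different, one-line route: after commuting $\P_\Omega$ past $\D$ (as you also do), it appeals to the chain $\|\cM_i(\G(\D(\mN)))\|\le\|\cM_i\|\,\|\G\|\,\|\D\|\,\|\mN\|$ with $\|\cM_i\|,\|\G\|\le1$ and $\|\D\|\le\sqrt n$. That chain is immediate if all norms are read as Frobenius (matricization and $\G$ are Frobenius isometries, and $\D$ scales columns by at most $\sqrt n$), but then it only delivers $\sqrt n\,\|\mN\|_F$ rather than the spectral-norm bound the lemma states and later uses; in fact $\cM_1\circ\G$ does not contract spectral norms in general (take $n_1=2$ and let the rows of $\mM$ be the $n-2$ interior canonical basis vectors of $\C^n$: then $\|\mM\|=1$ but $\|\cM_1(\G(\mM))\|=\sqrt{(n-2)/2}$). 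Your argument sidesteps this looseness by collapsing $\G\D=\H$ at the outset and then handling each matricization structurally: the exact factorization $\cM_3(\H(\mN))=\mN\mP$ with $\|\mP\|=\sqrt{\max_a w_a}$ for $i=3$, and the sliding-window observation that fixing the other Hankel coordinate extracts a genuine column block of $\mN$ for $i=1,2$. This is the more careful justification and even yields the slightly sharper constants $\sqrt{\min\{n_1,n_2\}}$, $\sqrt{n_2}$, $\sqrt{n_1}$ in place of $\sqrt n$.
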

  \begin{proof}
    From the 
    spectral norm that $\|\cM_i\|\leq 1$ where $i=1,2,3,$ $\|\G\|\leq 1$, and $\|\D\|\leq \sqrt{n}$, it is obvious that 
    \begin{align*}
        \|\cM_i\(\G(\P_\Omega(\D(\mE)))\)\|&=\|\cM_i\(\G(\D(\P_\Omega(\mE)))\)\|
        \\&\leq \|\D\| \|\cM_i\|\|\G\|\|\P_\Omega(\mE)\|\leq \sqrt{n}\|\P_\Omega(\mE)\|. 
    \end{align*}
  \end{proof}
}
\vspace{-15pt}
 \section{}
\begin{lemma}\cite[Lemma~14]{Tong2022} \label{lem:dist_fullerr_upbd}
    For $\mF=(\mL,\mR,\mV,\bcS)$, the distance metric satisfies
    \begin{align*}
 \dist{\mF}{\mF_{\star}}\le (\sqrt{2}+1)^{3/2}\left\|(\mL,\mR,\mV)\bcdot\bcS-\bcZ_{\star}\right\|_{F}.
\end{align*}
\end{lemma}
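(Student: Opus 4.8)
The plan is to derive this from the rank-$r$ matrix-factorization perturbation bound applied one tensor mode at a time; the exponent $3/2$ is just $3\times\tfrac12$, one factor of $(\sqrt2+1)^{1/2}$ per mode.

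First I would observe that $\dist{\mF}{\mF_\star}$ depends only on the tensor $\bcZ:=(\mL,\mR,\mV)\bcdot\bcS$, since replacing $(\mL,\mR,\mV,\bcS)$ by $(\mL\bG_1,\mR\bG_2,\mV\bG_3,(\bG_1^{-1},\bG_2^{-1},\bG_3^{-1})\bcdot\bcS)$ for $\bG_k\in\GL(r)$ leaves $\bcZ$ fixed and only reparametrizes the infimum over $\bQ_k$. Hence we may assume $\mF$ is an HOSVD of $\bcZ$: $\mL,\mR,\mV$ column-orthonormal and $\cM_k(\bcS)\cM_k(\bcS)^{H}=\bSigma_k^2$ with $\bSigma_k:=\diag(\sigma_1(\cM_k(\bcZ)),\dots,\sigma_r(\cM_k(\bcZ)))$; recall $\mF_\star$ is already an HOSVD of $\bcZ_\star$ with $\cM_k(\bcS_\star)\cM_k(\bcS_\star)^{H}=\bSigma_{\star,k}^2$. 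Since unitary matrices lie in $\GL(r)$, it suffices to exhibit unitary $\bQ_1,\bQ_2,\bQ_3$ (so $\bQ_k^{-1}=\bQ_k^{H}$) for which the sum of the four terms defining $\distsq{\mF}{\mF_\star}$ is at most $(\sqrt2+1)^3\|\bcZ-\bcZ_\star\|_F^2$.

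The engine is the classical matrix-factorization perturbation bound (cf.\ \cite{Tong2021,Tong2022} and references therein): for rank-$r$ $\bX=\bU\bM$, $\bX_\star=\bU_\star\bM_\star$ with $\bU,\bU_\star$ column-orthonormal and $\bM\bM^{H}=\bLambda^2$, $\bM_\star\bM_\star^{H}=\bLambda_\star^2$, there is a unitary $\bQ$ with $\|(\bU\bQ-\bU_\star)\bLambda_\star\|_F^2+\|\bQ^{H}\bM-\bM_\star\|_F^2\le(\sqrt2+1)\|\bX-\bX_\star\|_F^2$. Apply it in mode $1$ to $\cM_1(\bcZ)=\mL\,\cM_1(\bcS)(\mV\otimes\mR)^{T}$ and $\cM_1(\bcZ_\star)=\mL_\star\,\cM_1(\bcS_\star)(\mV_\star\otimes\mR_\star)^{T}$ (the right factors having Gram matrices $\bSigma_1^2$, $\bSigma_{\star,1}^2$ by orthonormality of $\mV,\mR,\mV_\star,\mR_\star$): this gives unitary $\bQ_1$ with $\|(\mL\bQ_1-\mL_\star)\bSigma_{\star,1}\|_F^2\le(\sqrt2+1)\|\bcZ-\bcZ_\star\|_F^2-\|\bcY-\bcY_\star\|_F^2$, the residual recognised as a tensor distance via $\bQ_1^{H}\cM_1(\bcS)(\mV\otimes\mR)^{T}=\cM_1(\bcY)$ with $\bcY:=(\bQ_1^{H},\mR,\mV)\bcdot\bcS\in\C^{r\times n_2\times s}$ and $\cM_1(\bcS_\star)(\mV_\star\otimes\mR_\star)^{T}=\cM_1(\bcY_\star)$ with $\bcY_\star:=(\bI_r,\mR_\star,\mV_\star)\bcdot\bcS_\star$. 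The key bookkeeping point is that $\bcY,\bcY_\star$ still carry HOSVD structure in modes $2,3$ with the same singular values $\bSigma_2,\bSigma_3$ and $\bSigma_{\star,2},\bSigma_{\star,3}$ (unitarity of $\bQ_1$ plus the Gram identities). Repeating in mode $2$ of $(\bcY,\bcY_\star)$ gives unitary $\bQ_2$ and a residual pair $\bcW,\bcW_\star\in\C^{r\times r\times s}$ with $\|(\mR\bQ_2-\mR_\star)\bSigma_{\star,2}\|_F^2\le(\sqrt2+1)\|\bcY-\bcY_\star\|_F^2-\|\bcW-\bcW_\star\|_F^2$, and in mode $3$ of $(\bcW,\bcW_\star)$ gives unitary $\bQ_3$ with $\|(\mV\bQ_3-\mV_\star)\bSigma_{\star,3}\|_F^2+\|(\bQ_1^{-1},\bQ_2^{-1},\bQ_3^{-1})\bcdot\bcS-\bcS_\star\|_F^2\le(\sqrt2+1)\|\bcW-\bcW_\star\|_F^2$, the last residual being exactly the core-tensor term since $\bQ_3^{H}\cM_3((\bQ_1^{H},\bQ_2^{H},\bI_r)\bcdot\bcS)=\cM_3((\bQ_1^{-1},\bQ_2^{-1},\bQ_3^{-1})\bcdot\bcS)$.

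Summing the three inequalities and using $\|\bcY-\bcY_\star\|_F^2\le(\sqrt2+1)\|\bcZ-\bcZ_\star\|_F^2$, $\|\bcW-\bcW_\star\|_F^2\le(\sqrt2+1)^2\|\bcZ-\bcZ_\star\|_F^2$ from the preceding steps, the telescoping leaves $\distsq{\mF}{\mF_\star}\le(\sqrt2+1)\big(1+\sqrt2+\sqrt2(\sqrt2+1)\big)\|\bcZ-\bcZ_\star\|_F^2=(\sqrt2+1)(3+2\sqrt2)\|\bcZ-\bcZ_\star\|_F^2=(\sqrt2+1)^3\|\bcZ-\bcZ_\star\|_F^2$, i.e.\ $\dist{\mF}{\mF_\star}\le(\sqrt2+1)^{3/2}\|(\mL,\mR,\mV)\bcdot\bcS-\bcZ_\star\|_F$. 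The main obstacle — the only genuinely nontrivial input — is the matrix lemma with a constant independent of both $r$ and the condition number (it is the classical Procrustes/perturbation bound, so this is available off the shelf); the rest is the routine-but-careful check that each peeling step yields a residual matrix which is the matricization of a genuine rank-$r$ tensor still in HOSVD form, something that over $\C$ must be reconciled with the paper's matricization/conjugation conventions but uses no new idea beyond orthonormality of the HOSVD factors, unitarity of the $\bQ_k$, and the Gram identities $\cM_k(\bcS)\cM_k(\bcS)^{H}=\bSigma_k^2$.
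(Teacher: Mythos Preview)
Your proposal is correct. The paper itself does not prove this lemma at all---it is stated as a direct citation of \cite[Lemma~14]{Tong2022}---and your mode-by-mode peeling argument (reducing to the rank-$r$ matrix Procrustes bound three times and telescoping) is precisely the proof given in that reference, so there is nothing to compare.
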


We present the results of perturbation bounds, and the short notations in this lemma are introduced in \eqref{eq:short_notations}. 
\begin{lemma}[Local perturbation results]\label{lemma:perturb_bounds}
Suppose $\bF=(\mL,\mR,\mV,\bcS)$ and $\bF_{\star}=(\mL_{\star},\mR_{\star},\mV_{\star},\bcS_{\star})$ are aligned and satisfy $\dist{\bF}{\bF_{\star}}\le\epsilon\sigma_{\min}(\bcZ_{\star})$ for some $\epsilon<1$. Then the following bounds hold 
\vspace{-5pt}
\begin{subequations} 
\begin{align}
1/\sigma_r(\mL)=\|\mL(\mL^{H}\mL)^{-1}\|& \le(1-\epsilon)^{-1};\label{eq:perturb_Lsigmar}
\\
\left\Vert \breve{\mL}(\breve{\mL}^{H}\breve{\mL})^{-1}\bSigma_{\star,1}\right\Vert & \le(1-\epsilon)^{-3}. \label{eq:perturb_Rinv}
\end{align}
\end{subequations}
By symmetry, the previous bounds holds for $\mR,\breve{\mR}$ and $\mV,\breve{\mV}$. 
If the incoherence condition \eqref{eq:incoh_iter_full} holds, we have 
\begin{subequations} 
\begin{align}
     \|\mL\|_{2,\infty}&\leq (1-\varepsilon)^{-3}C_B\kappa\sqrt{\mu_0 c_{\mathrm{s}} r}/\sqrt{n};\label{eq:incoh_factorL}
     \\ \|\mR\|_{2,\infty}&\leq (1-\varepsilon)^{-3}C_B\kappa\sqrt{\mu_0 c_{\mathrm{s}} r}/\sqrt{n}. \label{eq:incoh_factorR}
\end{align}
\end{subequations}
Besides, when $\varepsilon\leq 0.2$ we have 
\begin{align}
 \|\bcE\|_F\leq \|\bcE_a\|_F+\|\bcE_b\|_F+\|\bcE_c&\|_F+\|\bcE_d\|_F\leq 3\dist{\bF}{\bF_{\star}}.\label{eq:upbd_deltaTsplit}
\end{align}
\begin{proof}
    The inequalities \eqref{eq:incoh_factorL} and \eqref{eq:incoh_factorR} can be found in \cite[ Lemma~17]{Tong2022} with slight modifications. Other inequalities can be found in \cite[Lemma~16]{Tong2022} and its proof. 
\end{proof}
\end{lemma}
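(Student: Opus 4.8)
The plan is to prove every bound as a first-order perturbation estimate around the aligned ground-truth factorization, essentially reproducing \cite[Lemmas~16--17]{Tong2022} while keeping track of the fact that none of the inequalities uses incoherence of $\mV_\star$ --- only its column-orthonormality, which holds by the HOSVD \eqref{eq:grdtruth-oth-tucker}. I will assume $\bF$ is aligned with $\bF_\star$ (the $\bQ_k$'s absorbed into $\mL,\mR,\mV,\bcS$), so that, writing $\bDelta_L=\mL-\mL_\star$ and similarly $\bDelta_R,\bDelta_V,\bDelta_{\cS}$, each of $\|\bDelta_L\bSigma_{\star,1}\|_F$, $\|\bDelta_R\bSigma_{\star,2}\|_F$, $\|\bDelta_V\bSigma_{\star,3}\|_F$, $\|\bDelta_{\cS}\|_F$ is at most $\dist{\bF}{\bF_\star}\le\epsilon\,\sigma_{\min}(\bcZ_\star)$. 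Since $\bSigma_{\star,1}=\diag(\sigma_1(\cM_1(\bcZ_\star)),\dots,\sigma_r(\cM_1(\bcZ_\star)))$ with $\sigma_r(\cM_1(\bcZ_\star))\ge\sigma_{\min}(\bcZ_\star)$, dividing out $\bSigma_{\star,1}$ gives $\|\bDelta_L\|\le\epsilon$ and likewise $\|\bDelta_R\|,\|\bDelta_V\|\le\epsilon$, and also $\|\cM_1(\bDelta_{\cS})^H\bSigma_{\star,1}^{-1}\|\le\|\bDelta_{\cS}\|_F/\sigma_{\min}(\bcZ_\star)\le\epsilon$. Because $\mL_\star$ has orthonormal columns, Weyl's inequality yields $\sigma_r(\mL)\ge1-\epsilon$; and from the SVD of $\mL$ one sees $\mL(\mL^H\mL)^{-1}$ has singular values $1/\sigma_i(\mL)$, so $\|\mL(\mL^H\mL)^{-1}\|=1/\sigma_r(\mL)\le(1-\epsilon)^{-1}$, which is \eqref{eq:perturb_Lsigmar}; the $\mR,\mV$ versions are identical.

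For \eqref{eq:perturb_Rinv}, the key identity is $\|\breve{\mL}(\breve{\mL}^H\breve{\mL})^{-1}\bSigma_{\star,1}\|^2=\|\bSigma_{\star,1}(\breve{\mL}^H\breve{\mL})^{-1}\bSigma_{\star,1}\|$, so it suffices to show $\bSigma_{\star,1}^{-1}\breve{\mL}^H\breve{\mL}\bSigma_{\star,1}^{-1}\succeq(1-\epsilon)^6\mI_r$. Expanding $\breve{\mL}=(\overline{\mV}\otimes\overline{\mR})\cM_1(\bcS)^H$ gives $\breve{\mL}^H\breve{\mL}=\cM_1(\bcS)\big(\overline{\mV^H\mV}\otimes\overline{\mR^H\mR}\big)\cM_1(\bcS)^H$; from the previous paragraph $\mV^H\mV,\mR^H\mR\succeq(1-\epsilon)^2\mI_r$, hence $\overline{\mV^H\mV}\otimes\overline{\mR^H\mR}\succeq(1-\epsilon)^4\mI_{r^2}$, so $\bSigma_{\star,1}^{-1}\breve{\mL}^H\breve{\mL}\bSigma_{\star,1}^{-1}\succeq(1-\epsilon)^4\,\tilde{\mathbf{S}}\tilde{\mathbf{S}}^H$ with $\tilde{\mathbf{S}}:=\bSigma_{\star,1}^{-1}\cM_1(\bcS)$. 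Finally, $\tilde{\mathbf{S}}_\star:=\bSigma_{\star,1}^{-1}\cM_1(\bcS_\star)$ obeys $\tilde{\mathbf{S}}_\star\tilde{\mathbf{S}}_\star^H=\mI_r$ (because $\cM_1(\bcS_\star)\cM_1(\bcS_\star)^H=\bSigma_{\star,1}^2$) and $\|\tilde{\mathbf{S}}-\tilde{\mathbf{S}}_\star\|=\|\cM_1(\bDelta_{\cS})^H\bSigma_{\star,1}^{-1}\|\le\epsilon$, so Weyl gives $\tilde{\mathbf{S}}\tilde{\mathbf{S}}^H\succeq(1-\epsilon)^2\mI_r$; chaining proves $\bSigma_{\star,1}^{-1}\breve{\mL}^H\breve{\mL}\bSigma_{\star,1}^{-1}\succeq(1-\epsilon)^6\mI_r$ and hence \eqref{eq:perturb_Rinv}. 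Permuting the modes gives the $\breve{\mR},\breve{\mV}$ statements, and I will note that the $\breve{\mV}$-version uses only column-orthonormality of $\mV_\star$.

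For \eqref{eq:incoh_factorL}--\eqref{eq:incoh_factorR}, write $\mL=(\mL\breve{\mL}^H)\,\breve{\mL}(\breve{\mL}^H\breve{\mL})^{-1}$, so that $\|\mL\|_{2,\infty}\le\|\mL\breve{\mL}^H\|_{2,\infty}\,\|\breve{\mL}(\breve{\mL}^H\breve{\mL})^{-1}\|$; the incoherence condition \eqref{eq:incoh_iter_full} bounds the first factor by $B/\sqrt n$, and since $\breve{\mL}(\breve{\mL}^H\breve{\mL})^{-1}=\big(\breve{\mL}(\breve{\mL}^H\breve{\mL})^{-1}\bSigma_{\star,1}\big)\bSigma_{\star,1}^{-1}$ the second is at most $(1-\epsilon)^{-3}/\sigma_{\min}(\bcZ_\star)$ by \eqref{eq:perturb_Rinv}; substituting $B=C_B\sqrt{\mu_0 c_{\mathrm{s}} r}\,\sigma_{\max}(\bcZ_\star)$ gives the claimed bound with $\kappa=\sigma_{\max}(\bcZ_\star)/\sigma_{\min}(\bcZ_\star)$, and $\mR$ is identical. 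For \eqref{eq:upbd_deltaTsplit}, split off $\bDelta_{\cS}$ first and telescope in $\mL,\mR,\mV$ to obtain $\bcE=(\mL,\mR,\mV)\bcdot\bcS-\bcZ_\star=\bcE_a+\bcE_b+\bcE_c+\bcE_d$ exactly as those tensors are defined in the convergence proof; then by the triangle inequality bound each summand by pulling the operator norms of the outer factors ($\|\mL\|,\|\mR\|,\|\mV\|\le1+\epsilon$, or $\|\mL_\star\|,\|\mR_\star\|=1$) out front together with $\|\bDelta_L\cM_1(\bcS_\star)\|_F=\|\bDelta_L\bSigma_{\star,1}\|_F$ and its analogues (from $\cM_k(\bcS_\star)\cM_k(\bcS_\star)^H=\bSigma_{\star,k}^2$). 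This gives $\|\bcE\|_F\le(1+\epsilon)^3\|\bDelta_{\cS}\|_F+(1+\epsilon)^2\|\bDelta_L\bSigma_{\star,1}\|_F+(1+\epsilon)\|\bDelta_R\bSigma_{\star,2}\|_F+\|\bDelta_V\bSigma_{\star,3}\|_F$, and Cauchy--Schwarz against $\dist{\bF}{\bF_\star}^2$ produces a constant $\sqrt{(1+\epsilon)^6+(1+\epsilon)^4+(1+\epsilon)^2+1}\le3$ once $\epsilon\le0.2$.

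The step I expect to be the main obstacle is \eqref{eq:perturb_Rinv}: the naive spectral estimate on $\breve{\mL}$ only produces a $\kappa$-dependent bound, and obtaining the condition-number-free $(1-\epsilon)^{-3}$ forces one to route through the positive-semidefinite inequality $\bSigma_{\star,1}^{-1}\breve{\mL}^H\breve{\mL}\bSigma_{\star,1}^{-1}\succeq(1-\epsilon)^6\mI_r$ and to exploit the Kronecker structure of $\breve{\mL}^H\breve{\mL}$ so that each of the three perturbations $\bDelta_R,\bDelta_V,\bDelta_{\cS}$ enters in the correctly rescaled norm. Everything else is routine: Weyl's inequality, the SVD characterization of $\|\mL(\mL^H\mL)^{-1}\|$, the triangle inequality, and Cauchy--Schwarz; the only point beyond \cite{Tong2022} is the observation that all the $\mV$-side statements hold without an incoherence assumption on $\mV_\star$.
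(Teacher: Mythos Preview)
Your proposal is correct and follows essentially the same approach as the paper: the paper simply cites \cite[Lemmas~16--17]{Tong2022} for all of these bounds, and what you have written is a faithful reproduction of those arguments (Weyl for \eqref{eq:perturb_Lsigmar}, the PSD lower bound $\bSigma_{\star,1}^{-1}\breve{\mL}^{H}\breve{\mL}\bSigma_{\star,1}^{-1}\succeq(1-\epsilon)^{6}\mI_{r}$ via the Kronecker factorization of $\breve{\mL}^{H}\breve{\mL}$ for \eqref{eq:perturb_Rinv}, the identity $\mL=(\mL\breve{\mL}^{H})\breve{\mL}(\breve{\mL}^{H}\breve{\mL})^{-1}$ for \eqref{eq:incoh_factorL}--\eqref{eq:incoh_factorR}, and the telescoping plus Cauchy--Schwarz for \eqref{eq:upbd_deltaTsplit}). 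Your explicit remark that the $\mV$-side bounds use only column-orthonormality of $\mV_\star$ and not its incoherence matches the paper's Remark~\ref{remark:incoh_V}.
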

\vspace{-15pt}

\begin{thebibliography}{10}
	
	\bibitem{Cotter2005}
	S.~Cotter, B.~Rao, K.~Engan, and K.~Kreutz-Delgado, ``Sparse solutions to
	linear inverse problems with multiple measurement vectors,'' {\em IEEE Trans.
		Signal Process.}, vol.~53, no.~7, pp.~2477--2488, 2005.
	
	\bibitem{Zhang2018}
	S.~Zhang, Y.~Hao, M.~Wang, and J.~H. Chow, ``Multichannel hankel matrix
	completion through nonconvex optimization,'' {\em IEEE J. Sel. Top. Signal
		Process.}, vol.~12, no.~4, pp.~617--632, 2018.
	
	\bibitem{Gao2016}
	P.~Gao, M.~Wang, S.~G. Ghiocel, J.~H. Chow, B.~Fardanesh, and G.~Stefopoulos,
	``Missing data recovery by exploiting low-dimensionality in power system
	synchrophasor measurements,'' {\em IEEE Trans. Power Syst.}, vol.~31, no.~2,
	pp.~1006--1013, 2016.
	
	\bibitem{Barbotin2012}
	Y.~Barbotin, A.~Hormati, S.~Rangan, and M.~Vetterli, ``Estimation of sparse
	mimo channels with common support,'' {\em IEEE Trans. Commun.}, vol.~60,
	no.~12, pp.~3705--3716, 2012.
	
	\bibitem{Potter2010}
	L.~Potter, E.~Ertin, J.~Parker, and M.~Cetin, ``Sparsity and compressed sensing
	in radar imaging,'' {\em Proc. IEEE}, vol.~98, no.~6, pp.~1006--1020, 2010.
	
	\bibitem{Krim1996}
	H.~Krim and M.~Viberg, ``Two decades of array signal processing research: the
	parametric approach,'' {\em IEEE Signal Process. Mag.}, vol.~13, no.~4,
	pp.~67--94, 1996.
	
	\bibitem{Yang2016a}
	Z.~Yang and L.~Xie, ``Exact joint sparse frequency recovery via optimization
	methods,'' {\em IEEE Trans. Signal Process.}, vol.~64, no.~19,
	pp.~5145--5157, 2016.
	
	\bibitem{Li2016}
	Y.~Li and Y.~Chi, ``Off-the-grid line spectrum denoising and estimation with
	multiple measurement vectors,'' {\em IEEE Trans. Signal Process.}, vol.~64,
	no.~5, pp.~1257--1269, 2016.
	
	\bibitem{Cai2019}
	J.-F. Cai, T.~Wang, and K.~Wei, ``Fast and provable algorithms for spectrally
	sparse signal reconstruction via low-rank hankel matrix completion,'' {\em
		Appl. Comput. Harmon. Anal.}, vol.~46, no.~1, pp.~94--121, 2019.
	
	\bibitem{Tucker1966}
	L.~R. Tucker, ``Some mathematical notes on three-mode factor analysis,'' {\em
		Psychometrika}, vol.~31, pp.~279--311, Sept. 1966.
	
	\bibitem{Frandsen2020}
	A.~Frandsen and R.~Ge, ``Optimization landscape of tucker decomposition,'' {\em
		Mathematical Programming}, vol.~193, pp.~687--712, June 2020.
	
	\bibitem{Tong2022}
	T.~Tong, C.~Ma, A.~Prater-Bennette, E.~Tripp, and Y.~Chi, ``Scaling and
	scalability: Provable nonconvex low-rank tensor estimation from incomplete
	measurements,'' {\em J. Mach. Learn. Res.}, vol.~23, no.~163, pp.~1--77,
	2022.
	
	\bibitem{Tong2021}
	T.~Tong, C.~Ma, and Y.~Chi, ``Accelerating ill-conditioned low-rank matrix
	estimation via scaled gradient descent,'' {\em J. Mach. Learn. Res.},
	vol.~22, no.~150, pp.~1--63, 2021.
	
	\bibitem{Cai2018}
	J.-F. Cai, T.~Wang, and K.~Wei, ``Spectral compressed sensing via projected
	gradient descent,'' {\em SIAM J. Optim.}, vol.~28, no.~3, pp.~2625--2653,
	2018.
	
	\bibitem{Zhang2021}
	X.~Zhang, Y.~Liu, and W.~Cui, ``Spectrally sparse signal recovery via hankel
	matrix completion with prior information,'' {\em IEEE Trans. Signal
		Process.}, vol.~69, pp.~2174--2187, 2021.
	
	\bibitem{Candes2006}
	E.~Cand\`es, J.~Romberg, and T.~Tao, ``Robust uncertainty principles: exact
	signal reconstruction from highly incomplete frequency information,'' {\em
		IEEE Trans. Inf. Theory}, vol.~52, no.~2, pp.~489 -- 509, 2006.
	
	\bibitem{Donoho2006}
	D.~Donoho, ``Compressed sensing,'' {\em IEEE Trans. Inf. Theory}, vol.~52,
	no.~4, pp.~1289 --1306, 2006.
	
	\bibitem{Tang2013}
	G.~Tang, B.~N. Bhaskar, P.~Shah, and B.~Recht, ``Compressed sensing off the
	grid,'' {\em IEEE Trans. Inf. Theory}, vol.~59, no.~11, pp.~7465--7490, 2013.
	
	\bibitem{Chen2014}
	Y.~Chen and Y.~Chi, ``Robust spectral compressed sensing via structured matrix
	completion,'' {\em IEEE Trans. Inf. Theory}, vol.~60, no.~10, pp.~6576--6601,
	2014.
	
	\bibitem{Tropp2006}
	J.~A. Tropp, A.~C. Gilbert, and M.~J. Strauss, ``Algorithms for simultaneous
	sparse approximation. part i: Greedy pursuit,'' {\em Signal Process.},
	vol.~86, no.~3, pp.~572--588, 2006.
	\newblock Sparse Approximations in Signal and Image Processing.
	
	\bibitem{Tropp2006a}
	J.~A. Tropp, ``Algorithms for simultaneous sparse approximation. part ii:
	Convex relaxation,'' {\em Signal Process.}, vol.~86, no.~3, pp.~589--602,
	2006.
	\newblock Sparse Approximations in Signal and Image Processing.
	
	\bibitem{Mishali2008}
	M.~Mishali and Y.~C. Eldar, ``Reduce and boost: Recovering arbitrary sets of
	jointly sparse vectors,'' {\em IEEE Trans. Signal Process.}, vol.~56, no.~10,
	pp.~4692--4702, 2008.
	
	\bibitem{Wu2023a}
	X.~Wu, Z.~Yang, and Z.~Xu, ``Multichannel frequency estimation in challenging
	scenarios via structured matrix embedding and recovery (strumer),'' {\em IEEE
		Trans. Signal Process.}, vol.~71, pp.~3242--3256, 2023.
	
	\bibitem{Wu2024c}
	X.~Wu, Z.~Yang, Z.~Wei, and Z.~Xu, ``Direction-of-arrival estimation for
	constant modulus signals using a structured matrix recovery technique,'' {\em
		IEEE Trans. Wireless Commun.}, vol.~23, no.~4, pp.~3117--3130, 2024.
	
	\bibitem{Wu2024b}
	X.~Wu, Z.~Yang, and Z.~Xu, ``Low-complexity algorithms for multichannel
	spectral super-resolution,'' {\em arXiv:2411.10938}, Nov. 2024.
	
	\bibitem{Chi2019review}
	Y.~Chi, Y.~M. Lu, and Y.~Chen, ``Nonconvex optimization meets low-rank matrix
	factorization: An overview,'' {\em IEEE Trans. Signal Process.}, vol.~67,
	pp.~5239--5269, Oct. 2019.
	
	\bibitem{Li2024}
	J.~Li, W.~Cui, and X.~Zhang, ``Simpler gradient methods for blind
	super-resolution with lower iteration complexity,'' {\em IEEE Trans. Signal
		Process.}, vol.~72, pp.~5123--5139, 2024.
	
	\bibitem{Li2024a}
	J.~Li, W.~Cui, and X.~Zhang, ``Projected gradient descent for spectral
	compressed sensing via symmetric hankel factorization,'' {\em IEEE Trans.
		Signal Process.}, vol.~72, pp.~1590--1606, 2024.
	
	\bibitem{Xu2023}
	X.~Xu, Y.~Shen, Y.~Chi, and C.~Ma, ``The power of preconditioning in
	overparameterized low-rank matrix sensing,'' in {\em Proc. Int. Conf. Mach.
		Learn.} (A.~Krause, E.~Brunskill, K.~Cho, B.~Engelhardt, S.~Sabato, and
	J.~Scarlett, eds.), vol.~202 of {\em Proceedings of Machine Learning
		Research}, (Honolulu, Hawaii, USA), pp.~38611--38654, PMLR, 23--29 Jul 2023.
	
	\bibitem{Qian2021}
	F.~Qian, C.~Zhang, L.~Feng, C.~Lu, G.~Zhang, and G.~Hu, ``Tubal-sampling:
	Bridging tensor and matrix completion in 3-d seismic data reconstruction,''
	{\em IEEE Trans. Geosci. Remote Sens.}, vol.~59, no.~1, pp.~854--870, 2021.
	
	\bibitem{Trickett2013}
	S.~Trickett, L.~Burroughs, and A.~Milton, ``Interpolation using hankel tensor
	completion,'' in {\em SEG Technical Program Expanded Abstracts 2013},
	pp.~3634--3638, Society of Exploration Geophysicists, Aug. 2013.
	
	\bibitem{Wang2023a}
	X.~Wang, Y.~Wu, D.~Zhuang, and L.~Sun, ``Low-rank hankel tensor completion for
	traffic speed estimation,'' {\em IEEE Trans. Intell. Transport. Syst.},
	vol.~24, no.~5, pp.~4862--4871, 2023.
	
	\bibitem{Wang2021}
	X.~Wang, L.~Miranda-Moreno, and L.~Sun, ``Hankel-structured tensor robust pca
	for multivariate traffic time series anomaly detection,'' 2021.
	
	\bibitem{Yamamoto2022}
	R.~Yamamoto, H.~Hontani, A.~Imakura, and T.~Yokota, ``Fast algorithm for
	low-rank tensor completion in delay-embedded space,'' in {\em 2022 IEEE/CVF
		Conference on Computer Vision and Pattern Recognition (CVPR)},
	pp.~2048--2056, 2022.
	
	\bibitem{Yokota2018}
	T.~Yokota, B.~Erem, S.~Guler, S.~Warfield, and H.~Hontani, ``Missing slice
	recovery for tensors using a low-rank model in embedded space,'' {\em 2018
		IEEE/CVF Conference on Computer Vision and Pattern Recognition},
	pp.~8251--8259, 2018.
	
	\bibitem{DeLathauwer2000}
	L.~De~Lathauwer, B.~De~Moor, and J.~Vandewalle, ``A multilinear singular value
	decomposition,'' {\em SIAM Journal on Matrix Analysis and Applications},
	vol.~21, pp.~1253--1278, Jan. 2000.
	
	\bibitem{Cai2023}
	H.~Cai, J.-F. Cai, and J.~You, ``Structured gradient descent for fast robust
	low-rank hankel matrix completion,'' {\em SIAM J. Sci. Comput.}, vol.~45,
	no.~3, pp.~A1172--A1198, 2023.
	
	\bibitem{Cai2025}
	H.~Cai, L.~Huang, X.~Lu, and J.~You, ``Accelerating ill-conditioned hankel
	matrix recovery via structured newton-like descent,'' {\em Inverse Problems},
	July 2025.
	
	\bibitem{Cherapanamjeri2017}
	Y.~Cherapanamjeri, K.~Gupta, and P.~Jain, ``Nearly optimal robust matrix
	completion,'' in {\em Proc. Int. Conf. Mach. Learn.}, pp.~797--805, 2017.
	
	\bibitem{Jain2013}
	P.~Jain, P.~Netrapalli, and S.~Sanghavi, ``Low-rank matrix completion using
	alternating minimization,'' in {\em Proc. 45th Annu. ACM symp. Theory
		Comput.}, pp.~665--674, {ACM}, 2013.
	
	\bibitem{Zhang2019}
	S.~Zhang and M.~Wang, ``Correction of corrupted columns through fast robust
	hankel matrix completion,'' {\em IEEE Trans. Signal Process.}, vol.~67,
	no.~10, pp.~2580--2594, 2019.
	
	\bibitem{Cai2022b}
	C.~Cai, G.~Li, H.~V. Poor, and Y.~Chen, ``Nonconvex low-rank tensor completion
	from noisy data,'' {\em Oper. Res.}, vol.~70, pp.~1219--1237, Mar. 2022.
	
	\bibitem{Liao2016}
	W.~Liao and A.~Fannjiang, ``{MUSIC} for single-snapshot spectral estimation:
	Stability and super-resolution,'' {\em Appl. Comput. Harmon. Anal.}, vol.~40,
	no.~1, pp.~33--67, 2016.
	
	\bibitem{Candes2009}
	E.~J. Cand{\`e}s and B.~Recht, ``Exact matrix completion via convex
	optimization,'' {\em Found. Comput. Math.}, vol.~9, no.~6, pp.~717--772,
	2009.
	
	\bibitem{Chen2022}
	J.~Chen, W.~Gao, S.~Mao, and K.~Wei, ``Vectorized hankel lift: A convex
	approach for blind super-resolution of point sources,'' {\em IEEE Trans. Inf.
		Theory}, vol.~68, pp.~8280--8309, Dec. 2022.
	
	\bibitem{Vershynin2018}
	R.~Vershynin, {\em High-Dimensional Probability}.
	\newblock Cambridge University Press, 2018.
	
	\bibitem{Shi2020}
	Q.~Shi, J.~Yin, J.~Cai, A.~Cichocki, T.~Yokota, L.~Chen, M.~Yuan, and J.~Zeng,
	``Block hankel tensor arima for multiple short time series forecasting,''
	{\em Proceedings of the AAAI Conference on Artificial Intelligence}, vol.~34,
	pp.~5758--5766, Apr. 2020.
	
	\bibitem{Schmidt1986}
	R.~Schmidt, ``Multiple emitter location and signal parameter estimation,'' {\em
		IEEE Trans. Antennas Propag.}, vol.~34, no.~3, pp.~276--280, 1986.
	
	\bibitem{Wang2017}
	M.~Wang and A.~Nehorai, ``Coarrays, music, and the cramér–rao bound,'' {\em
		IEEE Trans. Signal Process.}, vol.~65, no.~4, pp.~933--946, 2017.
	
	\bibitem{Sidiropoulos2000}
	N.~D. Sidiropoulos and R.~Bro, ``On the uniqueness of multilinear decomposition
	of n-way arrays,'' {\em Journal of Chemometrics}, vol.~14, no.~3,
	pp.~229--239, 2000.
	
	\bibitem{Tropp2012}
	J.~A. Tropp, ``User-friendly tail bounds for sums of random matrices,'' {\em
		Found. Comput. Math.}, vol.~12, no.~4, pp.~389--434, 2012.
	
	\bibitem{Chen2020}
	Y.~Chen, Y.~Chi, J.~Fan, C.~Ma, and Y.~Yan, ``Noisy matrix completion:
	Understanding statistical guarantees for convex relaxation via nonconvex
	optimization,'' {\em SIAM J. Optim.}, vol.~30, no.~4, pp.~3098--3121, 2020.
	
\end{thebibliography}
\end{document}